\documentclass[11pt]{article}
\usepackage[left=1in, right=1in, top=1in, bottom=1in, margin=1in]{geometry}

\usepackage{tcolorbox}
\usepackage{ninecolors}
\usepackage{xcolor}

\usepackage{amsmath,amssymb,xspace,graphicx,relsize,bm,breqn,multirow}
\usepackage{multicol}

\usepackage{braket}
\usepackage{qcircuit}
\usepackage{adjustbox}

\usepackage{float}
\usepackage[linesnumbered,ruled,vlined]{algorithm2e}
\SetKwInput{KwInput}{Input}
\SetKwInput{KwOutput}{Output}
\SetKwInOut{Promise}{Promise}
\SetKwInput{Goal}{Goal}
\SetKwProg{Fn}{function}{}{}
\SetKwFor{RepTimes}{repeat}{times}{}
\SetKwFunction{Ver}{Verify}
\SetKwFunction{Prep}{PrepareState}
\SetKwComment{Comment}{/* }{ */}
\usepackage{algorithmic}

\newtcolorbox{myalgorithm}[1][]{
    colback=gray!10, 
    colframe=black, 
    arc=5pt, 
    boxrule=0.5pt, 
    left=0pt, right=0pt, top=0pt, bottom=0pt 
}
\usepackage[margin=1in]{geometry}

\usepackage{amsthm}
\usepackage{dsfont}
\usepackage{array}
\usepackage{makecell}
\newcommand{\Fe}{\ensuremath{\mathcal{F}}}
\newcommand{\Sh}{\ensuremath{\mathsf{Stab}}}

\newcommand{\C}{\ensuremath{\mathcal{C}}}

\newcommand{\A}{\ensuremath{\mathcal{A}}}

\newcommand{\poly}{\ensuremath{\mathsf{poly}}}

\newcommand{\id}{\ensuremath{\mathbb{I}}}

\usepackage{upgreek}
\usepackage{enumerate}

\usepackage[pagebackref]{hyperref}
\usepackage[usestackEOL]{stackengine}
\usepackage{thm-restate,mathrsfs}
\usepackage{enumerate}
\usepackage{array}
\usepackage{parskip}
\def\01{\{0,1\}}

\newcommand{\ketbra}[2]{|#1\rangle\langle#2|}
\hypersetup{
	colorlinks,
	linkcolor={blue!100!black},
	citecolor={red!100!black},
}

\newcommand{\be}{\begin{equation}}
\newcommand{\ee}{\end{equation}}
\newcommand{\ba}{\begin{array}}
\newcommand{\ea}{\end{array}}
\newcommand{\bea}{\begin{eqnarray}}
\newcommand{\eea}{\end{eqnarray}}

\usepackage{mathtools}
\DeclarePairedDelimiter\ceil{\lceil}{\rceil}

\DeclareMathOperator{\Tr}{Tr}
\newcommand{\ra}{\rangle}
\newcommand{\la}{\langle}

\newcommand{\opt}{\textsf{opt}}

\newcommand{\norm}[1]{\left\lVert#1\right\rVert}

\newcommand{\calA}{{\cal A }}
\newcommand{\calB}{{\cal B }}

\newcommand{\calL}{{\cal L }}

\newcommand{\calF}{{\cal F }}

\newcommand{\calC}{{\cal C }}
\newcommand{\calS}{{\cal S }}

\newcommand{\FF}{\mathbb{F}}

\newcommand{\PFR}{\textsf{PFR}}
\newcommand{\LCU}{\textsf{LCU}}

\newcommand{\spann}{\textsf{span}}

\newcommand{\weyl}[1]{\textsf{Weyl}\left({#1}\right)}

\usepackage{amsthm}

\def\01{\{0,1\}}

\newcommand{\wal}{\mathsf{WAL}}

\newcommand{\prep}{\mathsf{prep}}

\newcommand{\SWAP}{\textsf{SWAP}}

\newcommand{\CNOT}{\textsf{CNOT}}

\newcommand{\ridgeproj}{\Pi^{\diamond}}

\definecolor{citegreen}{HTML}{208054}
\definecolor{citeblue}{HTML}{0055cc}

\NineColors{saturation=high}
\hypersetup{
    breaklinks=true,   
    colorlinks=true, 
    linkcolor=blue3, 
    citecolor=green5, 
    urlcolor=blue3, 
}

\newtheorem{theorem}{Theorem}[section]
\newtheorem{definition}[theorem]{Definition}

\newtheorem{lemma}[theorem]{Lemma}
\newtheorem{remark}{Remark}

\newtheorem{fact}[theorem]{Fact}
\newtheorem{claim}[theorem]{Claim}

\usepackage{thm-restate,mathrsfs} 

\global\long\def\argmin{\operatornamewithlimits{argmin}}

\usepackage{tcolorbox}

\makeatletter
\def\widebreve{\mathpalette\wide@breve}
\def\wide@breve#1#2{\sbox\z@{$#1#2$}%
     \mathop{\vbox{\m@th\ialign{##\crcr
\kern0.08em\brevefill#1{0.8\wd\z@}\crcr\noalign{\nointerlineskip}%
                    $\hss#1#2\hss$\crcr}}}\nolimits}
\def\brevefill#1#2{$\m@th\sbox\tw@{$#1($}%
    \hss\resizebox{#2}{\wd\tw@}{\rotatebox[origin=c]{90}{\upshape(}}\hss$}
\makeatletter

\title{Tomography of quantum states with bounded extent}

\begin{document}

\author{
Srinivasan Arunachalam\\
\small IBM Research, Silicon Valley\\
\small \texttt{Srinivasan.Arunachalam@ibm.com}
\and
Arkopal Dutt\\
\small IBM Research, Cambridge \\
\small \texttt{arkopal@ibm.com}
}
\date{}
\maketitle
\begin{abstract}
We give a general framework for tomography of states that have {bounded-extent} with respect to a {structured class} of states. Let $\C$ be a family of $n$-qubit states such that: $(i)$ $\C$~is~succinctly~representable and $(ii)$ there is a \emph{weak} agnostic learner of $\mathcal{C}$. We give a tomography protocol for an unknown state $\ket{\psi}$ that is promised to admit a decomposition of the form
$\
\ket{\psi} = \sum_i c_i \ket{\phi_i}$,
where $\ket{\phi_i} \in \C$ with \emph{bounded} $\ell_1$-norm of the coefficients (which we call \emph{extent}). Our main contribution is to show that a weak agnostic learner for $\mathcal{C}$ can be \emph{boosted} into a tomography algorithm for states with bounded extent with respect to~$\mathcal{C}$.

\vspace{2mm}

Our reduction is black-box and applies broadly across model classes. As an application, when $\mathcal{C}$ is the class of stabilizer states, we obtain tomography algorithms for states with stabilizer extent $\xi$ up to trace distance $\varepsilon$, in time
$\
\poly(n,(\xi/\varepsilon)^{\log(\xi/\varepsilon)}),
$
which is improvable to $
\poly(n,\xi,1/\varepsilon)$ 
assuming the algorithmic polynomial Freiman--Ruzsa conjecture in the high-doubling regime. When the unknown state $\ket{\psi}$ is arbitrary, we give an algorithmic decomposition result in the spirit of a weak regularity lemma for quantum states with respect to $\calC$ and show that the structure in $\ket{\psi}$ that is explainable by $\calC$ can be efficiently learned. Our main conceptual message is that agnostic learning of a structured base class automatically yields learnability of its low-complexity linear span.
\end{abstract}

 \newpage

\setcounter{tocdepth}{2}
{ \tableofcontents}

\newpage 

\section{Introduction}\label{sec:intro}

\textbf{\emph{Quantum inspiration.}} A central theme in quantum tomography is to push efficient learning as far beyond simple classes as possible. From this perspective, a natural candidate frontier is \emph{simulability}: if a family of quantum states admits \emph{efficient classical simulation}, should it also admit \emph{efficient learning} from copies? Stabilizer-based models provide one of the most compelling testbeds for this question. On the one hand, stabilizer structure lies at the heart of some of the strongest classical simulation algorithms for non-Clifford quantum circuits. On the other hand, a growing body of work has shown that stabilizer-like structure can also be exploited algorithmically for learning and tomography in a variety of settings. Together, these developments suggest that stabilizer structure may mark an important boundary between \emph{efficiently simulability} and  \emph{efficiently~learnability}.

A particularly influential approach formalizes “closeness to stabilizer structure” via \emph{stabilizer decompositions}, by expressing a target state $\ket{\psi}$ as a linear combination of stabilizer states. This perspective underlies some of the strongest classical simulation algorithms for non-Clifford circuits and motivates refined measures of non-stabilizerness. Among them, the \emph{stabilizer extent} introduced by~\cite{bravyi2019simulation} quantifies, operationally, how economically $\ket{\psi}$ can be written as $\ket{\psi}=\sum_i c_i\ket{\phi_i}$ for stabilizers  $\ket{\phi_i}$, via the minimum possible $\sum_i |c_i|$, which they call \emph{extent}. Small extent (which we often refer to as ``low-complexity") enables simulation algorithms whose runtime scales polynomially with this parameter. A natural next question is whether the same structural promise also enables \emph{tomography}: given copies of $\ket{\psi}$, can we efficiently reconstruct a useful classical hypothesis whenever the extent is small? Several works have begun to leverage stabilizer structure for learning and tomography in different measurement models~\cite{aaronsontalk,montanaro2017learning,arunachalam2022optimal,grewal2023efficient,grewal2024improved,leone2024learningstab,hangleiter2024bell,leone2024learning,ad2025structure}, but all these works require further assumptions on the circuits producing these states. Understanding whether all states of low stabilizer extent are efficiently learnable remains an outstanding question in quantum learning,  explicitly raised in~\cite{arunachalam2022optimal,grewal2022low,anshu2024survey}. Answering this  is therefore a key step toward understanding what lies between simulability and learnability, motivating the central question of the work:
\begin{quote}
\centering
\emph{Can we efficiently learn an unknown quantum state that admits a low-complexity decomposition over a structured class
$\C$?}\end{quote}

\noindent \textbf{\emph{Classical inspiration.}} This question is not only about tomography, but about \emph{structure learning}. The question above should be viewed as an algorithmic version of a broader \emph{decomposition theme}. Although the target function may not itself belong to 
$\C$, expressing the function as a low-complexity decomposition over $\C$ serves as a natural  proxy for the amount of 
$\C$-structure it contains. Thus, one may ask more broadly: whenever a complex object is not exactly structured, but can be succinctly explained using structured building blocks, can one efficiently recover such an explanation? In classical mathematics and theoretical computer science, this perspective underlies weak regularity lemmas, dense model theorems, hard-core sets, and boosting arguments, where one approximates an arbitrary object by a bounded-complexity combination of simple pieces, leaving behind a residual that is pseudorandom relative to the class of interest~\cite{trevisan2009guest,goldreich2003three,green2008primes,tao2007structure,frieze1999quick,trevisan2009regularity}. 

The question above can be viewed as a ``quantum analogue" of these well-known classical topics.  To this end, a recent work~\cite{ad2025structure}  (which we discuss below) considered the case where $\C$ is the set of stabilizer states. They showed that given access to the state-preparation unitary of the unknown state $\ket{\psi}$, one can learn a compact stabilizer decomposition $\ket{\phi}$ such that the residual state $\ket{\psi} - \ket{\phi}$ has low stabilizer fidelity.  A natural question is whether this phenomenon extends beyond stabilizer states: can we efficiently recover 
$\C$-structure in an arbitrary quantum state  $\ket{\psi}$, and can we learn $\ket{\psi}$ efficiently promised it admits such $\C$-structure?

\subsection{Main results}
To describe our main results, we need a few definitions. Let $\C$ be a class of states. For a state~$\ket{\psi}$,~let
$$
\mathcal{F}_{\cal C}(\ket{\psi}):=\max_{\ket{\phi}\in \C}|\langle \psi|\phi\rangle|^2.
$$
be the fidelity of $\ket{\psi}$ with respect to $\C$ (i.e., quantifies how close $\ket{\psi}$ is to an element in $\C$). For example, when $\C$ is the class of stabilizer states, this is the well-known notion of \emph{stabilizer fidelity}. We say we can \emph{weak} agnostic learn $\C$ if there exists an algorithm that uses  $\calA_{\wal}$ copies of an unknown $\ket{\psi}$ and time $T_{\wal}$ to output a $\ket{\phi'}\in \C$ such that
\begin{align}
\label{eq:weakagnostic}
|\langle \psi|\phi'\rangle|^2\geq \eta\Big( \calF_{\C}\big(\ket{\psi}\big)\Big)
\end{align}
for a (non-decreasing) function $\eta:[0,1]\rightarrow [0,1]$. We call this guarantee \emph{weak} because, in the usual notion of \emph{strong} agnostic learning, one requires $|\langle \psi|\phi'\rangle|^2\geq \calF_{\C}(\ket{\psi})-\varepsilon$, as is common in classical and quantum learning theory.  Finally, we state our results for a particular class of states that we define as a \emph{model class} that we define now.\footnote{We remark that in the definition of the model class, one could also allow cases where $\C$ may not satisfy $(i),(iii)$, but instead there is a weak agnostic learner $\calA_{\wal}$ that outputs states in a different class $\C'$ with the same fidelity promise and $\C'$ satisfies $(i),(iii)$. All our results would also hold in that case.}
\begin{figure}[!ht]
\begin{tcolorbox}
\begin{definition}\label{def:model_class}
\textbf{Model Class $\calC$}\vspace{2mm}
\emph{
\begin{enumerate}[$(i)$]
    \item Every $n$-qubit state $\ket{\phi} \in \calC$ can be described in $\poly(n)$ time.\vspace{2mm}
    \item There is a \emph{weak} agnostic learner $\calA_{\wal}$ using $S_{\wal}$ copies of an unknown state~$\ket{\psi}$ and~in time $T_{\wal}$ outputs a $\ket{\phi} \in \calC$ such that $|\la \phi | \psi \ra|^2 \geq \eta(\calF_{\calC}(\ket{\psi}))$ where $\eta : [0,1] \rightarrow [0,1]$. \vspace{2mm}
    \item For each $\ket{\phi} \in \calC$, there is a classical procedure $\calA_{\prep}$ running in $T_{\prep}$ time that outputs a circuit $V$ with gate complexity $G_{\prep}$, that prepares $\ket{\phi}$, i.e., $V \ket{0^n} = \ket{\phi}$.
\end{enumerate}   
}     
\end{definition}
\end{tcolorbox}
\end{figure}

\noindent Note that several well known classes of states fall in the framework of model class $\C$ above, for example: product states, graph states, matrix product states, stabilizer states. On a high-level our main results are the following (with the formal theorem statements stated~below):
\begin{enumerate}
    \item We first give an algorithmic decomposition theorem, that given copies of $\ket{\psi}$, outputs a list of $\ket{\phi_1},\ldots,\ket{\phi_k}\in \C$  and $\beta \in \mathbb{C}^k,\alpha \in \mathbb{C}$,  such that one can write $\ket{\psi}$ as
$$
  \ket{\psi}=\sum_{i\in [k]} \beta_i \ket{\phi_i}+\alpha\ket{\phi^\perp},
$$
where the residual state $\ket{\phi^\perp}$ satisfies $|\alpha|^2\cdot \calF_{\C}({\ket{\phi^\perp}}) < \varepsilon$.
    \item Using this algorithmic decomposition theorem,  we give a boosting algorithm that boosts a weak agnostic learner (wherein $\eta$ is arbitrary in Eq.~\eqref{eq:weakagnostic}) to a strong agnostic learner.
    \item Using this boosting protocol we give new tomography protocols (up to infidelity $\varepsilon$) for states $\ket{\psi}$ which can be represented as $\ket{\psi}=\sum_i c_i \ket{\phi_i}$ where $\ket{\phi_i}\in \C$ and $\xi=\sum_i |c_i|$ is polynomially bounded.  In particular, we give a quasipolynomial (in $n,\xi,1/\varepsilon$) time algorithm for learning states with stabilizer extent $\xi$ and a $\poly(n,\xi,1/\varepsilon)$-time algorithm  for learning states with MPS extent~$\xi$. 
\end{enumerate}

\paragraph{Prior work.}  We remark that a previous work of~\cite{ad2025structure} considered the setting when specialized to $\C$ being stabilizer states. To this end, they  considered the task of learning states with stabilizer extent $\xi$  and gave protocols with the same complexity as the result above. However, they crucially required two assumptions: $(i)$ they required  access to the unitary $U_\psi$  that prepares the quantum state $\ket{\psi}$ and its controlled version; $(ii)$ they could learn $\ket{\psi}$ only up to constant trace distance. It is unclear if these limitations can be removed using their techniques. Several recent works~\cite{tang2025conjugate,tang2025controlled,liu2024exponential,van2023quantum} looked into the power of having access to $\ket{\psi},\ket{\psi^*}, U_\psi$ controlled-$U_\psi$, and for testing tasks, they showed exponential separations between these access models. Apart from the separations, these works explicitly suggest that any access to $\ket{\psi}$ apart from just copies, can substantially increase the algorithmic power of the algorithm.  These results suggest that the guarantees of~\cite{ad2025structure} may be difficult to improve in the model where one is given only copies of~$\ket{\psi}$.

Our main contribution is that one can remove the $U_\psi$ assumption and obtain a tomography protocol which achieves $\varepsilon$ error in trace distance,  running in quasipolynomial-time (and assuming a conjecture in additive combinatorics we give a $\poly(n,\xi,1/\varepsilon)$ algorithm, completely settling an open question in the field of quantum learning theory).  Prior to our work, we were not aware of efficient tomography algorithms for even states promised to have \emph{constant} stabilizer extent! Furthermore we give a general framework that extends beyond stabilizers (as considered in~\cite{ad2025structure}): for an arbitrary model class $\C$, we show that one can boost agnostic learners for  $\C$ into tomography algorithms for states that can be written as a sum of states in $\C$ with bounded~extent.

\subsection{Technical overview}\label{sec:tech_overview}
The main protocol at the core of our tomography protocols is the following decomposition result applicable to an arbitrary (unknown) $n$-qubit state $\ket{\psi}$. In particular, we show how to use copies of $\ket{\psi}$, to express $\ket{\psi}$ as a compact superposition over states in $\calC$ and a residual state that has low fidelity with all states in $\calC$. This is formally stated below.
\begin{restatable}{theorem}{algodecomposition}
\label{thm:learning_decompositions}
Let $\varepsilon,\delta \in (0,1)$, $\calC$ be a model class as in Definition~\ref{def:model_class} and $\ket{\psi}$ be an unknown~state. 
There is an algorithm $\calL$ that uses copies of $\ket{\psi}$ and with probability $\geq 1-\delta$, outputs a list of $\kappa$ many states $\{\ket{\phi_i}\}_{i \in [\kappa]}$ belonging to $\calC$ and coefficients $\beta \in \mathbb{C}^{\kappa}$ such that $\ket{\psi}$ can be expressed (up to a global phase) as 
$$
\ket{\psi} = \sum_{i=1}^\kappa \beta_i \ket{\phi_i} + \alpha \ket{\phi_R}, \text{ where } \, |\alpha|^2 \cdot \calF_{\calC}(\ket{\phi_R}) \leq \varepsilon,
$$
and $\kappa = \poly(1/\varepsilon,1/\eta(\varepsilon))$. The overall complexity of this algorithm is as follows
\begin{align*}
&\text{Sample complexity: } \widetilde{O}\Big(S_{\wal} \cdot \poly( 1/\varepsilon,1/\eta(\varepsilon),\log 1/\delta) \Big) \\
&\text{Time complexity: } \widetilde{O}\Big(T_{\wal} \cdot \poly( 1/\varepsilon,1/\eta(\varepsilon),\log 1/\delta)) +  S_{\wal} \cdot \poly(G_{\prep},T_{\prep}, 1/\varepsilon,1/\eta(\varepsilon))\Big).
\end{align*} 
\end{restatable}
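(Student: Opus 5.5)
The plan is a greedy, boosting-style iteration: repeatedly run $\calA_{\wal}$ on the \emph{current residual} of $\ket{\psi}$ and add the state it outputs to a growing list. Concretely, maintain $\ket{\phi_1},\ldots,\ket{\phi_k}\in\calC$. Let $P_k$ be the orthogonal projector onto $\spann\{\ket{\phi_i}\}_{i\le k}$. Write
\[
\ket{\psi}=P_k\ket{\psi}+\alpha_k\ket{\phi_R^{(k)}},\qquad \alpha_k\ket{\phi_R^{(k)}}:=(\id-P_k)\ket{\psi}.
\]
Then $P_k\ket{\psi}=\sum_i\beta_i\ket{\phi_i}$, and $\beta=G^{-1}b$, where $G_{ij}=\la\phi_i|\phi_j\ra$ and $b_i=\la\phi_i|\psi\ra$. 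This is exactly the form required by Theorem~\ref{thm:learning_decompositions}. We stop as soon as $|\alpha_k|^2\calF_{\calC}(\ket{\phi_R^{(k)}})\le\varepsilon$. A sufficient certificate for stopping is $|\alpha_k|^2\le\varepsilon$, or the weak learner failing to return a state of overlap $\ge\eta(\varepsilon)$ with the residual.

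\textbf{Step 1 (preparing the residual from copies).} We never have $U_\psi$, so copies of $\ket{\phi_R^{(k)}}$ must come from postselection. Apply the two-outcome measurement $\{P_k,\id-P_k\}$ to a copy of $\ket{\psi}$. The outcome $\id-P_k$ occurs with probability $|\alpha_k|^2$ and leaves $\ket{\phi_R^{(k)}}$. To implement this measurement, use $\calA_{\prep}$ to get circuits $V_i$ with $V_i\ket{0^n}=\ket{\phi_i}$. Estimate $G$ entrywise by swap/Hadamard tests between the prepared $\ket{\phi_i}$. Then build a block-encoding of
\[
P_k=\sum_{i,j}(G^{-1})_{ij}\ket{\phi_i}\bra{\phi_j}
\]
via LCU over the $V_i$, and turn it into a projective measurement, e.g.\ by singular-value/eigenvalue thresholding on the block-encoding. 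Averaging over $O(1/\varepsilon)$ attempts while $|\alpha_k|^2>\varepsilon$ yields the $S_{\wal}$ residual copies needed by $\calA_{\wal}$. Observing few successes certifies $|\alpha_k|^2\lesssim\varepsilon$, in which case we stop. The coefficients $b_i$ can be estimated, up to one common global phase, by interference between $\ket{\psi}$ and the $\ket{\phi_i}$, which gives $\beta$.

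\textbf{Step 2 (progress).} Suppose we have not stopped. Then $\calF_{\calC}(\ket{\phi_R^{(k)}})>\varepsilon$, so $\calA_{\wal}$ returns $\ket{\phi}\in\calC$ with $|\la\phi_R^{(k)}|\phi\ra|^2\ge\eta(\varepsilon)$, using that $\eta$ is non-decreasing. Since $\ket{\phi_R^{(k)}}\perp\mathrm{range}(P_k)$, we have $\la\phi_R^{(k)}|\phi\ra=\la\phi_R^{(k)}|(\id-P_k)\phi\ra$. Adding $\ket{\phi_{k+1}}:=\ket{\phi}$ therefore enlarges the span by the unit vector $(\id-P_k)\ket{\phi}/\|(\id-P_k)\ket{\phi}\|$, and
\[
|\alpha_{k+1}|^2\le|\alpha_k|^2\bigl(1-|\la\phi_R^{(k)}|\phi\ra|^2\bigr)\le|\alpha_k|^2-\varepsilon\,\eta(\varepsilon).
\]
Because $|\alpha_0|^2=1$, the loop halts after $\kappa\le1/(\varepsilon\eta(\varepsilon))$ rounds. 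The same inequality gives $\|(\id-P_k)\ket{\phi_{k+1}}\|^2\ge\eta(\varepsilon)$, so each new state is far from the old span. By a Schur-complement/Gram–Schmidt argument this keeps $\lambda_{\min}(G)$ bounded below by roughly $\eta(\varepsilon)^{\kappa}$ in the worst case. The better-behaved quantities are the Gram–Schmidt pivots, each $\ge\eta(\varepsilon)$. So I would run the whole construction in the Gram–Schmidt (Cholesky) basis, whose conditioning per step is controlled by $\eta(\varepsilon)$. The complexities then follow: $\kappa$ rounds, each costing one call to $\calA_{\wal}$ on $\widetilde{O}(S_{\wal}/\varepsilon)$ copies, plus $S_{\wal}\cdot\poly(G_{\prep},T_{\prep},\kappa)$ to implement the projective measurements. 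A union bound over rounds with failure probability $\delta/\kappa$ each gives the $\log(1/\delta)$ dependence.

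\textbf{Main obstacle.} The hardest part is the robustness of Step 1. We only have estimates $\widehat G,\widehat b$, so we implement an approximate projector $\widehat P_k$, and the postselected state is only close to $\ket{\phi_R^{(k)}}$. We must show three things:
\begin{enumerate}[$(a)$]
\item the weak learner's guarantee degrades gracefully under this small perturbation of its input, so that fidelity $\eta(\varepsilon)/2$ is still achieved;
\item the errors in $\widehat P_k$ do not accumulate over $\kappa$ rounds;
\item the final decomposition is \emph{exact} with a residual satisfying the bound, which is fine because we may define $\alpha\ket{\phi_R}:=\ket{\psi}-\sum_i\widehat\beta_i\ket{\phi_i}$, but we then need $\widehat\beta$ accurate to precision $\poly(\varepsilon,\eta(\varepsilon))$.
\end{enumerate}
I would first try to control everything through the Gram–Schmidt pivots $\ge\eta(\varepsilon)$, choosing estimation precision $\poly(\varepsilon,\eta(\varepsilon))/\kappa$ per round, and absorb the perturbations by running with threshold $\varepsilon/2$ internally.
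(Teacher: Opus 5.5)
Your convergence argument ($|\alpha_{k+1}|^2\le|\alpha_k|^2-\varepsilon\eta(\varepsilon)$, so $\kappa\le 1/(\varepsilon\eta(\varepsilon))$) is fine. The gap is in Step~1 and in the coefficient recovery. Everything rests on implementing the \emph{orthogonal} projector $P_k$ onto $\spann\{\ket{\phi_i}\}$ from the circuits $V_i$, and nothing in the proposal does this at cost polynomial in $\kappa$.

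As you note, the only guarantee is $\lambda_{\min}(G)\gtrsim\eta(\varepsilon)^{\kappa}$, and this is essentially attainable. Take $\ket{\phi_1}=e_1$ and $\ket{\phi_j}=\sqrt{1-\eta}\,e_{j-1}+\sqrt{\eta}\,e_j$. Every Gram--Schmidt pivot equals $\sqrt{\eta}$, yet $\sigma_{\min}(\Phi)\le\sqrt{\eta}\,(\eta/(1-\eta))^{(\kappa-2)/2}$. A weak learner promised only $|\la\phi|\phi_R\ra|^2\ge\eta$ may output such a sequence. In that case each of your routes blows up:
\begin{itemize}
\item the LCU for $\sum_{ij}(G^{-1})_{ij}\ket{\phi_i}\bra{\phi_j}$ has subnormalization of order $k\|G^{-1}\|$;
\item thresholding the singular values of $\Phi/\sqrt{k}$ needs polynomial degree of order $\sqrt{k}/\sigma_{\min}(\Phi)$;
\item forming $\widehat\beta=\widehat G^{-1}\widehat b$ from additive estimates leaves $\|\Phi(\widehat\beta-\beta)\|$ as large as $\|\widehat b-b\|/\sigma_{\min}(\Phi)$, plus a similar term from $\widehat G$.
\end{itemize}
So time \emph{and} samples become $\eta^{-\Omega(\kappa)}=\eta^{-\Omega(1/(\varepsilon\eta))}$.

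Passing to the Gram--Schmidt (Cholesky) basis does not fix this, because your only primitives are the $V_i$. Writing $\Phi=QR$, the Gram--Schmidt vectors are $Q=\Phi R^{-1}$ with $\|R^{-1}\|=1/\sigma_{\min}(\Phi)$. The natural recursive preparation is $\ket{\phi'_j}\propto(\id-P_{j-1})\ket{\phi_j}$, using amplitude amplification against the reflection about $\spann\{\ket{\phi_i}\}_{i<j}$. That multiplies the cost by up to $\Theta(1/\sqrt{\eta})$ at each of the $\kappa$ levels, so pivots $\ge\sqrt{\eta}$ still compound to $\eta^{-\Omega(\kappa)}$. This is exactly the barrier the paper records in Appendix~\ref{appsec:barrier_orthoproj}, and it breaks the claimed $\poly(G_{\prep},T_{\prep},1/\varepsilon,1/\eta(\varepsilon))$ bound.

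The missing idea is regularization. The paper replaces $P_k$ by the ridge projector $\ridgeproj_\lambda=\Phi(G+\lambda\id)^{-1}\Phi^\dagger$ with constant $\lambda$ (it uses $\lambda=1/2$). By Woodbury, $\id-\ridgeproj_\lambda=\lambda(K+\lambda\id)^{-1}$ with $K=\Phi\Phi^\dagger$. $K$ has an exact block-encoding with normalization $k$ built from $U_\Phi$, so the QSVT inversion has condition number $(k+\lambda)/\lambda$ however small $\lambda_{\min}(G)$ is. Likewise, $\vec\beta=(G+\lambda\id)^{-1}\Phi^\dagger\ket{\psi}$ has $\|\vec\beta\|\le\sqrt{k}/\lambda$. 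It is learned by tomography of the normalized coefficient state plus a norm estimate, with no error blow-up.

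The price is that the residual is no longer orthogonal to the span. Progress is therefore tracked by the ridge loss $R_{t,\lambda}=1-\la\psi|\ridgeproj_{t,\lambda}|\psi\ra$. Freezing the previous coefficients and optimizing only the new one gives $R_{t,\lambda}-R_{t+1,\lambda}\ge\alpha_t^2|\la\phi_t|\psi_t\ra|^2/(1+\lambda)$, hence $\kappa=O(1/(\varepsilon\eta(\varepsilon)))$, as in your argument. With Step~1 and the coefficient step replaced this way, your stopping rule, your perturbation argument for the weak learner, and your union bound go through essentially as you describe.
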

Below we give a proof sketch of the decomposition theorem above. We start off by giving a high-level idea of the learning algorithm before describing the subroutines and analysis involved.
\paragraph{High-level idea.}

Our algorithm for learning decompositions runs in the following two stages, each of which we discuss in detail subsequently. 
\begin{enumerate}
    \item \emph{Structure learning} where we learn a list of $\kappa$ many states from $\calC$, which we denote by $L = \{\ket{\phi_i}\}_{i \in [\kappa]}$ such that its span captures the structure in $\ket{\psi}$ describable by $\calC$,
    \item \emph{Parameter learning} where we learn a list of $\kappa$ coefficients corresponding to the states in the list $L$ such that the promise of Theorem~\ref{thm:learning_decompositions} is satisfied.
\end{enumerate}

\paragraph{\emph{Stage 1: Structure learning.}}
In the $t$-th stage of structure-learning, the algorithm incrementally builds a list
$
L_t=\{\ket{\phi_1},\ldots,\ket{\phi_t}\}\subseteq \calC
$ 
of states from the model class $\calC$. The iterative algorithm starts with $\ket{\psi_1}=\ket{\psi}$, and at iteration $t$ it applies the weak agnostic learner $\calA_{\wal}$ to a so-called residual state $\ket{\psi_t}$ (which will become clearer shortly), obtaining a new state $\ket{\phi_t}\in \calC$ satisfying
$
|\langle \phi_t | \psi_t\rangle|^2 \ge \eta(\varepsilon),
$
where $\varepsilon$ is the error parameter for this stage and $\eta(\cdot)$ is the guarantee of $\calA_{\wal}$.

We now formally define the residual state $\ket{\psi_t}$  in terms of the structured approximation learned so far. After $t$ iterations, let $\Phi_t$ denote the dictionary matrix whose columns are the states $\ket{\phi_1},\ldots,\ket{\phi_t}$, and let $G_t=\Phi_t^\dagger \Phi_t$ be the corresponding Gram matrix. We then define the current structured approximation to $\ket{\psi}$ by applying the ridge projector
\[
\ridgeproj_{t,\lambda}=\Phi_t(G_t+\lambda \id)^{-1}\Phi_t^\dagger
\]
to $\ket{\psi}$. Equivalently,
$
\ridgeproj_{t,\lambda}\ket{\psi}=\sum_{i=1}^t \beta_i \ket{\phi_i},
$
where the coefficient vector $\vec{\beta}$ is the solution to the regularized least-squares problem
$$
\vec{\beta}
=
\argmin_{\vec a\in\mathbb C^t}
J(\vec a;\lambda),
\qquad
J(\vec a;\lambda)
=
\left\|
\ket{\psi}-\sum_{i=1}^t a_i\ket{\phi_i}
\right\|_2^2
+
\lambda \|\vec a\|_2^2.
$$
The next residual vector is then the unexplained part $
(\id-\ridgeproj_{t,\lambda})\ket{\psi}$, 
and $\ket{\psi_{t+1}}$ is the same with the appropriate normalization. We repeat this process until one of two conditions are satisfied: $(i)$ the norm of the residual state is low, in which case we have succeeded in the stronger task of \emph{tomography} of $\ket{\psi}$ or $(ii)$ if the fidelity of the residual state with respect to $\C$ is small (we discuss both in detail below). Thus, the role of structure learning is primarily combinatorial: it identifies the structured directions $\ket{\phi_1},\ldots,\ket{\phi_t}$ that should appear in the decomposition, but it does not yet attempt to estimate their coefficients accurately.

\noindent \paragraph{\emph{Why the ridge projector?}}
To understand the role of the \emph{ridge projector}, it is helpful to first examine the most natural alternative and why it fails. Suppose that after $t-1$ iterations we have already learned states
$
\ket{\phi_1},\ldots,\ket{\phi_{t-1}}\in \C$. 
A first instinct would be to remove from $\ket{\psi}$ as much of the already-explained structured component as possible, and then apply the weak learner to the remaining residual state. The first natural approach to finding $\ket{\phi_t}$ would be to directly mimic a gradient-descent style update taking the largest step possible in finding the ``best" $\ket{\phi_t}\in \C$ that has highest overlap with the residual state $\ket{\psi_t}$,  instead of using the ridge projector $\ridgeproj_{\lambda}$ which  in fact \emph{slows} down progress. The natural approach leads to the orthogonal projector
\[
\Pi_t=\Phi_t G_t^{-1}\Phi_t^\dagger,
\]
which projects onto $\spann(\{\ket{\phi_i}\}_{i\in[t]})$, where $\Phi_t$ is the dictionary matrix  and $G_t=\Phi_t^\dagger \Phi_t$ is its Gram matrix. The corresponding residual state would then be $\ket{\psi_t}\propto (\id-\Pi_t)\ket{\psi}$, which is orthogonal to the span learned so far and therefore represents the  best possible update.
 
The next step is to prepare copies of the state $\ket{\psi_t}$, for which we would need to implement a unitary encoding the \emph{inverse} of the Gram matrix $G_t$ (via block encoding) and apply it to copies of the state $\ket{\psi}$ (along with the interleaved operators $\Phi,\Phi^\dagger$) or devise a state preparation (say, via $\LCU$) procedure of the state after classically inverting the Gram matrix $G_t$. In either case, the gate complexity of the quantum circuits (to prepare the residual state) would depend on the condition number of $G_t$. Since $\C$ is an arbitrary class with non-orthogonal states, and we have no control on the states learned by the weak learner, potentially the Gram matrix $G_t$ could be ill- conditioned, leading to large gate complexity. In fact, this can be seen for the class of stabilizer states the condition number could  grow in general as $(1/\varepsilon)^{O(1/\varepsilon)}$ (see Appendix~\ref{appsec:barrier_orthoproj}), which isn't desirable.

The ridge projector resolves exactly this issue! In particular, this issue with the condition number of $G$ is  fixed by considering the \emph{regularized} ridge projector $\ridgeproj_{\lambda}$ for $\lambda > 0$ which involves inverting $(G+\lambda \id)$, which is guaranteed to have a minimum eigenvalue of $\lambda$. This still leaves the choice of a \emph{greedy} update which would update the running decomposition by adding a portion of the most recently learned state $\ket{\phi_t}$ proportional to its fidelity with the most recent residual state $\ket{\psi_t}$ i.e., $\la \phi_t | \psi_t \ra \ket{\psi_t}$. The residual state after this update is then $\ket{\psi_{t+1}} \propto (\id - \ket{\phi_t}\bra{\phi_t})\ket{\psi_t}$. While this allows us to learn a decomposition as has been shown in \cite{ad2025structure}, this cannot be immediately utilized for tomography of states with low extent up to any desired distance. The main issue is that the $\ell_2$ norm of the learned decomposition $\sum_i \beta_i \ket{\phi_i}$ cannot be bounded by $1$.

\paragraph{\emph{The finer details.}}
To comment on the overall complexity of the structure learning stage $\calL_1$ of the algorithm, we show the following.
\begin{enumerate}[$1.$]
\item \emph{Residual state preparation.} As part of $\calL_1$, we need the ability to prepare the intermediate residual states $\ket{\psi_t} \propto (\id - \ridgeproj_{t,\lambda})\ket{\psi}$ which is given as input to the agnostic learner. We give a protocol to prepare approximate residual states $\ket{\widetilde{\psi}_t}$ such that $\norm{\ket{\widetilde{\psi}_t} - \ket{\psi_t}} \leq \varepsilon$ by leveraging quantum singular value transformation (QSVT) and block-encodings (i.e., unitaries encoding linear operators that need not be unitary~\cite{gilyen2019qsvt}). In particular, we employ the kernel trick~\cite{bishop2006pattern} to note that $(\id - \ridgeproj_\lambda) = \lambda(K + \lambda \id)^{-1}$ for $\lambda > 0$ where $K=\Phi \Phi^\dagger$ is the kernel matrix. We show that a block-encoding of $K$ can be implemented using state-preparation unitaries generated by $\calA_{\prep}$ of $\calC$. Utilizing fairly standard manipulative operations on block-encodings, we then show that $(\id - \ridgeproj_\lambda)$ can be implemented approximately up to a desired error, using a block-encoding. This then allows us to generate copies of $\ket{\widetilde{\psi}_t}$ with high probability after application on copies of $\ket{\psi}$. 

\item \emph{Checking stopping criteria.} We stop carrying out iterations in the structure learning algorithm if $\|(\id - \ridgeproj_{t,\lambda})\ket{\psi}\| \leq \varepsilon$ i.e., accomplished state tomography or the fidelity of the residual state with class $\calC$ is low, i.e., $\calF_{\calC}(\ket{\psi_t}) \leq \varepsilon$. We show that both of these stopping conditions can be checked efficiently. The probability of success of the above residual state preparation algorithm directly corresponds to the norm of the residual state vector i.e., $\norm{(\id - \ridgeproj_{t,\lambda})\ket{\psi}}$. This probability and hence the norm can then be estimated directly from the measurement outcomes corresponding to the circuit implementing $(\id - \ridgeproj_\lambda)$ (approximately) on to $\ket{\psi}$. To check if the fidelity of the residual state with respect to class $\calC$ is low, we note that the weak agnostic learner $\calA_{\wal}$ only returns a state $\ket{\phi} \in \calC$ with fidelity $|\la \phi|\psi_t \ra|^2 \geq \eta(\varepsilon)$ if $\calF_{\calC}(\ket{\psi_t}) \leq \varepsilon$. We can check if $\calF_{\calC}(\ket{\psi_t}) \leq \varepsilon$ or not by estimating the fidelity of $\ket{\phi}$ with $\ket{\psi_t}$ (which can be done via a simple $\SWAP$ test). 

\item \emph{Upper bound on the number of iterations $\kappa$.} To argue that structure learning converges quickly and that the maximum number of iterations $\kappa$ that need to be executed is bounded, we follow an energy-increment approach~\cite{green2006montreal,tulsiani2014quadratic,ad2025structure}. As we use the ridge projector $\ridgeproj_{\lambda}$ to define the update in the structure learning algorithm, we consider the corresponding loss function
$$
R_{t,\lambda} := \min_{\vec{a} \in \mathbb{C}^t} J_{t}(\vec{a};\lambda) = \min_{\vec{a} \in \mathbb{C}^t} \norm{\ket{\psi} - \Phi_{t} \vec{a}}_2^2 + \lambda \norm{\vec{a}}_2^2,
$$
to be evaluated at the end of iteration $t$ after learning $\ket{\phi_t} \in \calC$. It can be observed that $0 \leq R_{t,\lambda} \leq \norm{\ket{\psi}}_2^2 \leq 1$ for any $t$. We show that over consecutive iterations, we are guaranteed~that
$$
R_{t,\lambda} - R_{t+1,\lambda} \geq \frac{\varepsilon \eta(\varepsilon)}{4(1+\lambda)}, \quad \text{ for all } t\leq \kappa
$$
where $\varepsilon$ is the specified error parameter for structure learning and $\eta(\cdot)$ is the promise of the $\calA_{\wal}$. En route to showing the above result, we show that $\ket{\phi_t}$, which is obtained by agnostic learning on $\ket{\widetilde{\psi}_t}$, also has high fidelity with the true residual state $\ket{\psi_t}$. We then show that the ridge projector update is at least as good as the greedy local update of considering $\la \phi_t | \psi_t \ra$. The final result is then obtained from the promise of the weak agnostic learning and that we have not yet accomplished state tomography of $\ket{\psi}$. Summing the above expression from $t=1,\ldots,\kappa$ gives us $\kappa = O(1/(\varepsilon \eta(\varepsilon)))$ for an appropriate choice of $\lambda$.

\end{enumerate}

\paragraph{\emph{Stage 2: Parameter learning.}} The goal in the second stage is to  convert the dictionary of structured states learned in the first phase into an explicit hypothesis state. After structure learning, we have in our hand a list $L$ such that $\ridgeproj_{L,\lambda}$ is the best ridge projection onto the model class $L$ and the residual state has low fidelity. At this point, we need to determine the coefficients in $\ridgeproj_{L,\lambda}\ket{\psi}=\sum_i \beta_i \ket{\phi_i}$, which is a $\kappa$-dimensional vector. We determine this (up to a global phase) by carrying out tomography on the state $\ket{\beta} \propto (G+\lambda \id)^{-1} \Phi^\dagger \ket{\psi}$ that encodes the coefficients. We show this can be implemented efficiently by re-using the block-encoding constructions from before and use  well-known pure-state tomography protocols which use $O(\kappa/\varepsilon^2)$ copies of the state. Moreover, the circuit used to prepare $\ket{\beta}$ can also be used to estimate $\norm{\beta}$. Putting these together, we can determine $\widehat{\beta}\in \mathbb{C}^\kappa$ such that $\sum_i \widehat{\beta}_i\ket{\phi_i}$ is a good approximation of $\ridgeproj_{L,\lambda}\ket{\psi}$ and the corresponding residual state satisfies the promise of having low fidelity. Putting together the structure learning and parameter learning concludes the proof sketch of our algorithmic decomposition result.

\subsection{Utility of algorithmic decomposition}

\subsubsection{Agnostic boosting}
In computational learning theory, boosting was first considered in the well-known Probably Approximately Correct (PAC) model to answer the fundamental question: if a concept class of Boolean functions can be weakly learned (i.e., an algorithm can learn  an unknown Boolean function with bias $\geq 1/2+1/\textsf{poly}(n)$) is this good enough to strongly learn the class (i.e., an algorithm can learn  an unknown Boolean function with bias $\geq 0.99$) ? Seminal works of Freund and Schapire~\cite{freund:boostfirst,schapire:boostfirst} invented the well-known AdaBoost algorithm for boosting weak learners to strong learners, which made boosting one of the most influential ideas in modern learning theory. Subsequently, researchers considered the (more challenging) agnostic learning model  and demonstrated that boosting can also be done in the agnostic learning model~\cite{ben2001agnostic,kalai2008agnostic,feldman2009distribution}.

Given the recent works in learning states in the quantum agnostic model, a natural question is, can we also boost quantum agnostic learners? Prior to our work, boosting was only considered in the PAC model for \emph{function classes}~\cite{bshouty1995learning,arunachalam2020quantum,izdebski2020improved} and a very recent work,~\cite{adgo2025boosting} considered  a toy version of boosting for states: they considered the \emph{highly structured} class of parity states and showed how to boost weak agnostic learners for parity states to strong learners. However, we are not aware of any general framework for boosting algorithms in the state model. In this work, we give a general framework of boosting  weak agnostic learners for  succinctly describable class of states just given copies.

\begin{restatable}{theorem}{agnboosting}\label{thm:boostingagnlearning}
Let $\delta, \varepsilon \leq \opt \in (0,1)$. Let $\C$ be a model class. Suppose $\ket{\psi}$ is an unknown state with (unknown) optimal fidelity $\calF_{\C}(\ket{\psi}) = \opt$. There is an algorithm, that with probability~$\geq 1-\delta$,  uses copies of $\ket{\psi}$ to output a state $\ket{\phi}$, which is a sum of $\kappa = \poly(1/(\varepsilon\cdot \eta(\varepsilon)))$ many $\ket{\phi_i}\in \C$ such~that
$$
|\la \phi | \psi \ra|^2 \geq \opt - \varepsilon,
$$
The sample, time complexity is polynomial in $n,1/\varepsilon,\log 1/\delta$ and the complexity of the weak learner. 
\end{restatable}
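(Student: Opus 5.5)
We derive Theorem~\ref{thm:boostingagnlearning} from the algorithmic decomposition of Theorem~\ref{thm:learning_decompositions}. We run $\calL$ with accuracy $\varepsilon' = \Theta(\varepsilon^2)$ (final constants tuned at the end) and confidence $\delta/2$. It returns states $\ket{\phi_1},\ldots,\ket{\phi_\kappa}\in\calC$ and coefficients $\widehat\beta$ with
$$\ket{\psi} = \sum_{i} \beta_i\ket{\phi_i} + \alpha\ket{\phi_R},\qquad |\alpha|^2\calF_\calC(\ket{\phi_R})\le \varepsilon' ,$$
and $\kappa=\poly(1/\varepsilon,1/\eta(\varepsilon))$.

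The candidate output is the normalized structured part $\ket{\phi} := \Phi\widehat\beta/\|\Phi\widehat\beta\|$, where $\Phi$ is the dictionary matrix with the learned states as columns. The hypothesis is a sum of $\kappa$ states from $\calC$, as the theorem requires, so we are free to use any vector in the span. If a lower-complexity output is preferred, we can instead treat the dictionary as a finite list of candidates.

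\textbf{Key inequality.} Let $\ket{\phi^\star}\in\calC$ attain $\opt$, and write $v := \Phi\beta$ for the structured part, so $\ket{\psi}=v+\alpha\ket{\phi_R}$. Then
$$\sqrt{\opt} = |\la\phi^\star|\psi\ra| \le |\la \phi^\star| v\ra| + |\alpha|\,|\la\phi^\star|\phi_R\ra| \le |\la \phi^\star | v\ra| + \sqrt{\varepsilon'} \le \|v\| + \sqrt{\varepsilon'}.$$
We then need $|\la \psi|v\ra|/\|v\| \ge \|v\| - O(\sqrt{\varepsilon'})$. Expanding gives
$$\la v|\psi\ra = \|v\|^2 + \alpha\la v|\phi_R\ra,$$
so the task is to show that the cross term $\alpha\la v|\phi_R\ra$ is small.

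\textbf{Main obstacle.} Low $\calC$-fidelity of the residual does not by itself make it nearly orthogonal to a combination $v=\sum_i\beta_i\ket{\phi_i}$. The naive bound is
$$|\alpha\la v|\phi_R\ra|\le \|\beta\|_1\sqrt{\varepsilon'},$$
so we must control $\|\beta\|_1$. This is exactly what the ridge regularization buys us. Since $\beta=(G+\lambda\id)^{-1}\Phi^\dagger\ket\psi$ minimizes $J$, we have $\lambda\|\beta\|_2^2\le J(0)=1$. Hence
$$\|\beta\|_1\le\sqrt{\kappa/\lambda}.$$

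There is a second route that avoids the $\ell_1$ bound. By first-order optimality of ridge regression, $\Phi^\dagger(\psi - v) = \lambda\beta$. This gives
$$\la v|\psi - v\ra = \lambda\|\beta\|^2 \ge 0,$$
so $\mathrm{Re}\la v|\psi\ra\ge\|v\|^2$ holds exactly. Consequently
$$|\la v|\psi\ra|/\|v\|\ge \|v\| \ge \sqrt{\opt}-\sqrt{\varepsilon'}.$$
Squaring gives fidelity at least $\opt - 2\sqrt{\varepsilon'}$. I expect this second route to be the one that works, with the obstacle shifting to robustness under estimation error.

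\textbf{Robustness to estimated coefficients.} We only have $\widehat\beta$, not $\beta$, so we need $\|\Phi(\widehat\beta-\beta)\|$ to be small. Theorem~\ref{thm:learning_decompositions} yields $\widehat\beta$ approximating $\beta$ in $\ell_2$. This transfers through the operator bound
$$\|\Phi\|\le\sqrt{\|G\|}\le\sqrt{\kappa}.$$
Accordingly, we request parameter accuracy $\varepsilon/\poly(\kappa)$ in the tomography of $\ket{\beta}$ and in the estimation of $\|\beta\|$. This costs only polynomial overhead in the sample and time bounds. The final step is to set $\varepsilon'=\varepsilon^2/16$ and absorb the errors, giving $|\la\phi|\psi\ra|^2\ge\opt-\varepsilon$ with probability $\ge1-\delta$.
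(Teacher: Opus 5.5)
Your proposal is correct and is essentially the paper's proof. You bound $\|v\|\ge\sqrt{\opt}-\sqrt{\varepsilon'}$ via the optimal $\ket{\phi^\star}$, and your identity $\la v|\psi-v\ra=\lambda\|\beta\|_2^2\ge 0$ is exactly the paper's $(\ridgeproj_{L,\lambda})^2\preceq\ridgeproj_{L,\lambda}$ (Fact~\ref{fact:ridge_proj_contraction}). Robustness then goes through Lemma~\ref{lem:estimate_projection_psi}, and two details should be made explicit, as in the paper. First, the residual bound you need is the one for the exact ridge part, which comes from Theorem~\ref{thm:structure_learning}; Theorem~\ref{thm:learning_decompositions} states its guarantee for the estimated coefficients, for which the normal equations do not hold. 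Second, normalizing $\Phi\widehat\beta$ requires a known lower bound on $\|v\|$: the paper uses $\uptau$, and your $\sqrt{\opt}-\sqrt{\varepsilon'}$ together with the hypothesis $\opt\ge\varepsilon$ also works. Your choice $\varepsilon'=\Theta(\varepsilon^2)$ is the correct scaling, since the argument loses $2\sqrt{\varepsilon'}$ in fidelity. The paper's final chain of inequalities instead writes that loss as $\varepsilon_p$ and sets $\varepsilon_p=\varepsilon/2$, which is a slip.
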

The idea to prove this theorem is as follows:  the algorithmic decomposition theorem first identifies the ``near best" set of states in $\C$ whose span best overlaps with $\ket{\psi}$. In particular,  the algorithmic decomposition theorem to identify a small subspace
$$
L_\kappa=\mathrm{span}\{\ket{\phi_1},\ldots,\ket{\phi_\kappa}\},
$$
generated by states from $\mathcal C$, such that the ridge projection of $\ket{\psi}$ onto $L$ is already a near-optimal agnostic hypothesis. We then show that one  does not need to recover this projection exactly; rather, it suffices to estimate the projection coefficients accurately enough to prepare a nearby normalized state inside the same span, which we show is a valid state satisfying the strong agnostic promise. More concretely defining the (unnormalized)  $\ket{\varphi}=\ridgeproj_{L,\kappa}\ket{\psi}=\sum_{i=1}^{\kappa}\beta_i\ket{\phi_i}$ 
then the decomposition theorem guarantees that
$$
|\langle \varphi | \psi\rangle|^2 \ge \opt-\varepsilon.
$$
Next, we  estimate coefficients $\widehat\beta_i$ in the decomposition and show that for a specified error $\gamma \in (0,1)$
$$
\langle \psi | \sum_{i=1}^{\kappa}\widehat\beta_i\ket{\phi_i}|^2
\ge
|\langle \psi|\ridgeproj_{L,\lambda}|\psi\rangle|^2-\gamma,
\qquad
\alpha:=\|\sum_i\widehat\beta_i\ket{\phi_i}\|_2^2 \le |\langle \psi|\ridgeproj_{L,\lambda}|\psi\rangle|^2+\gamma.
$$
After normalizing, so that $\ket{\widehat{\varphi}}$ is a valid quantum state
$$
\ket{\widehat\varphi}:=\frac{1}{\sqrt{\alpha}}\sum_{i=1}^{\kappa}\widehat\beta_i\ket{\phi_i},
$$
we are able to show that 
$
|\langle \psi | \widehat\varphi\rangle|^2 \ge \opt-\varepsilon
$ 
for a suitable choice of $\gamma$.

\subsubsection{New tomography protocols} 
\paragraph{Stabilizer extent.} In this work, we give a tomography protocol for states with low stabilizer extent: we  show that one can learn these states in quasipolynomial time unconditionally and if we assumed a conjecture in additive combinatorics, we give a polynomial time tomography protocol for these states. We now state this more formally.
\begin{theorem}
\label{result:learn_stab_extent}
Let $\xi > 0,\varepsilon \in (0,1)$. Suppose $\ket{\psi}$ is an unknown $n$-qubit state with stabilizer extent $\xi(\ket{\psi})\leq \xi$. Then, there exists the following algorithms that learn $\ket{\psi}$  up to trace distance $\varepsilon$:
\begin{enumerate}[$(i)$]
      \item $\poly(n,(\xi/\varepsilon)^{\log (\xi/\varepsilon)})$-sample and time algorithm unconditionally,
      \item $\poly(n,\xi,1/\varepsilon)$-sample and time algorithm \emph{assuming}  an algorithmic~$\PFR$~conjecture,\footnote{The \emph{combinatorial version} of the polynomial Freiman-Ruzsa theorem was  resolved in a recent breakthrough~\cite{gowers2023conjecture}. The \emph{algorithmic} analogue of their result remains open in a certain ``high-doubling" regime (for the low-doubling it was recently resolved~\cite{algopfr}).}
\end{enumerate}
Furthermore, every $0.1$-error tomography protocol   needs $\Omega(n+\xi^2)$ copies.
\end{theorem}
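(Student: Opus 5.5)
The plan is to instantiate Theorem~\ref{thm:learning_decompositions} with $\calC$ the class of stabilizer states, and then argue that for states of bounded extent the residual term must itself be small in norm.

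\textbf{Weak agnostic learner.} Stabilizer states satisfy items $(i)$ and $(iii)$ of Definition~\ref{def:model_class}: they have $O(n^2)$-bit descriptions, and Clifford synthesis gives $G_{\prep},T_{\prep}=\poly(n)$. For item $(ii)$ I would invoke the known stabilizer agnostic learners based on Bell sampling and Gowers-$U^3$ inverse theorems. Unconditionally these give $\eta(\tau)\ge \tau^{O(\log 1/\tau)}$ in time $\poly(n,(1/\tau)^{\log 1/\tau})$. Assuming algorithmic $\PFR$ in the high-doubling regime they give $\eta(\tau)=\poly(\tau)$ in $\poly(n,1/\tau)$ time. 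Plugging these into Theorem~\ref{thm:learning_decompositions} with threshold $\varepsilon'$ yields the complexities in $(i)$ and $(ii)$, once $\varepsilon'=\poly(\varepsilon/\xi)$.

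\textbf{From decomposition to tomography.} Write $\ket{\psi}=\ket{\varphi}+\alpha\ket{\phi_R}$ with $\ket{\varphi}=\sum_i\beta_i\ket{\phi_i}$. By the promise of extent,
\[
|\alpha|^2=\la \alpha\phi_R|\psi\ra-\la\alpha\phi_R|\varphi\ra .
\]
I will bound the two terms separately.
\begin{itemize}
\item \emph{First term.} Expanding $\ket{\psi}=\sum_j c_j\ket{s_j}$ gives $|\la\alpha\phi_R|\psi\ra|\le \sum_j|c_j|\,|\alpha||\la\phi_R|s_j\ra|\le \xi\sqrt{\varepsilon'}$.
\item \emph{Second term.} The ridge structure should help here: the residual $(\id-\ridgeproj_\lambda)\ket{\psi}=\lambda(K+\lambda\id)^{-1}\ket{\psi}$ satisfies the stationarity condition $\Phi^\dagger(\psi-\varphi)=\lambda\beta$. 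Hence $\la\varphi|\alpha\phi_R\ra=\lambda\|\beta\|_2^2$, which is small if $\lambda$ is small. The ridge objective gives $\lambda\|\beta\|^2\le R_\lambda\le 1$, but I want it to be $O(\lambda\xi^2)$. To get this I compare with the true decomposition, which is not in the span. A cleaner route is to bound $|\la\alpha\phi_R|\varphi\ra|\le\sum_i|\beta_i|\,|\alpha||\la\phi_R|\phi_i\ra|\le\|\beta\|_1\sqrt{\varepsilon'}$, using $\|\beta\|_1\le\sqrt{\kappa}\,\|\beta\|_2\le\sqrt{\kappa/\lambda}$.
\end{itemize}
Since $\kappa$ depends on $\varepsilon'$, the parameters must be balanced carefully: choose $\lambda$, then $\varepsilon'$ small enough that $\xi\sqrt{\varepsilon'}+\sqrt{\kappa\varepsilon'/\lambda}\le\varepsilon^2$. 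This is possible because $\kappa=\poly(1/\varepsilon',1/\eta)$ and the bound on $\kappa$ from the energy increment actually scales as $O(1/(\varepsilon'\eta(\varepsilon')))$. Balancing may be circular here; if it is, I would instead use the stationarity identity $\la\varphi|\psi-\varphi\ra=\lambda\|\beta\|^2$ together with a bound on $\|\beta\|_2^2\lesssim\xi^2/\ldots$, obtained by plugging the tomography-stopping criterion into the argument. This parameter balancing is the step I expect to be the main obstacle.

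Once $|\alpha|^2\le\varepsilon^2$, the learned $\widehat\beta$ from parameter learning, normalized as in the proof of Theorem~\ref{thm:boostingagnlearning}, yields $\ket{\widehat\varphi}$ with fidelity $\ge1-O(\varepsilon^2)$, hence trace distance $\le\varepsilon$. The output is a classical description (stabilizers plus coefficients), which suffices.

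\textbf{Lower bound.} The $\Omega(n)$ term follows from Holevo, since learning stabilizer states (extent $1$) needs $\Omega(n)$ copies. For $\Omega(\xi^2)$, I would embed a family of states on $m\approx\xi^2$ qubits, such as arbitrary states on $\log$-size registers or superpositions of $\xi^2$ computational basis states with phases. A $k$-sparse state has extent at most $\sqrt{k}$ by Cauchy--Schwarz, and learning $k$-dimensional pure states to constant error requires $\Omega(k)$ copies by standard pure-state tomography lower bounds. With $k=\xi^2$ this gives $\Omega(\xi^2)$.
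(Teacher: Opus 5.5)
\textbf{The gap: showing the residual norm $|\alpha|^2$ is small.} Your ``cleaner route'' bounds the cross term by $\|\beta\|_1\sqrt{\varepsilon'}\le\sqrt{\kappa\varepsilon'/\lambda}$. Nothing prevents $\kappa$ from being of order $1/(\varepsilon'\eta(\varepsilon'))$, so this is only bounded by order $(\lambda\,\eta(\varepsilon'))^{-1/2}\ge 1$. It never becomes small; the circularity you suspected is real. Your fallback needs $\lambda\|\beta\|_2^2$ to be small, but the only bound you have is $\lambda\|\beta\|_2^2\le R_\lambda\le 1$. So as written the argument does not close.

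\textbf{The missing idea: the cross term has a sign, so it need not be small.} For the exact ridge coefficients, stationarity $\Phi^\dagger(\ket{\psi}-\Phi\beta)=\lambda\beta$ gives $\la\alpha\phi_R|\varphi\ra=\lambda\|\beta\|_2^2\ge 0$. Hence
\[
|\alpha|^2+\lambda\|\beta\|_2^2=\la\alpha\phi_R|\psi\ra\le \xi\,|\alpha|\sqrt{\calF_{\calC}(\ket{\phi_R})}.
\]
This gives $|\alpha|\le\xi\sqrt{\calF_{\calC}(\ket{\phi_R})}$, and therefore $|\alpha|^2\le\xi\sqrt{\varepsilon'}$. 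It is exactly the paper's Claim~\ref{claim:stop_cond_extent}, written there as $\alpha_t^2=\la\psi|(\id-\ridgeproj)^2|\psi\ra\le\la\psi|(\id-\ridgeproj)|\psi\ra=\alpha_t\la\psi|\psi_t\ra$ via $0\preceq\id-\ridgeproj\preceq\id$. The paper uses it to show that while the squared residual norm is at least $\varepsilon_s$, the residual has $\calC$-fidelity at least $\varepsilon_s/\xi^2$. So the fidelity check can be dropped and structure learning runs until the residual norm itself is small.

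Two adjustments are needed in your write-up. First, the identity holds only for the exact ridge $\beta$. Apply it to the output of Theorem~\ref{thm:structure_learning}, not to the post-parameter-learning decomposition with $\widehat\beta$ from Theorem~\ref{thm:learning_decompositions}. Then absorb the estimation error via normalization, as in the proof of Theorem~\ref{thm:boostingagnlearning}. Second, take $\varepsilon'=\Theta(\varepsilon^4/\xi^2)$. With these changes the rest of your plan goes through.

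(Minor: the unconditional learner the paper uses, Theorem~\ref{thm:stab_bootstrapping}, is actually strong, giving $\eta(\tau)=\tau/2$. Your weaker $\eta$ would still yield a quasi-polynomial bound.)

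\textbf{Lower bound.} Your argument is a valid alternative to the paper's. You combine a Holevo bound over stabilizer states with the $\Omega(k)$ pure-state tomography bound on the span of $k=\xi^2$ basis states, whose extent is at most $\sqrt{k}$ by Cauchy--Schwarz. The paper instead builds a single packing from tensor products of stabilizer states with uniform superpositions over a constant-weight code. Both arguments need $\xi$ small relative to $2^n$.
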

We also remark that if we are only concerned with a polynomial sample complexity bound, one can use the stabilizer estimation algorithm in~\cite{grewal2024improved} to obtain an algorithm  using $\poly(n,\xi,1/\varepsilon)$-copies and $2^{n \cdot \poly(\xi/\varepsilon)}$-time algorithm (note that this is better in time complexity than the shadow tomography protocol which would use $2^{n^2\poly(\xi/\varepsilon)}$).

The proof of this theorem follows from our algorithmic decomposition algorithm.  Let $\ket{\psi}=\sum_i c_i\ket{\phi_i}$ be the unknown  state with $\sum_i |c_i|\leq \xi$. The output of our algorithmic decomposition result on input $\ket{\psi}$ (with error set to $\varepsilon'/\poly(\xi$)) is a state  $\ket{\varphi}$ such that stabilizer fidelity of $\ket{\psi}-\ket{\varphi}$ is $\leq \varepsilon'/\poly(\xi)$. In particular, 
$$
|1-\langle \psi|\varphi\rangle|=|\langle \psi|{\psi}\rangle-\langle \psi|\varphi\rangle|\leq \sum_i |c_i|\cdot |\langle \phi_i| \psi-\varphi\rangle|\leq \xi\cdot \varepsilon=\varepsilon',
$$
which is precisely the requirement for tomography of states with low stabilizer extent. However, recall that $\ket{\varphi}$ isn't a valid quantum state. To this end, one needs to renormalize by $\|\ket{\varphi}\|$. Calling the resulting state $\ket{\widehat{\varphi}}$ and using ideas as in the quantum boosting algorithm, one can show that $|\langle \psi|\widehat{\varphi}\rangle|\geq 1-O(\varepsilon)$ thereby satisfying the tomography promise for $\ket{\psi}$. 

\paragraph{Stabilizer rank.} Furthermore, for \emph{stabilizer-rank} $k$ states (i.e., when $\ket{\psi}=\sum_{i \in [k]} c_i \ket{s_i}$ where $\ket{s_i}$ are stabilizer states), one can use the result of $\xi\leq k^k$~\cite{kalra2025stabilizer} to give an unconditional $\poly(n,k^{k^2})$ learning algorithm for stabilizer-rank $k$ states, which is $\poly(n)$ as long as $k\leq O(\sqrt{(\log n)/\log \log n})$. Prior to our work, we were not aware of tomography protocols for even states promised to have stabilizer~rank~$2$!

\paragraph{MPS extent.} 
So far, we looked at stabilizer states, which capture  structured states generated by Clifford circuits.  We now turn to the more expressive class of \emph{matrix product states} (MPS), which describe a much broader class of quantum states through local tensor contractions. This added expressivity allows MPS to represent many entangled states far beyond the stabilizer formalism, making them a central model for low-dimensional quantum many-body systems.  Formally, a matrix product state with \emph{bond dimension r} can be expressed as
$$
\ket{\psi}=\sum_{i\in \{0,1\}^n}\Tr(A^1_{i_1}\cdots A^n_{i_n})\ket{i_1,\ldots,i_n},
$$
where each matrix $A^j_i$ is an $r\times r$ complex matrix for $i\in \{0,1\},j\in [n]$. We denote  the class of states $\ket{\psi}$ satisfying the above as $\textsf{MPS}_{n,r}$.  Since these states are succinctly representable and recent work of~\cite{bakshi2024learning} gave agnostic learning algorithms for MPS, these states naturally fit into the framework of our model class.   Using this observation we obtain a tomography algorithm for the following class
$$
\mathcal{S}_{n,r,\xi}=\{\ket{\phi}=\sum_{i=1}^k c_i\ket{\psi_i}: \ket{\psi_i}\in \textsf{MPS}_{n,r}, \sum_i|c_i|\leq \xi\},
$$
for which we give the following tomography algorithm.
\begin{theorem}
\label{result:learn_MPS_extent}
Let $\xi > 0,\varepsilon \in (0,1)$. Suppose $\ket{\psi}$ is an unknown $n$-qubit state from $\mathcal{S}_{n,r,\xi}$. Then, there is an $\varepsilon$-error tomography protocol that uses $\poly(n,r,\xi,1/\varepsilon)$ many copies of $\ket{\psi}$.
\end{theorem}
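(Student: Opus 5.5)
The plan is to instantiate the general framework, Theorem~\ref{thm:learning_decompositions}, with $\calC=\textsf{MPS}_{n,r}$, and then run the same renormalization argument used for stabilizer extent. The proof has three steps.

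\textbf{Step 1: $\textsf{MPS}_{n,r}$ is a model class.} We check the three conditions of Definition~\ref{def:model_class}.
\begin{enumerate}[$(i)$]
\item An MPS is specified by $2n$ matrices of size $r\times r$, i.e.\ $O(nr^2)$ complex numbers. These can be truncated to $\poly(n,r,\log(1/\varepsilon))$ bits at negligible cost.
\item The agnostic learner of~\cite{bakshi2024learning} serves as $\calA_{\wal}$. Its guarantee is $|\langle\phi|\psi\rangle|^2\ge \calF_{\textsf{MPS}_{n,r}}(\ket{\psi})-\gamma$, possibly with output bond dimension $r'=\poly(r,n,1/\gamma)$. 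That is still a valid model class by the footnote after Definition~\ref{def:model_class}. Taking $\gamma=\varepsilon'/2$ yields $\eta(\varepsilon')\ge\varepsilon'/2$ whenever $\calF\ge\varepsilon'$, which is all the algorithm of Theorem~\ref{thm:learning_decompositions} uses. The costs $S_{\wal}$ and $T_{\wal}$ are $\poly(n,r,1/\varepsilon')$.
\item Any MPS of bond dimension $r'$ can be prepared by a sequential circuit of $n$ unitaries, each acting on $O(\log r')$ qubits. These unitaries come from left-canonicalizing the tensors (QR/SVD sweeps) and isometric embedding. Compiling each into gates gives $T_{\prep},G_{\prep}=\poly(n,r')$.
\end{enumerate}

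\textbf{Step 2: apply the decomposition theorem with a tightened error.} Write $\ket{\psi}=\sum_i c_i\ket{\psi_i}$ with $\sum_i|c_i|\le\xi$. Run Theorem~\ref{thm:learning_decompositions} with error $\varepsilon'=(\varepsilon/\xi)^2/C$ for a large constant $C$. The output is $\ket{\varphi}=\sum_{j\le\kappa}\widehat\beta_j\ket{\phi_j}$ with residual $\ket{\psi}-\ket{\varphi}=\alpha\ket{\phi_R}$, where $|\alpha|^2\calF(\ket{\phi_R})\le\varepsilon'$. Hence $|\langle\psi_i|\psi-\varphi\rangle|\le\sqrt{\varepsilon'}$ for every $i$, and
\[
|1-\langle\psi|\varphi\rangle|\le\sum_i|c_i|\,|\langle\psi_i|\psi-\varphi\rangle|\le\xi\sqrt{\varepsilon'}=:\tau .
\]
We also need control of $\|\varphi\|$. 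For this I would rely on the ridge structure: $\ket{\varphi}$ is (approximately) $\ridgeproj_{L,\lambda}\ket{\psi}$, which gives $\|\varphi\|^2\le\langle\psi|\ridgeproj_{L,\lambda}|\psi\rangle+\gamma\le 1+\gamma$, as in the boosting argument. The estimation error of $\widehat\beta$ is folded into $\gamma$.

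\textbf{Step 3: renormalize.} Set $\ket{\widehat\varphi}=\ket{\varphi}/\|\varphi\|$. Then
\[
|\langle\psi|\widehat\varphi\rangle|\ge\frac{1-\tau}{\sqrt{1+\gamma}}\ge 1-O(\tau+\gamma),
\]
so the trace distance is $O(\sqrt{\tau+\gamma})$. Choosing $\tau,\gamma=\Theta(\varepsilon^2)$ makes this at most $\varepsilon$. This is why $\varepsilon'$ is set to $\poly(\varepsilon/\xi)$. The output is classical: the $\kappa$ MPS descriptions together with $\widehat\beta$. With $\eta(\varepsilon')=\Omega(\varepsilon')$ we get $\kappa=\poly(\xi,1/\varepsilon)$, and the sample complexity is $S_{\wal}(\varepsilon')\cdot\poly(1/\varepsilon',\log 1/\delta)=\poly(n,r,\xi,1/\varepsilon)$.

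\textbf{Main obstacle.} I expect the hardest part to be Step 1(ii): making the agnostic MPS learner's guarantee (and its output class) fit Definition~\ref{def:model_class}. In particular, the learner must be robust when run on the approximately prepared residual states $\ket{\widetilde\psi_t}$. It may also output a larger bond dimension $r'$, which must stay polynomial so that $G_{\prep}$ in Step 1(iii) remains polynomial. A second, smaller issue is verifying that the bound $\|\varphi\|^2\le 1+\gamma$ holds for the estimated coefficients $\widehat\beta$ and not only for the exact ridge solution.
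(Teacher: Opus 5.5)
Your proposal follows the paper's route. The paper checks that $\textsf{MPS}_{n,r}$ fits Definition~\ref{def:model_class} (succinct description, the improper agnostic learner of~\cite{bakshi2024learning}, sequential MPS preparation circuits) and then invokes Theorem~\ref{thm:tomo_extent}, whose proof is your argument: the decomposition, the extent bound $|1-\langle\psi|\varphi\rangle|\le\xi\sqrt{\varepsilon'}$, the norm bound from Theorem~\ref{thm:parameter_learning}, and renormalization. One small fix: for trace distance $\varepsilon$ you need $\tau=\xi\sqrt{\varepsilon'}=O(\varepsilon^2)$, so take $\varepsilon'=\Theta(\varepsilon^4/\xi^2)$ (and coefficient accuracy $O(\varepsilon^2)$) rather than $(\varepsilon/\xi)^2/C$ as in Step~2; this keeps everything polynomial.
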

We make a remark about our result above.  Note that bond dimension of a sum of $\textsf{MPS}$ states increase additively, so we have that $\mathcal{S}\subseteq \textsf{MPS}_{n,kr}$. Hence, one could have employed the \textsf{MPS} learning algorithms from~\cite{cramer2010efficient,bakshi2024learning,soleimanifar2022testing}  in which case the running time of the algorithm would be $\poly(n,r,k,1/\varepsilon)$.  However, it is conceivable that  $k=\omega(n,r)$ but $\xi=\poly(n,r)$, in which case the complexity of these algorithms is superpolynomial in $n,r$. Although there are superpolynomially many MPS states in the decomposition of $\ket{\psi}$ and  $\sum_i|c_i|$ being polynomially large, suggesting that most of the MPS states in the decomposition have ``negligible" coefficients, we are not aware of any technique to do ``MPS sampling", i.e., find only the ``large weight" MPS states in the decomposition of $\ket{\psi}$.\footnote{In fact this is precisely the reason, why we haven't been able to also learn states with bounded stabilizer rank, since we do not know how to do the analogous ``stabilizer sampling".}  It is thus unclear if these tomography algorithms for MPS states can be improved with the extra assumption that $\xi=\poly(n,r)$.  Our main contribution is that, we can use our general framework of boosting to get a $\poly(n,r,\xi,1/\varepsilon)$   algorithm for $\varepsilon$-learning states with MPS extent $\xi$. 

\subsection{Discussion and open questions}
 Our work opens up several interesting directions for future work.
\begin{enumerate}
\item \emph{Improved dependencies.} For learning stabilizer-extent $\xi$ states, can we improve the  dependence from quasipolynomial in $\xi$ to polynomial in $\xi$, either by resolving the algorithmic $\PFR$ conjecture or using other techniques? Similarly, can we  improve the dependence of our algorithm on the error parameters and constants, which we did not~optimize~here. 
\item \emph{Implications of agnostic learnability.} An inspiration of this work was that in Boolean analysis, agnostic learning  of  depth-$2$ circuits implies PAC learnability of functions produced by depth-$3$ circuits. In turn, agnostic learning of depth-$3$ circuits implies learnability of depth-$4$ circuits. In this paper, analogously, we have observed that agnostic learning of stabilizer states implies learnability of states with low stabilizer extent (or rank). Could we explore this idea further? Are there implications of agnostic learning low stabilizer extent states  beyond what is suggested by our boosting framework? 
Similarly,~can we extend our work to mixed states and consider learning mixtures of Gibbs states of structured~Hamiltonians?
\item \emph{Structured decompositions of quantum unitaries.} Numerous learning algorithms for structured unitaries have exploited its Pauli decomposition (e.g.,~\cite{chen2023juntas,adgp2024degree,grewal2025query}), which is often viewed as the Fourier analysis of unitaries and channels. Recent work has focused on other structural properties of unitaries when tackling problems of testing Clifford unitaries~\cite{hinsche2025clifford,bu2025quantum} and learning bosonic Gaussian unitaries~\cite{fanizza2025efficient}. A natural question inspired from the current work is then: "Can unitaries with compact decompositions over structured classes such as Cliffords or bosonic Gaussian unitaries, be learned efficiently?" 
\item \emph{Continuous variable setting.} More recently, there has been recent progress in improved tomography protocols for fermionic Gaussian states~(e.g., \cite{bittel2025optimal}). Our tomography protocols imply that one can learn unknown states $\ket{\psi}$ that admit low-complexity decompositions over Gaussian states~\cite{bravyi2017complexity,cudby2023gaussian} if given an agnostic learner for Gaussian states. Such Gaussian decompositions have begun to play a role in simulation of matchgate circuits~\cite{cudby2023gaussian} and we show that these can be learned efficiently if there is an efficient agnostic learner. The question is then: ``Can fermionic Gaussian states be agnostically learned efficiently?".
\end{enumerate}

\paragraph{Acknowledgments.} AD thanks Madhur Tulsiani and Fernando Granha Jeronimo for discussions regarding the weak abstract regularity lemma. AD and SA thank Sitan Chen and Jordan Cotler for conversations on states with compact product state structure. ChatGPT Plus contributed routine combinatorial calculations and bounds, which were checked by hand.

\section{Preliminaries}

We will often use the following definitions and facts about matrices.  For a Hermitian matrix $A\in \mathbb{C}^{n\times n}$, we define the operator norm $\|A\|=\max_i |\lambda_i(A)|$. For vectors $v\in \mathbb{C}^n$, by $\|v\|_2$ we mean $\|v\|_2=\sqrt{\sum_i|v_i|^2}$.

\begin{fact}[Weyl's identity]
\label{fact:weyl}
   For Hermitian matrices $A,B\in \mathbb{C}^{n\times n}$ we have that 
   $$\lambda_{min}(A)+\lambda_{min}(B)\leq \lambda_{min}(A+B)\leq \lambda_{max}(A+B)\leq \lambda_{max}(A)+\lambda_{max}(B).
   $$ 
\end{fact}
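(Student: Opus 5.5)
The plan is to use the variational (Rayleigh quotient) characterization of the extreme eigenvalues of a Hermitian matrix and nothing else. For Hermitian $M\in\mathbb{C}^{n\times n}$ we have
$$
\lambda_{\min}(M)=\min_{\|v\|_2=1} v^\dagger M v,\qquad \lambda_{\max}(M)=\max_{\|v\|_2=1} v^\dagger M v .
$$
First I would justify this by the spectral theorem. Write $M=\sum_j \lambda_j u_j u_j^\dagger$ with $\{u_j\}$ an orthonormal eigenbasis. For a unit vector $v$ this gives $v^\dagger M v=\sum_j \lambda_j |u_j^\dagger v|^2$, which is a convex combination of the eigenvalues because $\sum_j|u_j^\dagger v|^2=\|v\|_2^2=1$. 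Hence $v^\dagger M v$ lies in $[\lambda_{\min}(M),\lambda_{\max}(M)]$, and the two endpoints are attained at the corresponding eigenvectors.

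Next, note that $A+B$ is Hermitian, so the characterization applies to it. For the lower bound, let $v$ be a unit eigenvector of $A+B$ for $\lambda_{\min}(A+B)$. Then
$$
\lambda_{\min}(A+B)=v^\dagger A v+v^\dagger B v\geq \lambda_{\min}(A)+\lambda_{\min}(B),
$$
where each term is bounded below by the characterization. The upper bound is symmetric: take $w$ a unit eigenvector of $A+B$ for $\lambda_{\max}(A+B)$, and then
$$
\lambda_{\max}(A+B)=w^\dagger Aw+w^\dagger Bw\leq \lambda_{\max}(A)+\lambda_{\max}(B).
$$
The middle inequality $\lambda_{\min}(A+B)\leq\lambda_{\max}(A+B)$ is trivial, since both are eigenvalues of the same matrix and eigenvalues are ordered.

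There is no real obstacle here. The only point requiring care is that Hermiticity is what makes the spectrum real and guarantees an orthonormal eigenbasis, so that the Rayleigh quotient is real and the convex-combination argument goes through. I would also note that the operator norm identity $\|A\|=\max_i|\lambda_i(A)|$ stated just before is consistent with this: it equals $\max(|\lambda_{\min}|,|\lambda_{\max}|)$. This form of the bound is what later gets used, for instance to bound the minimum eigenvalue of $G+\lambda\id$ below by $\lambda$ when $G\succeq 0$.
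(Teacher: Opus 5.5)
Your argument is correct and complete. The Rayleigh-quotient characterization from the spectral theorem, evaluated at an extremal unit eigenvector of $A+B$, gives both outer inequalities directly, and the middle one is trivial.

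The paper gives no proof of this statement; it is quoted as a standard fact, so there is no alternative route to compare against. Your proof is the standard justification. It also covers the later use in Lemma~\ref{lem:G_K_BE}$(iii)$, where the spectrum of $G+\lambda\id$ is placed in $[\mu+\lambda,\,k+\lambda]$.
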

\begin{fact}[Woodbury matrix identity~\cite{hager1989inverse}]
\label{fact:woodburry}
   Let  $A,U,D,V$ be matrices of compatible dimensions (and as long as involved inverses exist), we have that
$$
(A + UDV)^{-1} = A^{-1} - A^{-1}U(D^{-1} + V A^{-1} U)^{-1} V A^{-1}.
$$
\end{fact}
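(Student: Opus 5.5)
The statement is the Woodbury identity, so I will verify it directly by multiplication rather than derive it. Write $M = A + UDV$ and let
$$
X = A^{-1} - A^{-1}U(D^{-1} + VA^{-1}U)^{-1}VA^{-1},
$$
assuming $A$, $D$ and $S := D^{-1}+VA^{-1}U$ are invertible, which is the meaning of ``as long as involved inverses exist''. The plan is to show $MX = \id$. Since the matrices are square, this forces $XM = \id$ as well, and so $X = M^{-1}$.

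\textbf{Expanding the product.} Distributing gives
$$
MX = \id + UDVA^{-1} - US^{-1}VA^{-1} - UDVA^{-1}US^{-1}VA^{-1}.
$$
The last three terms all have the form $U(\cdot)VA^{-1}$. Factoring gives
$$
MX = \id + U\bigl[D - S^{-1} - DVA^{-1}US^{-1}\bigr]VA^{-1}.
$$

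\textbf{Showing the bracket vanishes.} This is the key step. Pull out $D$ on the left and $S^{-1}$ on the right. The middle expression becomes
$$
D\bigl[S - D^{-1} - VA^{-1}U\bigr]S^{-1},
$$
which is zero by the definition of $S$. Hence $MX = \id$.

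\textbf{Extending to the invertibility question.} To show that $M$ is invertible exactly when $S$ is, given $A$ and $D$ invertible, I would use the block matrix
$$
\begin{pmatrix} A & U \\ V & -D^{-1} \end{pmatrix}.
$$
Its two Schur complements are $-S$ and $M$, via the formula $-D^{-1} - VA^{-1}U = -S$ on one side and the corresponding elimination on the other. Equating the two determinant factorizations then gives the equivalence.

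The only real obstacle is bookkeeping. Matrix products do not commute, so the factorization in the bracket step must keep $D$ on the left and $S^{-1}$ on the right. This is where I expect a careless proof to fail.
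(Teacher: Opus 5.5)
Your proof is correct. The expansion of $MX$, the regrouping into $U[\,D - S^{-1} - DVA^{-1}US^{-1}\,]VA^{-1}$, and the factorization of the bracket as $D\,[S - D^{-1} - VA^{-1}U]\,S^{-1} = 0$ all check out. Since $M$ and $X$ are both square of the same size as $A$, a right inverse is automatically a two-sided inverse.

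The paper gives no proof of this fact; it is quoted from~\cite{hager1989inverse}. It is used once, in Claim~\ref{claim:kernel_trick}$(ii)$, with $A=\lambda\id_N$, $U=\Phi$, $D=\id_k$, $V=\Phi^\dagger$. There $S=\id_k+\Phi^\dagger\Phi/\lambda$ is positive definite, so your invertibility hypotheses hold. Your direct-multiplication argument is therefore a complete, self-contained justification. The Schur-complement paragraph is also correct, but it is not needed for the statement as posed.
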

\begin{fact}[Push-through identity~\cite{henderson1981inverse}]
\label{fact:pushthrough}
   For  matrices $A$ and $B$ of compatible dimensions (and as long as inverses exist), we have that
   $$
   A(BA + \lambda \id)^{-1} = (AB + \lambda I)^{-1}A.
   $$
\end{fact}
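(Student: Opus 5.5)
The plan is to prove the Push-through identity (Fact~\ref{fact:pushthrough}) directly, by an intertwining relation, with no appeal to the earlier Facts. Let $A$ be $m\times n$ and $B$ be $n\times m$, so that $BA+\lambda\id$ is $n\times n$ and $AB+\lambda\id$ is $m\times m$. The key observation is that multiplying $A$ on either side by the corresponding shifted product gives the same matrix:
\begin{align*}
A(BA+\lambda\id) = ABA + \lambda A = (AB+\lambda\id)A .
\end{align*}
This is just distributivity and associativity of matrix multiplication, together with the fact that the identity commutes past $A$. The only thing to watch is that the two $\id$'s have different sizes, $n$ on the left and $m$ on the right.

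Next, assuming both inverses exist, I would multiply this identity on the left by $(AB+\lambda\id)^{-1}$ and on the right by $(BA+\lambda\id)^{-1}$. The left-hand side becomes $(AB+\lambda\id)^{-1}A$ and the right-hand side becomes $A(BA+\lambda\id)^{-1}$, which is exactly the claimed equality.

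The only point needing care, and the nearest thing to an obstacle, is the phrase ``as long as inverses exist''. I would remark that for $\lambda\neq 0$ the two invertibility conditions are in fact equivalent, so assuming one suffices. The reason is that $AB$ and $BA$ have the same nonzero eigenvalues, and hence $-\lambda$ is an eigenvalue of one exactly when it is an eigenvalue of the other. To see this, if $ABv=\mu v$ with $\mu\neq 0$ and $v\neq 0$, then $Bv\neq 0$ and $BA(Bv)=\mu Bv$; the converse is symmetric.

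In the paper's application $\lambda>0$, and the matrices are $\Phi$ and $\Phi^\dagger$, so $BA=G$ and $AB=K$. Both $G+\lambda\id$ and $K+\lambda\id$ are positive definite, with minimum eigenvalue at least $\lambda$ by Fact~\ref{fact:weyl}, so both inverses exist unconditionally there.
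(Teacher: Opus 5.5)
Your proof is correct and complete. The paper gives no proof of Fact~\ref{fact:pushthrough} and only cites Henderson--Searle. Your intertwining identity $A(BA+\lambda\id)=ABA+\lambda A=(AB+\lambda\id)A$, followed by multiplying on the left by $(AB+\lambda\id)^{-1}$ and on the right by $(BA+\lambda\id)^{-1}$, is the standard self-contained proof, and you handle the two different identity sizes correctly.

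Your side remark is also correct: for $\lambda\neq 0$, invertibility of $AB+\lambda\id$ and of $BA+\lambda\id$ are equivalent, because $AB$ and $BA$ share their nonzero eigenvalues. This slightly sharpens the hypothesis. In the proof of Claim~\ref{claim:kernel_trick} the paper verifies positive definiteness of $G+\lambda\id_k$ and $K+\lambda\id_N$ separately, whereas your observation shows that checking one of them suffices.
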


\subsection{Stabilizer states}

\paragraph{Paulis and Cliffords.} The $2$-qubit Pauli  matrices are defined as follows
$$\id=\begin{pmatrix}
1 & 0\\
0 & 1
\end{pmatrix}, X=\begin{pmatrix}
0 & 1\\
1 & 0
\end{pmatrix}, Y=\begin{pmatrix}
0 & -i\\
i & 0
\end{pmatrix},Z=\begin{pmatrix}
1 & 0\\
0 & -1
\end{pmatrix}
$$
It is well-known that the $n$-qubit Pauli matrices $\{\id,X,Y,Z\}^n$ form an  {orthonormal basis} for $\mathbb{C}^n$.    In particular, for every $x=(a,b)\in \mathbb{F}_2^{2n}$, one can define the \emph{Weyl operator}
$$
W_x=i^{a\cdot b} (X^{a_1}Z^{b_1}\otimes X^{a_2}Z^{b_2} \otimes \cdots \otimes X^{a_n}Z^{b_n}).
$$  
and these operators $\{W_x\}_{x \in \FF_2^{2n}}$ are orthonormal.  Clifford unitaries are those generated by Hadamard gate $\textsf{Had}=\frac{1}{\sqrt{2}}\begin{pmatrix}
1 & 1\\
1 & -1
\end{pmatrix}$, controlled-$X$ gate and $S=\begin{pmatrix}
1 & 0\\
0 & i
\end{pmatrix}$ gate. The output of Clifford circuits on the all $\ket{0^n}$ input are called \emph{stabilizer states}.

\paragraph{Stabilizer dimension.} We first define the \emph{unsigned stabilizer group} as 
$$
\weyl{\ket{\psi}}=\{x\in \FF_2^{2n}: \langle \psi|W_x|\psi\rangle \in \{-1,1\}\}
$$ to be the Pauli matrices that stabilize $\ket{\psi}$.  We say that an $n$-qubit pure quantum state $\ket{\psi}$ has stabilizer dimension of $k$ if $\ket{\psi}$ is stabilized by an Abelian group of $2^k$ Pauli operators, in other words $\dim(\weyl{\ket{\psi}}) = k$. A stabilizer state has the maximal stabilizer dimension of $n$. Let $\Sh(n-t)$ be the states with stabilizer dimension $n-t$, i.e., if $\ket{\psi}\in \Sh(n-t)$, then $\dim(\weyl{\ket{\psi}})\geq n-t$.

\subsection{Agnostic learning}
We define the model of state agnostic learning. Consider a concept class of states $\C=\{\ket{\psi_1},\ldots,\ket{\psi_m}\}$.
In agnostic learning, an algorithm is given copies of an unknown $\ket{\psi}$ and the goal is to output an $\ket{\phi}\in \C$ such that
$$
|\langle \psi|\phi'\rangle|^2\geq \max_{i\in [m]}|\langle \psi|\phi_i\rangle|^2-\varepsilon.
$$
We also define a \emph{weak} agnostic learner as one, wherein the learner outputs an $\ket{\phi}\in \C$ such that
$$
|\langle \psi|\phi'\rangle|^2\geq \eta\hspace{0.2mm}(\max_{i\in [m]}|\langle \psi|\phi_i\rangle|^2),
$$
where $\eta:[0,1]\rightarrow [0,1]$ is an arbitrary function. The sample complexity of a learner is the total number of copies used by the algorithm to satisfy the above guarantee and the time complexity is the total time used by the algorithm (i.e., number of one and two-qubit quantum gates used in the algorithm). We say an algorithm is \emph{sample and time efficient} these complexities  scale polynomial in $n,1/\varepsilon$ and the the description size of the class. Furthermore,  if the learner outputs $\ket{\phi'}\in \C$, then the learner is called \emph{proper}, else its an \emph{improper} learner.

\subsection{Extent of states with respect to a class}
The notion of \emph{stabilizer extent} was introduced in~\cite{bravyi2016trading}. In this work, we will be concerned with a more general notion of extent (which also captures stabilizer extent). Let $\C$ be a model class of states. For an arbitrary state $\ket{\psi}$, we define the extent of $\ket{\psi}$ with respect to $\C$ as\footnote{We remark that in literature~\cite{bravyi2016trading,bravyi2019simulation} extent was defined the \emph{squared} $\ell_1$ norm. For the purposes of this paper, the squared $\ell_1$ norm or the $\ell_1$ norm as defined below will not change the main results.}
\begin{equation}\label{def:extent_C}
\xi(\ket{\psi})=\min\Big\{ \sum_i |c_i| : \ket{\psi}=\sum_i c_i \ket{\phi_i}, \ket{\phi_i}\in \C\Big\}.    
\end{equation}
Finally, we define 
$$
\C(t)=\{\ket{\psi}:\xi(\ket{\psi})\leq t\},
$$ 
as the set of $\ket{\psi}$ such that $\xi(\ket{\psi})\leq t$. We remark that although $\xi(\cdot)$ depends on the class $\C$, we do not parameterize $\xi$ by $\C$ (to avoid notational clutter); the class $\C$ in consideration will be  clear from context when we discuss the $\xi(\cdot)$ parameter. If $\C$ is the class of stabilizer states, then $\xi(\ket{\psi})$ is precisely the \emph{stabilizer extent} of $\ket{\psi}$.
Similarly, we say a quantum state $\ket{\psi}$ has \emph{stabilizer rank $k$}, if $\ket{\psi}$ can be expressed $\ket{\psi}=\sum_{i=1}^k c_i\ket{s_i}$ where $c_i\in \mathbb{C}$ and $\ket{s_i}$ are stabilizer states. More formally, we define stabilizer rank of a state $\ket{\psi}$ as
$$
\chi(\ket{\psi})=\min\{k: \ket{\psi}=\sum_{i=1}^k c_i \ket{s_i}, \ket{s_i} \text{ are stabilizer states}\}.
$$
Recent results~\cite{mehraban2024improved,mehraban2024quadratic,kalra2025stabilizer} pruned the relationship between  stabilizer extent and rank and we have the following result.  
\begin{theorem}[{\cite{kalra2025stabilizer}}]\label{thm:ub_stab_extent_stab_rank_states}
For an $n$-qubit $\ket{\psi}$  with stabilizer rank $\kappa$, we have
$
\xi(\ket{\psi}) \leq O((2 \kappa)^{(2\kappa + 1)/2}).
$
\end{theorem}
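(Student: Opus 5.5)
Since this is the result of~\cite{kalra2025stabilizer}, which the paper cites, I would only aim to reconstruct an argument of the same shape. The plan is to bound the $\ell_1$ norm of the coefficients of a \emph{minimal} stabilizer decomposition through the spectrum of its Gram matrix.

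\textbf{Step 1 (reduction to the Gram matrix).} Write $\ket{\psi}=\sum_{i=1}^{\kappa} c_i\ket{s_i}$ with $\kappa=\chi(\ket{\psi})$. By minimality the $\ket{s_i}$ are linearly independent, so the Gram matrix $G_{ij}=\langle s_i|s_j\rangle$ is positive definite. Normalization gives $1=\|\ket{\psi}\|^2=c^\dagger G c\geq \lambda_{\min}(G)\|c\|_2^2$. Hence
\[
\xi(\ket{\psi})\leq \|c\|_1\leq \sqrt{\kappa}\,\|c\|_2\leq \sqrt{\kappa/\lambda_{\min}(G)} .
\]
The target exponent $(2\kappa+1)/2$ therefore follows if we show $\lambda_{\min}(G)\geq (2\kappa)^{-2\kappa}$, up to constants. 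There is freedom here: we may pass to a different rank-$\kappa$ decomposition of the same $\ket{\psi}$ if it has a better-conditioned Gram matrix.

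\textbf{Step 2 (arithmetic structure of $G$).} Overlaps of stabilizer states are either $0$ or of the form $\omega^{a}2^{-k/2}$ with $\omega=e^{i\pi/4}$. After applying a Clifford that maps $\ket{s_1}\mapsto\ket{0^n}$, I would identify an effective register of $m=O(\kappa)$ qubits that carries all the differences among the $\kappa$ states. The tool is the intersection of the unsigned stabilizer groups $\weyl{\ket{s_i}}$: each state adds at most a bounded number of independent generators beyond what is common to all of them. On the complement of this register, the states would agree up to a common stabilizer factor. The goal is to conclude that $G$ equals the Gram matrix of $\kappa$ stabilizer states on roughly $\kappa$ qubits, so that every nonzero entry has magnitude at least $2^{-O(\kappa)}$.

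\textbf{Step 3 (eigenvalue lower bound).} Use $\lambda_{\min}(G)\geq \det G/\lambda_{\max}(G)^{\kappa-1}$, together with $\lambda_{\max}(G)\leq\kappa$ since all entries have modulus at most $1$. Then lower-bound $\det G$ by integrality. A suitable power of $\sqrt{2}$ times $\det G$ lies in the ring $\mathbb{Z}[\omega]$, and it is nonzero. Its norm down to $\mathbb{Z}$ is therefore at least $1$, while its Galois conjugates are bounded by Hadamard's inequality. Together these give $\det G\geq (2\kappa)^{-O(\kappa)}$.

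\textbf{Main obstacle.} The hard part is Steps 2--3: obtaining a bound on $\lambda_{\min}(G)$ that is independent of $n$ and as strong as $(2\kappa)^{-2\kappa}$ rather than $2^{-O(\kappa^2)}$. A naive integrality argument only gives the weaker $2^{-O(\kappa^2)}$, because the denominators $2^{k/2}$ compound across the $\kappa$ columns. To avoid this, I would first try an inductive, Gram--Schmidt-style argument. Add the states one at a time and show that the distance from $\ket{s_{j+1}}$ to $\mathrm{span}\{\ket{s_1},\ldots,\ket{s_j}\}$ is at least $1/\poly(j)$, after possibly swapping $\ket{s_{j+1}}$ for another stabilizer state in the same span that preserves the rank. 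The product of these distances squared is $\det G$, which would give the $(2\kappa)^{-O(\kappa)}$ scaling.
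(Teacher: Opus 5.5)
The paper does not prove this statement; it imports it from the cited work. So I can only assess your argument on its own terms, and it has a genuine gap.

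\textbf{Step~1 is fine, Step~2 is false.} Step~1 is sound and even matches the shape of the bound, since $(2\kappa)^{(2\kappa+1)/2}=\sqrt{\kappa\cdot 2(2\kappa)^{2\kappa}}$. What you need is therefore exactly an $n$-independent bound $\lambda_{\min}(G)\ge \tfrac12(2\kappa)^{-2\kappa}$ for some rank-$\kappa$ decomposition. Step~2, which is meant to supply the $n$-independence, is false: the unsigned stabilizer groups in a decomposition need not share anything. $\weyl{\ket{0^n}}$ consists of $Z$-type Weyl operators and $\weyl{\ket{+^n}}$ of $X$-type ones, so they intersect trivially, and $\la 0^n|+^n\ra=2^{-n/2}$. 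The state proportional to $\ket{0^n}+\ket{+^n}$ has stabilizer rank $2$. For $n\ge 3$ it is an eigenvector of no non-identity Pauli, so no Clifford compresses it onto fewer than $n$ qubits. A direct case check (supports and phase functions) also shows that for $n\ge 3$ this is essentially its only two-term stabilizer decomposition. Hence Gram entries of size $2^{-\Theta(n)}$ cannot be avoided by switching decompositions.

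\textbf{Consequence for Step~3.} The common denominator in your integrality argument is then $2^{\Theta(\kappa n)}$, and the norm argument yields only $\det G\ge 2^{-O(\kappa n)}\kappa^{-O(\kappa)}$, which depends on $n$. In this example $G$ is perfectly conditioned. Tiny overlaps are harmless for $\lambda_{\min}(G)$ but fatal for a denominator-counting argument, so an $n$-independent proof has to exploit that small overlaps contribute little, not control their arithmetic.

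\textbf{Even granting Step~2, the bound is too weak.} As you note yourself, integrality over $O(\kappa)$ qubits gives only $\det G\ge 2^{-O(\kappa^2)}$, hence $\xi\le 2^{O(\kappa^2)}$, far weaker than the statement. Your proposed repair is that, after a suitable swap, each $\ket{s_{j+1}}$ lies at distance at least $1/\poly(j)$ from $\mathrm{span}\{\ket{s_1},\ldots,\ket{s_j}\}$. This is not a reduction: for any ordering the product of these squared distances equals $\det G$, so it restates the bound you need. You give no argument for it, nor for how the swap would be chosen.

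\textbf{The exponent is also off.} Even a bound $\det G\ge(2\kappa)^{-O(\kappa)}$ would only give $\xi\le(2\kappa)^{O(\kappa)}$. To reach the stated $O\big((2\kappa)^{(2\kappa+1)/2}\big)$ you need $\lambda_{\min}(G)\gtrsim(2\kappa)^{-2\kappa}$ with the exact exponent. Going through $\lambda_{\min}\ge\det G/\lambda_{\max}^{\kappa-1}$ already costs an extra factor $\kappa^{\kappa-1}$.
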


\subsection{Block-encodings}
In this work, we will crucially use a well-known idea of \emph{block-encodings}, which gives a technique to implement an arbitrary linear operator by embedding it into a unitary operator with larger dimensions after appropriate scaling.  The usual definition of block-encoding of square matrices is as~follows.
\begin{definition}[\cite{gilyen2019qsvt}]
\label{def:BE}
Given a matrix $A \in \mathbb{C}^{N \times N}$ with $N=2^n$, if we can find $\alpha \in \mathbb{R}^+$, and an $(m+n)$-qubit unitary matrix $U_A$ such that
$$
\norm{A - \alpha (\la 0^m | \otimes I_N) U_A (\ket{0^m} \otimes I_N)}_2 \leq \varepsilon,
$$
then $U_A$ is called an $(\alpha,m,\varepsilon)$-block-encoding of $A$. In particular, when the block encoding is exact with $\varepsilon=0$, $U_A$ is called an $(\alpha,m)$-block-encoding of $A$.
\end{definition}

We will now describe results that we will require for manipulating block-encodings. Firstly, we will require the ability to execute arithmetic operations on block-encodings. The following lemma collects relevant results from \cite[Lemma~17]{chakraborty2023quantum} and \cite[Lemmas 54, arXiv version]{gilyen2019qsvt}.
\begin{lemma}\label{lem:sum_prod_BEs}
If $A$ has an $(\alpha, a, \varepsilon)$ block encoding with gate complexity $T_A$ and $B$ has a $(\beta, b, \delta)$ block encoding with gate complexity $T_B$, then
\begin{enumerate}[$(i)$]
\item $c_1 A + c_2 B$ for $c_1,c_2 \in \mathbb{R}^{+}$ has an $(c_1 \alpha + c_2 \beta, \max(a,b) + 1, c_1 \varepsilon + c_2 \delta)$-block encoding which is implementable using $O(T_A + T_B)$~gates.
\item $AB$ has an $(\alpha \beta, a + b, \alpha \delta + \beta \varepsilon)$ block encoding with gate complexity $O(T_A + T_B)$.
\end{enumerate}
\end{lemma}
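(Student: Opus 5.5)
We prove the two parts of Lemma~\ref{lem:sum_prod_BEs} by writing down explicit circuits and then propagating the approximation errors with the triangle inequality.

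\textbf{Part $(i)$: linear combinations.} We use the standard linear-combination-of-unitaries construction. We add one ancilla qubit and pad the smaller block-encoding with idle ancillas, so both act on $\max(a,b)$ ancillas. We then prepare the single-qubit state
$$V\ket{0}=\frac{\sqrt{c_1\alpha}\ket{0}+\sqrt{c_2\beta}\ket{1}}{\sqrt{c_1\alpha+c_2\beta}}.$$
Next we apply the controlled unitary $\ketbra{0}{0}\otimes U_A+\ketbra{1}{1}\otimes U_B$ and finish with $V^\dagger$.

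Projecting all ancillas onto $\ket{0}$ gives the top-left block
$$\frac{c_1\alpha\,\widetilde{A}/\alpha+c_2\beta\,\widetilde{B}/\beta}{c_1\alpha+c_2\beta}.$$
Here $\widetilde{A}$ and $\widetilde{B}$ denote the blocks actually encoded, scaled so that $\|A-\widetilde{A}\|\le\varepsilon$ and $\|B-\widetilde{B}\|\le\delta$. After multiplying by the normalization $c_1\alpha+c_2\beta$, the triangle inequality bounds the error by $c_1\varepsilon+c_2\delta$.

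The gate count is $O(T_A+T_B)$. Controlling a circuit costs only a constant factor per gate, and $V$ is a single rotation.

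\textbf{Part $(ii)$: products.} We act with $U_A$ on its own $a$ ancillas together with the system, and with $U_B$ on a disjoint set of $b$ ancillas together with the system. We then compose the two as $(I_b\otimes U_A)(I_a\otimes U_B)$.

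Projecting both ancilla registers onto $\ket{0}$ gives
$$(\la 0^a|\otimes I)U_A(\ket{0^a}\otimes I)\cdot(\la 0^b|\otimes I)U_B(\ket{0^b}\otimes I)=\frac{\widetilde{A}\widetilde{B}}{\alpha\beta}.$$
This holds because the ancilla registers are disjoint, so each projection acts only on its own unitary. This is an $(\alpha\beta,a+b)$ encoding of $\widetilde{A}\widetilde{B}$.

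The error then follows from
$$\|AB-\widetilde{A}\widetilde{B}\|\le\|A-\widetilde{A}\|\,\|B\|+\|\widetilde{A}\|\,\|B-\widetilde{B}\|\le\varepsilon\beta+\alpha\delta.$$
This uses $\|\widetilde{A}\|\le\alpha$, since $\widetilde{A}$ is $\alpha$ times a block of a unitary. It also needs $\|B\|\le\beta$. If $B$ is only known up to $\delta$, we instead get $\|B\|\le\beta+\delta$, which adds a lower-order $\varepsilon\delta$ term; the stated bound holds when the exact operator norms are bounded by the scalings.

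\textbf{Main obstacle.} The only subtle point is this norm bookkeeping in part $(ii)$. We split the error asymmetrically as above, putting the approximate $\widetilde{A}$ next to the error in $B$, so that the bound $\|\widetilde{A}\|\le\alpha$ is the one we need. Gate complexity is again $O(T_A+T_B)$.
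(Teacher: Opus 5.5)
Your proof is correct and is the standard argument behind the results the paper cites: a linear combination of unitaries via a one-qubit state-preparation pair for $(i)$, and composition on disjoint ancilla registers with the telescoping error split for $(ii)$. The paper gives no proof of its own and defers to those references. Your caveat in $(ii)$ is genuinely necessary. Take $A=(\alpha+\varepsilon)\id$ and $B=(\beta+\delta)\id$, each block-encoded by the identity unitary. Then $\|AB\|-\alpha\beta=\alpha\delta+\beta\varepsilon+\varepsilon\delta$, so no $(\alpha\beta,a+b,\alpha\delta+\beta\varepsilon)$-block-encoding of $AB$ exists when $\varepsilon,\delta>0$. It is nonetheless harmless here, because the paper only applies $(ii)$ to $K(K+\lambda\id)^{-1}$ with $K$ encoded exactly, and your split $\|A-\widetilde{A}\|\,\|B\|+\|\widetilde{A}\|\,\|B-\widetilde{B}\|$ then gives the error $k\varepsilon_1$ with no norm assumption.
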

Lastly, we will require the ability to implement a block-encoding of the inverse of $A$. Among the results available~(e.g., \cite{harrow2009hhl,gilyen2019qsvt,chakraborty2019power,chakraborty2023quantum}), we will use the following result that can be shown to follow from \cite[Theorem~56, arXiv]{gilyen2019qsvt}.
\begin{lemma}\label{lem:inverse_BE}
Let $\delta, \varepsilon \in (0,1/2]$. Suppose $A$ is a Hermitian matrix with a $(\alpha,a,0)$ block-encoding $U_A$ with gate complexity $T_A$ and the spectrum of $(A/\alpha)$ lies in $[\delta,1]$. Then, there is a $(2/(\alpha \delta), a + 2, \varepsilon)$ block-encoding of $A^{-1}$ with gate complexity $O\Big( (T_A + a)/\delta \log(2/(\alpha \delta \varepsilon))\Big)$.
\end{lemma}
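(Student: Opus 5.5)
This lemma is a direct application of the QSVT-based matrix inversion in \cite[Theorem~56, arXiv]{gilyen2019qsvt}, together with a rescaling and amplification step to put the result in the stated form. Throughout, write $\tilde A := A/\alpha$. By hypothesis $U_A$ is an exact $(1,a,0)$ block-encoding of $\tilde A$, and $\tilde A$ is Hermitian with spectrum in $[\delta,1]$. The plan has four steps.

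\textbf{Step 1: polynomial approximation of $1/x$.} We want a real odd polynomial $P$ that is a good approximation of $\frac{\delta}{2x}$ on $[\delta,1]$ and satisfies $|P(x)|\le 1$ on $[-1,1]$.

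\begin{itemize}
\item The standard construction (Corollary~69 of the arXiv version, also used in \cite{childs2017quantum}) multiplies $1/x$ by a smoothed rectangle function that suppresses the region near $0$.
\item It gives degree $d=O\big(\frac{1}{\delta}\log\frac{1}{\delta\varepsilon'}\big)$ and accuracy $\varepsilon'$ on $[\delta,1]$.
\item We restrict to $[\delta,1]$ rather than $[-1,-\delta]\cup[\delta,1]$. This causes no loss, since oddness handles the symmetric part and only the positive spectrum is present.
\end{itemize}

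\textbf{Step 2: singular value transformation.} Since $\tilde A$ is Hermitian and positive on its spectrum, its singular values coincide with its eigenvalues. Moreover, for an odd polynomial the singular value transform $P^{(SV)}(\tilde A)$ equals the eigenvalue transform $P(\tilde A)$.

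\begin{itemize}
\item QSVT \cite[Theorem~17, arXiv]{gilyen2019qsvt} gives a $(1,a+1,0)$ block-encoding of $P(\tilde A)$.
\item It uses $d$ applications of $U_A$ and $U_A^\dagger$, plus $O(ad)$ extra gates for the controlled reflections about $\ket{0^a}$.
\item Implementing a real polynomial as opposed to a complex one may cost one more ancilla, via an LCU of $P$ with its conjugate. This gives $a+2$ ancillas in total and gate complexity $O((T_A+a)d)$.
\end{itemize}

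\textbf{Step 3: error accounting.} On the spectrum we have $\|P(\tilde A)-\frac{\delta}{2}\tilde A^{-1}\|\le\varepsilon'$. Also
$$
A^{-1}=\frac{1}{\alpha}\tilde A^{-1}=\frac{2}{\alpha\delta}\cdot\frac{\delta}{2}\tilde A^{-1}.
$$
Hence the same unitary is a $\big(\frac{2}{\alpha\delta},a+2,\frac{2\varepsilon'}{\alpha\delta}\big)$ block-encoding of $A^{-1}$. Choosing $\varepsilon'=\alpha\delta\varepsilon/2$ makes the error exactly $\varepsilon$.

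\textbf{Step 4: final degree.} With this choice of $\varepsilon'$ the degree becomes
$$
d=O\Big(\frac1\delta\log\frac{2}{\alpha\delta\varepsilon}\Big),
$$
up to absorbing the $\log(1/\delta)$ term. This matches the claimed complexity $O\big((T_A+a)/\delta\cdot\log(2/(\alpha\delta\varepsilon))\big)$.

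\textbf{Main obstacle.} The delicate point is the constant bookkeeping. The normalization $\delta/2$ is needed so that $|P|\le1$ on all of $[-1,1]$, including the region $(-\delta,\delta)$ where $1/x$ blows up; this is what forces the prefactor $2/(\alpha\delta)$. The ancilla count also has to be checked carefully, since it should come out as $a+2$ rather than $a+1$ or $a+3$. To settle both, I would follow the explicit construction in \cite[Theorem~41/56]{gilyen2019qsvt}, which already packages these constants.
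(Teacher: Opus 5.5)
Your proposal is correct and follows the paper's route. The paper gives no proof beyond attributing the lemma to the QSVT matrix-inversion result of \cite{gilyen2019qsvt}, and your argument is the standard derivation of that result: an odd bounded polynomial approximating $\delta/(2x)$ on $[\delta,1]$, QSVT plus a real-part LCU for $a+2$ ancillas, and rescaling by $2/(\alpha\delta)$ with $\varepsilon'=\alpha\delta\varepsilon/2$.

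One correction to your Step~4: you cannot ``absorb'' the extra $\log(1/\delta)$ in general. For example, when $\alpha\delta=\Theta(1)$ but $\delta\to 0$, which is exactly the paper's use with $\alpha=k+\lambda$, that term is not dominated by $\log(2/(\alpha\delta\varepsilon))$. You don't need it anyway: Corollary~69 already gives degree $O(\delta^{-1}\log(1/\varepsilon'))$ for $\varepsilon'$-approximating the normalized function $\delta/(2x)$, with no extra $\log(1/\delta)$. Plugging in $\varepsilon'=\alpha\delta\varepsilon/2$ then yields the stated $O\big(\delta^{-1}\log(2/(\alpha\delta\varepsilon))\big)$ directly.
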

We will use the following result from \cite[Theorem~26]{chakraborty2023quantum} when we require the ability to implement $A^{-1}$ given an approximate block-encoding of $A$ and not an exact block-encoding.
\begin{lemma}\label{lem:inverse_BE2}
Let $\varepsilon \in (0,1]$. Let $A$ be an invertible normalized matrix with singular values in the range $[1/\kappa_A,1]$ for some $\kappa_A \geq 1$. For every $\varepsilon_1 = o(\varepsilon/(\kappa_A^2\log(\kappa_A/\varepsilon)))$ and $\alpha \geq 2$, let $U_A$ be an $(\alpha,a,\varepsilon_1)$-block-encoding of $A$ implemented in time $T_A$. Then, we can implement a $(2 \kappa_A,a+1,\varepsilon)$-block-encoding of $A^{-1}$ at a cost of
$
O\Big(\kappa_A \alpha \log(\kappa_A/\varepsilon) T_A\Big).
$
\end{lemma}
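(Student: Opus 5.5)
The plan is to apply quantum singular value transformation (QSVT) to the given approximate block-encoding, using an odd polynomial that approximates the inverse function. The error then splits into three pieces: polynomial approximation error, error from perturbing $A$, and the bookkeeping of normalizations.

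Let $\widetilde{A} := \alpha(\la 0^a|\otimes I)U_A(\ket{0^a}\otimes I)$ be the matrix actually encoded, so $\|\widetilde{A}-A\|\le\varepsilon_1$. Then $U_A$ is an exact $(1,a)$-block-encoding of $\widetilde{A}/\alpha$.

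\textbf{Perturbation of the spectrum and of the inverse.} By Weyl's inequality for singular values (the analogue of Fact~\ref{fact:weyl}), the singular values of $\widetilde{A}$ lie in $[1/\kappa_A-\varepsilon_1,\,1+\varepsilon_1]$. Since $\varepsilon_1=o(1/\kappa_A)$, they lie in $[1/(2\kappa_A),2]$. Hence $\widetilde{A}/\alpha$ has singular values in $[1/(2\alpha\kappa_A),1]$, using $\alpha\ge 2$. In particular $\widetilde{A}$ is invertible, and
$$\|\widetilde{A}^{-1}-A^{-1}\|=\|A^{-1}(A-\widetilde{A})\widetilde{A}^{-1}\|\le \kappa_A\cdot 2\kappa_A\cdot\varepsilon_1=O(\kappa_A^2\varepsilon_1).$$

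\textbf{Polynomial construction.} Take the standard odd polynomial $P$ (as in \cite{gilyen2019qsvt}, Corollary~69 of the arXiv version) satisfying $|P(x)|\le 1$ on $[-1,1]$ and $\bigl|P(x)-\tfrac{1}{2\alpha\kappa_A x}\bigr|\le\varepsilon/(4\kappa_A)$ for $|x|\ge 1/(2\alpha\kappa_A)$. Its degree is $d=O(\alpha\kappa_A\log(\kappa_A/\varepsilon))$.

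QSVT with one extra ancilla (for the phase/real-part combination) turns $U_A$ into a $(1,a+1,0)$-block-encoding of $P^{(SV)}(\widetilde{A}/\alpha)$. The cost is $d$ uses of $U_A$ and $U_A^\dagger$, i.e.\ $O(\alpha\kappa_A\log(\kappa_A/\varepsilon)\,T_A)$ gates, which matches the claimed complexity.

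On the singular values of $\widetilde{A}/\alpha$ we have $P(\sigma)\approx 1/(2\kappa_A\cdot(\alpha\sigma))$. So $P^{(SV)}(\widetilde{A}/\alpha)$ is within $\varepsilon/(4\kappa_A)$ in operator norm of $\widetilde{A}^{-1}/(2\kappa_A)$ (understood via the pseudo-inverse/transpose convention for singular value transforms). Multiplying by the normalization $2\kappa_A$ gives a $(2\kappa_A,a+1,\cdot)$-block-encoding of $\widetilde{A}^{-1}$ with error $\varepsilon/2$.

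\textbf{Combining errors.} The final error relative to $A^{-1}$ is
$$2\kappa_A\Bigl\|P^{(SV)}(\widetilde{A}/\alpha)-\tfrac{A^{-1}}{2\kappa_A}\Bigr\|\le \tfrac{\varepsilon}{2}+O(\kappa_A^2\varepsilon_1).$$
This is at most $\varepsilon$ under the hypothesis $\varepsilon_1=o(\varepsilon/(\kappa_A^2\log(\kappa_A/\varepsilon)))$. The extra $\log$ slack is there to absorb any constant-factor blowups, for example if one instead bounds the difference via a Lipschitz estimate of $P$, whose derivative is $O(\kappa_A^2)$ on the relevant region.

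\textbf{Main obstacle.} The delicate step is the polynomial step. The polynomial must simultaneously (a) approximate $1/x$ on the enlarged domain $[1/(2\alpha\kappa_A),1]$, which is needed because perturbation can push the smallest singular value below $1/(\alpha\kappa_A)$, and (b) stay bounded by $1$ everywhere. Moreover, its error must be measured against $\widetilde{A}^{-1}$ rather than $A^{-1}$, with the perturbation of the inverse handled separately as above. I would first try to reuse the explicit inverse polynomial of \cite{gilyen2019qsvt} with its domain parameter set to $1/(2\alpha\kappa_A)$. Then I would verify that the degree remains $O(\alpha\kappa_A\log(\kappa_A/\varepsilon))$, so that the stated cost is achieved.
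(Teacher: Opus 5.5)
\textbf{Verdict: correct but via a different route, with one fixable normalization slip in the polynomial step.}

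\textbf{The two routes.} The paper gives no proof of this lemma; it imports it from \cite[Theorem~26]{chakraborty2023quantum}. The hypotheses are tailored to the following route:
\begin{itemize}
\item Build the inverse polynomial for the \emph{ideal} spectrum of $A/\alpha$, which lies in $[1/(\alpha\kappa_A),1/\alpha]$, and run QSVT on the given $U_A$.
\item Charge the discrepancy $\widetilde{A}\neq A$ to the robustness of singular value transformation from \cite{gilyen2019qsvt}. For a degree-$d$ polynomial bounded by $1$, $\|P^{(SV)}(X)-P^{(SV)}(Y)\|=O(d)\|X-Y\|$ as long as $\|X\|,\|Y\|$ stay bounded away from $1$ (a Bernstein-type estimate).
\item The hypothesis $\alpha\ge 2$ is what keeps $\|A/\alpha\|\le 1/2$. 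The resulting error $2\kappa_A\cdot O(d\,\varepsilon_1/\alpha)=O(\kappa_A^2\log(\kappa_A/\varepsilon)\,\varepsilon_1)$ is where the $\log$ in the condition on $\varepsilon_1$ comes from.
\end{itemize}
You instead calibrate the polynomial to the perturbed spectrum of $\widetilde{A}$ itself. You then push the perturbation onto the exact inverse via $\widetilde{A}^{-1}-A^{-1}=A^{-1}(A-\widetilde{A})\widetilde{A}^{-1}$.

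Your route is more elementary, since it needs no robustness lemma for QSVT. It also needs less: only $\varepsilon_1\le c/\kappa_A$ and $\varepsilon_1=O(\varepsilon/\kappa_A^2)$, with no $\log$ factor and no real use of $\alpha\ge 2$. What it gives up is generality. It relies on $A$ being square and invertible. The robustness argument also handles pseudo-inverses of rank-deficient $A$, where the perturbation can create tiny nonzero singular values and your resolvent step breaks down.

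To get $\widetilde{A}^{-1}$ rather than its adjoint, run QSVT on $U_A^\dagger$, as in the inversion theorem of \cite{gilyen2019qsvt}.

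\textbf{The slip: normalization of your polynomial.} Your target $1/(2\alpha\kappa_A x)$ equals exactly $1$ at the edge $x_0=1/(2\alpha\kappa_A)$ of the approximation domain, so there is no headroom under $|P|\le 1$. Set $\eta=\varepsilon/(4\kappa_A)$. Such a $P$ must be $\ge 1-\eta$ at $x_0$, stay $\le 1$ just to the left, and follow a slope of about $-2\alpha\kappa_A$ just to the right. This forces $|P''|\gtrsim(\alpha\kappa_A)^2/\eta$ near $x_0$. Bernstein's inequality in the interior gives $|P''|=O(d^2)$ there, so $d=\Omega(\alpha\kappa_A/\sqrt{\eta})$, not $O(\alpha\kappa_A\log(1/\eta))$.

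Consistently with this, the inverse polynomial of \cite{gilyen2019qsvt} approximates $3\delta/(4x)$ on $|x|\ge\delta$. With your choice $\delta=1/(2\alpha\kappa_A)$ it only gives normalization $8\kappa_A/3$, not $2\kappa_A$.

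\textbf{The fix.} Do not discard so much in the perturbation step:
\begin{itemize}
\item Since $\varepsilon_1=o(1/\kappa_A)$, you have $\sigma_{\min}(\widetilde{A})\ge 3/(4\kappa_A)$, so take $\delta=3/(4\alpha\kappa_A)$.
\item On $|x|\ge\delta$ your target is at most $2/3$.
\item $\tfrac{8}{9}$ times the $3\delta/(4x)$-approximant, built with accuracy $\varepsilon/(4\kappa_A)$, is bounded by $1$ and approximates exactly $1/(2\alpha\kappa_A x)$, with degree $O(\alpha\kappa_A\log(\kappa_A/\varepsilon))$.
\item Now $\|\widetilde{A}^{-1}\|\le 4\kappa_A/3$, so the resolvent error is at most $\tfrac{4}{3}\kappa_A^2\varepsilon_1\le\varepsilon/2$.
\end{itemize}
The rest of your error budget then goes through unchanged, giving a $(2\kappa_A,a+1,\varepsilon)$-block-encoding at the stated cost.
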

\noindent Additionally, we will need the following quantum state tomography protocol~\cite{od2016efficient,haah2016sample}.
\begin{theorem}\label{thm:qst}
Let $\varepsilon,\delta \in (0,1)$. Let $\ket{\psi}$ be an unknown $n$-qubit pure state, then there is a tomography algorithm that outputs an estimate $\ket{\phi}$ satisfying $\norm{\ket{\psi} - \ket{\phi}}_2 \leq \varepsilon$ with probability at least \(1-\delta\) using $O(2^n / \varepsilon^2 \cdot \log(1/\delta))$ copies of the state.
\end{theorem}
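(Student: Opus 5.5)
The plan is to prove a constant-success-probability version first and then amplify the confidence. Throughout, $d=2^n$, and $\norm{\ket{\psi}-\ket{\phi}}_2$ is read up to a global phase (as everywhere in the paper). For unit vectors, choosing the optimal phase $\theta$ gives
$$\min_\theta\norm{\ket{\psi}-e^{i\theta}\ket{\phi}}_2^2 = 2-2|\la\psi|\phi\ra| \le 2\big(1-|\la\psi|\phi\ra|^2\big).$$
So it suffices to output $\ket{\phi}$ with infidelity $1-|\la \psi|\phi\ra|^2\le \varepsilon^2/2$.

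\textbf{Step 1 (constant success).} Take $N$ copies, so the joint state $\ket{\psi}^{\otimes N}$ lies in the symmetric subspace $\mathrm{Sym}^N(\mathbb{C}^d)$, which has dimension $D_N=\binom{N+d-1}{N}$. Apply the covariant POVM $\{D_N\,\ketbra{\phi}{\phi}^{\otimes N}\,d\phi\}$ on this subspace, where $d\phi$ is the Haar measure on pure states. By Schur's lemma this resolves the identity on $\mathrm{Sym}^N$, and the outcome density is $D_N|\la\phi|\psi\ra|^{2N}$. The expected fidelity is then
$$D_N\int |\la\phi|\psi\ra|^{2N+2}\,d\phi = \frac{D_N}{D_{N+1}} = \frac{N+1}{N+d}.$$
Hence the expected infidelity is $(d-1)/(N+d)$. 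Choosing $N=C\,d/\varepsilon^2$ for a suitable constant $C$ and applying Markov's inequality, the output has infidelity at most $\varepsilon^2/8$ with probability at least $3/4$. By the phase bound above, this output is then within $\varepsilon/2$ of $\ket{\psi}$.

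\textbf{Step 2 (boosting to $1-\delta$).} Run Step 1 independently $m=O(\log(1/\delta))$ times on fresh copies to get classical candidates $\ket{\phi_1},\dots,\ket{\phi_m}$. By a Chernoff bound, with probability at least $1-\delta$ more than half of them are within $\varepsilon/2$ of $\ket{\psi}$ in the phase-invariant distance $d(\phi,\phi')=\min_\theta\norm{\ket{\phi}-e^{i\theta}\ket{\phi'}}_2$. Classically compute all pairwise distances and output any $\ket{\phi_j}$ that is within distance $\varepsilon$ of more than half of the candidates.

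Such a $j$ exists, since every good candidate is within $\varepsilon$ of every other good candidate, and the good candidates form a majority. Any chosen $\phi_j$ shares at least one neighbour $\phi_k$ with the good majority. The triangle inequality (valid for $d$, since it is a quotient metric) then gives $d(\phi_j,\psi)\le d(\phi_j,\phi_k)+d(\phi_k,\psi)\le \varepsilon+\varepsilon/2$. Rescaling $\varepsilon$ by a constant gives the claim, with total copy count $O(d/\varepsilon^2\cdot\log(1/\delta))$, as stated.

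\textbf{Main obstacle.} The one nontrivial step is the exact fidelity computation in Step 1, namely the identity $\int|\la\phi|\psi\ra|^{2k}d\phi = 1/D_k$. I would prove it by noting that $\int \ketbra{\phi}{\phi}^{\otimes k}d\phi = \Pi_{\mathrm{Sym}^k}/D_k$ by Schur's lemma, and then evaluating this operator on $\ket{\psi}^{\otimes k}$. The rest is Markov, Chernoff and the majority-vote argument.

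If one prefers measurements that are efficiently implementable, the same two-step structure works with the estimators of~\cite{od2016efficient,haah2016sample}. Only the sample bound is needed here, so the covariant measurement suffices.
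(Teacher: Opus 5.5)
Your proof is correct, but it takes a genuinely different route from the paper, which does not prove this statement at all. The paper simply imports the bound from \cite{od2016efficient,haah2016sample}. Those results cover arbitrary rank-$r$ mixed states, and the paper implicitly specializes them to $r=1$ and reads their trace-distance/fidelity guarantee as a bound on the $\ell_2$ distance.

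You instead give a self-contained pure-state argument in three steps. First, the covariant POVM on $\mathrm{Sym}^N(\mathbb{C}^d)$ gives the exact expected fidelity $(N+1)/(N+d)$ via Schur's lemma. Second, Markov's inequality gives constant success probability. Third, a strict-majority vote in the phase-invariant quotient metric, using the pigeonhole intersection and the triangle inequality, supplies the $\log(1/\delta)$ factor.

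Your route buys transparency and explicit constants. It also makes precise points the statement glosses over:
\begin{itemize}
\item the guarantee $\norm{\ket{\psi}-\ket{\phi}}_2\le\varepsilon$ can only hold up to a global phase, which the paper tacitly relies on in Lemma~\ref{lem:estimate_projection_psi}, where an $e^{i\theta}$ appears;
\item infidelity $\varepsilon^2/2$ converts to $\ell_2$ error $\varepsilon$ only once the phase is optimized;
\item the $1-\delta$ confidence requires an amplification step, which you carry out explicitly rather than leaving it to the citations.
\end{itemize}

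What the cited route buys is generality to mixed states, which the paper never needs. The theorem is only applied to the pure coefficient state $\ket{\widetilde{\beta}}$ on $\ceil{\log k}$ qubits. Your covariant measurement is a continuous POVM with no implementation cost attached. That is harmless here, because the theorem asserts only a copy bound, and the paper's time accounting in Lemma~\ref{lem:estimate_projection_psi} charges only for preparing the copies.
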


\section{Algorithmic decomposition theorem} \label{sec:algo_decomposition_thm}
In this section, we prove our main result showing how to learn \emph{structured decompositions} of an arbitrary $n$-qubit state $\ket{\psi}$ with respect to a model class $\calC$ as in Definition~\ref{def:model_class}.  In particular, given copies of $\ket{\psi}$ we give a procedure to construct a decomposition as follows
\begin{equation}
\label{eq:structuredpart}
\ket{\psi} = \underbrace{\sum_{t=1}^k \beta_t \ket{\phi_t}}_{\text{rank-$k$ structured decomposition}} + \,\alpha \underbrace{\ket{\phi_R}}_{\text{unstructured}},
\end{equation}
where $\ket{\phi_t} \in \calC$s are $n$-qubit states belonging to a model class $\calC$ with $\beta \in \calB_\infty^{k},\alpha \in \calB_\infty$,  and $\ket{\phi_R}$ is \emph{unstructured}, by which we mean that its fidelity with $\calC$ i.e., $\calF_{\calC}(\ket{\phi_R})$ is small. Formally we restate and prove the following theorem statement 

\algodecomposition*
We structure the proof of this theorem as follows: first we show how to implement the orthogonal and ridge projections that we discussed in the introduction for structure finding, which will involve several matrix manipulations to construct block encodings of operators. Next we give our structure learning and parameter learning algorithm as part of our algorithmic decomposition proof and prove its correctness and convergence.
\subsection{Projections and their implementation}
In this section, we will discuss the \emph{ridge projector}, a projection that will be crucial in our main algorithm. Before describing the ridge projection, we begin by first discussing the well-known Gram-Schmidt orthonormalization procedure, which gives the intuition eventually for the ridge regression procedure. Eventually we will discuss its implementation via block-encodings. Throughout this section, we will assume $k\geq 2$: in the first iteration of our algorithmic decomposition we directly use our weak learner to find the ``best" state $\ket{\phi_1} \in \C$, so we can assume that we need to perform these projections from the second iteration. 
\subsubsection{Ridge projector}
Let $\calL = \{\ket{\phi_1},\ket{\phi_2},\ldots,\ket{\phi_t}\}$ be a list of $t$ states from $\calC$. Since the states in $\calC$ are classically describable,   one can run the \emph{classical} procedure of Gram-Schmidt orthonormalization to obtain a list of orthonormal states $\calL'=\{\ket{\phi'_1},\ket{\phi'_2},\ldots,\ket{\phi'_t}\}$. In particular, one can express the $\ket{\phi'_i}$s as~follows
\begin{align}
    \ket{\phi_1'} = \ket{\phi_1}, \quad 
    \ket{\phi_2'} = \frac{\ket{\phi_2} - \la \phi_1'| \phi_2 \ra \ket{\phi_1'}}{\norm{\ket{\phi_2} - \la \phi_1'| \phi_2 \ra \ket{\phi_1'}}},\quad 
    \ket{\phi_t'} = \frac{\ket{\phi_t} - \sum_{j=1}^{t-1} \la \phi_j'| \phi_t \ra \ket{\phi_j'}}{\norm{\ket{\phi_t} - \sum_{j=1}^{t-1} \la \phi_j'| \phi_t \ra \ket{\phi_j'}}}
\end{align}
Furthermore, one can  define an \emph{orthogonal projector} onto the span of these $\{\ket{\phi'_j}\}_{j\in [k]}$ as follows
\begin{equation}
    \Pi = \sum_{j=1}^k \ket{\phi_j'}\bra{\phi_j'} = \Phi G^{-1} \Phi^\dagger,
\end{equation}
where $\Phi \in \mathbb{C}^{2^n \times k}$ is the matrix whose $j$th column is $\ket{\phi_j}$ and $G = \Phi^\dagger \Phi$ is the Gram matrix with its elements taking values of $G_{ij} = \la \phi_i | \phi_j \ra$. In particular, $\Pi \ket{\psi}$ coincides with the state obtained by solving the $\ell_2$ minimization problem of Eq.~\eqref{eq:L2_projection_state}. It is not hard to see that,  if $Q = \textsf{span}(\{\ket{\phi_i}\}_{i \in [k]})$ and $\Pi$ is a projection onto $Q$. Then projection of $\Pi$ onto $\ket{\psi}$ is given by
\begin{equation}
\label{eq:L2_projection_state}
    \Pi \ket{\psi} = \sum_{i=1}^k \beta_i \ket{\phi_i} \enspace \text{ where } \enspace \{\beta_i\}_{i \in [k]} = \argmin_{a_1,\ldots,a_k \in \mathbb{C}} \norm{\ket{\psi} - \sum \limits_{i=1}^k a_i \ket{\phi_i}}_2
\end{equation}
Like we mentioned in the introduction, solving this minimization problem is useful to determine the so-called residual state after the algorithm has learned $\mathcal{L}$, and the application of the operator $\Phi G^{-1} \Phi^\dagger$ lets us solve this indirectly. However, there is an issue in applying this operator, we do not have handle on the \emph{minimum eigenvalue} of $G$, which could make the implementation of $G^{-1}$~hard!

To circumvent this issue, we  define the \emph{ridge projector} as 
\begin{equation}\label{eq:ridge_proj}
    \ridgeproj_{\lambda} = \Phi (G + \lambda \id)^{-1} \Phi^\dagger,
\end{equation}
where we used the superscript $^{\diamond}$ to distinguish this projector from the orthogonal projector and the subscript $\lambda$ emphasizes the dependence of this projector on the parameter $\lambda \in (0,1]$. Throughout this section we will assume $\lambda >0$. Note that for $\lambda=0$, this coincides with orthogonal projector $\Pi$. The corresponding minimization problem is that of \emph{ridge regression}
\begin{equation}\label{eq:ridge_regression}
    J(\vec{a} ; \lambda) = \min_{a_1,\ldots,a_k} \norm{\ket{\psi} - \sum_{i=1}^k a_i \ket{\phi_i}}_2^2 + \lambda \norm{a}_2^2,
\end{equation}
where a regularization term is now additionally present compared to Eq.~\eqref{eq:L2_projection_state}. This particular form is commonly also called Tikhonov regularization~\cite{tikhonov1977solutions}. It can be shown that the solution $\vec{\beta} = (\beta_1,\ldots,\beta_k)$ to Eq.~\eqref{eq:ridge_regression} is 
\begin{equation}\label{eq:soln_ridge_regression}
    \vec{\beta} = (G + \lambda \id)^{-1} \Phi^\dagger \ket{\psi}.
\end{equation}
We thus have that the action of the ridge projector by definition is given by
\begin{equation}\label{eq:action_ridge_proj}
    \ridgeproj_{\lambda} \ket{\psi} = \Phi (G + \lambda \id)^{-1} \Phi^\dagger \ket{\psi}= \Phi \vec{\beta} = \sum_{i=1}^k \beta_i \ket{\phi_i}
\end{equation}
We will also use the following equivalent forms of the $\ridgeproj_{\lambda}$, which are commonly used in machine learning as part of ridge regression and the so-called kernel trick~\cite{bishop2006pattern}. 
\begin{claim}\label{claim:kernel_trick}
Let $\lambda > 0$, $N = 2^n$ and $K = \Phi \Phi^\dagger \in \mathbb{C}^{N \times N}$ be the kernel matrix. We then have
\begin{enumerate}[$(i)$]
    \item $\ridgeproj_\lambda = K(K+\lambda \id_N)^{-1}$,
    \item $\id - \ridgeproj_\lambda = \lambda(K+\lambda \id_N)^{-1}$,
\end{enumerate}
as long as the inverses exist.
\end{claim}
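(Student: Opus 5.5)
The plan is to derive both identities from the push-through identity (Fact~\ref{fact:pushthrough}), applied with $A=\Phi\in\mathbb{C}^{N\times k}$ and $B=\Phi^\dagger\in\mathbb{C}^{k\times N}$.

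First I will settle invertibility. Since $\lambda>0$ and both $G=\Phi^\dagger\Phi$ and $K=\Phi\Phi^\dagger$ are positive semidefinite, Fact~\ref{fact:weyl} gives $\lambda_{\min}(G+\lambda\id_k)\ge\lambda>0$ and $\lambda_{\min}(K+\lambda\id_N)\ge\lambda>0$. So both inverses exist, and the proviso ``as long as the inverses exist'' holds automatically.

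For $(i)$, Fact~\ref{fact:pushthrough} gives $\Phi(\Phi^\dagger\Phi+\lambda\id_k)^{-1}=(\Phi\Phi^\dagger+\lambda\id_N)^{-1}\Phi$. Right-multiplying by $\Phi^\dagger$ yields
\[
\ridgeproj_\lambda=\Phi(G+\lambda\id_k)^{-1}\Phi^\dagger=(K+\lambda\id_N)^{-1}K.
\]
To reach the stated form $K(K+\lambda\id_N)^{-1}$, I will note that $K$ commutes with $K+\lambda\id_N$, and hence with its inverse. If one prefers to avoid this commutation step, the push-through identity can instead be applied with the roles arranged so that $\Phi^\dagger$ sits on the left.

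For $(ii)$, I will write
\[
\id_N=(K+\lambda\id_N)(K+\lambda\id_N)^{-1}=K(K+\lambda\id_N)^{-1}+\lambda(K+\lambda\id_N)^{-1}.
\]
Subtracting the expression from $(i)$ gives $\id-\ridgeproj_\lambda=\lambda(K+\lambda\id_N)^{-1}$.

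As a cross-check, the same conclusion follows from the Woodbury identity (Fact~\ref{fact:woodburry}) with $A=\lambda\id_N$, $U=\Phi$, $D=\id_k$ and $V=\Phi^\dagger$. This gives
\[
(\lambda\id_N+\Phi\Phi^\dagger)^{-1}=\lambda^{-1}\id_N-\lambda^{-1}\Phi(G+\lambda\id_k)^{-1}\Phi^\dagger,
\]
and multiplying through by $\lambda$ recovers $(ii)$ directly.

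None of these steps is a real obstacle. The only points needing care are invertibility, which is guaranteed by $\lambda>0$, and the commutation of $K$ with its resolvent used in $(i)$.
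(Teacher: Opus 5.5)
Your proposal is correct and follows the paper's proof. Invertibility is handled the same way, except that you cite Weyl's inequality where the paper checks the quadratic form directly. Part $(i)$ uses the push-through identity with $A=\Phi$, $B=\Phi^\dagger$, followed by commuting $K$ with $(K+\lambda\id_N)^{-1}$. Part $(ii)$ uses Woodbury with exactly the paper's instantiation, which you give as a cross-check. Your primary route for $(ii)$ writes $\id_N=K(K+\lambda\id_N)^{-1}+\lambda(K+\lambda\id_N)^{-1}$ and subtracts $(i)$; this is a slightly more economical variant that makes Woodbury unnecessary, but the argument is otherwise the paper's.
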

\begin{proof}
We firstly note that for $\lambda > 0$, both $G + \lambda \id_k$ and $K + \lambda \id_N$ are positive definite and hence invertible. This can be quickly observed: let $x \in \mathbb{C}^k \setminus \{0^k\}$, then $x^\star (G+\lambda \id_k) x = \norm{y}_2^2 + \lambda \norm{x}_2^2 > 0$ where $y = \Phi x$. Similarly, for $x \in \mathbb{C}^N \setminus \{0^N\}$, $x^\star (K + \lambda \id_N)x = \norm{y}_2^2 + \lambda \norm{x}_2^2 > 0$ where $y = \Phi^\dagger x$.

\noindent $(i)$ 
Instantiating the push-through identity in Fact~\ref{fact:pushthrough} for $A = \Phi$ and $B = \Phi^\dagger$, we have that
$$
\Phi(\Phi^\dagger \Phi + \lambda \id_k)^{-1} = (\Phi \Phi^\dagger + \lambda \id_N)^{-1} \Phi \implies \Phi(\Phi^\dagger \Phi + \lambda \id_k)^{-1} \Phi^\dagger = (\Phi \Phi^\dagger + \lambda \id_N)^{-1} \Phi \Phi^\dagger = (K + \lambda \id_N)^{-1} K,
$$
where we have multiplied by $\Phi^\dagger$ from the right on both sides in the implication. The result follows from noting first that $\Phi^\dagger\Phi=G$ and next observing that $K$ commutes with $K+\lambda \id)$, hence also commutes with $(K+\lambda \id)^{-1}$.

\noindent $(ii)$ 
Invoking the Woodburry identity in Fact~\ref{fact:woodburry} with $A = \lambda \id_N$, $U = \Phi$, $D = \id_k$,  $V = \Phi^\dagger$, we~get
\begin{align*}
&( \Phi \Phi^\dagger+\lambda \id_N )^{-1} = \frac{\id_N}{\lambda} - \frac{\id_N}{\lambda^2}\Phi\Big(\id_k + \frac{\Phi^\dagger \Phi}{\lambda}\Big)^{-1} \Phi^\dagger=\frac{\id_N}{\lambda} - \frac{\id_N}{\lambda}\Phi\Big( {\Phi^\dagger \Phi}+\lambda\id_k \Big)^{-1} \Phi^\dagger\\
\implies &\lambda (K + \lambda \id_N)^{-1} = \id_N - \ridgeproj_{\lambda}, 
\end{align*}
where we multiplied by $\lambda$ on both sides in the implication and used the definitions of $G,K,\ridgeproj_\lambda$. This gives us the desired result.
\end{proof}

We will often also use that the ridge projector is a contraction.
\begin{fact}\label{fact:ridge_proj_contraction}
For $\lambda > 0$, the ridge projector $\ridgeproj_\lambda$ satisfies
$$
0 \preceq \ridgeproj_\lambda \preceq \id, \quad {\ridgeproj_\lambda}^2 \preceq \ridgeproj_\lambda.
$$
\end{fact}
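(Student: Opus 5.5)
The plan is to diagonalize the ridge projector through the singular value decomposition of the dictionary matrix $\Phi \in \mathbb{C}^{N\times k}$. Write $\Phi = U\Sigma V^\dagger$, where $U\in\mathbb{C}^{N\times N}$ and $V\in\mathbb{C}^{k\times k}$ are unitary and $\Sigma$ carries the singular values $\sigma_1,\ldots,\sigma_r\geq 0$. Then $G=\Phi^\dagger\Phi = V\Sigma^\dagger\Sigma V^\dagger$. Hence $(G+\lambda\id_k)^{-1} = V(\Sigma^\dagger\Sigma+\lambda\id_k)^{-1}V^\dagger$, which is well defined since $\lambda>0$; Claim~\ref{claim:kernel_trick} already noted that $G+\lambda\id$ is positive definite. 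Substituting into Eq.~\eqref{eq:ridge_proj} gives
$$
\ridgeproj_\lambda = U\,\Sigma(\Sigma^\dagger\Sigma+\lambda\id_k)^{-1}\Sigma^\dagger\,U^\dagger = U\,D\,U^\dagger,
$$
where $D$ is the $N\times N$ diagonal matrix with entries $d_j=\sigma_j^2/(\sigma_j^2+\lambda)$ for $j\le r$, and zeros elsewhere.

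Alternatively, I would avoid the SVD and use Claim~\ref{claim:kernel_trick}$(i)$ directly. That claim gives $\ridgeproj_\lambda = K(K+\lambda\id_N)^{-1}$, where $K=\Phi\Phi^\dagger\succeq 0$ is Hermitian. Since $K$ and $(K+\lambda\id_N)^{-1}$ commute, $\ridgeproj_\lambda$ is a function of $K$. If $K$ has eigenvalues $\mu_j\ge 0$, then $\ridgeproj_\lambda$ has the same eigenvectors with eigenvalues $f(\mu_j)$, where $f(\mu)=\mu/(\mu+\lambda)$. In particular $\ridgeproj_\lambda$ is Hermitian. This is the cleaner route, and I would take it.

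The conclusions then reduce to scalar inequalities. For $\mu\ge 0$ and $\lambda>0$ we have $0\le f(\mu)<1$, which gives $0\preceq\ridgeproj_\lambda\preceq\id$. Moreover, $f(\mu)^2\le f(\mu)$ holds precisely because $0\le f(\mu)\le 1$. Since $\ridgeproj_\lambda^2$ and $\ridgeproj_\lambda$ are simultaneously diagonal, the eigenvalue comparison yields $\ridgeproj_\lambda^2\preceq\ridgeproj_\lambda$. As a by-product, the argument also gives $0\preceq \id-\ridgeproj_\lambda=\lambda(K+\lambda\id)^{-1}\preceq\id$, consistent with Claim~\ref{claim:kernel_trick}$(ii)$.

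There is no real obstacle here. The only point needing care is establishing Hermiticity, so that the Loewner order makes sense and simultaneous diagonalization applies. I would handle it by invoking the commutation observed in the proof of Claim~\ref{claim:kernel_trick}$(i)$, or directly from the form $\Phi(G+\lambda\id)^{-1}\Phi^\dagger$ with $(G+\lambda\id)^{-1}$ Hermitian.
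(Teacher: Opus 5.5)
Your proof is correct, and your first paragraph is essentially the paper's own argument. The paper substitutes the SVD $\Phi=U\Sigma V^\dagger$ into $\Phi(G+\lambda\id)^{-1}\Phi^\dagger$ and reads off the eigenvalues $\sigma_j^2/(\sigma_j^2+\lambda)\in[0,1)$. Your middle factor $\Sigma(\Sigma^\dagger\Sigma+\lambda\id_k)^{-1}\Sigma^\dagger$ is the right one; the paper's displayed formula $U\Sigma(\Sigma^2+\lambda\id_k)^{-1}U^\dagger$ drops a $\Sigma$, though it then states the correct eigenvalues.

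The route you say you would actually take differs mildly. You invoke Claim~\ref{claim:kernel_trick}$(i)$ to write $\ridgeproj_\lambda=K(K+\lambda\id_N)^{-1}=f(K)$ with $f(\mu)=\mu/(\mu+\lambda)$, and apply the spectral theorem to the single Hermitian matrix $K=\Phi\Phi^\dagger\succeq 0$. This is the same diagonalization in disguise: the eigenvectors of $K$ are the left singular vectors of $\Phi$, with $\mu_j=\sigma_j^2$.

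What your route buys is that Hermiticity of $\ridgeproj_\lambda$, and the simultaneous diagonalization of $\ridgeproj_\lambda$ and $\ridgeproj_\lambda^2$, come for free from functional calculus. The paper instead passes from ``every singular value lies in $[0,1]$'' to the Loewner bounds without explicitly noting that $\ridgeproj_\lambda=UDU^\dagger$ with $D\succeq 0$. The price is a dependence on the push-through identity behind Claim~\ref{claim:kernel_trick}, whereas the SVD computation is self-contained from the definition in Eq.~\eqref{eq:ridge_proj}. In both versions the final step is the same scalar observation $0\le f\le 1\Rightarrow f^2\le f$.
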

\begin{proof}
Consider the SVD of $\Phi = U \Sigma V^\dagger$ (for $U \in \mathbb{C}^{N \times N}, \Sigma \in \mathbb{R}^{N \times k}, V \in \mathbb{C}^{k \times k}$). The ridge projector is then
$$
\ridgeproj_\lambda = \Phi (G + \lambda \id)^{-1} \Phi^\dagger = U \Sigma V^\dagger V (\Sigma^2 + \lambda \id_k)^{-1} V^\dagger V \Sigma U^\dagger = U \Sigma (\Sigma^2 + \lambda \id_k)^{-1} U^\dagger,
$$
where we used $(G + \lambda \id)^{-1} = V (\Sigma^2 + \lambda \id_k)^{-1} V^\dagger$ in the second equality since $G = \Phi^\dagger \Phi = (V \Sigma U^\dagger)(U \Sigma V^\dagger) = V \Sigma^2 V^\dagger$ and $\id = V V^\dagger$. Hence, the non-zero singular values of $\ridgeproj_\lambda$ are of the form $\sigma_i^2/(\sigma_i^2 + \lambda), \forall i \in [k]$ where $\sigma_i$ are the singular values of $\Phi$. As every singular value of $\ridgeproj_\lambda$ lies in $[0,1]$, we immediately obtain that
$$
0 \preceq \ridgeproj_\lambda \leq \id_N.
$$
Consequently, we also have that ${\ridgeproj_\lambda}^2 \preceq \ridgeproj_\lambda$. This completes the proof.
\end{proof}

\subsubsection{Implementation of projections}
Next, our goal is to discuss the implementation of these projections using known  block-encoding constructions for implementing the projector $\ridgeproj$. Throughout the results here, by gate complexity of a circuit, we will mean the number of arbitrary one-qubit and two qubit $\CNOT$ gates used. 
  Consider a list $L = \{\ket{\phi_i}\}_{i \in [k]}$ of $k$ many $n$-qubit states in $\calC$. Let $\Phi \in \mathbb{C}^{2^n \times k}$ be the \emph{dictionary} matrix whose $j$th column is $\ket{\phi_j}$. We now discuss how to implement block-encodings of the ridge projectors $\ridgeproj_{L,\lambda}$ (which is defined with respect to the list $L$).  We then have the following claim regarding preparation of states in $L$.
\begin{claim}\label{claim:circ_UPhi}
An $n + \ceil{\log k}$-qubit circuit implementing the following map 
$$
U_\Phi : \ket{j} \ket{0^n} \rightarrow \ket{j} \ket{\phi_j},
$$
can be classically produced in $\widetilde{O}(k\cdot (T_{\prep}  + G_{\prep}) )$ time, and has gate complexity $\widetilde{O}(k\cdot G_{\prep})$.
\end{claim}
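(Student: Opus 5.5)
The plan is to build $U_\Phi$ as a ``select''-style multiplexed state-preparation circuit, i.e., a sum over $j$ of $\ketbra{j}{j}\otimes V_j$ restricted to the $\ket{0^n}$ input. First, for each $j\in[k]$ I will run the classical procedure $\calA_{\prep}$ from Definition~\ref{def:model_class}$(iii)$ on the classical description of $\ket{\phi_j}$; this description is available by item $(i)$, since the states in $L$ were output by the learner. Each run outputs a circuit $V_j$ with $G_{\prep}$ gates such that $V_j\ket{0^n}=\ket{\phi_j}$. Across all $j$, this step takes $k\cdot T_{\prep}$ classical time.

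Second, I will form the controlled versions $C_j := \ketbra{j}{j}\otimes V_j + (\id-\ketbra{j}{j})\otimes \id$ for each $j$, acting on the $\ceil{\log k}$-qubit index register. The product $C_k\cdots C_1$ then maps $\ket{j}\ket{0^n}\mapsto\ket{j}V_j\ket{0^n}=\ket{j}\ket{\phi_j}$. This holds because exactly one $C_j$ acts nontrivially on each basis state $\ket{j}$, and the $C_j$ commute. If $k$ is not a power of two, the unused index values are simply left acted on by the identity.

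Third, I will bound the gate complexity. Each gate of $V_j$ is replaced by its version controlled on the index register being equal to $j$. This is done by conjugating with $X$ gates on the index bits where $j$ has a $0$, and then using a multi-controlled one- or two-qubit gate with $\ceil{\log k}$ controls. Using a standard decomposition into one- and two-qubit gates, each such multi-controlled gate costs $O(\log k)$ gates, possibly with one reusable ancilla, or $O(\log^2 k)$ gates without one. This gives a total gate count of $O(k\,G_{\prep}\log k)=\widetilde{O}(k\,G_{\prep})$. Writing down this circuit classically takes time linear in its size, so the total classical time is $\widetilde{O}(k(T_{\prep}+G_{\prep}))$.

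The only step requiring care is this controlled-gate overhead, specifically ensuring the polylogarithmic blowup is absorbed into the $\widetilde{O}$. A cleaner alternative is to first compute the indicator $[\text{index}=j]$ into a single ancilla with an $O(\log k)$-gate comparator, apply singly-controlled $V_j$ at constant overhead per gate, and then uncompute the indicator. This costs $O(G_{\prep}+\log k)$ per $j$, and either route yields the stated bounds. If ancilla-free is required, the ancilla can be counted within the stated $n+\ceil{\log k}$ qubits up to $O(1)$ extra workspace, which is harmless.
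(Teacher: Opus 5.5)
Your proposal is correct and follows the paper's proof essentially verbatim. You obtain each $V_j$ from $\calA_{\prep}$, assemble the multiplexed circuit $\sum_j \ketbra{j}{j}\otimes V_j$, and charge an $O(\log k)$ control overhead per gate, which gives $O(kG_{\prep}\log k)$ gates and $O(kT_{\prep}+kG_{\prep}\log k)$ classical time; your remarks on ancillas (or the $O(\log^2 k)$ ancilla-free variant) only make explicit a detail the paper glosses over and do not affect the $\widetilde{O}$ bounds.
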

\begin{proof}
First we recall that one can use $\calA_{\prep}$ (as defined in the model class definition) to determine a circuit $C_j$ that implements a unitary $V_j$ that prepares $\ket{\phi_j}$, i.e., $V_j \ket{0^n} = \ket{\phi_j}$ for all $j \in [k]$. Each $C_j$ has gate complexity $G_{\prep}$. We can then implement the circuit
\begin{align}
\label{eq:defnofUphi}
U_\Phi = \sum_{j=1}^k \ketbra{j}{j}_Q \otimes C_j,
\end{align}
which is multiplexed controlled circuit where $Q$ is a $\ceil{\log k}$-qubit index register. It can be immediately checked that $U_\Phi\in \mathbb{C}^{2^{n+\log k}\times 2^{n+\log k}}$ has the following action
$$
U_\Phi \ket{j} \ket{0^n} \rightarrow \ket{j} V_j \ket{0^n} = \ket{j} \ket{\phi_j}.
$$
The total gate complexity of $U_\Phi$ is $O(G_{\prep} k \log k)$ as controlling every gate in each $C_j$ for all $j \in [k]$ gives an overhead of $O(\log k)$ per gate. The total time complexity consumed is $O(T_{\prep} k + G_{\prep} k \log k)$ where the $O(T_{\prep} k)$ contribution is due to determining all $C_j$, and $O(G_{\prep} k \log k)$ is the time consumed in going through each gate and determining the appropriate controlled version in terms of arbitrary one-qubit gates and two-qubit $\CNOT$ gates. 
\end{proof}

\paragraph{Block-encodings of Gram and kernel matrices.} We will also require block-encodings of the Gram matrix $G$ and kernel matrix $K$. In particular, we have the following result.
\begin{lemma}\label{lem:G_K_BE}
Let $\lambda > 0$, $\varepsilon \in (0,1)$ and $\mu \in [0,1]$. Suppose $G$ satisfies $\lambda_\min(G) \geq \mu$. Then, one can determine circuits implementing     
\begin{enumerate}[$(i)$]
    \item $(k,n+1,0)$ block-encoding of $G$ acting on $n+\ceil{\log k}$ qubits  with gate complexity $\widetilde{O}(G_{\prep}k)$.
    \item $(k,\ceil{\log k}+1,0)$ block-encoding of $K$ acting on $n+\ceil{\log k}$ qubits  with gate complexity $\widetilde{O}(G_{\prep}k)$.
    \item $(2/(\mu + \lambda),n+4,\varepsilon)$ block-encoding of $(G+\lambda \id_k)^{-1}$ acting on $O(n+ \ceil{\log k})$ qubits with gate complexity 
    $$
    \widetilde{O}\left( \frac{k(k+\lambda)G_{\prep} }{\mu + \lambda} \log 1/\varepsilon\right).
    $$
\end{enumerate}
 The classical time to find these encodings is the same as the gate complexity replacing $G_{\prep}$ by~$T_{\prep}$.
\end{lemma}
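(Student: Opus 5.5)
The plan is to build everything from the controlled-preparation unitary $U_\Phi$ of Claim~\ref{claim:circ_UPhi}, together with a uniform-superposition preparation $H_k:\ket{0^{\lceil\log k\rceil}}\mapsto \frac{1}{\sqrt{k}}\sum_{j}\ket{j}$. Padding to a power of two only changes constants.

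For $(ii)$, set $W = U_\Phi (H_k\otimes \id_N)$. Then
$$
(\bra{0^{\lceil \log k\rceil}}\otimes \id_N)\, W \,(\ket{0^{\lceil\log k\rceil}}\otimes \ket{0^n}) = \frac{1}{\sqrt{k}}\sum_j \ket{\phi_j}\,\text{-type vectors},
$$
so the natural object is the operator $A=\frac{1}{\sqrt{k}}\Phi$. I will realise $A$ as a block of a unitary by swapping registers: define the unitary $S = \SWAP \cdot U_\Phi$, which maps $\ket{j}\ket{0^n}$ to $\ket{\phi_j}\ket{j}$. A block of this unitary then has the form $\Phi$ restricted to inputs $\ket{j}$ with an ancilla $\ket{0^n}$. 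Hence $S$ is a $(1,\cdot)$-type encoding of the rectangular map $\ket{j}\mapsto\ket{\phi_j}$.

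Products of such encodings give Gram and kernel matrices:
\begin{itemize}
\item $S^\dagger$ composed with $S$, with the $\ket{0^n}$ projection in the middle, yields $\bra{i}\Phi^\dagger\Phi\ket{j}=G_{ij}$.
\item Composing in the other order yields $K=\Phi\Phi^\dagger=\sum_j\ket{\phi_j}\bra{\phi_j}$.
\end{itemize}
Concretely, for $K$ I will use $U_\Phi^\dagger(\ketbra{0^n}{0^n})U_\Phi$-style reasoning. The circuit $(H_k^\dagger\otimes \id)\,U_\Phi^\dagger\,(\text{index}\leftrightarrow\text{ancilla handling})\,U_\Phi\,(H_k\otimes \id)$, post-selected on the index register being $0$, gives $\frac{1}{k}\sum_j \ket{\phi_j}\bra{\phi_j}$ (the cross terms vanish because the index register is traced against a fixed state after uncomputation). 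This is a $(k,\lceil\log k\rceil+1,0)$ encoding of $K$; the extra qubit comes from the LCU/sign bookkeeping.

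For $G$ the roles swap. The ancilla register is $n$ qubits initialised to $\ket{0^n}$, and $\bra{i}\bra{0^n}U_\Phi^\dagger\,\textsf{SWAP-index}\,U_\Phi\ket{j}\ket{0^n}$ produces $\la\phi_i|\phi_j\ra/k$ after normalisation. This gives the $(k,n+1,0)$ encoding. Gate counts follow from two uses of $U_\Phi$ plus $O(\log k)$ extra gates, which is $\widetilde{O}(kG_{\prep})$ by Claim~\ref{claim:circ_UPhi}. Classical time replaces $G_{\prep}$ by $T_{\prep}$.

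For $(iii)$, apply Lemma~\ref{lem:sum_prod_BEs}$(i)$ to combine the encoding of $G$ with a trivial $(1,0,0)$ encoding of $\id_k$ with weight $\lambda$. This yields a $(k+\lambda,n+2,0)$ encoding of $G+\lambda\id_k$. Its spectrum lies in $[\mu+\lambda,\,k+\lambda]$, because $\|G\|\le\tr G=k$ and by Weyl's inequality (Fact~\ref{fact:weyl}). Dividing by $\alpha=k+\lambda$, the normalised spectrum lies in $[\delta,1]$ with $\delta=(\mu+\lambda)/(k+\lambda)$. Lemma~\ref{lem:inverse_BE} then gives a $(2/(\alpha\delta),n+4,\varepsilon)=(2/(\mu+\lambda),n+4,\varepsilon)$ encoding of $(G+\lambda\id)^{-1}$. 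Its gate cost is
$$
O\big((kG_{\prep})(k+\lambda)/(\mu+\lambda)\cdot\log(1/\varepsilon)\big),
$$
which matches the claim.

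The main obstacle is making the $G$ and $K$ encodings exact with precisely the stated ancilla counts and normalisation $k$. I expect the cleanest route is the standard "$A^\dagger B$ from two state-preparation unitaries" construction, with $U_L=U_R=U_\Phi(H_k\otimes\id)$. Choosing which register is the ancilla then gives $G$ (ancilla $=$ $n$ system qubits) or $K$ (ancilla $=$ index qubits).
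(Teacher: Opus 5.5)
Your part $(iii)$ is correct and is exactly the paper's argument: LCU of the $G$-encoding with $\lambda\id$, spectrum in $[\mu+\lambda,k+\lambda]$ via Fact~\ref{fact:weyl} and $\norm{G}\le\Tr G=k$, then Lemma~\ref{lem:inverse_BE} with $\alpha\delta=\mu+\lambda$. The gap is in $(i)$ and $(ii)$. This is precisely the ``main obstacle'' you leave open, and the constructions you sketch there do not work.

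For $G$, one has $\bra{i}\bra{0^n}U_\Phi^\dagger(M\otimes\id_n)U_\Phi\ket{j}\ket{0^n}=\bra{i}M\ket{j}\,\la\phi_i|\phi_j\ra$. So the operator sandwiched between $U_\Phi$ and $U_\Phi^\dagger$ must act on the \emph{index} register with all entries equal, i.e.\ $M\propto\ketbra{+^m}{+^m}$. That is not a $\ket{0^n}$ projection, and it is not unitary, so no ``SWAP-index'' gate on the index register can play its role. For $K$, the non-unitary middle operator is $\ketbra{0^n}{0^n}$ on the system register. It must sit between $U_\Phi^\dagger$ (acting first) and $U_\Phi$, which is the reverse of your displayed product, where $U_\Phi$ acts first.

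Your proposed resolution fails outright. With $U_L=U_R=U_\Phi(H_k\otimes\id)$ you get $U_L^\dagger U_R=\id$, so whichever register you call the ancilla, the encoded block is an identity, never $G$ or $K$. Equivalently, taking $M=\id$ above gives $\delta_{ij}$, because $U_\Phi$ retains the index $\ket{j}$ in the prepared states. For the same reason $\SWAP\cdot U_\Phi$ is not a $(1,\cdot)$-encoding of $\ket{j}\mapsto\ket{\phi_j}$. Discarding the retained index in a $j$-independent way means projecting it onto $\ket{+^m}$, and that is where the factor $\sqrt{k}$, hence the normalisation $k$, comes from.

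The missing idea is to replace the middle projector $\Pi$ by the unitary reflection $2\Pi-\id$, and then undo the shift with one LCU step. The paper takes $W=U_\Phi^\dagger\big((2\ketbra{+^m}{+^m}-\id_m)\otimes\id_n\big)U_\Phi$. Its block with the $n$-qubit system register fixed to $\ket{0^n}$ is exactly $2G/k-\id$. It also takes $V=(H^{\otimes m}\otimes\id_n)U_\Phi\big(\id_m\otimes(2\ketbra{0^n}{0^n}-\id_n)\big)U_\Phi^\dagger(H^{\otimes m}\otimes\id_n)$, whose block with the index register fixed to $\ket{0^m}$ is exactly $2K/k-\id$. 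These are exact $(1,n,0)$ and $(1,\lceil\log k\rceil,0)$ encodings using two calls to $U_\Phi$. Writing $G=\tfrac{k}{2}(W+\id)$ and $K=\tfrac{k}{2}(V+\id)$ and applying Lemma~\ref{lem:sum_prod_BEs}$(i)$ adds a single control qubit. This gives the stated $(k,n+1,0)$ and $(k,\lceil\log k\rceil+1,0)$ encodings. Your remark that the extra qubit comes from ``LCU/sign bookkeeping'' points this way, but without the reflection there is nothing to apply the LCU to. If you instead realised the middle projector by swapping into a fresh register that is post-selected at the end, you would get valid encodings, but with $n+\lceil\log k\rceil$ ancillas rather than the counts in the statement.
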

\begin{proof}
Let $U_{\Phi}$ be the map defined in Eq.~\eqref{eq:defnofUphi} and $m = \ceil{\log k}$. 

$(i)$ We first give a block-encoding for $G$. We define two maps $W_R\in \mathbb{C}^{2^n\times 2^m}$ and $W_L\in \mathbb{C}^{2^m\times 2^n}$ as follows
\begin{equation}\label{eq:WR_WL}
W_R := (\bra{+}^m \otimes \id_n)U_\Phi(\id_{m} \otimes \ket{0}^n), \quad W_L := (\id_m \otimes \bra{0}^n)U_\Phi^\dagger(\ket{+}^m \otimes \id_{n}).
\end{equation}
The action of $W_R$ on a computational basis state $\ket{j}$ on the index register is
\begin{equation}\label{eq:action_WR}
W_R \ket{j} = (\bra{+}^m \otimes \id_n)U_\Phi(\ket{j} \otimes \ket{0}^n) = (\bra{+}^m \otimes \id_n) (\ket{j} \otimes \ket{\phi_j}) = \frac{\ket{\phi_j}}{\sqrt{k}} \implies W_R = \frac{1}{\sqrt{k}} \Phi.   \end{equation}
Similarly, the action of $W_L$ is
\begin{align}
\bra{j} W_L = (\bra{j} \otimes \bra{0}^n)U_\Phi^\dagger(\ket{+}^m \otimes \id_n) 
= (\bra{j} \otimes \bra{\phi_j})(\ket{+}^m \otimes \id_n) = \frac{\bra{\phi_j}}{\sqrt{k}} \implies W_L = \frac{1}{\sqrt{k}} \Phi^\dagger,
\label{eq:action_WL}
\end{align}
where we  used $\bra{j} \bra{0^n} U_\Phi^\dagger = \bra{j} \bra{\phi_j}$ from the definition of $U_\Phi$. 
Now, let us define the following $n+m$-qubit unitary
\begin{equation}
    W := U_\Phi^\dagger (H^{\otimes m} R_0 H^{\otimes m} \otimes \id_n) U_\Phi,
\end{equation}
where $R_0 = 2\ket{0^m}\bra{0^m} - \id$. 
It can then be shown that
\begin{align*}
    (\id_m \otimes \bra{0}^n) W (\id_m \otimes \ket{0}^n) &=(\id_m \otimes \bra{0}^n) U_\Phi^\dagger \Big(\big(2(\ketbra{+}{+})^{\otimes m}-\id_m\big) \otimes \id_n\Big) U_\Phi (\id_m \otimes \ket{0}^n)\\
    &= 2 W_L W_R - \id\\
    &= \frac{2}{k} G - \id,
\end{align*}
where we used  Eqs~\eqref{eq:action_WR},\eqref{eq:action_WL} in the second equality. Thus, $W$ is a $(1,n,0)$-block-encoding of $2G/k - \id$. The corresponding gate complexity of $W$ is $O(G_{\prep} k \log k + m^2)$ where the first contribution is due to applications of $U_\Phi,U_\Phi^\dagger$ and the second contribution is the gate complexity of $R_+$.
Using Lemma~\ref{lem:sum_prod_BEs}$(i)$, we can combine $W$ with the trivial block-encoding of $\id$ as $G = k/2 \cdot (W+\id)$ to get a $(k,n+1,0)$-block-encoding of $G$, which we will denote by $U_G$. The gate complexity of $U_G$ will be $O(G_{\prep} k \log k + m^2)$. This completes the proof of $(i)$.

$(ii)$ Similarly, for the block-encoding of the kernel matrix $K$, we will define the following $n+m$-qubit unitary matrix
\begin{equation}
    V := (H^{\otimes m} \otimes \id_n) U_\Phi (\id_m \otimes R_0) U_\Phi^\dagger  (H^{\otimes m} \otimes \id_n),
\end{equation}
where $R_0 =2 \ket{0^n}\bra{0^n} - \id$. It can then be shown that
\begin{align*}
    (\bra{0^m} \otimes \id_n) V (\ket{0^m} \otimes \id_n) &= (\bra{+}^m \otimes \id_n) U_\Phi \Big(\id_m \otimes (2 \ket{0^n}\bra{0^n} - \id_n) \Big) U_\Phi^\dagger  (\ket{+}^m \otimes \id_n) \\
    &= 2 W_R W_L - \id \\
    &= \frac{2}{k} K - \id,
\end{align*}
where we used Eqs.~\eqref{eq:action_WR},\eqref{eq:action_WL} in the second equality. Thus, $V$ is a $(1,m,0)$-block-encoding of $2K/k - \id$. The corresponding gate complexity of $V$ is $O(G_{\prep} k \log k + m)$ where the first contribution is due to applications of $U_\Phi,U_\Phi^\dagger$ and the second contribution is the gate complexity of $R_0$. Using Lemma~\ref{lem:sum_prod_BEs}$(i)$, we can combine $V$ with the trivial block-encoding of $\id$ as $K=k/2\cdot (V+\id)$ to get a $(k,m+1,0)$-block-encoding of $K$, which we will denote by $U_K$. The gate complexity of $U_K$ will be $O(G_{\prep} k \log k + m)$. This completes the proof of $(ii)$.

$(iii)$ Let $A = G + \lambda \id$. To implement $A$ for $\lambda > 0$, we use Lemma~\ref{lem:sum_prod_BEs} to combine $G$ with $\id$ to obtain an $(k+\lambda,n+2,0)$ block-encoding of $A$, which we will denote by $U_A$. If $\lambda=0$, we can proceed with the block-encoding $U_G$. The gate complexity of $U_A$ will be $T_A = O(G_{\prep} k \log k + m)$.

To implement a block-encoding of $A^{-1}$, we will use Lemma~\ref{lem:inverse_BE} as $A$ is Hermitian (since $G$ and $\id$ are Hermitian). Using Weyl's inequality in Lemma~\ref{fact:weyl}, observe that spectrum of $A$ lies in $[\mu + \lambda, k + \lambda]$\footnote{Note that the maximum eigenvalue of $G$ can be upper bounded by its trace: $\lambda_\max(G) \leq \Tr(G) = k$, the bound follows from Weyl's inequality.} as $\lambda_\min(G) \geq \mu$. The spectrum of $A/(k+\lambda)$ then lies in $[(\mu + \lambda)/(k + \lambda), 1]$. Considering the desired error parameter $\varepsilon \in (0,1)$, Lemma~\ref{lem:inverse_BE} then gives us a $(2/(\mu + \lambda), n+4, \varepsilon)$-block-encoding of $A^{-1}$, which we denote by $U_{A^{-1}}$, with gate complexity
\begin{equation}\label{eq:gate_complexity_Ainv}
T_{A^{-1}} = O\left( \frac{(k+\lambda)(G_{\prep} k \log k + m)}{\mu + \lambda} \log \frac{2}{\varepsilon(\mu + \lambda)}\right).    
\end{equation}
In other words,
\begin{equation}\label{eq:Ainv_BE}
\norm{A^{-1} - (2/(\mu + \lambda)) (\bra{0}^{n+3} \otimes \id_m) U_{A^{-1}} (\ket{0}^{n+3} \otimes \id_m)} \leq \varepsilon.
\end{equation}
 The classical time to find these encodings is the same as the gate complexity replacing $G_{\prep}$ by~$T_{\prep}$.
 \end{proof}

\paragraph{Block-encoding of ridge projector.} Using the block encoding constructions of of $G,K$ from the lemma above, we now give block-encodings for $\ridgeproj_{L,\lambda}$ and $\id - \ridgeproj_{L,\lambda}$ corresponding to the list $L$ of states $\{\ket{\phi_i}\}_{i \in [k]}$ from model class $\calC$. Note that the result below is restricted to  $\lambda > 0$. 
\begin{lemma}\label{lem:ridgeproj_BE}
Let $\lambda > 0$ and $\varepsilon \in (0,1)$. Then, one can determine circuits implementing 
\begin{enumerate}[$(i)$]
\item $(2k/\lambda, 2\ceil{\log k} + 5, \varepsilon)$-block-encoding of $\ridgeproj_{L,\lambda}$ acting on $n+3$ qubits  with gate complexity $\widetilde{O}\left( (G_{\prep} k^2 )/\lambda\cdot  \log(1/\varepsilon)\right)$.
\item $(2,\ceil{\log k} + 4,\varepsilon)$-block-encoding of $(\id - \ridgeproj_{L,\lambda})$ acting on $n+\ceil{\log k} + 3$ qubits with gate complexity $\widetilde{O}\left( (G_{\prep} k^2)/\lambda\cdot \log(1/\varepsilon)\right)$.
\end{enumerate}
 The classical time to find these encodings is the same as the gate complexity replacing $G_{\prep}$ by~$T_{\prep}$.
 \end{lemma}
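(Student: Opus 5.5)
The plan is to build both block-encodings from the constructions of Lemma~\ref{lem:G_K_BE}, combined using Lemma~\ref{lem:sum_prod_BEs} and Claim~\ref{claim:kernel_trick}. Since Lemma~\ref{lem:G_K_BE}$(iii)$ is stated with a lower bound $\mu$ on $\lambda_\min(G)$, we may take $\mu=0$ (we have $G\succeq 0$ always). This gives a $(2/\lambda, n+4, \varepsilon')$-block-encoding of $(G+\lambda\id_k)^{-1}$ with gate complexity $\widetilde{O}(k(k+\lambda)G_{\prep}/\lambda\cdot\log(1/\varepsilon'))$. Since $\lambda\le 1$, this is $\widetilde{O}(k^2 G_{\prep}/\lambda \cdot \log 1/\varepsilon')$, which is the claimed gate complexity.

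For $(i)$ I will use the factorization $\ridgeproj_{L,\lambda}=\Phi(G+\lambda\id)^{-1}\Phi^\dagger$. The maps $W_R=\Phi/\sqrt{k}$ and $W_L=\Phi^\dagger/\sqrt{k}$ of Eqs.~\eqref{eq:action_WR},\eqref{eq:action_WL} are already exact block-encodings of the rectangular operators $\Phi/\sqrt k$ and $\Phi^\dagger/\sqrt k$, realized by $U_\Phi$ together with $H^{\otimes m}$ and postselection on $\ket{0^n}$ or $\ket{0^m}$. If the registers are padded so that both sides act on a common $(n+m)$-qubit space, each is a $(\sqrt{k}, \cdot, 0)$-block-encoding. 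Applying the product rule of Lemma~\ref{lem:sum_prod_BEs}$(ii)$ twice then yields an encoding with normalization $\sqrt{k}\cdot(2/\lambda)\cdot\sqrt{k}=2k/\lambda$. The error is $\sqrt k\cdot\sqrt k\cdot\varepsilon'$, since the exact factors contribute no error. Choosing $\varepsilon'=\varepsilon/k$ therefore gives total error $\varepsilon$, and this only changes the complexity by a $\log k$ factor. The ancilla count is obtained by adding up the ancillas of the three factors, which gives $2\lceil\log k\rceil+O(1)$ after the $n$-qubit work register of the inverse is reused. The main bookkeeping obstacle is matching the rectangular dimensions, $2^n\times k$ versus $k\times k$. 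I would handle this by embedding $\mathbb{C}^k$ into the index register and treating the unused qubits as ancillas, and I would check that the stated ancilla count comes out.

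For $(ii)$ I will avoid the $2k/\lambda$ normalization by using Claim~\ref{claim:kernel_trick}$(ii)$: $\id-\ridgeproj_{L,\lambda}=\lambda(K+\lambda\id_N)^{-1}$. Lemma~\ref{lem:G_K_BE}$(ii)$ gives a $(k,m+1,0)$-block-encoding of $K$, and by Lemma~\ref{lem:sum_prod_BEs}$(i)$ this yields a $(k+\lambda, m+2,0)$-block-encoding of the Hermitian matrix $A'=K+\lambda\id_N$. The spectrum of $A'$ lies in $[\lambda,k+\lambda]$, because $K\succeq0$ and $\|K\|=\|G\|\le k$. So the spectrum of $A'/(k+\lambda)$ lies in $[\lambda/(k+\lambda),1]$. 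Lemma~\ref{lem:inverse_BE} then gives a $(2/\lambda, m+4,\varepsilon/\lambda)$-block-encoding of $A'^{-1}$, with gate complexity $O((k+\lambda)/\lambda\cdot T_K\log(1/\varepsilon))=\widetilde O(k^2G_{\prep}/\lambda\cdot\log 1/\varepsilon)$. Multiplying by the scalar $\lambda$ simply rescales the normalization to $2$ and the error to $\varepsilon$. This gives a $(2,\lceil\log k\rceil+4,\varepsilon)$-block-encoding on $n+\lceil\log k\rceil+O(1)$ qubits.

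In both parts, the classical time to produce the circuits is dominated by compiling $U_\Phi$ via Claim~\ref{claim:circ_UPhi}. Consequently the classical time has the same form as the gate complexity with $G_{\prep}$ replaced by $T_{\prep}$, up to polylogarithmic factors.
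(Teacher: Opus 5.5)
Your part (ii) is the paper's argument: write $\id-\ridgeproj_{L,\lambda}=\lambda(K+\lambda\id_N)^{-1}$, block-encode $K+\lambda\id_N$ from $U_K$, invert with Lemma~\ref{lem:inverse_BE}, and absorb the factor $\lambda$.

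For part (i) you take a different route. The paper does not use $(G+\lambda\id_k)^{-1}$ here. It uses Claim~\ref{claim:kernel_trick}$(i)$, $\ridgeproj_{L,\lambda}=K(K+\lambda\id_N)^{-1}$, and multiplies the $(k,m+1,0)$-encoding $U_K$ by the \emph{same} $(2/\lambda,m+4,\varepsilon_1)$-encoding of $(K+\lambda\id_N)^{-1}$ that it builds for (ii). Lemma~\ref{lem:sum_prod_BEs}$(ii)$ then gives $(2k/\lambda,2m+5,k\varepsilon_1)$ directly, and $\varepsilon_1=\varepsilon/k$ finishes. Your Gram-side factorization $\Phi(G+\lambda\id_k)^{-1}\Phi^\dagger$ gets the same normalization $\sqrt k\cdot\frac{2}{\lambda}\cdot\sqrt k$, the same error $k\varepsilon'$ and the same cost. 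It also mirrors the coefficient-vector form used in Claim~\ref{claim:prep_coeff_state}. The kernel route has a structural advantage: every factor is a square block-encoding on the system register with only $O(\log k)$ ancillas, so the product rule applies verbatim and one inverse serves both items.

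The gap in your (i) is the ancilla count; the mechanism you sketch for it does not work as stated. The encoding of $(G+\lambda\id_k)^{-1}$ from Lemma~\ref{lem:G_K_BE}$(iii)$ has $n+4$ ancillas, because $U_G$ uses an $n$-qubit register prepared in $\ket{0^n}$. Let $\widetilde M$ denote the encoded block on the index register. Then your composite is $W_R\widetilde M W_L=\bra{+^m}U_\Phi\bigl(\widetilde M\otimes\ket{0^n}\bra{0^n}\bigr)U_\Phi^\dagger\ket{+^m}$. This needs the projector $\ket{0^n}\bra{0^n}$ on the system register in the \emph{middle} of the circuit, whereas a block-encoding only postselects its ancillas at the end.

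Each naive way of building it fails:
\begin{itemize}
\item If the inverse has its own work register and you project only the index and inverse ancillas at the end, the block you obtain is $\frac1k\sum_{j,l}\widetilde M_{jl}\,V_jV_l^\dagger$, where $V_j\ket{0^n}=\ket{\phi_j}$. That is the full preparation unitaries, not $\frac1k\sum_{j,l}\widetilde M_{jl}\ket{\phi_j}\bra{\phi_l}$.
\item If you reuse the system register as the inverse's work register, as you suggest, nothing guarantees that register is $\ket{0^n}$ when the inverse is applied, or when $U_\Phi$ is applied afterwards.
\item Fresh $n$-qubit registers restore correctness but cost $\Theta(n)$ ancillas, not $2\ceil{\log k}+5$.
\end{itemize}

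To rescue your route you must copy the condition ``system register $=\ket{0^n}$'' onto flag qubits, once after $U_\Phi^\dagger$ and once after the inverse. The system register can then serve as the inverse's work register, giving $m+O(1)$ ancillas. The simpler repair is to use $K(K+\lambda\id_N)^{-1}$ for (i) as well, reusing the inverse you already built for (ii); that is the paper's proof.
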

\begin{proof}
Let $m = \ceil{\log k}$. We will implement block-encodings of $\ridgeproj_{L,\lambda}$ and $\id - \ridgeproj_{L,\lambda}$ by utilizing Claim~\ref{claim:kernel_trick} and the $(k,m + 1, 0)$-block-encoding $U_K$ of $K$ from Lemma~\ref{lem:G_K_BE}$(ii)$. 

$(i)$ From Claim~\ref{claim:kernel_trick}, we have that $\ridgeproj_{L,\lambda} = K(K+\lambda \id_N)^{-1}$. Let $B = K + \lambda \id_N$. Using Lemma~\ref{lem:sum_prod_BEs}$(i)$, we can combine $U_K$ with $\id$ to get a $(k + \lambda,m+2,0)$-block-encoding of $B$, which we will denote by $U_B$. The gate complexity of $U_B$ will be $O(G_{\prep} k \log k)$. To implement $B^{-1}$, we now use Lemma~\ref{lem:inverse_BE}. We firstly note that $B$ is a Hermitian matrix as $K$ and $\id$ are Hermitian. The spectrum of $B/(k + \lambda)$ lies in $[\lambda/(k + \lambda),1]$ as the spectrum of $K$ lies in $[0, k + \lambda]$. Considering the error parameter $\varepsilon_1 \in (0,1)$ to be fixed later, Lemma~\ref{lem:inverse_BE} then gives us a $(2/\lambda, m + 4, \varepsilon_1)$-block-encoding of $B^{-1}$, which we denote by $U_{B^{-1}}$ with gate complexity
\begin{equation}
T_{B^{-1}} = O\left( (G_{\prep} k^2 \log k)/\lambda + m) \log(1/(\lambda \varepsilon_1))\right).    
\end{equation}
In other words,
\begin{equation}\label{eq:Binv_BE}
\norm{B^{-1} - (2/\lambda) (\bra{0}^{m+3} \otimes \id_n) U_{B^{-1}} (\ket{0}^{m+3} \otimes \id_n)} \leq \varepsilon_1.
\end{equation}
 The classical time to find these encodings is the same as the gate complexity replacing $G_{\prep}$ by~$T_{\prep}$.
 
We now use Lemma~\ref{lem:sum_prod_BEs}$(ii)$ to obtain a $(2k/\lambda, 2m+5, k\varepsilon_1)$-block-encoding of $\ridgeproj_{L,\lambda} = K B^{-1}$, which we denote by $U_{\ridgeproj_{L,\lambda}}$. Setting $\varepsilon_1 = \varepsilon/k$ gives us a $(2k/\lambda, 2m+5, \varepsilon)$-block-encoding $U_{\ridgeproj_{L,\lambda}}$ of $\ridgeproj_{L,\lambda}$ with gate complexity 
$$
T_{\ridgeproj_{L,\lambda}} = O\left( (G_{\prep} k^2 \log k)/\lambda \log(k /(\lambda \varepsilon))\right).
$$
This completes the proof of item $(i)$.

$(ii)$ From Claim~\ref{claim:kernel_trick}, we have that $(\id - \ridgeproj_{L,\lambda}) = \lambda(K+\lambda \id_N)^{-1} = \lambda B^{-1}$. The unitary $U_{B^{-1}}$ is thus directly a $(2,m+4,\lambda \varepsilon_1)$-block-encoding of $(\id - \ridgeproj_{L,\lambda})$. This can be directly observed by multiplying Eq.~\eqref{eq:Binv_BE} by $\lambda$ on both sides. Setting $\varepsilon_1 = \varepsilon/\lambda$ gives us a $(2,m+4,\varepsilon)$-block-encoding $U_{B^{-1}}$ of $(\id - \ridgeproj_{L,\lambda}) := \lambda B^{-1}$ with gate complexity
\begin{equation}\label{eq:gate_complexity_Binv}
T_{\id - \ridgeproj_{L,\lambda}} = O\left( (G_{\prep} k^2 \log k)/\lambda \log(1 /\varepsilon)\right).    
\end{equation}
This completes the proof of item $(ii)$ and the overall lemma.
\end{proof}

\subsubsection{Coefficient state preparation}
As part of our learning algorithm, we will also require the ability to prepare the quantum \emph{coefficient} state encoding the the \emph{coefficient} vector $\vec{\beta} = (G + \lambda \id)^{-1} \Phi^\dagger \ket{\psi}$ (Eq.~\eqref{eq:soln_ridge_regression}) that is the solution to the ridge regression problem~(Eq.~\eqref{eq:ridge_regression}). To this end, we denote this state by $\ket{\beta}$ and will take the form
\begin{equation}
\label{eq:ridge_coeff_state}
    \ket{\beta} := \frac{\vec{\beta}}{\norm{\vec{\beta}}_2} = \frac{(G + \lambda \id)^{-1} \Phi^\dagger \ket{\psi}}{\norm{(G + \lambda \id)^{-1} \Phi^\dagger \ket{\psi}}_2}.
\end{equation}
We now show that copies of $\ket{\beta}$ can be prepared efficiently using the block-encodings that we have discussed so far, for all values of the parameter $\lambda \geq 0$. We also show that the norm of the coefficient vector $\norm{\vec{\beta}}$ can be estimated efficiently.
\begin{claim}
\label{claim:prep_coeff_state}
Let $\lambda > 0$, $\varepsilon, \upsilon, \delta \in (0,1)$, and $\mu \in [0,1]$. Let the minimum eigenvalue of $G$ satisfy $\lambda_\min(G) \geq \mu$. Define  $\ket{\beta}$ as in Eq.~\eqref{eq:ridge_coeff_state} and suppose ${\norm{(G + \lambda \id)^{-1} \Phi^\dagger \ket{\psi}}_2} \geq \gamma$. Then, there is 
\begin{enumerate}[$(i)$]
    \item a circuit acting on $n+ \ceil{\log k} + 3$ qubits which prepares the state $\ket{\widetilde{\beta}}$ s.t. $\norm{\ket{\widetilde{\beta}} - \ket{\beta}} \leq \varepsilon$, with success probability 
    $$p_{\mathrm{succ}} \geq \frac{\gamma^2 (\mu + \lambda)^2}{4k}.$$ 
    The circuit has gate complexity
    $$
    \widetilde{O}\left( \frac{k(k+\lambda)G_{\prep}}{\mu + \lambda} \log \frac{1}{\varepsilon\gamma}\right).
    $$
    \item an algorithm that outputs an estimate $\widehat{b}$ of the norm $\norm{\beta}_2^2$, with probability $\geq 1-\delta$, such that $\Big|\widehat{b} - \norm{\beta}_2^2\Big| \leq \upsilon$ using $O(k^2/(\varepsilon^2 (\mu + \lambda)^4 \log(1/\delta)))$ copies of $\ket{\psi}$ and has gate complexity
    $$
    \widetilde{O}\left( \frac{k^3(k+\lambda)G_{\prep} }{\varepsilon^2 (\mu + \lambda)^5} \log \frac{k}{\varepsilon(\mu + \lambda)} \right).
    $$
\end{enumerate}
 The classical time to find these encodings is the same as the gate complexity replacing $G_{\prep}$ by~$T_{\prep}$.
\end{claim}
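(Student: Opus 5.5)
The plan is to build a block-encoding of the non-unitary map $M := (G+\lambda\id)^{-1}\Phi^\dagger$, apply it to $\ket{\psi}$, and postselect the ancillas on $\ket{0}$. Both pieces already exist. The factor $\Phi^\dagger$ is encoded by the isometry-like map $W_L = \frac{1}{\sqrt{k}}\Phi^\dagger$ from Eq.~\eqref{eq:WR_WL}. Concretely, the unitary $U_\Phi^\dagger(H^{\otimes m}\otimes\id_n)$, with $\ket{0^m}$ fed into the index register and the output's $n$-qubit register postselected on $\ket{0^n}$, is a $(\sqrt{k}, \cdot, 0)$-block-encoding of the rectangular map $\Phi^\dagger$. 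The remaining factor is handled by the $(2/(\mu+\lambda), n+4, \varepsilon_1)$-block-encoding $U_{A^{-1}}$ of $A^{-1}=(G+\lambda\id)^{-1}$ from Lemma~\ref{lem:G_K_BE}$(iii)$. Composing these via Lemma~\ref{lem:sum_prod_BEs}$(ii)$, padded to square form, gives a $(2\sqrt{k}/(\mu+\lambda), O(n+\log k), \sqrt{k}\,\varepsilon_1)$-block-encoding $U_M$ of $M$. Its gate complexity is dominated by $T_{A^{-1}}$ from Eq.~\eqref{eq:gate_complexity_Ainv}, which yields the stated $\widetilde{O}(k(k+\lambda)G_{\prep}/(\mu+\lambda)\cdot\log(\cdot))$. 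The classical description time follows from Claim~\ref{claim:circ_UPhi} with $G_{\prep}$ replaced by $T_{\prep}$.

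For item $(i)$, apply $U_M$ to $\ket{0}\ket{\psi}$ and measure the flag ancillas. On outcome $0$, the unnormalized post-measurement vector is $\widetilde{M}\ket{\psi}$ with $\norm{\widetilde M - \tfrac{\mu+\lambda}{2\sqrt{k}}M}\le \tfrac{\mu+\lambda}{2\sqrt k}\sqrt{k}\varepsilon_1$. The success probability is then $\norm{\widetilde M\ket{\psi}}^2 \approx \frac{(\mu+\lambda)^2}{4k}\norm{\vec\beta}^2\ge \frac{\gamma^2(\mu+\lambda)^2}{4k}$, up to an additive error controlled by $\varepsilon_1$. For the normalized output, I will use the elementary bound $\norm{u/\norm{u}-v/\norm{v}}\le 2\norm{u-v}/\norm{v}$ with $v=M\ket\psi$ and $\norm{v}\ge\gamma$, which gives $\norm{\ket{\widetilde\beta}-\ket\beta}\le 2\sqrt{k}\varepsilon_1/\gamma$. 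Choosing $\varepsilon_1=\varepsilon\gamma/(2\sqrt k)$ puts this under $\varepsilon$ and only changes the logarithmic factor to $\log(k/(\varepsilon\gamma(\mu+\lambda)))$. The same choice of $\varepsilon_1$, being small relative to $\gamma$, also keeps the success probability above the stated bound up to a constant factor, which can be absorbed.

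For item $(ii)$, note that the acceptance probability of this same circuit is $p=\frac{(\mu+\lambda)^2}{4k}\norm{\vec\beta}_2^2$ up to the block-encoding error. Hence $\widehat b := \frac{4k}{(\mu+\lambda)^2}\widehat p$, where $\widehat p$ is the empirical acceptance frequency, estimates $\norm{\vec\beta}^2$. By Hoeffding, reaching additive error $\upsilon$ on $\norm{\vec\beta}^2$ requires additive error $\upsilon(\mu+\lambda)^2/(4k)$ on $p$. That costs $O(k^2\log(1/\delta)/(\upsilon^2(\mu+\lambda)^4))$ repetitions, each consuming one copy of $\ket\psi$, which matches the stated count with $\upsilon$ in place of $\varepsilon$. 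The bias from $\varepsilon_1$ must be kept below $\upsilon/2$ after rescaling. Taking $\varepsilon_1 \approx \upsilon(\mu+\lambda)/k$ does this, and accounts for the extra $k/(\mu+\lambda)$ inside the logarithm. The total gate count is the number of repetitions times $T_{U_M}$, giving the stated $k^3(k+\lambda)G_{\prep}/(\upsilon^2(\mu+\lambda)^5)$ up to logs.

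I expect the main obstacle to be careful error propagation rather than any conceptual step. The block-encoding error on $M$ is additive at scale $\sqrt{k}\varepsilon_1$. This error enters the normalized state divided by $\gamma$, and the probability estimate multiplied by $4k/(\mu+\lambda)^2$, so $\varepsilon_1$ has to be set to absorb both effects while keeping the dependence logarithmic. A secondary technical point is turning $W_L$ into a genuine square block-encoding, since the input space has dimension $2^n$ and the output has dimension $k$. I will handle this by embedding both spaces in $n+m$ qubits and treating the $\ket{0^n}$ postselection as part of the ancilla flag, which is why the circuit acts on $n+\ceil{\log k}+3$ qubits.
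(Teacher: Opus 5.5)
Your proposal is correct and follows essentially the same route as the paper: the same circuit ($H^{\otimes m}$ on the index register, then $U_\Phi^\dagger$, then $U_{A^{-1}}$, postselecting on $\ket{0^{n+4}}$ and $\ket{0^n}$), the same effective normalization $\nu = 2\sqrt{k}/(\mu+\lambda)$ with error $\sqrt{k}\,\varepsilon_1$, the same choices $\varepsilon_1 = \varepsilon\gamma/(2\sqrt{k})$ for $(i)$ and $\varepsilon_1 = \Theta(\upsilon(\mu+\lambda)/k)$ for $(ii)$, and the same estimator $\widehat{b} = \nu^2 \widehat{p}$. Your caveat that the success-probability bound holds only up to a constant factor is accurate: the paper's own estimate $\norm{W\ket{\psi}} \geq \gamma/(2\nu)$ actually gives $\gamma^2(\mu+\lambda)^2/(16k)$ rather than $\gamma^2(\mu+\lambda)^2/(4k)$. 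Likewise, your circuit, like the paper's, uses the $n+4$ ancillas of $U_{A^{-1}}$ on top of the $n+\ceil{\log k}$ data qubits, so the stated qubit count $n+\ceil{\log k}+3$ should not be read literally.
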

\begin{proof}
Let $m=\ceil{\log k}$, $A = (G + \lambda \id)$, and $\varepsilon' \in (0,1)$ be an error parameter to be fixed later. We consider a system of $n$ qubits $\calS$ along with a system of $a:=n+4$ ancillary qubits $\calA_1$ and another system of $m$ ancillary qubits $\calA_2$. To prepare $\ket{\beta}$, we will require the following. Let $U_\Phi$ be the unitary defined in Eq.~\eqref{eq:defnofUphi} as part of Claim~\ref{claim:circ_UPhi}, and $U_{A^{-1}}$ be the $O(2/(\mu + \lambda), n+4, \varepsilon')$-block-encoding of $A^{-1}$ from Lemma~\ref{lem:G_K_BE}$(iii)$. 

$(i)$ We now describe the circuit to prepare $\ket{\beta}$ approximately. We consider the initial state of $\ket{0^a}_{\calA_1}\ket{0^m}_{\calA_2}\ket{\psi}_\calS$, and then apply the unitary $(\id_a \otimes H^{\otimes m} \otimes \id_n)$ followed by $(\id_a \otimes U_{\Phi}^\dagger)$ and then $(U_{A^{-1}} \otimes \id_n)$. We then measure $\calA_1$ and $\calS$ in the computational basis, and post-select for them being $\ket{0^a}$ and $\ket{0^n}$. The corresponding circuit is shown below in Figure~\ref{fig:prepare_coeff_state}.
\begin{figure}[h!]
\centering
$$
\Qcircuit @C=1.5em @R=1.2em {
\lstick{\ket{0^a}} & \qw & \qw & \multigate{1}{U_{A^{-1}}} & \meter & \rstick{\ket{0^a}} \\
\lstick{\ket{0^m}} & \gate{H^{\otimes m}} & \multigate{1}{U_\Phi^\dagger} & \ghost{U_{A^{-1}}} & \qw & \rstick{\ket{\widetilde{\beta}}}\\
\lstick{\ket{\psi}} & \qw & \ghost{U_\Phi^\dagger} & \qw & \meter & \rstick{\ket{0^n}}
}
$$
\caption{Circuit for preparing the state $\ket{\widetilde{\beta}}$.}
\label{fig:prepare_coeff_state}
\end{figure}

We now argue that the resulting post-selected state on $\calA_2$, which we denote by $\ket{\widetilde{\beta}}$, would then be approximately $\ket{\beta}$. Let us define $E =  \alpha(\bra{0}^{a} \otimes \id_m) U_{A^{-1}} (\ket{0}^{a} \otimes \id_m) - A^{-1}$ where $\alpha = 2/(\mu + \lambda)$ and $E$ corresponds to the error in the implementation of $A^{-1}$ by $U_{A^{-1}}$ which satisfies $\norm{E} \leq \varepsilon'$. We first show that 
\begin{equation}\label{eq:interim_def_W}
W := (\bra{0^a} \otimes \id_m \otimes \bra{0^n})\cdot (U_{A^{-1}} \otimes \id_n)(\id_a \otimes U_{\Phi}^\dagger)(\id_a \otimes H^{\otimes m} \otimes \id_n) \cdot (\ket{0^a,0^m} \otimes \id_n)    
\end{equation}
implements $A^{-1} \Phi^\dagger$ up to some subnormalization. Let this normalization be $\nu > 0$ to be determined and let us evaluate $A^{-1} \Phi^\dagger - \nu W$:
\begin{align*}
&A^{-1} \Phi^\dagger - \nu (\bra{0^a} \otimes \id_m \otimes \bra{0^n})(U_{A^{-1}} \otimes \id_n)(\id_a \otimes U_{\Phi}^\dagger)(\id_a \otimes H^{\otimes m} \otimes \id_n)(\ket{0^a}\ket{0^m} \otimes \id_n) \\
=&A^{-1} \Phi^\dagger - \nu(\id_m \otimes \bra{0^n}) \Big( \Big[ (\bra{0^a} \otimes \id_m)U_{A^{-1}}(\ket{0^a} \otimes \id_m) \Big] \otimes \id_n \Big) U_{\Phi}^\dagger(\ket{+}^m \otimes \id_n) \\
=&A^{-1} \Phi^\dagger - \nu(\id_m \otimes \bra{0^n}) \Big( \frac{1}{\alpha}(A^{-1}+ E) \otimes \id_n \Big) U_{\Phi}^\dagger(\ket{+}^m \otimes \id_n) \\
=&A^{-1} \Phi^\dagger - \frac{\nu}{\alpha}(A^{-1} + E) (\id_m \otimes \bra{0^n}) U_{\Phi}^\dagger(\ket{+}^m \otimes \id_n) \\
=&A^{-1} \Phi^\dagger - \frac{\nu}{\alpha}(A^{-1} + E)\frac{\Phi^\dagger}{\sqrt{k}},
\end{align*}
where we have used that $U_{A^{-1}}$ is a $(\alpha,a,\varepsilon')$-block-encoding of $A^{-1}$ in the third line and used Eq.~\eqref{eq:action_WL} in the final line. Setting $\nu = \alpha \sqrt{k}$ gives us
$$
\norm{A^{-1} \Phi^\dagger - \nu W} = \norm{\frac{\nu E \Phi^\dagger}{\alpha \sqrt{k}}} \leq \norm{E}\cdot \norm{\Phi^\dagger} \leq \sqrt{k} \varepsilon',
$$
where we used that $\norm{\Phi^\dagger} = \sqrt{\norm{\Phi^\dagger \Phi}} = \sqrt{\norm{G}} \leq \sqrt{k}$ and that $\norm{E} \leq \varepsilon'$. This implies that the action on an input state $\ket{\psi}$ on the system register $\calS$ satisfies
\begin{equation}\label{eq:interim_action_W_is_good}
\norm{\frac{1}{\nu} A^{-1}\Phi^\dagger \ket{\psi} - W \ket{\psi}} \leq \frac{\sqrt{k} \varepsilon'}{\nu} = \frac{\varepsilon'}{\alpha}.    
\end{equation}
The action of the circuit in Figure~\ref{fig:prepare_coeff_state} on the input state $\ket{0^a}_{\calA_1}\ket{0^m}_{\calA_2}\ket{\psi}_\calS$ then produces the state 
\begin{equation}\label{eq:interim_state_on_circ}
\ket{0^a} \Big(W \ket{\psi}\Big) \ket{0^n} + \ket{\perp},    
\end{equation}
which is $\varepsilon'/\alpha$-close to
$$
\ket{0^a}\Big(\frac{1}{\nu} A^{-1}\Phi^\dagger \ket{\psi} \Big) \ket{0^n} + \ket{\perp},
$$
and where $\ket{\perp}$ is an unnormalized state that is orthogonal to every state with $\ket{0^a}$ on the register $\calA_1$ and $\ket{0^n}$ on the register $\calS$. Thus, if we measure the registers $\calA_1$ and $\calS$ in the states $\ket{0^a}$ and $\ket{0^n}$ respectively, we get the state 
\begin{equation}\label{eq:interim_def_approx_beta}
\ket{\widetilde{\beta}} := \frac{W \ket{\psi}}{\norm{W \ket{\psi}}}.    
\end{equation}
We now argue that $\ket{\widetilde{\beta}}$ is close to the desired state $\ket{\beta}$. For this, let us define the vectors $\widetilde{y} = W \ket{\psi}$ and $y = A^{-1} \Phi^\dagger \ket{\psi}/\nu$. We then have that
\begin{align*}
    \norm{\ket{\widetilde{\beta}} - \ket{\beta}}_2 = \norm{\frac{\widetilde{y}}{\norm{\widetilde{y}}} - \frac{y}{\norm{y}}}
    \leq \norm{\frac{\widetilde{y}}{\norm{\widetilde{y}}} - \frac{\widetilde{y}}{\norm{y}}} + \norm{\frac{\widetilde{y}}{\norm{y}} - \frac{y}{\norm{y}}}
    &\leq \norm{\widetilde{y} \frac{\norm{y} - \norm{\widetilde{y}}}{\norm{\widetilde{y}} \norm{y}}} + \norm{\frac{\widetilde{y} - y}{\norm{y}}} \\
     &\leq \Big| \frac{\norm{y} - \norm{\widetilde{y}}}{\norm{y}}\Big| + \norm{\frac{\widetilde{y} - y}{\norm{y}}} \\
    &\leq 2\frac{\norm{\widetilde{y} - y}}{\norm{y}} \\
    &\leq 2\frac{\nu \varepsilon'}{\alpha \gamma},
\end{align*}
where we have added and subtracted $\widetilde{y}/\norm{y}$ before applying the triangle inequality in the second inequality in first line. We used submultiplicativity of norms in the first term in the second line. In the third line, we used reverse triangle inequality and in the fourth line we used Eq.~\eqref{eq:interim_action_W_is_good} and the fact that $\norm{y} = \nu^{-1} \norm{A^{-1} \Phi^\dagger \ket{\psi}} \geq \nu^{-1} \gamma$ since $\norm{A^{-1} \Phi^\dagger \ket{\psi}} \geq \gamma$ (by assumption of the claim).

Setting $\varepsilon' = \varepsilon \alpha \gamma/(2 \nu) = \varepsilon \gamma/(2 \sqrt{k})$ gives us that $\ket{\widetilde{\beta}}$ is $\varepsilon$-close to $\ket{\beta}$ i.e., $\norm{\ket{\widetilde{\beta}} - \ket{\beta}} \leq \varepsilon$. Towards obtaining the success probability of producing $\ket{\widetilde{\beta}}$ which is given by $p_{\mathrm{succ}} = \norm{W \ket{\psi}}_2^2$, we note that
\begin{align*}
\norm{W \ket{\psi}} \geq \norm{\frac{1}{\nu} A^{-1} \Phi^\dagger \ket{\psi}} -\norm{W \ket{\psi} - \frac{1}{\nu}A^{-1} \Phi^\dagger \ket{\psi}}
\geq \frac{\gamma}{\nu} - \frac{\varepsilon'}{\alpha}
= \frac{\gamma}{\nu} - \frac{\varepsilon \gamma}{2\nu}
\geq \frac{\gamma}{2\nu} = \frac{\gamma(\mu + \lambda)}{2\sqrt{k}},
\end{align*}
where we used the reverse triangle inequality in the first inequality, the promise $\norm{A^{-1} \Phi^\dagger \ket{\psi}}_2 \geq \gamma$ in the second inequality along with Eq.~\eqref{eq:interim_action_W_is_good}, and substituted $\nu = \alpha \sqrt{k}$ along with $\alpha = 2/(\mu + \lambda)$ in the final equality. The success probability is then 
$$
p_{\mathrm{succ}} = \norm{W \ket{\psi}}_2^2 \geq \frac{\gamma^2 (\mu + \lambda)^2}{4k}.
$$
The gate complexity of the circuit is due to implementation of $H^{\otimes m}$, $U_{A^{-1}}$ (with $\varepsilon'$ as the relevant error) and $U_\Phi^\dagger$ which gives a total complexity of
$$
O\left( \frac{(k+\lambda)(G_{\prep} k \log k + m)}{\mu + \lambda} \log \frac{\sqrt{k}}{\varepsilon \gamma (\mu + \lambda)}\right),
$$
where we used Lemma~\ref{lem:G_K_BE}$(iii)$. This completes the proof of item $(i)$.

$(ii)$ We note that the circuit before measurement produces the state in Eq.~\eqref{eq:interim_state_on_circ}. The probability of measuring $\ket{0}^a$ on $\calA_1$ and $\ket{0^n}$ on $\calS$ after measurement, which we earlier called the success probability, is $p_{\mathrm{succ}} = \norm{W \ket{\psi}}_2^2$. We can then estimate $p_{\mathrm{succ}}$ from these measurement outcomes. Consider the error parameter $\xi \in (0,1)$ and let the estimate be $\widehat{p}$ such that $|\widehat{p} - p_{\mathrm{succ}}| \leq \xi$. 
This can be obtained from $O(1/ \xi^2\cdot \log(1/\delta))$ measurement outcomes and hence applications of the circuit. Set $\widehat{b} = \nu^2 \widehat{p}$.
Now observe that
\begin{align*}
\Big| \widehat{b} - \norm{A^{-1}\Phi^\dagger \ket{\psi}}_2^2 \Big|
&\leq \Big| \nu^2 \widehat{p} - \nu^2 p_{\mathrm{succ}}\Big|+\Big| \nu^2 p_{\mathrm{succ}} - \norm{A^{-1}\Phi^\dagger \ket{\psi}}_2^2 \Big|\\
&\leq \nu^2 \xi +\Big| \nu \norm{W \ket{\psi}}_2^2 - \norm{A^{-1}\Phi^\dagger \ket{\psi}}_2^2 \Big|\\
&= \nu^2 \xi + \nu^2 \norm{ W \ket{\psi} - \frac{1}{\nu} A^{-1}\Phi^\dagger \ket{\psi}} \cdot \norm{ W \ket{\psi} + \frac{1}{\nu} A^{-1}\Phi^\dagger \ket{\psi}}  \\
&\leq \nu^2 \xi + \nu^2 \frac{\varepsilon'}{\alpha} \cdot \Big( \norm{W \ket{\psi}} + \frac{1}{\nu} \norm{A^{-1}\Phi^\dagger \ket{\psi}} \Big)  \\
&\leq \nu^2 \xi + \nu^2 \frac{\varepsilon'}{\alpha} \cdot \Big(1 + \frac{\sqrt{k}}{\nu(\lambda + \mu)} \Big) \\
&= \nu^2 \xi + \nu^2 \frac{\varepsilon'}{\alpha} \cdot (1 + 2) \\
&= \nu^2 \xi + \nu^2 \frac{3 \varepsilon'}{\alpha}
\end{align*}
where the first inequality used triangle inequality and the fourth line used Eq.~\eqref{eq:interim_action_W_is_good} where $\varepsilon'$ is the parameter as defined in $(i)$ corresponding to the error of the block-encoding $U_{A^{-1}}$ will be set differently here. In the fifth line, we used the triangle inequality again and then used the definition of $\nu = \alpha \sqrt{k}$ and $\alpha = 2/(\lambda + \mu)$ in the sixth line. We set $\xi = \varepsilon/(2\nu^2) = \varepsilon(\mu + \lambda)^2/(8k)$ where we used $\nu = \alpha \sqrt{k}$, and $\varepsilon' = \alpha \varepsilon/(6 \nu^2) = \varepsilon(\mu + \lambda)/(12 k)$. This then gives us that $\widehat{b}$ is the a $\varepsilon$-approximation of $\norm{\beta}_2^2$. The main contribution to gate complexity is $O(1/ \xi^2\cdot \log(1/\delta))$ applications of $U_W$. Using the gate complexity of $U_W$ from item $(i)$ for the $\varepsilon'$ defined here, we obtain a total gate complexity~of
$$
\widetilde{O}\left( \frac{k^3(k+\lambda)G_{\prep} }{\varepsilon^2 (\mu + \lambda)^5} \log \frac{k}{\varepsilon(\mu + \lambda)} \right).
$$
This completes the proof. Finally, like we mentioned in the previous proofs,  the classical time to find these encodings is the same as the gate complexity replacing $G_{\prep}$ by~$T_{\prep}$.
\end{proof}

\subsubsection{Residual state preparation} 
Like we mentioned in the introduction, after the learning algorithm has learned the states in $L$, the goal is to construct the so-called residual state, which is proportional to $(\id - \ridgeproj_{L,\lambda})\ket{\psi}$, so that it can find the next element $\ket{\phi_i}$ to include in $L$. To this end, we use the above block-encoding of $\id - \ridgeproj_{L,\lambda}$ to prepare copies of the state proportional to $(\id - \ridgeproj_{L,\lambda})\ket{\psi}$. This is stated formally below. Note that the following result is again restricted to positive $\lambda$ i.e., $\lambda > 0$. 

\begin{claim}
\label{claim:prep_ridgeproj_residual_state}
Let $\lambda > 0$ and $\varepsilon \in (0,1)$. Define $$\ket{\psi_L} := \frac{(\id - \ridgeproj_{L,\lambda}) \ket{\psi}}{\norm{(\id - \ridgeproj_{L,\lambda}) \ket{\psi}}}.$$ 
Then, there is a circuit acting on $n+ \ceil{\log k} + 4$ qubits which
\begin{enumerate}[$(i)$]
    \item prepares the state $\ket{\psi_L}$ up to error $\varepsilon$, with success probability $\geq \gamma^2/16$ if ${\norm{(\id - \ridgeproj_{L,\lambda}) \ket{\psi}}} \geq \gamma$, 
    and has gate complexity 
    $$
    \widetilde{O}\left( (G_{\prep} k^2 )/\lambda \cdot \log(1 /(\gamma \varepsilon))\right).
    $$
    \item can be used to output an estimate $\widehat{b}$ of the norm $\norm{(\id - \ridgeproj_{L,\lambda}) \ket{\psi}}_2^2$ up to error $\varepsilon$, with probability $\geq 1-\delta$ using $O(1/\varepsilon^2 \log(1/\delta)))$ copies of $\ket{\psi}$ and has gate complexity
    $$
    \widetilde{O}\left( (G_{\prep} k^2 )/\lambda \cdot \log(1 /\varepsilon)\right).
    $$
\end{enumerate}
 The classical time to find these encodings is the same as the gate complexity replacing $G_{\prep}$ by~$T_{\prep}$.
\end{claim}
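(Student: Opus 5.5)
The plan is to reuse the block-encoding of $\id - \ridgeproj_{L,\lambda}$ from Lemma~\ref{lem:ridgeproj_BE}$(ii)$, apply it to $\ket{\psi}$, and post-select on the ancillas being all zero. The argument mirrors Claim~\ref{claim:prep_coeff_state}, except that no $U_\Phi^\dagger$ or $H^{\otimes m}$ layer is needed, because the block-encoding already acts on the $n$-qubit system register.

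\textbf{Setup.} Let $m=\ceil{\log k}$. Let $U$ denote the $(2,m+4,\varepsilon')$-block-encoding of $\id-\ridgeproj_{L,\lambda}$ given by Lemma~\ref{lem:ridgeproj_BE}$(ii)$, with $\varepsilon'$ to be fixed below. Define the sub-block
$$
W=(\bra{0^{m+4}}\otimes\id_n)U(\ket{0^{m+4}}\otimes\id_n),
$$
so that $\norm{(\id-\ridgeproj_{L,\lambda})-2W}\le\varepsilon'$.

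\textbf{Circuit and state error.} Run $U$ on $\ket{0^{m+4}}\ket{\psi}$ and measure the ancillas. On outcome $0^{m+4}$ the post-measurement state is $W\ket{\psi}/\norm{W\ket{\psi}}$. Write $y=\tfrac12(\id-\ridgeproj_{L,\lambda})\ket{\psi}$ and $\widetilde y=W\ket{\psi}$. Then $\norm{y-\widetilde y}\le\varepsilon'/2$ and $\norm{y}\ge\gamma/2$. The normalization argument from Claim~\ref{claim:prep_coeff_state}$(i)$ gives
$$
\norm{\widetilde y/\norm{\widetilde y}-\ket{\psi_L}}\le 2\norm{\widetilde y-y}/\norm{y}\le 2\varepsilon'/\gamma .
$$
Choosing $\varepsilon'=\gamma\varepsilon/2$ yields error at most $\varepsilon$.

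\textbf{Success probability.} Assuming $\varepsilon\le 1/2$, the success probability satisfies
$$
\norm{\widetilde y}^2\ge(\gamma/2-\gamma\varepsilon/4)^2\ge(3\gamma/8)^2\ge\gamma^2/16 .
$$
The gate complexity is that of Lemma~\ref{lem:ridgeproj_BE}$(ii)$ with $\log(1/\varepsilon')=O(\log(1/(\gamma\varepsilon)))$, which matches the claimed bound.

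\textbf{Norm estimation.} Use the same circuit and estimate $p=\norm{W\ket{\psi}}^2$ by the empirical frequency $\widehat p$ of the all-zero ancilla outcome. A Chernoff/Hoeffding bound gives $|\widehat p-p|\le\xi$ with $O(\xi^{-2}\log(1/\delta))$ copies. Output $\widehat b=4\widehat p$. Since all vectors involved have norm at most $1$,
$$
|4p-\norm{(\id-\ridgeproj_{L,\lambda})\ket{\psi}}^2|=4\bigl|\norm{\widetilde y}^2-\norm{y}^2\bigr|\le 4\norm{\widetilde y-y}\bigl(\norm{\widetilde y}+\norm{y}\bigr)\le 4\varepsilon' .
$$
Taking $\xi=\varepsilon/8$ and $\varepsilon'=\varepsilon/8$ gives total error at most $\varepsilon$. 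The copy count is $O(\varepsilon^{-2}\log(1/\delta))$, and each circuit run has gate complexity $\widetilde O(G_{\prep}k^2/\lambda\cdot\log(1/\varepsilon))$.

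\textbf{Main obstacle.} The only delicate point is bookkeeping the subnormalization factor $2$ and making $\varepsilon'$ scale with $\gamma$ in part $(i)$. Otherwise the block-encoding error can swamp a small residual and spoil both the closeness bound and the lower bound on the success probability. The classical-time statement follows exactly as in Lemma~\ref{lem:ridgeproj_BE}.
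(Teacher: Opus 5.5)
Your proposal is correct and follows essentially the same route as the paper: you apply the $(2,m+4,\varepsilon')$-block-encoding of $\id-\ridgeproj_{L,\lambda}$ from Lemma~\ref{lem:ridgeproj_BE}$(ii)$ to $\ket{\psi}$, post-select, and set $\varepsilon'=\gamma\varepsilon/2$ for part $(i)$; for part $(ii)$ you estimate the success probability and output $\widehat b=4\widehat p$ (the paper takes $\varepsilon'=\varepsilon/12$ there and you take $\varepsilon/8$, and both work). The only cosmetic point is that your extra assumption $\varepsilon\le 1/2$ is unnecessary, since $\varepsilon<1$ already gives $\norm{W\ket{\psi}}\ge\gamma/2-\gamma\varepsilon/4\ge\gamma/4$.
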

\begin{proof}
Let $\varepsilon' \in (0,1)$ be an error parameter that we will fix later and $m = \ceil{\log k}$. From Lemma~\ref{lem:ridgeproj_BE}$(ii)$, we have a $(2, m+4, \varepsilon')$-block-encoding $U_{\id - \ridgeproj_\lambda}$ of $(\id - \ridgeproj_\lambda)$. Hence,
\begin{align}\label{eq:interim2:action_good}
    &\norm{(\id - \ridgeproj_\lambda) - 2(\bra{0}^{m+3} \otimes \id_n)U_{\id - \ridgeproj_\lambda}(\ket{0}^{m+3} \otimes \id_n)} \leq \varepsilon' \\
    \implies & \norm{(\id - \ridgeproj_\lambda)\ket{\psi}/2 - (\bra{0}^{m+3} \otimes \id_n)U_{\id - \ridgeproj_\lambda}(\ket{0}^{m+3} \otimes \id_n)\ket{\psi}} \leq \varepsilon'/2.
\end{align}
Let $W:=(\bra{0}^{m+3} \otimes \id_n)U_{\id - \ridgeproj_\lambda}(\ket{0}^{m+3} \otimes \id_n)$. We have then observed that by applying $U_{\id - \ridgeproj_\lambda}$ to $\ket{\psi}$, we obtain the state 
$$
\ket{0}^{m+3}\Big(W \ket{\psi} \Big) + \ket{\perp},
$$
where $\ket{\perp}$ is some unnormalized state that is orthogonal to every state with $\ket{0}^{m+3}$ in the first register and that is $\varepsilon'/2$-close to a state of the form
$$
\ket{0}^{m+3}\Big(\frac{1}{2} (\id - \ridgeproj_\lambda)\ket{\psi} \Big) + \ket{\perp}.
$$
Thus, if we measure the registers $\calA_1$ and $\calS$ in the states $\ket{0^a}$ and $\ket{0^n}$ respectively, we get the state 
\begin{equation}\label{eq:interim_def_approx_res}
\ket{\widetilde{\psi}_L} := \frac{W \ket{\psi}}{\norm{W \ket{\psi}}}.    
\end{equation}
We now argue that $\ket{\widetilde{\psi}_L}$ is close to the desired state $\ket{\psi}_L$. For this, let us define the vectors $\widetilde{y} = W \ket{\psi}$ and $y =\frac{1}{2} (\id - \ridgeproj_\lambda)\ket{\psi}$. We then have that (as in Claim~\ref{claim:prep_coeff_state})
\begin{align*}
\norm{\ket{\widetilde{\psi}_L} - \ket{\psi}_L}_2 &\leq 2\frac{\norm{\widetilde{y} - y}}{\norm{y}} \leq \frac{2\varepsilon'}{\gamma},
\end{align*}
where we used Eq.~\eqref{eq:interim2:action_good} and the fact that $\norm{y} = 1/2 \norm{\id - \ridgeproj_\lambda)\ket{\psi}} \geq \gamma/2$ by assumption of the~claim. Setting $\varepsilon' = \varepsilon \gamma/2$ gives us that $\ket{\widetilde{\psi}_L}$ is $\varepsilon$-close to $\ket{\psi_L}$ i.e., $\norm{\ket{\widetilde{\psi}_L} - \ket{\psi}_L} \leq \varepsilon$. Towards obtaining the success probability of producing $\ket{\widetilde{\psi}_L}$ which is given by $p_{\mathrm{succ}} = \norm{W \ket{\psi}}_2^2$, we note that
\begin{align*}
\norm{W \ket{\psi}} \geq \frac{1}{2}\norm{\id - \ridgeproj_\lambda)\ket{\psi}} -\norm{W \ket{\psi} - \frac{1}{2}\norm{\id - \ridgeproj_\lambda)\ket{\psi}}}
\geq \frac{\gamma}{2} - \frac{\varepsilon'}{2}
= \frac{\gamma}{2} - \frac{\varepsilon \gamma}{4}
\geq \frac{\gamma}{4}
\end{align*}
where we used the reverse triangle inequality in the first inequality and the promise $1/2 \norm{\id - \ridgeproj_\lambda)\ket{\psi}} \geq \gamma/2$ in the second inequality along with Eq.~\eqref{eq:interim2:action_good}. The success probability is then 
$$
p_{\mathrm{succ}} = \norm{W \ket{\psi}}_2^2 \geq \gamma^2/16.
$$
Item $(i)$ follows since the gate complexity of implementing $U_{\id - \ridgeproj_\lambda}$ from Lemma~\ref{lem:ridgeproj_BE}$(ii)$ with  error~$\varepsilon'$~is 
$$
O\left( (G_{\prep} k^2 \log k)/\lambda \log(1 /(\gamma \varepsilon))\right).    
$$

$(ii)$ We note in the circuit corresponding to $W$, the probability of measuring $\ket{0}^{m+3}$ on the first register after measurement, which we earlier called the success probability, is $p_{\mathrm{succ}} = \norm{W \ket{\psi}}_2^2$. We can then estimate $p_{\mathrm{succ}}$ from these measurement outcomes. Consider the error parameter $\xi \in (0,1)$ and let the estimate be $\widehat{p}$ such that $|\widehat{p} - p_{\mathrm{succ}}| \leq \xi$. This can be obtained from $O(1/ \xi^2\cdot \log(1/\delta))$ measurement outcomes and hence applications of the circuit. Set $\widehat{b} = 4 \widehat{p}$.
Now observe that
\begin{align*}
\Big| \widehat{b} - \norm{\id - \ridgeproj_\lambda)\ket{\psi}}_2^2 \Big|
&\leq \Big| 4 \widehat{p} - 4 p_{\mathrm{succ}}\Big|+\Big| \nu^2 p_{\mathrm{succ}} - \norm{\id - \ridgeproj_\lambda)\ket{\psi}}_2^2 \Big|\\
&\leq 4 \xi + 4 \frac{\varepsilon'}{2} \cdot \Big( \norm{W \ket{\psi}} + \frac{1}{2} \norm{\id - \ridgeproj_\lambda)\ket{\psi}} \Big)  \\
&\leq 4 \xi + 4 \frac{\varepsilon'}{2} \cdot (1 + 1/2) \\
&= 4 \xi + 6\varepsilon'
\end{align*}
where we proceeded in a similar fashion as to we had done in the proof of Claim~\ref{claim:prep_coeff_state}. Note that $\varepsilon'$ is the parameter as defined in $(i)$ corresponding to the error of the block-encoding $U_{A^{-1}}$ will be set differently here. We set $\xi = \varepsilon/8$ and $\varepsilon' = \varepsilon/12$. This then gives us that $\widehat{b}$ is the a $\varepsilon$-approximation of $\norm{\id - \ridgeproj_\lambda)\ket{\psi}}_2^2$. The main contribution to gate complexity is $O(1/\xi^2\cdot \log(1/\delta))$ applications of $U_W$. Using the gate complexity of $U_W$ from item $(i)$ for the $\varepsilon'$ defined here, we obtain a total gate complexity~of
$$
O\left( (G_{\prep} k^2 \log k)/\lambda \log(1 /\varepsilon)\right).
$$
This completes the proof. Finally, like we mentioned in the previous proofs,  the classical time to find these encodings is the same as the gate complexity replacing $G_{\prep}$ by~$T_{\prep}$.
\end{proof}

\subsection{Approach}
We have so far given protocols for preparing a residual state proportional to $(\id - \ridgeproj_{L,\lambda})\ket{\psi}$ given a list $L$ of states from $\calC$. We have not yet described how this list is learned and in particular pertaining to Theorem~\ref{thm:learning_decompositions}. In the following two sections (Section~\ref{subsec:structure_learning} and Section~\ref{subsec:parameter_learning}), we describe the algorithm of Theorem~\ref{thm:learning_decompositions} which is divided into two stages: $(i)$ \emph{structure learning} where we learn a list $L$ of states from $\calC$ that describe the relevant $\calC$-structure in $\ket{\psi}$, and $(ii)$ \emph{parameter learning} where we learn the coefficients corresponding to states in $L$ such that the guarantee of Theorem~\ref{thm:learning_decompositions} if fulfilled. We refer the reader to Section~\ref{sec:tech_overview} for a high-level intuition behind this approach. 
As we discussed in the introduction, the structure learning algorithm will run over multiple iterations $t \in \{0,1,\ldots,\kappa\}$. Accordingly, we will require the following notation as part of the analysis and to denote different objects as part of the algorithm.
\paragraph{Residual state.} Suppose we are in iteration $t$ and we have learned so far $t-1$ many states from $\calC$, call it $L = \{\ket{\phi_i}\}_{i \in [t-1]}$. We will then denote the \emph{true residual vector} as
\begin{equation}\label{eq:residual_vector}
\Psi_{t} = (\id - \ridgeproj_{t-1,\lambda}) \ket{\psi},    
\end{equation} 
and the corresponding (normalized) state or the \emph{true residual state} as
\begin{equation}\label{eq:residual_state}
    \ket{\psi_{t}} = \Psi_{t}/\alpha_{t},
\end{equation}
where  $\alpha_{t} = \norm{\Psi_{t}}_2$. However, we are only able to approximately prepare the residual state using Claim~\ref{claim:prep_ridgeproj_residual_state}, which we denote by $\ket{\widetilde{\psi}_t}$. Thus, the $\ket{\widetilde{\psi}_{t}}$ is what we prepare during the course of structure learning and upon which we carry out agnostic learning. Considering the true residual vector and true residual state will, however, be useful for our \emph{analysis}.

\paragraph{Residual loss.} Recall that the matrix $\Phi_t$ is a matrix whose rows are given by the $\ket{\phi_i}\in L$, so its dimension is $2^n\times t$. Define
\begin{equation}\label{eq:residual_loss}
R_{t,\lambda} := \min_{\vec{a} \in \mathbb{C}^t} J_{t}(\vec{a};\lambda) = \min_{\vec{a} \in \mathbb{C}^t} \norm{\ket{\psi} - \Phi_t \vec{a}}_2^2 + \lambda \norm{\vec{a}}_2^2,
\end{equation}
where we have used the definition of $J_t(\vec{a};\lambda)$ from Eq.~\eqref{eq:ridge_regression} and use the subscript $t$ to denote the iteration in which this is being evaluated and emphasizes that the dictionary $\Phi_t$ and dimension of $\vec{a}$ changes with iteration $t$. 

\subsection{Structure learning}\label{subsec:structure_learning}
In this section, we first give the algorithm for structure learning and then state our main theorem.

\begin{myalgorithm}
\begin{algorithm}[H]\label{algo:structure_learning}
    \caption{Structure learning}
    \setlength{\baselineskip}{1.8em} 
    \DontPrintSemicolon 
    \KwInput{$\varepsilon_s \in (0,1)$, copies of  $\ket{\psi}$, weak learner $\calA_{\wal}$ (Def.~\ref{def:model_class}) with promise $\eta$.}
    \KwOutput{List of states in $\calC$: $L = \{\ket{\phi_i}\}_{i \in [k]}$}
    \vspace{2mm}
    Initialize residual state $\ket{\psi_1} = \ket{\psi}$, coefficient $\alpha_1 = 1$, list $L = \emptyset$, and array $\Phi_0 = [\,]$. \\
    Set $\eta = \eta(\varepsilon_s)$ with $\eta(\cdot)$ being the promise of $\calA_{\wal}$. \\
    Set parameters $\lambda=1/2$, $t_{\max}=\left \lceil 2/(\varepsilon_s \eta) \right\rceil$, $\delta'=\delta/(2t_{\textsf{max}})$, $\kappa=0$. \\
    \For{$t=1$ \KwTo $t_{\max}$}
    {
        Run $\calA_{\wal}$ on $S_{\wal}$ copies of $\ket{\widetilde{\psi}_t}$ to learn $\ket{\phi_t} \in \calC$. \\
        Run $\SWAP$ test on $O(1/\eta^2 \log(1/\delta'))$ copies of $\ket{\widetilde{\psi}_t},\ket{\phi_t}$ to obtain an $\eta/2$-approximate estimate $\nu_t$ of $|\la \psi_t | \phi_t \ra|^2$. \\ 
        \lIf {$\nu_t < \eta$}{let $\ket{\phi_R} = \ket{\psi_t}$ and break from loop. \label{algo_step:est_fidelity}} 
        Update list $L \leftarrow L \cup \{ \ket{\phi_t} \}$, matrix $\Phi_t = [\Phi_{t-1}, \ket{\phi_t}]$ and counter $\kappa \leftarrow \kappa + 1$. \\
        Set Gram matrix $G_t = \Phi_t^\dagger \Phi_t \in \mathbb{C}^{\kappa \times \kappa}$ and projector $\ridgeproj_{t,\lambda} = \Phi_t (G+\lambda \id)^{-1} \Phi_t^{\dagger}$. \\
        Let $\widehat{\alpha}_{t+1}^2$ be an $\varepsilon_s/2$-estimate of $\norm{(\id-\ridgeproj_{t,\lambda} )\ket{\psi}}_2^2$ obtained via \ref{claim:prep_ridgeproj_residual_state}$(ii)$. \label{algo_step:est_alphat}\\
        \lIf { $|\widehat{\alpha}_{t+1}|^2 < \varepsilon_s$ } { let $\ket{\phi_R} = \ket{\psi_t}$ and break from loop. \label{algo_step:step_est_alphat}} 
        Prepare copies of residual state $\ket{\widetilde{\psi}_{t+1}}$ using Claim~\ref{claim:prep_ridgeproj_residual_state}$(i)$.
        }
    \textbf{Return} List of $\kappa$ many states in $\calC$: {$L=\{\ket{\phi_i}\}_{i \in [\kappa]}$}, held classically.
\end{algorithm}
\end{myalgorithm}
In the remaining of the section, we will prove the correctness and soundness of the algorithm above, stated as the theorem below.
\begin{theorem}\label{thm:structure_learning}
Let $\varepsilon_s,\delta\in (0,1)$ and $\calC$ be a model class as in  Definition~\ref{def:model_class}. Suppose $\ket{\psi}$ is an unknown $n$-qubit quantum state. Then, there is an algorithm that with probability $\geq 1-\delta$, determines a list of $\kappa \leq 8/(\varepsilon_s \eta(\varepsilon_s))$ states $L := \{\ket{\phi_i}\}_{i \in [\kappa]}$ from $\calC$ such that $\ket{\psi}$ can be written~as
\begin{align}
\label{eq:guaranteeofstructurelearning}
\ket{\psi} = \ridgeproj_{L,\lambda} \ket{\psi} + \alpha_{\kappa + 1} \ket{\psi_{\kappa+1}}
\end{align}
and the residual state $\ket{\psi_{\kappa+1}}$ satisfies\footnote{We remark that throughout this work $\eta$ will be a bijective function (often even a polynomial function of the argument), so $\eta^{-1}$ is well-defined.} 
$$
|\alpha_{\kappa + 1}|^2 \cdot \calF_\calS(\ket{\psi_{\kappa + 1}}) \leq \max\{(3/2) \varepsilon_s, \, \eta^{-1}(3 \eta(\varepsilon_s)/2)\}.
$$ 
Moreover, $\norm{\ridgeproj_{L,\lambda} \ket{\psi}} \geq \kappa \varepsilon_s \eta(\varepsilon_s)/8$. The algorithm has the following complexity for
\begin{align*}
&\text{Sample complexity: } \widetilde{O}\Big(S_{\wal} \cdot \poly( 1/\varepsilon_s,1/\eta(\varepsilon_s),\log 1/\delta) \Big) \\
&\text{Time complexity: } \widetilde{O}\Big(T_{\wal} \cdot \poly( 1/\varepsilon_s,1/\eta(\varepsilon_s),\log 1/\delta)) +  \poly(G_{\prep},T_{\prep}, 1/\varepsilon_s,1/\eta(\varepsilon_s)\Big).
\end{align*} 
\end{theorem}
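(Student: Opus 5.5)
We analyze Algorithm~\ref{algo:structure_learning} with $\lambda=1/2$ by an energy-increment argument on the residual loss $R_{t,\lambda}$ of Eq.~\eqref{eq:residual_loss}. We first condition on the good event that every SWAP-test estimate $\nu_t$ is $\eta/2$-accurate, every norm estimate $\widehat\alpha_{t+1}^2$ is $\varepsilon_s/2$-accurate, and every prepared residual $\ket{\widetilde\psi_t}$ is within $\varepsilon'$ of the true residual $\ket{\psi_t}$ (with $\varepsilon'$ a small polynomial in $\varepsilon_s\eta$). A union bound over the at most $t_{\max}$ iterations, with $\delta'=\delta/(2t_{\max})$, makes this event hold with probability $\geq 1-\delta$. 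The decomposition~\eqref{eq:guaranteeofstructurelearning} is just the identity $\ket{\psi}=\ridgeproj_{L,\lambda}\ket\psi+(\id-\ridgeproj_{L,\lambda})\ket\psi$ together with the definitions~\eqref{eq:residual_vector}--\eqref{eq:residual_state}. Throughout, we use that $\ket{\widetilde\psi_t}$ is $\varepsilon'$-close to $\ket{\psi_t}$, so fidelities with any fixed $\ket{\phi}$ differ by at most $2\varepsilon'$.

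\emph{Termination guarantee.} There are two ways the loop stops. If it breaks at Step~\ref{algo_step:step_est_alphat}, then $\alpha_{\kappa+1}^2\le\widehat\alpha^2+\varepsilon_s/2<(3/2)\varepsilon_s$, and since $\calF\le 1$ the product bound holds. If it breaks at Step~\ref{algo_step:est_fidelity}, then $|\la\psi_t|\phi_t\ra|^2<3\eta/2+2\varepsilon'$. By the weak-learner guarantee and monotonicity of $\eta$, this forces $\calF_\calC(\ket{\psi_t})\le\eta^{-1}(3\eta(\varepsilon_s)/2)$, up to slack absorbed into $\varepsilon'$. Using $\alpha_t\le1$ gives the second branch of the maximum. (Index bookkeeping between $t$ and $\kappa+1$ follows the algorithm.)

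\emph{Increment lemma (main step).} Suppose iteration $t$ neither breaks nor reaches $t_{\max}$. Then $\nu_t\ge\eta$, so $|\la\psi_t|\phi_t\ra|^2\ge\eta/2-2\varepsilon'\ge\eta/4$, and the previous norm check gives $\alpha_t^2\ge\varepsilon_s/2$ (for $t=1$, $\alpha_1=1$). We compare $R_{t,\lambda}$ with the loss of the feasible, non-optimal candidate $\vec a=(\vec\beta^{(t-1)},c)$, where $\vec\beta^{(t-1)}$ is the previous ridge solution and $c$ is a scalar on $\ket{\phi_t}$. Expanding gives
\[
J_t=R_{t-1,\lambda}-2\,\mathrm{Re}\big(\bar c\,\la\phi_t|\Psi_t\ra\big)+|c|^2(1+\lambda).
\]
Optimizing over $c$ yields $R_{t-1,\lambda}-R_{t,\lambda}\ge|\la\phi_t|\Psi_t\ra|^2/(1+\lambda)=\alpha_t^2|\la\phi_t|\psi_t\ra|^2/(1+\lambda)\ge\varepsilon_s\eta/(8(1+\lambda))$. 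Since $0\le R\le1$, summing the increments gives $\kappa\le 8(1+\lambda)/(\varepsilon_s\eta)=12/(\varepsilon_s\eta)$, which is consistent with $t_{\max}$ up to constants. The stated constant $8$ needs slightly tighter bookkeeping, e.g.\ an $\alpha_t^2\ge\varepsilon_s$ margin and a $3\eta/4$ fidelity margin. The same telescoping yields the lower bound on $\norm{\ridgeproj_{L,\lambda}\ket\psi}$. We use $\norm{\Psi_{\kappa+1}}^2\le R_{\kappa,\lambda}\le 1-\kappa\varepsilon_s\eta/8$, which holds because $R=\norm{\Psi}^2+\lambda\norm{\beta}^2$. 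Then $\norm{\ridgeproj\psi}\ge 1-\norm{\Psi_{\kappa+1}}\ge\kappa\varepsilon_s\eta/16$, or the stated bound after adjusting constants.

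\emph{Main obstacle and complexity.} The delicate points are the expansion of $J_t$ and the approximation slack, both of which need care. In the expansion, the cross term uses the normal equations $\Phi_{t-1}^\dagger(\psi-\Phi_{t-1}\beta)=\lambda\beta$, giving $\psi-\Phi_{t-1}\beta=\Psi_t$. The approximation slack requires choosing $\varepsilon'$ so that the SWAP-test thresholds still imply $\eta/4$ true fidelity and the break condition still yields the $\eta^{-1}$ bound. Complexity then follows by combining three facts. First, there are $t_{\max}=O(1/(\varepsilon_s\eta))$ iterations. Second, each iteration uses $S_{\wal}+O(\eta^{-2}\log(1/\delta'))$ residual copies, each costing $O(1/\varepsilon_s)$ copies of $\ket\psi$ in expectation by Claim~\ref{claim:prep_ridgeproj_residual_state}$(i)$ with $\gamma^2\ge\varepsilon_s/2$. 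Third, the gate cost is $\widetilde O(G_{\prep}k^2/\lambda)$ per preparation, with classical cost obtained by substituting $T_{\prep}$.
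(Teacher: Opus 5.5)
Your proposal follows the paper's proof. You condition on accurate estimates, bound the drop in ridge loss by evaluating $J$ at the feasible point $(\vec\beta^{(t-1)},c)$ and optimizing $c$ (Claim~\ref{claim:greedy_ridge_loss} and Lemma~\ref{lem:promise_iteration_ridgeproj}), telescope using $0\le R_{t,\lambda}\le 1$, and read the residual bound off the two break conditions. Incidentally, $\ket{\psi}-\Phi_{t-1}\vec\beta^{(t-1)}=\Psi_t$ is just the definition of $\ridgeproj_{t-1,\lambda}\ket{\psi}$; you do not need the normal equations for the cross term.

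Where you fall short is the constants. Write $\eta=\eta(\varepsilon_s)$.

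\textbf{Iteration bound.} To get $\kappa\le 8/(\varepsilon_s\eta)$, budget the slack as in Claim~\ref{claim:promise_WAL}. Run the SWAP test to accuracy $\eta/4$ and prepare the residual with $\norm{\ket{\widetilde\psi_t}-\ket{\psi_t}}\le\eta/8$, so that $\nu_t$ is $\eta/2$-accurate for the true fidelity $|\la\phi_t|\psi_t\ra|^2$. Continuing then forces $|\la\phi_t|\psi_t\ra|^2\ge\eta/2$ and $\alpha_t^2\ge\varepsilon_s/2$. This gives an increment of $\varepsilon_s\eta/(4(1+\lambda))=\varepsilon_s\eta/6$ and hence $\kappa\le 6/(\varepsilon_s\eta)$.

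\textbf{Norm bound.} This is the one step where you genuinely deviate, and your route is weaker. Using $R\ge\norm{\Psi}^2$ plus the triangle inequality gives only $\norm{\ridgeproj_{L,\lambda}\ket{\psi}}\ge 1-\sqrt{R_{\kappa,\lambda}}$. Since $1-\sqrt{1-x}\approx x/2$ for small $x$, this loses a factor of $2$. With the algorithm's thresholds it yields at best $\kappa\varepsilon_s\eta/12$, not the stated $\kappa\varepsilon_s\eta/8$, unless you tighten the thresholds as you hint.

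The paper avoids this loss in Claim~\ref{claim:norm_projected_state}. It substitutes the ridge solution into the expanded quadratic, which gives the exact identity $R_{t,\lambda}=1-\la\psi|\ridgeproj_{t,\lambda}|\psi\ra$. Hence $\la\psi|\ridgeproj_{\kappa,\lambda}|\psi\ra=1-R_{\kappa,\lambda}\ge\kappa\varepsilon_s\eta/6$. Cauchy--Schwarz then gives $\norm{\ridgeproj_{\kappa,\lambda}\ket{\psi}}\ge\la\psi|\ridgeproj_{\kappa,\lambda}|\psi\ra$ with no square-root loss.

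\textbf{Loop cap.} Saying the iteration bound is ``consistent with $t_{\max}$ up to constants'' is not enough. Your termination argument only covers loops that exit through a break, so $t_{\max}$ must strictly exceed the bound on non-breaking iterations. Otherwise the loop can run out with no guarantee on the residual. The algorithm's $t_{\max}=\lceil 2/(\varepsilon_s\eta)\rceil$ fails this even against the paper's own bound of $6/(\varepsilon_s\eta)$; the same slip is in the paper. Taking, e.g., $t_{\max}=\lceil 8/(\varepsilon_s\eta)\rceil$ fixes it.
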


We remark that a consequence of the above theorem's proof is the following existential decomposition theorem of quantum states against any class of states $\C$. Algorithmizing this existential result requires us to impose some additional constraints on the class $\C$ as done in Definition~\ref{def:model_class}. This existential result should be interpreted as a weak regularity lemma for quantum states.
\begin{theorem}\label{thm:quantum_weak_regularity}
Let $\varepsilon \in (0,1), \lambda \geq 0$ and $\calC$ be any class of state that does not necessarily satisfy  Definition~\ref{def:model_class}. Suppose $\ket{\psi}$ is an arbitrary $n$-qubit pure quantum state. Then, there is a list of $\kappa = O(1/\varepsilon^2)$ states $L := \{\ket{\phi_i}\}_{i \in [\kappa]}$ from $\calC$ such that $\ket{\psi}$ can be written~as
$$
\ket{\psi} = \ridgeproj_{L,\lambda} \ket{\psi} + \alpha_{\kappa + 1} \ket{\psi_{\kappa+1}}, \quad \text{ where } |\alpha_{\kappa + 1}|^2 \cdot \calF_\calS(\ket{\psi_{\kappa + 1}}) \leq \varepsilon.
$$
\end{theorem}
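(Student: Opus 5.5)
The plan is to run the structure-learning iteration of Algorithm~\ref{algo:structure_learning} as an idealized, purely existential process. There is no sampling and no approximate state preparation. The weak learner is replaced by an exact (or near-exact) maximizer over $\calC$. The proof then reduces to the energy-increment argument sketched in Section~\ref{sec:tech_overview}, with the residual loss $R_{t,\lambda}$ from Eq.~\eqref{eq:residual_loss} as the potential. The notation is as in the paper:
\begin{itemize}
\item $\Phi_t$ is the dictionary after $t$ steps;
\item $\vec\beta^{(t)} = (G_t+\lambda\id)^{-1}\Phi_t^\dagger\ket{\psi}$ is the ridge solution of Eq.~\eqref{eq:soln_ridge_regression} (for $\lambda=0$, interpret it as the least-squares minimizer, e.g.\ via the pseudoinverse);
\item $\Psi_{t+1} = \ket{\psi}-\Phi_t\vec\beta^{(t)} = (\id-\ridgeproj_{t,\lambda})\ket{\psi}$ is the residual vector, with $\alpha_{t+1}=\|\Psi_{t+1}\|$ and $\ket{\psi_{t+1}}=\Psi_{t+1}/\alpha_{t+1}$.
\end{itemize}
Then $R_{t,\lambda} = \|\Psi_{t+1}\|^2+\lambda\|\vec\beta^{(t)}\|^2$, $R_{0,\lambda}=\|\ket{\psi}\|^2=1$, and $R_{t,\lambda}\geq 0$ always.

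The procedure is as follows. At step $t$, if $|\alpha_{t+1}|^2\calF_\calC(\ket{\psi_{t+1}})\le\varepsilon$, we stop and output $L=\{\ket{\phi_1},\dots,\ket{\phi_t}\}$. Since $\ket{\psi}=\ridgeproj_{L,\lambda}\ket{\psi}+\alpha_{t+1}\ket{\psi_{t+1}}$ holds by definition, this is exactly the claimed decomposition. Otherwise we pick $\ket{\phi_{t+1}}\in\calC$ with $|\langle\phi_{t+1}|\psi_{t+1}\rangle|^2\ge \tfrac12\calF_\calC(\ket{\psi_{t+1}})$. We only use a factor $1/2$ because $\calC$ is arbitrary and the supremum need not be attained. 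Multiplying by $|\alpha_{t+1}|^2$ gives $|\langle\phi_{t+1}|\Psi_{t+1}\rangle|^2>\varepsilon/2$.

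The key step is the one-step decrease. $R_{t+1,\lambda}$ is a minimum over $\mathbb{C}^{t+1}$, so it is at most $J_{t+1}$ evaluated at the feasible point $(\vec\beta^{(t)},c)$ for any scalar $c$. Expanding gives
\[
J_{t+1}\big((\vec\beta^{(t)},c);\lambda\big)=\|\Psi_{t+1}-c\ket{\phi_{t+1}}\|^2+\lambda\|\vec\beta^{(t)}\|^2+\lambda|c|^2
= R_{t,\lambda}-2\,\mathrm{Re}\big(\bar c\langle\phi_{t+1}|\Psi_{t+1}\rangle\big)+(1+\lambda)|c|^2 .
\]
Choosing $c=\langle\phi_{t+1}|\Psi_{t+1}\rangle/(1+\lambda)$ yields
\[
R_{t+1,\lambda}\le R_{t,\lambda}-\frac{|\langle\phi_{t+1}|\Psi_{t+1}\rangle|^2}{1+\lambda}< R_{t,\lambda}-\frac{\varepsilon}{2(1+\lambda)} .
\]
This is the "ridge update is at least as good as the greedy local update" comparison from the overview.

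Telescoping from $R_{0,\lambda}=1$ and using $R_{t,\lambda}\ge0$, the stopping condition must trigger after at most $\kappa\le 2(1+\lambda)/\varepsilon$ additions. This is $O(1/\varepsilon)\subseteq O(1/\varepsilon^2)$ for any fixed $\lambda$, and more generally whenever $\lambda\le 1/\varepsilon$. If one insists on a $\lambda$-free bound, I would state the result for $\lambda\le 1/\varepsilon$, which covers the paper's choice $\lambda=1/2$. The $\varepsilon^2$ slack in the statement leaves room for these constants.

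I expect no deep obstacle; the main care points are bookkeeping. First, the identification $\Psi_{t+1}=(\id-\ridgeproj_{t,\lambda})\ket{\psi}$ with $R_{t,\lambda}=\|\Psi_{t+1}\|^2+\lambda\|\vec\beta^{(t)}\|^2$ must be justified. For $\lambda>0$ this follows from Eq.~\eqref{eq:action_ridge_proj}. For $\lambda=0$ one uses the orthogonal projector $\Pi$ of Eq.~\eqref{eq:L2_projection_state}, which is well defined even if $G$ is singular, since only the span matters. Second, the non-attainment of $\calF_\calC$ has to be handled, which the factor $1/2$ does. Finally, none of the model-class properties of Definition~\ref{def:model_class} are used anywhere, which is why the statement holds for arbitrary $\calC$.
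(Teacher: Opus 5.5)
Your proposal is correct and follows essentially the paper's route: the paper presents this theorem as a consequence of the structure-learning analysis, and your argument is exactly its idealized version (an exact maximizer in place of the weak learner, no estimation error), using the same potential $R_{t,\lambda}$ and the same greedy one-coordinate comparison as in Lemma~\ref{lem:promise_iteration_ridgeproj} and Claim~\ref{claim:ub_kappa}. Stopping on the product $|\alpha_{t+1}|^2\calF_{\calC}(\ket{\psi_{t+1}})$ rather than on its two factors separately improves the count to $\kappa\le 2(1+\lambda)/\varepsilon$. Your restriction on $\lambda$ is also genuinely needed: for $\ket{\psi}\in\calC$, every list satisfies $\|\ridgeproj_{L,\lambda}\|\le\kappa/(\kappa+\lambda)$, which forces $\kappa\gtrsim\lambda/\sqrt{\varepsilon}$.
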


\subsubsection{Correctness}
We firstly upper bound the maximum number of iterations $\kappa$, the structure learning algorithm (Algorithm~\ref{algo:structure_learning}) runs for. To this end, we need to comment on the promise that the state $\ket{\phi_t} \in \calC$ learned in each iteration satisfy with respect to the true residual state $\ket{\psi_t}$. Recall that as part of Algorithm~\ref{algo:structure_learning}, we only prepare the approximate residual state $\ket{\widetilde{\psi}_t}$ and carry out agnostic learning on that state. 
\begin{claim}\label{claim:promise_WAL}
Consider Algorithm~\ref{algo:structure_learning} and let $\delta' \in (0,1)$. Suppose we have accomplished agnostic learning in iteration $t$ on the prepared residual state $\ket{\widetilde{\psi}_t}$. If $\norm{\ket{\psi_t} - \ket{\widetilde{\psi}_t}} \leq \gamma_t$ and $\gamma_t = \eta(\varepsilon_s)/8$, then with probability $\geq 1- \delta'$
\begin{enumerate}[$(i)$]
    \item in step~(6), the estimate $\Big| \nu_t - |\la \phi_t | \psi_t \ra|^2 \Big| \leq \eta(\varepsilon_s)/2$,
    \item $|\la \phi_t | \psi_t \ra|^2 \geq \eta(\varepsilon_s)/2$ when $|\la \phi_t | \widetilde{\psi}_t \ra|^2 \geq \eta(\varepsilon_s)$.
\end{enumerate}
\end{claim}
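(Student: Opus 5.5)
The plan is to compare fidelities against the true and the prepared residual states using a single perturbation bound, and then combine it with a standard concentration bound for the $\SWAP$ test. Write $\eta=\eta(\varepsilon_s)$. For any unit vector $\ket{\phi}$ and unit vectors $\ket{a},\ket{b}$ with $\norm{\ket{a}-\ket{b}}\le\gamma$, we have
$$
\big||\la\phi|a\ra|^2-|\la\phi|b\ra|^2\big| = \big(|\la\phi|a\ra|+|\la\phi|b\ra|\big)\cdot\big||\la\phi|a\ra|-|\la\phi|b\ra|\big| \le 2|\la\phi|a-b\ra|\le 2\gamma .
$$
This uses the reverse triangle inequality and Cauchy--Schwarz. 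With $\gamma_t=\eta/8$, this gives $\big||\la\phi_t|\psi_t\ra|^2-|\la\phi_t|\widetilde\psi_t\ra|^2\big|\le \eta/4$.

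Item $(ii)$ is deterministic given this bound. If $|\la\phi_t|\widetilde\psi_t\ra|^2\ge\eta$, then $|\la\phi_t|\psi_t\ra|^2\ge \eta-\eta/4=3\eta/4\ge\eta/2$.

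For item $(i)$, note that the $\SWAP$ test in step~(6) is run on copies of $\ket{\widetilde\psi_t}$ and $\ket{\phi_t}$, so its acceptance probability is $(1+|\la\phi_t|\widetilde\psi_t\ra|^2)/2$. Over $N=O(\eta^{-2}\log(1/\delta'))$ repetitions, Hoeffding's inequality gives an estimate $\nu_t=2\widehat p-1$ with $|\nu_t-|\la\phi_t|\widetilde\psi_t\ra|^2|\le\eta/4$ with probability at least $1-\delta'$, after choosing the constant in $N$ so that $\widehat p$ is $\eta/8$-accurate. The triangle inequality together with the perturbation bound then gives $|\nu_t-|\la\phi_t|\psi_t\ra|^2|\le\eta/4+\eta/4=\eta/2$.

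The main point requiring care is conditioning. The copies of $\ket{\widetilde\psi_t}$ are produced by the post-selected circuit of Claim~\ref{claim:prep_ridgeproj_residual_state}. On success, each copy is exactly $\ket{\widetilde\psi_t}$, so the $\SWAP$ test statistics are i.i.d. as claimed. The failure probability $\delta'$ comes only from the Hoeffding step, which is the only probabilistic event involved.
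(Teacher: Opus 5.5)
Your proof is correct and follows essentially the same route as the paper: the difference-of-squares bound $\bigl|\,|\la\phi_t|\psi_t\ra|^2-|\la\phi_t|\widetilde\psi_t\ra|^2\,\bigr|\le 2\gamma_t$, an $\eta(\varepsilon_s)/4$-accurate $\SWAP$-test estimate, and the triangle inequality. The only cosmetic difference is in item $(ii)$: you reuse the squared-fidelity bound to get $3\eta(\varepsilon_s)/4$, whereas the paper argues at the amplitude level to get $(7/8)\sqrt{\eta(\varepsilon_s)}$ and hence $(49/64)\,\eta(\varepsilon_s)$.
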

\begin{proof}
Let $\ket{\widetilde{\psi}_t} - \ket{\psi_t} = \ket{e_t}$ where $\ket{e_t}$ is not necessarily a normalized quantum state and corresponds to the error in preparing the residual state. We now note that
\begin{equation}\label{eq:implication_good_res_state_prep}
\Big| \la \phi_t | \widetilde{\psi}_t \ra - \la \phi_t | \psi_t \ra \Big| = |\la \phi_t | e_t \ra| \leq \norm{e_t}_2 \leq \gamma_t
\end{equation}
where we have used the promise of $\norm{\ket{\psi_t} - \ket{\widetilde{\psi}_t}} \leq \gamma_t$. We then have that the estimate $\nu_t$ of $|\la \phi_t | \widetilde{\psi}_t \ra|^2$ also serves as a good estimate of $|\la \phi_t | \psi_t \ra|^2$:
\begin{align*}
\Big| \nu_t - |\la \phi_t |\psi_t \ra|^2 \Big| 
&\leq \Big| \nu_t - |\la \phi_t | \widetilde{\psi}_t \ra|^2 \Big| + \Big| |\la \phi_t | \widetilde{\psi}_t \ra|^2 - |\la \phi_t | \widetilde{\psi}_t \ra|^2 \Big| \\
&= \Big| \nu_t - |\la \phi_t | \widetilde{\psi}_t \ra|^2 \Big| + \Big| |\la \phi_t | \widetilde{\psi}_t \ra| - |\la \phi_t | \psi_t \ra| \Big| \cdot \Big| |\la \phi_t | \widetilde{\psi}_t \ra| + |\la \phi_t | \psi_t \ra| \Big| \\
&\leq \Big| \nu_t - |\la \phi_t | \widetilde{\psi}_t \ra|^2 \Big| + 2 \gamma_t,
\end{align*}
where we used $|\la \phi_t | \psi_t \ra|, |\la \phi_t | \widetilde{\psi}_t\ra| \leq 1$ and Eq.~\eqref{eq:implication_good_res_state_prep} as $\Big| |\la \phi_t | \widetilde{\psi}_t \ra| - |\la \phi_t | \psi_t \ra| \Big| \cdot \Big| \leq \Big| \la \phi_t | \widetilde{\psi}_t \ra - \la \phi_t | \psi_t \ra \Big| \leq \gamma_t$. We now note that $\Big|\nu_t - |\la \phi_t | \widetilde{\psi}_t \ra|^2 \Big| \leq \eta(\varepsilon_s)/4$ with probability at least $1-\delta'$ if we performed the $\SWAP$ test on $O(1/\eta(\varepsilon_s)^2 \log(1/\delta'))$ copies of $\ket{\phi_t}$ and $\ket{\widetilde{\psi}_t}$. Item $(i)$ then follows from setting $\gamma_t = \eta(\varepsilon_s)/8$:
$$
\Big| \nu_t - |\la \phi_t |\psi_t \ra|^2 \Big| \leq \Big| \nu_t - |\la \phi_t | \widetilde{\psi}_t \ra|^2 \Big| + 2 \gamma_t \leq \eta(\varepsilon_s)/4 + \eta(\varepsilon_s)/4 \leq \eta(\varepsilon_s)/2.
$$
To prove item $(ii)$, we note that it follows from Eq.~\eqref{eq:implication_good_res_state_prep} that
\begin{align*}
|\la \phi_t | \psi_t \ra| \geq |\la \phi_t | \widetilde{\psi}_t \ra| - \gamma_t \geq (7/8) \sqrt{\eta(\varepsilon_s)}
\implies |\la \phi_t | \widetilde{\psi}_t \ra|^2 \geq (49/64) \eta(\varepsilon_s) \geq \eta(\varepsilon_s)/2,    
\end{align*}
where we have used the promise $|\la \phi_t | \widetilde{\psi}_t \ra|^2 \geq \eta(\varepsilon_s)$ and that $\gamma_t = \eta(\varepsilon_s)/8 \leq \sqrt{\eta(\varepsilon_s)}/8$ as $\eta(\varepsilon_s) \in [0,1]$. This completes the proof.
\end{proof}
The above claim thus shows that if we prepare the residual state $\ket{\widetilde{\psi}_t}$ up to low error and then learn $\ket{\phi_t}$ after performing agnostic learning on $\ket{\widetilde{\psi}_t}$, $\ket{\phi_t}$ will also have high fidelity with the true residual state $\ket{\psi_t}$. We will now comment on the trend of the minimum residual loss $R_{t,\lambda}$ (Eq.~\eqref{eq:residual_loss}) across consecutive iterations before we stop. We first give results considering that the ridge projector $\Pi_{t,\lambda}$ is used in iteration $t$ and then specialize to the case when $\lambda=0$ i.e., the orthogonal projector is used.
\begin{claim}\label{claim:greedy_ridge_loss}
Let $\lambda \in [0,1]$ and suppose we are in the $t$'th iteration of Algorithm~\ref{algo:structure_learning}. Let $\Psi_t = (\id - \ridgeproj_{t-1,\lambda}) \ket{\psi}$ be the current true residual state and $\ket{\psi_t} = \Psi_t/\norm{\Psi_t}$. The following function is convex in $a_t \in \mathbb{C}$\footnote{We remark that the ridge regression $J_t$ was defined over $\mathbb{R}^k$ and we define $\widetilde{J}$ as the \emph{greedy univariate} version defined on $\mathbb{R}$.}
$$
\widetilde{J}(a_t;\lambda) := \norm{\Psi_t - a_t \ket{\phi_t}}_2^2 + \lambda |a_t|^2,
$$
with the unique minimizer of $\argmin_{a_t} \widetilde{J}(a_t;\lambda) = a_t^\star = \norm{\Psi_t} \la \phi_t | \psi_t \ra/(1 + \lambda)$.
\end{claim}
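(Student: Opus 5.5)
The plan is to expand $\widetilde{J}$ as a real quadratic in the complex variable $a_t$, using only that $\ket{\phi_t}$ is a unit vector. Write $c := \la \phi_t | \Psi_t \ra = \norm{\Psi_t}\la \phi_t|\psi_t\ra$, where the second equality uses $\ket{\psi_t} = \Psi_t/\norm{\Psi_t}$ (if $\Psi_t = 0$ the claim is trivial with $a_t^\star = 0$). Expanding the norm gives
$$
\widetilde{J}(a_t;\lambda) = \norm{\Psi_t}_2^2 - \overline{a_t}\, c - a_t \overline{c} + |a_t|^2 \la \phi_t|\phi_t\ra + \lambda |a_t|^2 = \norm{\Psi_t}_2^2 - 2\,\mathrm{Re}(\overline{a_t} c) + (1+\lambda)|a_t|^2 .
$$

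Next, complete the square. Since $1+\lambda \geq 1 > 0$ for $\lambda \in [0,1]$,
$$
\widetilde{J}(a_t;\lambda) = (1+\lambda)\Big| a_t - \frac{c}{1+\lambda}\Big|^2 + \norm{\Psi_t}_2^2 - \frac{|c|^2}{1+\lambda}.
$$
This single identity gives both conclusions at once. For convexity, identify $a_t$ with $(\mathrm{Re}\,a_t,\mathrm{Im}\,a_t)\in\mathbb{R}^2$. Then $\widetilde{J}$ is a positive multiple of a squared Euclidean distance plus a constant, so it is strictly convex; equivalently, its real Hessian is $2(1+\lambda)\id_2 \succ 0$. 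For the minimizer, the squared term vanishes only at $a_t^\star = c/(1+\lambda) = \norm{\Psi_t}\la \phi_t|\psi_t\ra/(1+\lambda)$, so this point is the unique minimizer. It is worth recording the minimum value $\norm{\Psi_t}^2 - \norm{\Psi_t}^2|\la\phi_t|\psi_t\ra|^2/(1+\lambda)$, since the later loss-decrement argument will use it.

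There is no real obstacle here. The only points needing care are bookkeeping ones: the cross term must combine to $-2\mathrm{Re}(\overline{a_t}c)$ with the correct conjugation convention, and convexity must be stated over $\mathbb{R}^2$, because $\widetilde{J}$ is not holomorphic. Alternatively, one could set the Wirtinger derivative $\partial \widetilde{J}/\partial \overline{a_t} = -c + (1+\lambda)a_t$ to zero, but completing the square is cleaner and gives uniqueness directly.
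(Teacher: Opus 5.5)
Your proof is correct and follows essentially the same route as the paper. Both expand $\widetilde{J}$ into the real quadratic $\norm{\Psi_t}^2 - 2\,\mathrm{Re}(\overline{a_t}\, c) + (1+\lambda)|a_t|^2$ in $(\mathrm{Re}\,a_t, \mathrm{Im}\,a_t)$; the paper reads off strict convexity from the Hessian $2(1+\lambda)\id$ and the minimizer from the vanishing gradient, while your completing-the-square identity gives convexity, uniqueness, and the minimum value $\norm{\Psi_t}^2 - \norm{\Psi_t}^2|\la\phi_t|\psi_t\ra|^2/(1+\lambda)$ in one step (the paper recomputes that value separately in Lemma~\ref{lem:promise_iteration_ridgeproj}).
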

\begin{proof}
Let $\alpha_t = \norm{\Psi_t}_2$. Noting that for $u,v \in \mathbb{C}^{2^n}$, $\|u - v\|_2^2=\|u\|_2^2+\|v\|_2^2 - 2\textsf{Re}(\langle u,v\rangle)$, we can expand the relevant function as
\begin{align*}
\widetilde{J}(a_t;\lambda) = \norm{\Psi_t - a_t \ket{\phi_t}}_2^2 + \lambda|a_t|^2
= \alpha_t^2 + |a_t|^2 - 2 \alpha_t \textsf{Re}(\overline{a_t} \langle \phi_t| \psi_t \rangle) + \lambda |a_t|^2 \,,
\end{align*}
where we used $\overline{a_t}$ to denote the complex conjugate of $a_t$. Consider $a_t = x + iy$ where $x,y \in \mathbb{R}$. The function above can then be rewritten as
$$
\widetilde{J}(x,y;\lambda) = \alpha_t^2 + (1+\lambda)(x^2 + y^2) - 2\alpha_t \Big(x \textsf{Re}(\langle \phi_t| \psi_t \rangle) + y \textsf{Im}(\langle \phi_t| \psi_t \rangle) \Big).
$$
It can then be shown that the gradient of $\widetilde{J}(x,y;\lambda)$ is
\begin{align*}
    \frac{\partial \widetilde{J}}{\partial x} = 2(1+\lambda)x - 2\alpha_t \textsf{Re}(\langle \phi_t| \psi_t \rangle), \enspace \frac{\partial \widetilde{J}}{\partial y} = 2(1+\lambda)y - 2\alpha_t \textsf{Im}(\langle \phi_t| \psi_t \rangle), 
\end{align*} 
and Hessian of $\widetilde{J}(x,y;\lambda)$
$$
\nabla^2 \widetilde{J} = \begin{pmatrix}
2(1+\lambda) & 0\\
0 & 2(1+\lambda)
\end{pmatrix}=2(1+\lambda) \id,
$$
which is positive definite for $\lambda \geq 0$. Hence, $\widetilde{J}(x,y;\lambda)$ is strictly convex and has a unique minimum. To compute the solution at the minimum, which we denote by $a_t^\star = x^\star + i y^\star$, one can compute this by setting the derivatives to zero.  We determine the solution at the minimum $a_t^\star = x^\star + iy^\star = \alpha_t \la \phi_t|\psi_t\ra/(1 + \lambda)$. This gives us the desired result.
\end{proof}

\begin{lemma}\label{lem:promise_iteration_ridgeproj}
Let $\lambda \in [0,1]$ and $\delta' \in (0,1)$ be the failure probability of any iteration in Algorithm~\ref{algo:structure_learning}. Suppose we stop after $\kappa$ many iterations. Steps $(6)-(12)$ in the algorithm ensure that for each $t \leq \kappa$, with probability $\geq 1 - \delta'$ the residual loss drops from iteration $t$ to $t+1$ as
$$
R_{t,\lambda} - R_{t+1,\lambda} \geq \frac{\varepsilon_s \eta(\varepsilon_s)}{4(1+\lambda)}.
$$
\end{lemma}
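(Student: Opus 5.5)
The plan is to compare $R_{t+1,\lambda}$ against a particular feasible point of the $(t+1)$-variable ridge problem, namely the greedy extension of the optimal $t$-variable solution. Let $\vec\beta^{(t)} = (G_t+\lambda\id)^{-1}\Phi_t^\dagger\ket{\psi}$ be the minimizer defining $R_{t,\lambda}$. (Indexing is chosen so that $\Psi_{t+1}=(\id-\ridgeproj_{t,\lambda})\ket{\psi}$ is the residual on which $\ket{\phi_{t+1}}$ is learned; I will use the same indexing as Claim~\ref{claim:greedy_ridge_loss}, shifted by one.) For any $a\in\mathbb{C}$, the vector $(\vec\beta^{(t)},a)$ is feasible for $J_{t+1}$. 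Hence
\[
R_{t+1,\lambda}\le \norm{\Psi_{t+1}-a\ket{\phi_{t+1}}}_2^2+\lambda\norm{\vec\beta^{(t)}}_2^2+\lambda|a|^2 ,
\]
using $\ket{\psi}-\Phi_t\vec\beta^{(t)}=\Psi_{t+1}$. Since $R_{t,\lambda}=\norm{\Psi_{t+1}}_2^2+\lambda\norm{\vec\beta^{(t)}}_2^2$, this gives
\[
R_{t,\lambda}-R_{t+1,\lambda}\ge \norm{\Psi_{t+1}}^2-\min_a \widetilde{J}(a;\lambda).
\]
I will then plug in the minimizer from Claim~\ref{claim:greedy_ridge_loss}. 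Expanding $\widetilde J$ at $a^\star=\alpha\la\phi|\psi_{t+1}\ra/(1+\lambda)$ yields $\min_a\widetilde J=\alpha^2-\alpha^2|\la\phi|\psi_{t+1}\ra|^2/(1+\lambda)$. Therefore
\[
R_{t,\lambda}-R_{t+1,\lambda}\ \ge\ \frac{\alpha_{t+1}^2\,|\la\phi_{t+1}|\psi_{t+1}\ra|^2}{1+\lambda}.
\]
This is the "ridge update is at least as good as the greedy update" step.

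It remains to lower bound the two factors using the fact that the loop has not broken. First, the check in step~\ref{algo_step:step_est_alphat} was passed, so $\widehat\alpha_{t+1}^2\ge\varepsilon_s$. Claim~\ref{claim:prep_ridgeproj_residual_state}$(ii)$ gives an $\varepsilon_s/2$-accurate estimate, so $\alpha_{t+1}^2\ge\varepsilon_s/2$ with probability $1-\delta'$. Second, the fidelity test in step~\ref{algo_step:est_fidelity} was passed, so $\nu_{t+1}\ge\eta(\varepsilon_s)$. By Claim~\ref{claim:promise_WAL}$(i)$, which requires the residual state to have been prepared to accuracy $\gamma=\eta(\varepsilon_s)/8$ via Claim~\ref{claim:prep_ridgeproj_residual_state}$(i)$, we get $|\la\phi_{t+1}|\psi_{t+1}\ra|^2\ge\nu_{t+1}-\eta(\varepsilon_s)/2\ge\eta(\varepsilon_s)/2$. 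Multiplying the two bounds gives $\varepsilon_s\eta(\varepsilon_s)/(4(1+\lambda))$, and a union bound over the two estimation events absorbs into $\delta'$ (after rescaling constants).

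The main obstacle is bookkeeping rather than analysis. The quantities actually measured are tied to the approximate state $\ket{\widetilde\psi}$, while the bound is about the true residual $\ket{\psi_{t+1}}$. The key inputs are therefore Claim~\ref{claim:promise_WAL}, which transfers the fidelity estimate from $\ket{\widetilde\psi}$ to the true residual, and the normalization-estimate accuracy. One also needs the preparation success probability to be bounded below, using $\alpha\ge\sqrt{\varepsilon_s/2}$ for the $\gamma$ in Claim~\ref{claim:prep_ridgeproj_residual_state}. Finally, the iteration indices between the residual check and the fidelity check must line up, so that the same $\Psi_{t+1}$ appears in both.
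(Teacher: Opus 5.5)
Your proposal is correct and follows the same route as the paper: you bound $R_{t+1,\lambda}$ by the greedy extension $(\vec\beta^{(t)},a)$ of the previous ridge solution and evaluate it at the minimizer from Claim~\ref{claim:greedy_ridge_loss}, which gives a drop of at least $\alpha^2|\la\phi|\psi\ra|^2/(1+\lambda)$, and you then lower bound the two factors by $\varepsilon_s/2$ and $\eta(\varepsilon_s)/2$ using the two non-exit checks together with Claim~\ref{claim:promise_WAL}. Your indexing, with $R_{t,\lambda}$ taken over $t$ learned states as in Eq.~\eqref{eq:residual_loss}, is the internally consistent one, whereas the paper's proof silently shifts so that $R_{t,\lambda}$ is the loss before $\ket{\phi_t}$ is added; both conventions yield the same $\kappa$ drops.
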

\begin{proof}
Consider iteration $t \geq 1$. Suppose at the start of the iteration, we have the following decomposition:
\begin{equation}\label{eq:decomp_start_t}
\ket{\psi} = \ridgeproj_{t-1,\lambda} \ket{\psi} + \Psi_t,
\end{equation}
where the residual vector $\Psi_t = (\id - \ridgeproj_{t-1,\lambda})\ket{\psi}$ can be written as $\Psi_t = \alpha_t \ket{\psi_t}$ with $\alpha_t = \norm{\Psi_t}_2$. Moreover, the current running estimate is $\widehat{\Psi}_{t-1} = \ridgeproj_{t-1,\lambda}\ket{\psi} = \Phi_{t-1} \vec{\beta}_{t-1}$ where $\vec{\beta}_{t-1} \in \mathbb{C}^{t-1}$ is the solution to ridge regression (Eq.~\eqref{eq:ridge_regression}).\footnote{In iteration $t=1$, by definition, $\widehat{\Psi}_{t-1}=0$ and $\Psi_{t+1} = \ket{\psi}$.} 
Steps $(6)$ ensures that with probability $\geq 1-\delta'/2$, we have that $\nu_t$ is an $\eta(\varepsilon_s)/2$-approximation to $|\la \phi_t| \psi_t \ra|^2$ and Step $(11)$  ensures that $\widehat{\alpha}_t$ is an $\varepsilon_s/2$-approximation to $\alpha_t := \norm{\Psi_t}_2$ respectively. A union bound ensures with probability $\geq 1-\delta'$,
$$
\Big| |\widehat{\alpha}_{t}| - |\alpha_t| \Big| \leq \varepsilon_s/2, \enspace \text{and} \enspace \Big| \nu_{t} - |\la \phi_t | \psi_t \ra|^2 \Big| \leq \eta(\varepsilon_s)/2.
$$
Now, observe that if we have not exited from the loop i.e., $\nu_t \geq \eta(\varepsilon_s)$ and $\widehat{\alpha}_t^2 \geq \varepsilon_s$, then the true values satisfy
\begin{equation}
\label{eq:promise_true_valuesmain}
|\la \phi_t | \psi_t \ra|^2 \geq \eta(\varepsilon_s)/2, \enspace \text{and} \enspace \alpha_t^2 \geq \varepsilon_s/2.    
\end{equation}
The weak agnostic learner $\calA_{\wal}$ has then learned a state $\ket{\phi_t} \in \calC$ such that
\begin{equation}\label{eq:promise_WAL_iter_t}
|\la \phi_{t} | \psi_{t} \ra|^2 \geq \eta(\varepsilon_s)/2.    
\end{equation}
After learning $\phi_t$, we can update the dictionary matrix $\Phi_t = [\Phi_{t-1}, \ket{\phi_t}]$.

At this point, the improved running estimate would be $\widehat{\Psi}_t = \ridgeproj_{t,\lambda} \ket{\psi} = \Phi_t \vec{\beta}_t$ where $\vec{\beta}^{(t)} \in \mathbb{C}^t$ is the solution to ridge regression (Eq.~\eqref{eq:ridge_regression}) i.e., $\vec{\beta}^{(t)} = \argmin_{\vec{a} \in \mathbb{C}^t} J_t(\vec{a};\lambda)$. However, the global optimal minimum value $R_{t+1,\lambda} = J(\vec{\beta}_t;\lambda)$ (Eq.~\eqref{eq:residual_loss}) is at most the value you get by holding the previous minimum solution $\vec{\beta}^{(t-1)} \in \mathbb{C}^{t-1}$ and optimizing over $a_t$ (since we have reduced the minimization space from $\vec{a}\in \mathbb{C}^t$ to $\{(\vec{\beta}^{(t-1)},a_t):a_t\in \mathbb{C}\}$), i.e.,
\begin{equation}
\label{eq:global_optimum_vs_greedy_optimum}
R_{t+1,\lambda} = \min_{\vec{a} \in \mathbb{C}^t} J_t(\vec{a};\lambda) \leq \min_{a_t} J\big((\vec{\beta}^{(t-1)}, a_t);\lambda\big).
\end{equation}
We now note that 
\begin{equation}\label{eq:1d_ridge_regression}
J\big((\vec{\beta}^{(t-1)}, a_t);\lambda\big) = \norm{\Psi_{t-1} - a_t \ket{\phi_t}}_2^2 + \lambda \norm{\vec{\beta}^{(t-1)}}_2^2 + \lambda |a_t|^2.    
\end{equation}
As the second term does not depend on $a_t$, the minimum of $J(\vec{\beta}^{(t-1)}, a_t;\lambda)$ occurs at the minimum of the following function 
\begin{equation}\label{eq:equiv_1d_ridge_regression}
g(a_t) := \norm{\Psi_{t-1} - a_t \ket{\phi_t}}_2^2 + \lambda |a_t|^2.    
\end{equation}
From Claim~\ref{claim:greedy_ridge_loss}, we know that the minimum of $g(a_t)$ occurs at $a_t^\star = \alpha_t \la \phi_t|\psi_t\ra/(1+\lambda)$. Noting the minimum value of $J\big((\vec{\beta}^{(t-1)}, a_t);\lambda\big)$ occurs at $a_t^\star$ as well, we then have 
\begin{align}
\min_{a_t} J(\vec{\beta}^{(t-1)}, a_t;\lambda) &= \norm{\Psi_{t-1} - a_t^\star \ket{\phi_t}}_2^2 + \lambda \norm{\vec{\beta}_{t-1}}_2^2 + \lambda |a_t^\star|^2 \\    
&= \norm{\Psi_{t-1}}^2 + (1+\lambda)|a_t^\star|^2 - 2 \alpha_t \textsf{Re}(\overline{a_t^\star} \langle \phi_t| \psi_t \rangle) + \lambda \norm{\vec{\beta}_{t-1}}_2^2 \\
&= \norm{\Psi_{t-1}}^2 - \frac{\alpha_t^2 |\langle \phi_t| \psi_t \rangle|^2}{1+\lambda} + \lambda \norm{\vec{\beta}_{(t-1)}}_2^2 \\
&= R_{t,\lambda} - \frac{\alpha_t^2 |\langle \phi_t| \psi_t \rangle|^2}{1+\lambda},
\label{eq:1d_ridge_regression_min_value}
\end{align}
where we have used $\|u + v\|_2^2=\|u\|_2^2+\|v\|_2^2 + 2\textsf{Re}(\langle u,v\rangle)$ for $u,v \in \mathbb{C}^{2^n}$ in the second line, we have substituted the value of $a_t^\star$ to obtain the third line and used the expression for $R_{t,\lambda} := \min_{\vec{a} \in \mathbb{C}^{t-1}} J(\vec{a};\lambda) = J(\vec{\beta}^{(t-1)};\lambda)$ (Eq.~\eqref{eq:residual_loss}) to obtain the final equality. Substituting Eq.~\eqref{eq:1d_ridge_regression_min_value}) into Eq.~\eqref{eq:global_optimum_vs_greedy_optimum} combined with the promises of Eq.~\eqref{eq:promise_true_valuesmain} immediately gives us the following implication
\begin{align}
R_{t,\lambda} - R_{t+1,\lambda} \geq \frac{\alpha_t^2 |\langle \phi_t| \psi_t \rangle|^2}{1+\lambda} \implies R_{t,\lambda} - R_{t+1,\lambda} \geq \frac{\varepsilon_s \eta(\varepsilon_s)}{4(1+\lambda)},
\end{align}
which is the desired result. This completes the proof.
\end{proof}
Using these claims, we now obtain an upper bound on the maximum number of iterations $\kappa$ and  prove the promise on the final residual state $\ket{\psi_{\kappa + 1}}$ in Theorem~\ref{thm:structure_learning} after $\kappa$ many~iterations.
\begin{claim}\label{claim:ub_kappa}
Consider the context of Theorem~\ref{thm:structure_learning}. Let $\varepsilon_s, \delta \in (0,1)$ and $\lambda \in [0,1]$. The maximum number of iterations $\kappa$ in the structure learning algorithm (Algorithm~\ref{algo:structure_learning}) of Theorem~\ref{thm:structure_learning} with probability $\geq 1-\delta$ is bounded as 
$$
\kappa \leq 4(1+\lambda)/(\varepsilon_s \eta(\varepsilon_s)).
$$    
Furthermore,  we have that 
$$
|\alpha_{\kappa + 1}|^2 \cdot \calF_{\calC}(\ket{\psi_{\kappa + 1}}) < \max\{(3/2) \varepsilon_s ,\eta^{-1}(3 \eta(\varepsilon_s)/2)\}
$$
\end{claim}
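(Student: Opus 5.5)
The plan is an energy-increment (telescoping) argument for the bound on $\kappa$, followed by a case analysis on the exit condition of Algorithm~\ref{algo:structure_learning} for the residual guarantee.

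\textbf{Bounding $\kappa$.} First fix the probability budget. Each iteration fails with probability at most $\delta'=\delta/(2t_{\max})$, and there are at most $t_{\max}$ iterations. A union bound over all iterations therefore shows that, with probability $\geq 1-\delta$, the conclusions of Claim~\ref{claim:promise_WAL} and Lemma~\ref{lem:promise_iteration_ridgeproj} hold in every iteration simultaneously. I condition on this event from now on.

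Next, observe that the residual loss is sandwiched: $R_{0,\lambda}=\norm{\ket{\psi}}_2^2=1$ (empty dictionary) and $R_{t,\lambda}\geq 0$ for all $t$. By Lemma~\ref{lem:promise_iteration_ridgeproj}, every iteration that does not exit decreases $R$ by at least $\varepsilon_s\eta(\varepsilon_s)/(4(1+\lambda))$. Summing these decrements telescopically over the $\kappa$ completed iterations gives
\[
\kappa\cdot \frac{\varepsilon_s\eta(\varepsilon_s)}{4(1+\lambda)}\leq R_{0,\lambda}-R_{\kappa,\lambda}\leq 1,
\]
which is the claimed bound on $\kappa$. Since $4(1+\lambda)/(\varepsilon_s\eta(\varepsilon_s))$ is at most the loop cap, the loop must exit through one of its two break conditions before $t_{\max}$ is reached. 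I will need to check the constants so that $t_{\max}$ is genuinely at least this bound; for $\lambda=1/2$ this requires $t_{\max}\geq 6/(\varepsilon_s\eta)$. If it is not, I would adjust $t_{\max}$ accordingly.

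\textbf{The residual guarantee.} I split into cases according to which break fires.

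\emph{Case A: the norm test fires.} Here $\widehat\alpha^2<\varepsilon_s$. Since $\widehat\alpha^2$ is an $\varepsilon_s/2$-estimate, the true value satisfies $|\alpha_{\kappa+1}|^2<3\varepsilon_s/2$. Using $\calF_\calC\leq 1$, we get $|\alpha_{\kappa+1}|^2\calF_\calC(\ket{\psi_{\kappa+1}})<3\varepsilon_s/2$.

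\emph{Case B: the fidelity test fires.} Here $\nu_t<\eta(\varepsilon_s)$. By Claim~\ref{claim:promise_WAL}(i), $\nu_t$ is within $\eta(\varepsilon_s)/2$ of the true overlap, so $|\la\phi_t|\psi_t\ra|^2<3\eta(\varepsilon_s)/2$. I then argue by contrapositive on the weak learner. Because the learner is run on $\ket{\widetilde\psi_t}$, which is $\eta/8$-close to $\ket{\psi_t}$, the guarantee transfers up to this small loss. If $\calF_\calC(\ket{\psi_t})$ were at least $\eta^{-1}(3\eta(\varepsilon_s)/2)$, then monotonicity of $\eta$ would give $|\la\phi_t|\psi_t\ra|^2\geq 3\eta(\varepsilon_s)/2$, a contradiction. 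Hence $\calF_\calC(\ket{\psi_{\kappa+1}})<\eta^{-1}(3\eta(\varepsilon_s)/2)$, and multiplying by $|\alpha_{\kappa+1}|^2\leq 1$ (Fact~\ref{fact:ridge_proj_contraction}) gives the bound. I also need to align indices: the residual at the break coincides with $\ket{\psi_{\kappa+1}}$, since $\kappa$ counts only the states actually added to $L$.

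\textbf{Main obstacle.} The delicate step is Case B. Transferring the weak learner's guarantee from $\ket{\widetilde\psi_t}$ to $\ket{\psi_t}$ perturbs both $\calF_\calC$ and the learned overlap by $O(\gamma_t)$. Absorbing this perturbation into the $\eta^{-1}(3\eta/2)$ threshold requires either treating $\eta$ as applied to the prepared state or relying on continuity of $\eta$. My first attempt will be to state the bound for the prepared state and then pass to the true residual using Eq.~\eqref{eq:implication_good_res_state_prep}, tightening $\gamma_t$ if needed.
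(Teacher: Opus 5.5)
Your proposal is correct and is essentially the paper's proof: telescope the per-iteration loss decrease from Lemma~\ref{lem:promise_iteration_ridgeproj} against $0\le R_{t,\lambda}\le 1$, then split on which break condition fired. The two issues you flag are genuine gaps that the paper's own argument passes over silently, and your repairs work:
\begin{itemize}
\item With $t_{\max}=\lceil 2/(\varepsilon_s\eta)\rceil$ and $\lambda=1/2$, the loop can reach the cap before either break fires. So $t_{\max}$ must strictly exceed $4(1+\lambda)/(\varepsilon_s\eta(\varepsilon_s))$ for the residual guarantee to follow from a break.
\item The weak-learner guarantee applies to $\ket{\widetilde{\psi}_t}$ rather than $\ket{\psi_t}$. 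The $\eta(\varepsilon_s)/4$ SWAP-test error leaves slack below the $3\eta(\varepsilon_s)/2$ threshold. That slack absorbs the $O(\gamma_t)$ change in $\calF_{\calC}$ once $\gamma_t$ is small relative to the corresponding gap in $\eta^{-1}$.
\end{itemize}
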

\begin{proof}
We first observe that $1 \geq R_{t,\lambda} \geq 0$ for any choice of $\lambda \geq 0$ and iteration $t \in \mathbb{N}$. To see the upper bound, we note that $R_{t,\lambda}$ is the minimum value of $J_t(\vec{a};\lambda)$ for $\vec{a} \in \mathbb{C}^t$ (Eq.~\eqref{eq:residual_loss}) and the value obtained by choosing $\vec{a} = 0^t$ gives us an upper bound:
$$
R_{t,\lambda} := \min_{\vec{a} \in \mathbb{C}^t} J_{t}(\vec{a};\lambda) = \min_{\vec{a} \in \mathbb{C}^t} \norm{\ket{\psi} - \Phi_t \vec{a}}_2^2 + \lambda \norm{\vec{a}}_2^2 \leq \norm{\ket{\psi}}_2^2 = 1.
$$
Suppose the algorithm ran for $\kappa$ many iterations before stopping. Then, we have that
$$
1\geq R_{1,\lambda} - R_{\kappa+1,\lambda} = \sum_{t=1}^{\kappa} R_{t,\lambda} - R_{t+1,\lambda} \geq  \frac{\kappa \varepsilon_s \eta(\varepsilon_s)}{4(1+\lambda)} \implies \kappa \leq \frac{{4(1+\lambda)}}{\varepsilon_s \eta(\varepsilon_s)},
$$
where we used that $\ket{\psi_1} = \ket{\psi}$ in the first inequality and Lemma~\ref{lem:promise_iteration_ridgeproj} in the third inequality. This is true with success probability $\geq 1 - \kappa \delta'$ where $\delta'$ is the failure probability of Lemma~\ref{lem:promise_iteration_ridgeproj}. Setting $\delta'=\delta \varepsilon_s \eta(\varepsilon_s)/4$ gives us the desired success probability. This proves the first part of the claim.

We now prove the second part of the claim. Recall from the proof of Lemma~\ref{lem:promise_iteration_ridgeproj} that we obtain an estimate of $\eta(\calF_{\calC}(\ket{\psi_t}))$, denoted  $\nu_t$, up to error $\eta(\varepsilon_s)/2$ and an estimate of $\alpha_t := \norm{(\id - \ridgeproj_{t-1,\lambda})\ket{\psi}}_2$, denoted by $\widehat{\alpha}_t$, up to error $\varepsilon_s/2$. In other words,
$$
\Big| |\widehat{\alpha}_{t}| - |\alpha_t| \Big| \leq \varepsilon_s/2, \enspace \text{and} \enspace \Big| \nu_{t} - \eta(\calF_{\calC}(\ket{\psi_t}))] \Big| \leq \eta(\varepsilon_s)/2.
$$
If we stop after $\kappa$ many iterations, then either $\nu_{\kappa + 1} < \eta(\varepsilon_s)$ or $\widehat{\alpha}_{\kappa + 1}^2 < \varepsilon_s$. The true values then satisfy
\begin{equation}\label{eq:promise_true_values}
\calF_{\calC}(\ket{\psi}_{\kappa + 1}) < \eta^{-1}(3 \eta(\varepsilon_s)/2), \enspace \text{or} \enspace \alpha_{\kappa + 1}^2 < 3\varepsilon_s/2.    
\end{equation}
We then have 
$$
|\alpha_{\kappa + 1}|^2 \cdot \calF_{\calC}(\ket{\psi_{\kappa + 1}}) \leq \max\{(3/2) \varepsilon_s ,\eta^{-1}(3 \eta(\varepsilon_s)/2)\}.
$$
This completes the proof.
\end{proof}

Finally, we also have the following claim that lower bounds the norm of the projected state.
\begin{claim}\label{claim:norm_projected_state}
Consider the context of Theorem~\ref{thm:structure_learning}. Let $\varepsilon_s \in (0,1)$ and $\lambda \in [0,1]$. Suppose the we stop after $\kappa$ many iterations in Algorithm~\ref{algo:structure_learning}. The norm of the projected state $\ridgeproj_{\kappa,\lambda}$ satisfies 
$$
\norm{\ridgeproj_{\kappa,\lambda} \ket{\psi}} \geq |\la \psi | \ridgeproj_{\kappa, \lambda} | \psi \ra| \geq \kappa \frac{\varepsilon_s \eta(\varepsilon_s)}{4(1+\lambda)}.
$$
Additionally, $\ridgeproj_{\kappa,\lambda} \ket{\psi} = \Phi_\kappa \vec{\beta}$ where $\vec{\beta} \in \mathbb{C}^\kappa$ and satisfies $\|\vec{\beta}\|^2 \geq |\la \psi | \ridgeproj_{\kappa, \lambda} | \psi \ra|/(\kappa + \lambda)$.
\end{claim}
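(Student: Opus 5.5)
The plan is to identify the quantity $\la \psi|\ridgeproj_{\kappa,\lambda}|\psi\ra$ with the total drop of the residual loss, and then telescope Lemma~\ref{lem:promise_iteration_ridgeproj}.

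\textbf{Step 1 (closed form of the residual loss).} Fix a dictionary $\Phi$ with Gram matrix $G$. I will show that
$$\min_{\vec a} J(\vec a;\lambda) = 1 - \la \psi|\ridgeproj_{\lambda}|\psi\ra .$$
Expanding, $J(\vec a;\lambda)=1-2\textsf{Re}(\vec a^\dagger \Phi^\dagger\ket{\psi})+\vec a^\dagger (G+\lambda\id)\vec a$. The minimizer is $\vec\beta=(G+\lambda\id)^{-1}\Phi^\dagger\ket{\psi}$, as in Eq.~\eqref{eq:soln_ridge_regression}. Substituting it gives $1-\bra{\psi}\Phi(G+\lambda\id)^{-1}\Phi^\dagger\ket{\psi}=1-\la\psi|\ridgeproj_\lambda|\psi\ra$. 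Consequently $\la\psi|\ridgeproj_{\kappa,\lambda}|\psi\ra$ is real and nonnegative, which also follows from Fact~\ref{fact:ridge_proj_contraction}.

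\textbf{Step 2 (telescoping).} Apply Step 1 to the empty dictionary and to the dictionary $\Phi_\kappa$ of all $\kappa$ learned states. In the indexing of Lemma~\ref{lem:promise_iteration_ridgeproj}, where $R_{t+1,\lambda}$ denotes the loss after $\phi_t$ is added, this gives $R_{1,\lambda}=1$ and $R_{\kappa+1,\lambda}=1-\la\psi|\ridgeproj_{\kappa,\lambda}|\psi\ra$. Hence
$$\la\psi|\ridgeproj_{\kappa,\lambda}|\psi\ra = R_{1,\lambda}-R_{\kappa+1,\lambda}=\sum_{t=1}^{\kappa}(R_{t,\lambda}-R_{t+1,\lambda})\ge \kappa\frac{\varepsilon_s\eta(\varepsilon_s)}{4(1+\lambda)},$$
where each of the $\kappa$ completed iterations contributes the decrement guaranteed by Lemma~\ref{lem:promise_iteration_ridgeproj}, on the same success event as Claim~\ref{claim:ub_kappa}. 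The first inequality of the claim is then Cauchy--Schwarz: $|\la\psi|\ridgeproj_{\kappa,\lambda}|\psi\ra|\le\|\ket\psi\|\,\|\ridgeproj_{\kappa,\lambda}\ket\psi\|=\|\ridgeproj_{\kappa,\lambda}\ket\psi\|$.

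\textbf{Step 3 (coefficient norm).} Write $\vec\beta=(G_\kappa+\lambda\id)^{-1}\Phi_\kappa^\dagger\ket\psi$, so that $\Phi_\kappa^\dagger\ket\psi=(G_\kappa+\lambda\id)\vec\beta$. Then
$$\la\psi|\ridgeproj_{\kappa,\lambda}|\psi\ra=(\Phi_\kappa^\dagger\ket\psi)^\dagger(G_\kappa+\lambda\id)^{-1}\Phi_\kappa^\dagger\ket\psi=\vec\beta^\dagger(G_\kappa+\lambda\id)\vec\beta\le \lambda_{\max}(G_\kappa+\lambda\id)\,\|\vec\beta\|^2 .$$
Since $G_\kappa\succeq0$ has unit diagonal, $\lambda_{\max}(G_\kappa)\le\Tr(G_\kappa)=\kappa$, so by Fact~\ref{fact:weyl} $\lambda_{\max}(G_\kappa+\lambda\id)\le\kappa+\lambda$. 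Rearranging gives $\|\vec\beta\|^2\ge\la\psi|\ridgeproj_{\kappa,\lambda}|\psi\ra/(\kappa+\lambda)$.

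\textbf{Main obstacle.} The only delicate point is bookkeeping. The paper's $R_{t,\lambda}$ is indexed inconsistently: Eq.~\eqref{eq:residual_loss} uses a $t$-column dictionary, while the proof of Lemma~\ref{lem:promise_iteration_ridgeproj} treats $R_{t+1,\lambda}$ as the loss after $\phi_t$ is added. I would fix the convention of Step 2 explicitly and check that the lemma's decrement applies to all $\kappa$ iterations whose state was actually added to $L$. This holds because the stopping checks at Steps~\ref{algo_step:est_fidelity} and~\ref{algo_step:step_est_alphat} fire only after the last state has been added, or before a new one is added.
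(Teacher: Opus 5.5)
Your proposal is correct and follows the paper's proof step for step. Both arguments use the closed form $R=1-\la\psi|\ridgeproj_{\lambda}|\psi\ra$ at the ridge minimizer, telescope the decrement of Lemma~\ref{lem:promise_iteration_ridgeproj} over the $\kappa$ added states, apply Cauchy--Schwarz, and bound $\vec\beta^\dagger(G_\kappa+\lambda\id)\vec\beta\le(\kappa+\lambda)\|\vec\beta\|^2$ via $\lambda_{\max}(G_\kappa)\le\Tr(G_\kappa)=\kappa$. Fixing the $R_{t,\lambda}$ indexing explicitly is a useful clarification that the paper glosses over. The only point left implicit, in your proof as in the paper's footnote, is that for $\lambda=0$ you need $G_\kappa$ invertible. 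This holds because each added $\ket{\phi_t}$ has nonzero overlap with a residual orthogonal to the span of the earlier states, so the learned states are linearly independent.
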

\begin{proof}
Starting from the definition of $R_{t,\lambda}$ (Eq.~\eqref{eq:residual_loss} and expanding, we have that
\begin{align}
R_{t,\lambda} = \min_{\vec{a} \in \mathbb{C}^t} \norm{\ket{\psi} - \Phi_t \vec{a}}_2^2 + \lambda \norm{\vec{a}}_2^2 
&= \min_{\vec{a} \in \mathbb{C}^t} \left[ \norm{\ket{\psi}}^2 - 2\mathrm{Re}(\la \psi | \Phi_t \vec{a} \ra) + \vec{a}^\dagger \Phi_t^\dagger \Phi_t \vec{a} + \lambda \vec{a}^\dagger \vec{a} \right] \nonumber \\
&= \min_{\vec{a} \in \mathbb{C}^t} \left[ 1 - 2\mathrm{Re}(\la \psi | \Phi_t \vec{a} \ra) + \vec{a}^\dagger(G_t + \lambda \id) \vec{a} \right],
\end{align}
where we denoted $G_t = \Phi_t^\dagger \Phi_t$. Noting that the minimum occurs at $\vec{a}_t = (G_t + \lambda \id)^{-1} \Phi_t^\dagger \ket{\psi}$ (Eq.~\eqref{eq:soln_ridge_regression}) and substituting into the above equation gives us 
\begin{align}\label{eq:residual_loss_simplified}
R_{t,\lambda} = 1 - 2\mathrm{Re}( \la \psi | \Phi_t (G_t + \lambda \id)^{-1} \Phi_t^\dagger | \psi \ra) + \la \psi| \Phi_t (G_t + \lambda \id)^{-1} \Phi_t^\dagger \ket{\psi} = 1 - \la \psi | \ridgeproj_{t,\lambda} | \psi \ra
\end{align}
where we used that $G_t+\lambda \id$ is Hermitian positive definite and hence $(G+\lambda \id)^{-1}$ is Hermitian in the first equality\footnote{For $\lambda = 0$, it can be shown that $G_t$ is invertible. See Appendix~\label{appsec:}.}. In the second equality, we used the definition of $\ridgeproj_{t,\lambda}$ and that $\la \psi | \ridgeproj_{t,\lambda} | \psi \ra$ is real as $\ridgeproj_{t,\lambda}$ is Hermitian. From Lemma~\ref{lem:promise_iteration_ridgeproj}, we have that
\begin{equation}\label{eq:interim_progress_proj_state_overlap}
\la \psi | \ridgeproj_{t+1,\lambda} | \psi \ra - \la \psi | \ridgeproj_{t,\lambda} | \psi \ra = R_{t,\lambda} - R_{t+1,\lambda} \geq \varepsilon_s \eta(\varepsilon_s)/(4(1+\lambda)).
\end{equation}
We then have that 
\begin{equation}\label{eq:proj_state_overlap_kappa}
\la \psi | \ridgeproj_{\kappa, \lambda} | \psi \ra = \sum_{t=0}^{\kappa-1} \left(\la \psi | \ridgeproj_{t+1,\lambda} | \psi \ra - \la \psi | \ridgeproj_{t,\lambda} | \psi \ra \right) \geq \kappa \varepsilon_s \eta(\varepsilon_s)/(4(1+\lambda))
\end{equation}
Applying Cauchy-Schwarz and using Eq.~\eqref{eq:proj_state_overlap_kappa}, we obtain
$$
\kappa \varepsilon_s \eta(\varepsilon_s)/(4(1+\lambda)) \leq |\la \psi | \ridgeproj_{\kappa, \lambda} | \psi \ra| \leq \norm{\ket{\psi}} \cdot \norm{\ridgeproj_{\kappa, \lambda} | \psi \ra} = \norm{\ridgeproj_{\kappa, \lambda} | \psi \ra},
$$
where we used $\norm{\ket{\psi}} = 1$. This completes the proof of the first part of the claim. To obtain the second part, we first note that $\ridgeproj_{\kappa,\lambda} \ket{\psi} = \Phi_\kappa \vec{a}_\kappa$ and earlier observed (as part of the proof of Eq.~\eqref{eq:residual_loss_simplified}) that
$$
\la \psi | \ridgeproj_{\kappa,\lambda} | \psi \ra = \vec{a}_\kappa^\dagger(G_\kappa + \lambda \id) \vec{a}_\kappa \implies |\la \psi | \ridgeproj_{\kappa,\lambda} | \psi |\ra| \leq (\norm{G_\kappa} + \lambda) \norm{\vec{a}_\kappa}^2 \implies \norm{\vec{a}_\kappa}^2 \geq \frac{\kappa \varepsilon_s \eta(\varepsilon_s)}{4(\kappa + \lambda)(1+\lambda)},
$$
where we used Eq.~\eqref{eq:proj_state_overlap_kappa} in the last inequality along with the fact that $\norm{G_\kappa} \leq \Tr(G_\kappa) \leq \kappa$.\footnote{For small $\lambda$, the above improves upon the slightly weaker result of $\norm{\ridgeproj_{\kappa,\lambda} \ket{\psi}}^2 \leq \norm{G_\kappa} \norm{\vec{a}_\kappa}^2 \implies \norm{\vec{a}_\kappa}^2 \geq |\la \psi | \ridgeproj_{\kappa,\lambda}| \psi \ra|^2/\kappa$.} This completes the proof of the second part and the overall claim.
\end{proof}

\subsubsection{Complexity}

We now analyze the time complexities of different steps in Algorithm~\ref{algo:structure_learning}: $(i)$ preparation of the residual state $\ket{\widetilde{\psi}_t}$, and $(ii)$ estimation of quantities associated with our stopping condition i.e., $\calF_{\calC}(\ket{\psi_t})$ and norm of the residual state $\alpha_t := \norm{(\id - \ridgeproj_{t,\lambda_t}) \ket{\psi}}_2$.

\paragraph{Agnostic learning.}
In each iteration, we need to prepare $S_{\wal}$ many copies of $\ket{\psi_t}$ for agnostic learning. Each preparation of $\ket{\psi_t}$ using sample complexity and time complexity as given by Claim~\ref{claim:prep_ridgeproj_residual_state}. Summing over all $\kappa$ many iterations gives us a sample complexity of $O(S_{\wal} \cdot \poly(1/\varepsilon_s, 1/\eta(\varepsilon_s))$ and a gate complexity of 
$$
O(T_{\wal} \cdot \poly( 1/\varepsilon_s,1/\eta(\varepsilon_s),\log 1/\delta)) +  \poly(G_{\prep},T_{\prep}, 1/\varepsilon_s,1/\eta(\varepsilon_s))
$$ 
where in the latter, there are contributions from the agnostic learning protocol itself as well as the gate complexity corresponding to prepare $S_{\wal}$ many residual states in each iteration.

\paragraph{Estimation of stopping criteria.}
The cost of estimating $\alpha_t$ which is the norm of the residual state up to error $\eta(\varepsilon_s)/2$ uses $O(\kappa/\eta(\varepsilon_s)^2 \log(1/\delta)))$ copies of $\ket{\psi}$ and has gate complexity
$$
O\left( (G_{\prep} \kappa^3)/\lambda \log(1 /\eta(\varepsilon_s))\right).
$$
from Claim~\ref{claim:prep_ridgeproj_residual_state} across all $\kappa$ many iterations.
The cost of obtaining the estimate $\nu_t$ of the fidelity of the learned state $\ket{\phi_t}$ with the residual state $\ket{\psi_t}$ uses $\poly(1/\eta(\varepsilon) \log(1/\delta))$ copies of $\ket{\psi}$ in each iteration using Claim~\ref{claim:prep_ridgeproj_residual_state}. Across all $\kappa$ many iterations, the corresponding sample complexity is $\poly(\kappa/\eta(\varepsilon) \log(1/\delta))$. The overall gate complexity is $\poly(G_\prep, 1/\varepsilon_s, 1/\eta(\varepsilon_s))$.

Putting together all the costs throughout all the iterations, the overall  complexity is as stated and proves Theorem~\ref{thm:structure_learning}. 

\subsection{Parameter learning}
\label{subsec:parameter_learning}

\begin{myalgorithm}
\begin{algorithm}[H]\label{algo:parameter_learning}
\caption{Parameter learning}
\setlength{\baselineskip}{1.6em} 
\DontPrintSemicolon 
\KwInput{Parameter $\lambda > 0$, $\varepsilon_p, \delta \in (0,1)$, copies of  $\ket{\psi}$, list of $\kappa$ many states $L = \{\ket{\phi_i}\}_{i \in [\kappa]}$ from model class $\calC$ with promise of $\norm{(G+\lambda \id)^{-1} \Phi^\dagger \ket{\psi}} \geq \gamma$.}
\vspace{2mm}
Set $\upsilon_0 = \lambda \varepsilon_p/(4\kappa)$ and $\upsilon_1 = \varepsilon_p^2/4$ \\
Set $M_0 = \widetilde{O}\left(\kappa^4/\Big(\lambda^4 \varepsilon^2 \gamma^2\Big) \log(1/\delta)\right)$ and $M_1 = \widetilde{O}\left(\kappa^2/\Big(\lambda^4 \varepsilon^4\Big) \log(1/\delta)\right)$ \\
Prepare $\ket{\widehat{\beta}}$ that is $\upsilon_1$-close to the coefficient state $\ket{\beta} = \vec{\beta}/\norm{\vec{\beta}}_2$, where $\vec{\beta} = (G+\lambda \id)^{-1} \Phi^\dagger \ket{\psi}$ using Claim~\ref{claim:prep_coeff_state} \\
Output an $\upsilon_0$-close estimate $\ket{\widehat{\beta}}$ of $\ket{\widetilde{\beta}}$ using tomography protocol of Theorem~\ref{thm:qst} and $M_0$ copies of $\ket{\psi}$. \\
Obtain an $\upsilon_1$-close estimate $b$ of $\norm{\widetilde{\beta}}_2$ using Claim~\ref{claim:prep_coeff_state}$(ii)$ and $M_1$ copies of $\ket{\psi}$. \\
Set $\widehat{\beta}_j \leftarrow \sqrt{b} \cdot \widehat{\beta}_j$. \\
\Return List of coefficients $B=\{\widehat{\beta}_i\}_i$.
\end{algorithm}
\end{myalgorithm}

In the previous section we showed how to learn a set of states $L=\{\ket{\phi_i}\}_{i\in [\kappa]}$ in $\calC$ that satisfied the structure learning promise (Theorem~\ref{thm:structure_learning}). In this section, we show how to learn the \emph{coefficients} of these states $\{\ket{\phi_i}\}$ and thereby obtain a classical description of $\ridgeproj_{L,\lambda} \ket{\psi}$. Crucial to proving our main theorem is the following lemma which shows how to determine the projection of the state $\ridgeproj_{L,\lambda} \ket{\psi}$ up to a global phase for an an arbitrary quantum state $\ket{\psi}$ using only copies of $\ket{\psi}$. We state the lemma in full generality below since we will use it as a blackbox in another context.

\begin{lemma}
\label{lem:estimate_projection_psi}
Let $\calC$ be a class of  states satisfying Definition~\ref{def:model_class}, $\lambda > 0$, $k \in \mathbb{N}$ and $\varepsilon,\gamma,\delta \in (0,1)$. Suppose $\ket{\psi}$ is an unknown state. Let $L=\{\ket{\phi_i}\}_{i \in [k]}$ be a list of known states from $\calC$ such~that 
$$
\norm{(G+\lambda \id)^{-1} \Phi^\dagger \ket{\psi}}_2 \geq \gamma,
$$
where $\Phi$ is the dictionary matrix over $L$ and $G = \Phi^\dagger \Phi$ is the Gram matrix. Then, there is an algorithm that, with probability $\geq 1 - \delta$, outputs  $\widehat{\beta}\in \mathbb{C}^k$ such that for some $\theta \in (0,\pi]$
\begin{align}
\label{eq:conditiononphii-psi}
\norm{\sum_{i=1}^\kappa \widehat{\beta_i} \ket{\phi_i} - e^{i\theta} \ridgeproj_{L,\lambda} \ket{\psi}}_2 \leq \varepsilon,
\end{align}
The complexity of the algorithm is: 
\begin{align*}
&\text{ Sample complexity: }\widetilde{O}\left( \frac{k^4}{\lambda^4 \varepsilon^2 \gamma^2} \log\frac{1}{\delta}\right), \quad \text{ Time complexity: } \widetilde{O}\left( \frac{k^5(k+\lambda)G_{\prep} }{\lambda^5 \varepsilon^4 \gamma^2} \log \frac{1}{\gamma \varepsilon} \cdot \log\frac{1}{\delta}\right).
\end{align*}
\end{lemma}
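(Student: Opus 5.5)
The plan is to recover the coefficient vector $\vec{\beta} = (G+\lambda\id)^{-1}\Phi^\dagger\ket{\psi}$ up to a global phase, in two pieces: its direction $\ket{\beta}=\vec{\beta}/\norm{\vec{\beta}}_2$ and its norm $\norm{\vec{\beta}}_2$. We then set $\widehat{\beta} = \widehat{b}^{1/2}\ket{\widehat{\beta}}$. Since $\ridgeproj_{L,\lambda}\ket{\psi} = \Phi\vec{\beta}$ by Eq.~\eqref{eq:action_ridge_proj}, the error in Eq.~\eqref{eq:conditiononphii-psi} is $\norm{\Phi(\widehat{\beta} - e^{i\theta}\vec{\beta})}_2 \le \norm{\Phi}\cdot\norm{\widehat{\beta}-e^{i\theta}\vec{\beta}}_2 \le \sqrt{k}\,\norm{\widehat{\beta}-e^{i\theta}\vec{\beta}}_2$, using $\norm{\Phi}^2=\norm{G}\le \Tr(G)= k$. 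So it suffices to make the coefficient error at most $\varepsilon/\sqrt{k}$. Throughout I use the trivial bound $\mu=0$ on $\lambda_{\min}(G)$, so that all block-encoding parameters depend only on $\lambda$.

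\textbf{Step 1 (direction).} By Claim~\ref{claim:prep_coeff_state}$(i)$ with $\mu=0$, each attempt yields a state $\ket{\widetilde{\beta}}$ on $\lceil\log k\rceil$ qubits with $\norm{\ket{\widetilde\beta}-\ket{\beta}}\le \upsilon_1$. Each attempt succeeds with probability $p\ge \gamma^2\lambda^2/(4k)$, so one successful copy costs $O(k/(\gamma^2\lambda^2))$ copies of $\ket{\psi}$ in expectation; a Chernoff bound gives the same order with high probability. I then run the pure-state tomography of Theorem~\ref{thm:qst} on these copies. The register has dimension $2^{\lceil\log k\rceil}\le 2k$, so we obtain $\ket{\widehat\beta}$ with $\norm{\ket{\widehat\beta}-e^{i\theta}\ket{\widetilde\beta}}\le \upsilon_0$ using $O(k\log(1/\delta)/\upsilon_0^2)$ copies of $\ket{\widetilde\beta}$. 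The global phase $e^{i\theta}$ enters here, because tomography determines the state only up to phase. Combining, $\norm{\ket{\widehat\beta}-e^{i\theta}\ket{\beta}}\le \upsilon_0+\upsilon_1$.

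\textbf{Step 2 (norm).} Claim~\ref{claim:prep_coeff_state}$(ii)$ gives $\widehat b$ with $|\widehat b-\norm{\vec\beta}_2^2|\le\upsilon$. Since $\norm{\vec\beta}\ge\gamma$, we get $|\sqrt{\widehat b}-\norm{\vec\beta}|\le \upsilon/\gamma$. We also need an upper bound on $\norm{\vec{\beta}}$: since $(G+\lambda\id)^{-1}\preceq \lambda^{-1}\id$ and $\norm{\Phi^\dagger}\le\sqrt{k}$, we have $\norm{\vec\beta}\le\sqrt{k}/\lambda$.

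\textbf{Step 3 (combining).} The triangle inequality gives
$$\norm{\sqrt{\widehat b}\ket{\widehat\beta}-e^{i\theta}\vec\beta}\le |\sqrt{\widehat b}-\norm{\vec\beta}| + \norm{\vec\beta}\,\norm{\ket{\widehat\beta}-e^{i\theta}\ket\beta} \le \upsilon/\gamma + (\sqrt{k}/\lambda)(\upsilon_0+\upsilon_1).$$
It remains to make this at most $\varepsilon/\sqrt{k}$. I would choose $\upsilon_0,\upsilon_1 = \Theta(\lambda\varepsilon/k)$ and $\upsilon=\Theta(\gamma\varepsilon/\sqrt{k})$, and split the failure probability as $\delta/2$ between the two estimation steps. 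The sample cost is then $(k/\upsilon_0^2)\cdot k/(\gamma^2\lambda^2) = \widetilde O(k^4/(\lambda^4\varepsilon^2\gamma^2))$, which matches the stated bound. The norm-estimation cost from Claim~\ref{claim:prep_coeff_state}$(ii)$ with this $\upsilon$ is polynomial and should be dominated by, or comparable to, the tomography cost; I would check this when writing it out. The time complexity follows by multiplying the number of circuit runs by the gate cost from Claim~\ref{claim:prep_coeff_state}$(i)$ with $\mu=0$.

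\textbf{Main obstacle.} I expect the bookkeeping of error amplification to be the hardest part. The direction error is amplified by $\norm{\vec\beta}\le\sqrt{k}/\lambda$ and then again by $\norm{\Phi}\le\sqrt{k}$. The norm error is amplified by $1/\gamma$ when passing from squared norm to norm. Both amplifications must be absorbed into the choices of $\upsilon_0$, $\upsilon_1$ and $\upsilon$ without losing the polynomial dependence on $k$, $1/\lambda$, $1/\varepsilon$ and $1/\gamma$.
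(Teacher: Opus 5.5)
Your proposal is correct and follows essentially the same route as the paper's proof. You prepare $\ket{\widetilde\beta}$ via Claim~\ref{claim:prep_coeff_state}$(i)$, learn its direction by pure-state tomography, estimate $\norm{\vec\beta}_2^2$ via Claim~\ref{claim:prep_coeff_state}$(ii)$, and combine the pieces using $\norm{\Phi}\le\sqrt{k}$ and $\norm{\vec\beta}_2\le\sqrt{k}/\lambda$.

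Your bookkeeping is actually slightly tighter than the paper's. You convert the squared-norm error through $\norm{\vec\beta}_2\ge\gamma$, and you keep the $\sqrt{k}$ factor on the norm term, whereas the paper's final inequality drops it by replacing $\sqrt{k}\sqrt{\upsilon_1}$ with $\sqrt{\upsilon_1}$. This gives a norm-estimation cost of $O(k^3\log(1/\delta)/(\lambda^4\gamma^2\varepsilon^2))$ copies, which is indeed dominated by the tomography term you were unsure about.
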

\begin{proof}
Let $m = \ceil{\log k}$. Let us define the vector $\beta = (G + \lambda \id)^{-1} \Phi^\dagger \ket{\psi}$. Using Claim~\ref{claim:prep_coeff_state}$(i)$, we produce the $m$-qubit state $\ket{\widetilde{\beta}}$ such that $\norm{\ket{\widetilde{\beta}} - \ket{\beta}} \leq \upsilon_0$. We can then use state tomography (using Theorem~\ref{thm:qst}) to learn a description $\ket{\widehat{\beta}}$  of $\ket{\widetilde{\beta}}$ up to $\upsilon_0$ error. So by triangle inequality, we have that
$$
\norm{\ket{\widehat{\beta}} - \ket{\beta}} \leq 2 \upsilon_0.
$$
Moreover, using Claim~\ref{claim:prep_coeff_state}$(ii)$, we can output an estimate $b \geq 0$ that approximates the norm of $\beta$
$$
\Big|b - \norm{\beta}_2^2 \Big| \leq \upsilon_1
$$
Noting that $\ridgeproj_{L,\lambda} = \Phi \beta$ and $\sum_{i=1} \widehat{\beta}_i \ket{\phi_i} = \Phi \widehat{\beta} = \sqrt{b} \Phi \ket{\widehat{\beta}}$, we then have
\begin{align*}
\norm{\Phi \widehat{\beta} - \Phi \beta}_2 \leq \norm{\Phi}_2 \norm{\widehat{\beta} - \beta}_2 
&\leq \sqrt{k} \norm{\sqrt{b} \ket{\widehat{\beta}} - \norm{\beta} \ket{\beta}} \\
&\leq \sqrt{k} \Big(\norm{\sqrt{b} - \norm{\beta}} + \norm{\beta} \norm{\ket{\widehat{\beta} - \ket{\beta}}}\Big)\\
&\leq \sqrt{k} \Big(\sqrt{\upsilon_1} + 2\frac{\sqrt{k}}{\lambda} \upsilon_0 \Big)\\
&\leq \sqrt{\upsilon_1} + \frac{2k}{\lambda} \upsilon_0,
\end{align*}
where we have used $\norm{\beta} = \norm{(G+\lambda \id)^{-1} \Phi^\dagger \ket{\psi}} \leq \norm{\Phi^\dagger}/\lambda \leq \sqrt{k}/\lambda$ in the third line. Setting $\upsilon_0 = \lambda \varepsilon/(4k)$ and $\upsilon_1 = \varepsilon^2/4$ gives us the desired error of $\varepsilon$. We thus need to prepare $O(k^3/(\lambda^2 \varepsilon^2) \log(1/\delta))$ copies of the $m$-qubit state $\ket{\beta}$ for state tomography (Theorem~\ref{thm:qst}) which in turn requires 
$$
\widetilde{O}\left( \frac{k^4}{\lambda^4 \varepsilon^2 \gamma^2} \log(1/\delta)\right)
$$
applications of the corresponding circuit as defined in Claim~\ref{claim:prep_coeff_state}$(i)$ (where we have noted that the probability of success here is at least $\gamma$ here) and hence that many copies of $\ket{\psi}$ as well. The total gate complexity corresponding to this is then
$$
\widetilde{O}\left( \frac{k^5 (k+\lambda)G_{\prep}}{\lambda^5 \varepsilon^2 \gamma^2} \log \frac{1}{\varepsilon \gamma} \log \frac{1}{\delta} \right).
$$
Additionally, we need $O(k^2/(\upsilon_1^2\lambda^4)\log(1/\delta))=O(k^2/(\varepsilon^4 \lambda^4)\log(1/\delta))$ copies of $\ket{\psi}$ for estimation of $b$ and the corresponding overall gate complexity is
$$
\widetilde{O}\left( \frac{k^3(k+\lambda)G_{\prep} }{\varepsilon^4 \lambda^5} \log \frac{k}{\varepsilon \lambda} \right).
$$
Putting these together gives us the desired result. This completes the proof.
\end{proof} 
With the lemma above, we are now ready to state our main theorem corresponding to parameter learning (Algorithm~\ref{algo:parameter_learning}).
\begin{theorem}[Parameter learning]
\label{thm:parameter_learning}
Let $\lambda > 0$, $\kappa \in \mathbb{N}$ and $\varepsilon_p,\gamma,\delta \in (0,1)$. Suppose $\ket{\psi}$ is an unknown $n$-qubit state. Let $L = \{\ket{\phi_i}\}_{i \in [\kappa]}$ be a list of $\kappa$ known states such that $$
\norm{(G+\lambda \id)^{-1} \Phi^\dagger \ket{\psi}}_2 \geq \gamma,
$$
where $\Phi$ is the dictionary matrix over $L$, $G = \Phi^\dagger \Phi$ is the Gram matrix, and $\ket{\psi}$ can be expressed~as
$$
\ket{\psi} = \ridgeproj_{L,\lambda} \ket{\psi} + \alpha_{\kappa + 1} \ket{\psi_{\kappa + 1}},
$$
where $|\alpha_{\kappa+1}|^2 \cdot \calF_{\calS}(\ket{\psi_{\kappa+1}}) < \varepsilon_p$. Then, there exists an algorithm that with probability at least $1-\delta$, outputs coefficients $\{\widehat{\beta}_i\}_{i \in [\kappa]}$ such that $\ket{\psi}$ can be written (up to a global phase) as
$$
\ket{\psi} = \sum_{i=1}^\kappa \widehat{\beta}_i \ket{\phi_i} + \widetilde{\alpha}_{\kappa+1} \ket{\widetilde{\psi}_{\kappa + 1}},
$$
where $|\widetilde{\alpha}_{\kappa+1}|^2 \cdot \calF_{\calS}(\ket{\widetilde{\psi}_{\kappa+1}}) < 2 \varepsilon_p$ and $\norm{\sum_{i=1}^\kappa \widehat{\beta}_i \ket{\phi_i}}_2 \leq \norm{\ridgeproj_{L,\lambda}\ket{\psi}}_2 + \sqrt{\varepsilon_p}$. The complexity of the algorithm is as~follows:
\begin{align*}
&\text{ Sample complexity: }\widetilde{O}\left( \frac{\kappa^4}{\lambda^4 \varepsilon_p \gamma^2} \log \frac{1}{\delta}\right),\quad 
\text{ Time complexity: } \widetilde{O}\left( \frac{\kappa^5(\kappa+\lambda)G_{\prep} }{\lambda^5 \varepsilon_p^2 \gamma^2} \log\frac{1}{\gamma \varepsilon_p} \cdot \log \frac{1}{\delta} \right)
\end{align*}
\end{theorem}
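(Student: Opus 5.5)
We obtain Theorem~\ref{thm:parameter_learning} by invoking Lemma~\ref{lem:estimate_projection_psi} with target accuracy $\varepsilon = c\,\varepsilon_p$ for a small constant $c$ (e.g.\ $c=1/8$), and then transferring the structure-learning guarantee on the residual. Let $\ket{\varphi}:=\ridgeproj_{L,\lambda}\ket{\psi}$ and $\ket{\widehat\varphi}:=\sum_i\widehat\beta_i\ket{\phi_i}$.

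\textbf{Setting up the decomposition.} The lemma gives, with probability $\ge 1-\delta$, a phase $\theta$ with $\norm{\ket{\widehat\varphi}-e^{i\theta}\ket{\varphi}}_2\le \varepsilon$. We absorb the phase into $\ket{\psi}$ (the ``up to a global phase'' in the statement), so we may take $\theta=0$. Define the new residual vector $\widetilde\alpha_{\kappa+1}\ket{\widetilde\psi_{\kappa+1}}:=\ket{\psi}-\ket{\widehat\varphi}$. Then
$$\widetilde\alpha_{\kappa+1}\ket{\widetilde\psi_{\kappa+1}}=\alpha_{\kappa+1}\ket{\psi_{\kappa+1}}+e, \qquad \norm{e}_2\le\varepsilon .$$
The norm bound is immediate from the triangle inequality: $\norm{\ket{\widehat\varphi}}\le\norm{\ket{\varphi}}+\varepsilon\le \norm{\ket{\varphi}}+\sqrt{\varepsilon_p}$, since $\varepsilon\le\varepsilon_p\le\sqrt{\varepsilon_p}$.

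\textbf{Fidelity transfer.} For any $\ket{\phi}\in\calC$,
$$|\widetilde\alpha_{\kappa+1}|^2|\la\phi|\widetilde\psi_{\kappa+1}\ra|^2=|\la\phi|\,(\alpha_{\kappa+1}\ket{\psi_{\kappa+1}}+e)\ra|^2\le\Big(|\alpha_{\kappa+1}||\la\phi|\psi_{\kappa+1}\ra|+\varepsilon\Big)^2 .$$
Taking the maximum over $\calC$ and writing $F:=|\alpha_{\kappa+1}|^2\calF_\calC(\ket{\psi_{\kappa+1}})<\varepsilon_p$, the new unnormalized fidelity is at most $(\sqrt F+\varepsilon)^2\le F+2\sqrt{\varepsilon_p}\,\varepsilon+\varepsilon^2$.

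\textbf{Main obstacle.} The difficulty is that this bound is only $<2\varepsilon_p$ if $2\sqrt{\varepsilon_p}\,\varepsilon+\varepsilon^2\le\varepsilon_p$. That forces $\varepsilon\lesssim\sqrt{\varepsilon_p}/3$, whereas $\varepsilon=c\,\varepsilon_p$ is more than enough. I would choose $\varepsilon=\sqrt{\varepsilon_p}/4$, which gives $2\cdot\tfrac14\varepsilon_p+\tfrac1{16}\varepsilon_p<\varepsilon_p$. This choice also keeps the norm bound, since $\varepsilon\le\sqrt{\varepsilon_p}$. With $\varepsilon^2=\Theta(\varepsilon_p)$, the sample complexity of Lemma~\ref{lem:estimate_projection_psi} becomes $\widetilde O(\kappa^4/(\lambda^4\varepsilon_p\gamma^2)\log(1/\delta))$, matching the stated bound, and the time bound follows similarly from the $\varepsilon^4\to\varepsilon_p^2$ substitution.

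The remaining bookkeeping is the following. The lemma's hypothesis $\norm{(G+\lambda\id)^{-1}\Phi^\dagger\ket\psi}\ge\gamma$ is assumed in the theorem, and Algorithm~\ref{algo:parameter_learning} is exactly the procedure in the lemma's proof, so no new quantum subroutine is needed.
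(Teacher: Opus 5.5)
Your proposal is correct and follows the paper's proof step for step. You invoke Lemma~\ref{lem:estimate_projection_psi} at accuracy $\Theta(\sqrt{\varepsilon_p})$, so the new residual is an $\varepsilon$-perturbation of the old one. You then transfer the fidelity bound by the triangle inequality and get the norm bound the same way. The only cosmetic difference is that the paper takes $\varepsilon_1=0.4\sqrt{\varepsilon_p}$ and bounds the unsquared overlap by $\sqrt{\varepsilon_p}+\varepsilon_1\le\sqrt{2\varepsilon_p}$, whereas you take $\sqrt{\varepsilon_p}/4$ and bound the square. Drop the initially announced choice $\varepsilon=c\,\varepsilon_p$, which would give worse complexity; the choice $\varepsilon=\sqrt{\varepsilon_p}/4$ is the one that yields the stated bounds.
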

\begin{proof}
We are given a list $L$ of $\kappa$ many states in $\C$ such that 
$$
\ket{\psi} = \ridgeproj_{L,\lambda} \ket{\psi} + (\id - \ridgeproj_{L,\lambda})\ket{\psi},
$$
along with the promise that the residual state satisfies the following for any state $\ket{\varphi} \in \C$:
\begin{equation}\label{eq:interim_input_promise}
|\la \varphi |(\id - \ridgeproj_{L,\lambda})\ket{\psi})| = |\alpha_{\kappa + 1}| \cdot |\la \varphi | \psi_{\kappa + 1}\ra| \leq |\alpha_{\kappa + 1}| \cdot \sqrt{\calF_{\calC}(\ket{\psi_{\kappa + 1}}} \leq \sqrt{\varepsilon_p},    
\end{equation}
where we used the given promise on the residual state. 

Let $\varepsilon_1 \in (0,1)$ be an error parameter to be fixed later. Using Lemma~\ref{lem:estimate_projection_psi}, we can learn a set of $\kappa$ coefficients $\{\widehat{\beta}_i\}_{i \in [\kappa]}$ corresponding to the states in $L$ such that
\begin{equation}\label{eq:interim_good_coeffs}
\norm{e^{i \theta} \sum_{i=1}^\kappa \widehat{\beta}_i \ket{\phi}_i - \ridgeproj_{L,\lambda} \ket{\psi} } \leq \varepsilon_1,    
\end{equation}
up to some (unknown) global phase $\theta \in (0,\pi]$. Let us denote the state vector $\widehat{\phi} = e^{i \theta} \sum_{i=1}^\kappa \widehat{\beta}_i \ket{\phi}_i$ and the corresponding residual state vector $\widehat{\phi}_R = \ket{\psi} - \widehat{\phi}$. We will also denote $\widetilde{\alpha}_{\kappa + 1} = \norm{\widehat{\phi}_R}$ and $\ket{\widetilde{\psi}_{\kappa + 1}} = \widehat{\phi}_R/\widetilde{\alpha}_{\kappa + 1}$ which is a valid quantum state.

We observe that $\widehat{\phi}_R$ is close to the true residual state $(\id - \ridgeproj_{L,\lambda})\ket{\psi}$:
\begin{equation}\label{eq:interim_close_residual_state}
\norm{\widehat{\phi}_R - (\id - \ridgeproj_{L,\lambda})\ket{\psi}} = \norm{\ket{\psi} - \widehat{\phi} - (\id - \ridgeproj_{L,\lambda})\ket{\psi}} = \norm{\ridgeproj_{L,\lambda}\ket{\psi} - \widehat{\phi}} \leq \varepsilon_1,
\end{equation}
where we have used the definition of $\widehat{\phi}_R$ in the first equality and Eq.~\eqref{eq:interim_good_coeffs} in the final inequality.

Let us denote the error $e_R = \widehat{\phi}_R - (\id - \ridgeproj_{L,\lambda})\ket{\psi}$, which we have just shown satisfies $\norm{e_R} \leq \varepsilon_1$. For any $\ket{s} \in \C$, we then have that
\begin{align*}
|\widetilde{\alpha}_{\kappa + 1}| \cdot |\la s | \widetilde{\psi}_{\kappa+1} \ra| = |\la s | \widehat{\phi}_R \ra| 
= \left|\la s | \Big((\id - \ridgeproj_{L,\lambda})\ket{\psi}\Big) + \la s | e_R \ra \right|
&\leq \left|\la s | \Big((\id - \ridgeproj_{L,\lambda})\ket{\psi}\Big)\right| + \norm{e_R} \\
&\leq \sqrt{\varepsilon_p} + \varepsilon_1
\end{align*}
where we used the triangle inequality in the third inequality along with the fact that $|\la s | e_R \ra| \leq \norm{e_R}$ and used Eq.~\eqref{eq:interim_input_promise} in the second line as well as that $\norm{e_R} \leq \varepsilon_1$. We set $\varepsilon_1 = 0.4 \sqrt{\varepsilon_p}$ which gives us~that
$$
|\widetilde{\alpha}_{\kappa + 1}| \cdot |\la s | \widetilde{\psi}_{\kappa+1} \ra| \leq \sqrt{2 \varepsilon_p} , \,\forall \ket{s} \in \C.
$$
As this is true for any $\ket{s} \in \C$ and in particular the one that maximizes $\calF_{\calC}(\ket{\widetilde{\psi}_{\kappa + 1}})$, we have the following implication
$$
|\widetilde{\alpha}_{\kappa + 1}|^2 \cdot \calF_{\calC}(\ket{\widetilde{\psi}_{\kappa + 1}}) \leq 2 \varepsilon_p.
$$
This gives us the desired guarantee on the residual state. To argue the upper bound on the norm of $\sum_{i=1}^\kappa \widehat{\beta}_i \ket{\phi_i}$, we observe that
\begin{align*}
\norm{\sum_{i=1}^\kappa \widehat{\beta}_i \ket{\phi_i}} = \norm{e^{i\theta} \sum_{i=1}^\kappa \widehat{\beta}_i \ket{\phi_i} - \ridgeproj_{L,\lambda}\ket{\psi} + \ridgeproj_{L,\lambda}\ket{\psi}} 
&\leq \norm{e^{i\theta} \sum_{i=1}^\kappa \widehat{\beta}_i \ket{\phi_i} - \ridgeproj_{L,\lambda}\ket{\psi}} + \norm{\ridgeproj_{L,\lambda}\ket{\psi}} \\
&\leq \sqrt{\varepsilon_p} + \norm{\ridgeproj_{L,\lambda}\ket{\psi}},
\end{align*}
where we again used Eq.~\eqref{eq:interim_good_coeffs} in the second line. The complexity of the protocol is due to learning of the coefficients $\{ \widehat{\beta}_i \}$ which consumes $\widetilde{O}\Big(\kappa^4/(\lambda^4 \varepsilon_p \gamma^2) \log(1/\delta)\Big)$ copies of $\ket{\psi}$ and $\widetilde{O}\Big(\kappa^5(\kappa + \lambda)G_{\prep}/(\lambda^5 \varepsilon_p^2 \gamma^2) \log(1/(\gamma \varepsilon_p)) \log(1/\delta) \Big)$ time complexity using Lemma~\ref{lem:estimate_projection_psi} (where the relevant error parameter is $\varepsilon_1 = 0.4\sqrt{\varepsilon_p}$). This completes the proof.
\end{proof}

\paragraph{Putting everything together.}
We are now ready to provide a proof of Theorem~\ref{thm:learning_decompositions}.
\begin{proof}[Proof of Theorem~\ref{thm:learning_decompositions}]
The proof follows immediately from our structure learning (Theorem~\ref{thm:structure_learning}) and parameter learning statements (Theorem~\ref{thm:parameter_learning}). Let $\varepsilon_s, \lambda \in (0,1)$ be parameters to be fixed later and run Algorithm~\ref{algo:structure_learning} of Theorem~\ref{thm:structure_learning} with these parameter values. It produces a list $L = \{\ket{\phi_i}\}_{i \in [\kappa]}$ of $\kappa = O(1/(\varepsilon_s \eta(\varepsilon_s))$ many states in $\C$ such that $\ket{\psi}$ can be expressed as
\begin{equation}
\ket{\psi} = \ridgeproj_{L,\lambda} \ket{\psi} + \alpha_{\kappa + 1}\ket{\psi_{\kappa+1}}, \text{ where } |\alpha_{\kappa + 1}|^2\cdot \calF_{\calC}(\ket{\psi_{\kappa + 1}}) \leq (3/2) \varepsilon_s + \eta^{-1}(3 \eta(\varepsilon_s)/2).    
\end{equation}
Let us denote $\varepsilon_p = (3/2) \varepsilon_s + \eta^{-1}(3 \eta(\varepsilon_s)/2)$ and $\uptau = \varepsilon_s \eta(\varepsilon_s)/(4(1+\lambda))$. We also have the promise that $\norm{(G + \lambda \id)\Phi^\dagger \ket{\psi}}^2 \geq \kappa \uptau/(\kappa + \lambda)$ (Claim~\ref{claim:norm_projected_state}). At this point, we use Algorithm~\ref{algo:parameter_learning} of Theorem~\ref{thm:parameter_learning} with $\varepsilon_p$ as defined above and $\gamma^2 = \kappa \uptau/(\kappa + \lambda)$. We then determine a list of coefficients $\{\widehat{\beta}_i\}_{i \in [\kappa]}$ corresponding to the states $\{\ket{\phi_i}\}_{i \in [\kappa]}$ in the list $L$ such that $\ket{\psi}$ can be written (up to a global phase) as
$$
\ket{\psi} = \sum_{i = 1}^\kappa \widehat{\beta}_i \ket{\phi_i} + \alpha \ket{\psi_{\kappa + 1}}
$$
such that $|\alpha|^2 \cdot \calF_{\calC}(\ket{\psi_{\kappa + 1}} \leq 2 \varepsilon_p$. Setting $\lambda=1/2$ and $\varepsilon_p = \varepsilon/2$ where $\varepsilon$ give us the desired error of Theorem~\ref{thm:learning_decompositions}. We thus need to pick $\varepsilon_s$ such that $\varepsilon_s \leq \min\{\varepsilon/3, \eta^{-1}(2/3 \eta(\varepsilon/2))\}$. This will depend on the function $\eta(\cdot)$ itself.

The sample and time complexity of using Algorithm~\ref{algo:structure_learning} is then $\widetilde{O}\Big(S_{\wal} \cdot \poly(1/\varepsilon,1/\eta(\varepsilon),\log 1/\delta) \Big)$ and $\widetilde{O}\Big(T_{\wal} \cdot \poly( 1/\varepsilon,1/\eta(\varepsilon),\log 1/\delta)) +  \poly(G_{\prep},T_{\prep}, 1/\varepsilon,1/\eta(\varepsilon)\Big)$ from Theorem~\ref{thm:structure_learning}, respectively. The sample and time complexity of using Algorithm~\ref{algo:parameter_learning} is $\widetilde{O}\Big(\kappa^4/(\varepsilon^2 \eta(\varepsilon/3)) \log(1/\delta) \Big)$ and $\widetilde{O}\Big(\kappa^6 G_{\prep}/(\varepsilon^3 \eta(\varepsilon/3)) \log(1/\delta) \Big)$ from Theorem~\ref{thm:parameter_learning}. The overall sample and time complexity is then as stated. This completes the proof.
\end{proof}

\begin{remark}\label{remark:approx_state_prep}
We have given the proof of Theorem~\ref{thm:learning_decompositions} assuming that for every $\ket{\phi} \in \C$, there is a classical algorithm that outputs a circuit $U$ that prepares $\ket{\phi}$ exactly as $\ket{\phi} = U \ket{0^n}$. This is used as part of subroutines to prepare the residual state (Claim~\ref{claim:prep_ridgeproj_residual_state}) for Algorithm~\ref{algo:structure_learning}), the coefficient state and estimation of the norm of the coefficient vector (Claim~\ref{claim:prep_coeff_state} for Algorithm~\ref{algo:parameter_learning}). We remark that both  claims can actually tolerate approximate preparation of states in $\C$. Claim~\ref{claim:circ_UPhi} will be a circuit that prepares states in $\C$ up to some $\varepsilon_{\prep}$ error (i.e., $\norm{\ket{\phi} - U \ket{0^n}} \leq \varepsilon_{\prep}$). This will result in the Gram matrix $G$ having an approximate block-encoding. Since we already have an approximate block-encoding of $(G + \lambda \id)^{-1}$ (due to approximation of the inverse function by a polynomial), we can tolerate an approximation error in $G$ if we use Lemma~\ref{lem:inverse_BE2}. We need to ensure that $\varepsilon_{\prep}$ is chosen such that the overall error in the block-encoding $(G + \lambda \id)^{-1}$ is lower than those stated in Claims~\ref{claim:prep_coeff_state} and \ref{claim:prep_ridgeproj_residual_state}.
\end{remark}

\section{Utility of structured decompositions}
We have so far discussed how to learn a structured decomposition of any $n$-qubit quantum state $\ket{\psi}$ with respect to a model class $\calC$ (Theorem~\ref{thm:learning_decompositions}, Section~\ref{sec:algo_decomposition_thm}). There are two natural questions, what is the utility of such a decomposition and secondly, what if $\ket{\psi}$ itself was \emph{structured}, could we say anything about the properties of the learned decomposition? We answer both these questions below by giving a few applications of our main result.
\begin{enumerate}
    \item We first show that the inner product of an arbitrary state $\ket{\psi}$ with any state with low extent (Eq.~\eqref{def:extent_C}) with respect to model class $\calC$ (Definition~\ref{def:model_class}) can be approximated with using a ``mimicking state" corresponding to the learned decomposition. 
    \item Given a weak agnostic learner $\calA_{\wal}$ for model class $\calC$, we show that Theorem~\ref{thm:learning_decompositions} boosts this to a strong agnostic learner and hence, the learned decomposition can be used to accomplished improper agnostic learning of $\calC$.
    \item We give an algorithm to learn states with low extent with respect to model class $\calC$.
\end{enumerate}
We state these results explicitly in the subsections below and give their proofs. 

\subsection{Mimicking state for low-extent fidelities}
\begin{lemma}
\label{cor:low_extent_inner_prod}
Let $\calC$ be a model class of $n$-qubit quantum states satisfying Definition~\ref{def:model_class}. Let $\xi > 0, \varepsilon \in (0,1)$, $\uptau = \eta(\varepsilon^2/\xi^2)$, and $\calC(\xi)$ be the set of $n$-qubit states with extent at most $\xi$ with respect to model class $\calC$. For every $n$-qubit  state $\ket{\psi}$, we can find a state~$\ket{\phi}$ which is expressible as a superposition of $O(\xi^2/(\varepsilon^2 \uptau))$ many states in $\C$, using $\widetilde{O}\Big(S_{\wal} \cdot \poly(\xi, 1/\varepsilon, 1/\uptau,\log 1/\delta) \Big)$ copies of $\ket{\psi}$ and in $\widetilde{O}\Big(T_{\wal} \cdot \poly(\xi, 1/\varepsilon,1/\uptau,\log 1/\delta)) +  S_{\wal} \cdot \poly(G_{\prep},T_{\prep}, \xi, 1/\varepsilon, 1/\uptau)\Big)$ time such~that
\begin{itemize}
    \item[$(i)$] $\Big| |\langle \sigma|\psi\rangle| - |\langle \sigma|\phi  \ra| \Big| \leq \varepsilon$ for every  $\ket{\sigma} \in \calC(\xi)$, 
    \item[$(ii)$] $\Big|\calF_{\calC(\xi)}(\ket{\phi}) - \calF_{\calC(\xi)}(\ket{\psi}) \Big| \leq 3 \varepsilon,$
\end{itemize}
where $\calF_{\calC(\xi)}(\ket{\psi}) = \max_{\ket{\sigma} \in \calC(\xi)} |\la \sigma | \psi \ra|^2$.
\end{lemma}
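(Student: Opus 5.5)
The plan is to run the algorithmic decomposition of Theorem~\ref{thm:learning_decompositions} on $\ket{\psi}$ with accuracy parameter $\varepsilon' := \varepsilon^2/\xi^2$. We take $\ket{\phi} := \sum_{i=1}^\kappa \widehat{\beta}_i\ket{\phi_i}$, the (unnormalized) structured part it outputs. After absorbing the global phase, the theorem gives
\[
\ket{\psi} = \ket{\phi} + \alpha\ket{\phi_R}, \qquad |\alpha|^2\,\calF_{\calC}(\ket{\phi_R}) \le \varepsilon^2/\xi^2 .
\]
The complexity claims then come directly from Theorem~\ref{thm:learning_decompositions} with $\varepsilon'=\varepsilon^2/\xi^2$ and $\eta(\varepsilon')=\uptau$. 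The list length is $\kappa = O(1/(\varepsilon'\eta(\varepsilon')))= O(\xi^2/(\varepsilon^2\uptau))$, matching the bound of Theorem~\ref{thm:structure_learning}. The global phase is harmless since $(i)$ and $(ii)$ involve only absolute values.

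For $(i)$, fix $\ket{\sigma}\in\calC(\xi)$ and write $\ket{\sigma}=\sum_j c_j\ket{s_j}$ with $\ket{s_j}\in\calC$ and $\sum_j|c_j|\le\xi$. (If the infimum in Eq.~\eqref{def:extent_C} is not attained, use $\xi+o(1)$ and take a limit.) By linearity,
\[
|\la\sigma|\psi\ra - \la\sigma|\phi\ra| = |\alpha|\,|\la\sigma|\phi_R\ra| \le \sum_j |c_j|\,|\alpha|\,|\la s_j|\phi_R\ra| \le \xi\cdot |\alpha|\sqrt{\calF_\calC(\ket{\phi_R})} \le \xi\cdot \varepsilon/\xi = \varepsilon .
\]
The reverse triangle inequality then gives $\big||\la\sigma|\psi\ra|-|\la\sigma|\phi\ra|\big|\le\varepsilon$.

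For $(ii)$, we first need $|\la\sigma|\phi\ra|\le 1+\varepsilon$, which follows from $(i)$ and $|\la\sigma|\psi\ra|\le 1$. Hence for each $\ket{\sigma}\in\calC(\xi)$,
\[
\big||\la\sigma|\psi\ra|^2-|\la\sigma|\phi\ra|^2\big| \le \varepsilon\,(|\la\sigma|\psi\ra|+|\la\sigma|\phi\ra|)\le \varepsilon(2+\varepsilon)\le 3\varepsilon .
\]
Taking the maximum over $\ket{\sigma}\in\calC(\xi)$ on both sides preserves this bound, since $|\max f-\max g|\le\sup|f-g|$, which gives $(ii)$. Here $\calF_{\calC(\xi)}(\ket{\phi})$ is understood as the same maximum applied to the vector $\ket{\phi}$.

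The main obstacle is that $\ket{\phi}$ is not normalized, so the statement must be read with $\ket{\phi}$ as the unnormalized structured vector. The bound $(1+\varepsilon)$ above is exactly what controls this in $(ii)$. If a normalized state is needed, I would rescale using the norm estimate from Theorem~\ref{thm:parameter_learning}, $\|\ket{\phi}\|\le \|\ridgeproj_{L,\lambda}\ket\psi\|+\sqrt{\varepsilon_p}\le 1+\sqrt{\varepsilon_p}$, by Fact~\ref{fact:ridge_proj_contraction}. That rescaling only costs additional $O(\varepsilon)$ terms, but I would avoid it and keep the unnormalized $\ket{\phi}$.
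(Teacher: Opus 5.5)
Your proof is correct and is the intended argument. The paper omits the proof, deferring to \cite[Section~7.1]{ad2025structure}, but its simple proof of Theorem~\ref{thm:tomo_extent} uses the lemma exactly as you read it: $\ket{\phi}$ is the unnormalized structured part output by Theorem~\ref{thm:learning_decompositions} at accuracy $\varepsilon^2/\xi^2$, and the extent-weighted triangle inequality followed by $|a|^2-|b|^2=(|a|-|b|)(|a|+|b|)$ is all that is needed.

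One caution about your closing aside. Rescaling by $1/\norm{\ket{\phi}}$ is not an $O(\varepsilon)$ perturbation in general, because you only have an upper bound on $\norm{\ket{\phi}}$. When $\ket{\psi}$ has small $\calC$-fidelity, the structured part can be short (e.g.\ a single learned $\ket{\phi_1}$ with small overlap), and after normalization $\ket{\sigma}=\ket{\phi_1}\in\calC(\xi)$ violates $(i)$. So keeping $\ket{\phi}$ unnormalized is necessary, not merely convenient.
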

We skip the proof of this lemma and refer a reader to~\cite[Section~7.1]{ad2025structure}.

\subsection{Boosting weak agnostic learners}
In this section, we show that the algorithmic decomposition theorem (Theorem~\ref{thm:learning_decompositions}) can be used to boost weak agnostic learners of $\calC$ into a strong agnostic learner. We thus have the following result regarding improper agnostic learning of states in $\calC$, restated below for convenience.

\agnboosting*

\begin{proof}
We will use the algorithm of Theorem~\ref{thm:learning_decompositions} with a small modification in the end. Towards that, let $\varepsilon_s, \lambda \in (0,1)$ be parameters to be fixed later and run Algorithm~\ref{algo:structure_learning} of Theorem~\ref{thm:structure_learning} with these parameter values. It produces a list $L = \{\ket{\phi_i}\}_{i \in [\kappa]}$ of $\kappa = O(1/(\varepsilon_s \eta(\varepsilon_s))$ many states in $\C$ such that $\ket{\psi}$ can be expressed as
\begin{equation}\label{eq:interim_decomp_psi}
\ket{\psi} = \ridgeproj_{L,\lambda} \ket{\psi} + \alpha_{\kappa + 1}\ket{\psi_{\kappa+1}}, \text{ where } |\alpha_{\kappa + 1}|^2\cdot \calF_{\calC}(\ket{\psi_{\kappa + 1}}) \leq \max\{(3/2) \varepsilon_s, \, \eta^{-1}(3 \eta(\varepsilon_s)/2)\}.    
\end{equation}
We also have the promise that $\norm{\ridgeproj_{L,\lambda} \ket{\psi}} \geq \varepsilon_s \eta(\varepsilon_s)/(4(1+\lambda))$ (Theorem~\ref{thm:structure_learning}) as $\kappa \geq 1$ here as we are guaranteed to run at least one iteration in Algorithm~\ref{algo:structure_learning} here as long as $\varepsilon_s < \opt$. Let us denote $\varepsilon_p = \max\{(3/2) \varepsilon_s, \, \eta^{-1}(3 \eta(\varepsilon_s)/2)\}$ and $\uptau = \varepsilon_s \eta(\varepsilon_s)/(4(1+\lambda))$. 

We now argue that $\ridgeproj_{L,\lambda}\ket{\psi}$ achieves near-maximal fidelity that states in $\C$ would have with $\ket{\psi}$. Let $\ket{\varphi} \in \C$ be the state in $\C$ that achieves maximal fidelity with $\ket{\psi}$ i.e., $|\langle \varphi| \psi \rangle|^2 = \opt = \calF_{\C}(\ket{\psi})$. Starting from Eq.~\eqref{eq:interim_decomp_psi}, we  have
\begin{align*}
    |\langle \varphi| \psi \rangle| \leq |\langle \varphi|(\ridgeproj_{L,\lambda} \ket{\psi})| + |\alpha_{\kappa+1}| \cdot | \langle \varphi | \psi_{\kappa + 1} \rangle| \leq |\langle \varphi|(\ridgeproj_{L,\lambda} \ket{\psi})| + \sqrt{\varepsilon_p} \implies |\langle \varphi| \psi \rangle| - |\langle \varphi|(\ridgeproj_{L,\lambda} \ket{\psi})| \leq \sqrt{\varepsilon_p},
\end{align*}
where we used $|\alpha_{\kappa+1}|\cdot | \langle \varphi | \psi_{\kappa + 1} \rangle| \leq |\alpha_{\kappa+1}| \cdot \sqrt{\calF_{\C}(\ket{\psi_{\kappa+1}})} \leq \sqrt{\varepsilon_p}$. We can then show
\begin{align}
& |\langle \varphi| \psi \rangle|^2 - |\langle \varphi|(\ridgeproj_{L,\lambda} \ket{\psi})|^2 = \Big(|\langle \varphi| \psi \rangle| + |\langle \varphi|(\ridgeproj_{L,\lambda} \ket{\psi})|\Big)\Big(|\langle \varphi| \psi \rangle| - |\langle \varphi|(\ridgeproj_{L,\lambda} \ket{\psi})|\Big) \leq 2 \sqrt{\varepsilon_p} \\
\implies & |\langle \varphi|(\ridgeproj_{L,\lambda} \ket{\psi})|^2 \geq |\langle \varphi| \psi \rangle|^2 - 2 \sqrt{\varepsilon_p} = \opt - 2 \sqrt{\varepsilon_p},
\label{eq:fidelity_promise_projection}
\end{align}
where we have used $|\langle \varphi| \psi \rangle|, |\langle \varphi|(\Lambda_T \ket{\psi})| \leq 1$ and $|\langle \varphi| \psi \rangle|^2 = \opt$ in the final implication. Consequentially, by Cauchy-Scharwz, we have that
\begin{equation}\label{eq:inter_high_norm_proj_state}
\opt - 2 \sqrt{\varepsilon_p} \leq |\langle \varphi|(\ridgeproj_{L,\lambda} \ket{\psi})|^2 \leq \norm{\ket{\phi}}^2 \cdot \norm{\ridgeproj_{L,\lambda} \ket{\psi}}^2 \leq \norm{\ridgeproj_{L,\lambda} \ket{\psi}}^2,
\end{equation}
where we have used that $\ket{\phi}$ is a normalized quantum state.

To solve the task of agnostic learning, we define the quantum state $\ket{\phi}=\ridgeproj_{L,\lambda} \ket{\psi}/\|\ridgeproj_{L,\lambda} \ket{\psi}\|$ and observe that
\begin{equation}\label{eq:phi_solves_agn_learning}
|\langle \psi|\phi\rangle|^2 = \frac{|\la \psi | \ridgeproj_{L,\lambda} | \psi \ra|^2 }{\norm{\ridgeproj_{L,\lambda}\ket{\psi}}^2} = \frac{|\la \psi | \ridgeproj_{L,\lambda} | \psi \ra|^2 }{\la \psi | ({\ridgeproj_{L,\lambda}})^2 | \psi \ra} \geq \la \psi | ({\ridgeproj_{L,\lambda}})^2 | \psi \ra = \norm{\ridgeproj_{L,\lambda}\ket{\psi}}^2 \geq \opt - 2 \sqrt{\varepsilon_p},
\end{equation}
where the first equality used the definition of $\ket{\phi}$, third equality used that $\ridgeproj_{L,\lambda}$ is a contraction (Fact~\ref{fact:ridge_proj_contraction}) and hence $\la \psi | {\ridgeproj_{L,\lambda}}^2 | \psi \ra \leq \la \psi | {\ridgeproj_{L,\lambda}} | \psi \ra$, and also noted that $\ridgeproj_{L,\lambda}$ is a Hermitian semipositive definite operator which gives us $|\la \psi | \ridgeproj_{L,\lambda} | \psi \ra| = \la \psi | \ridgeproj_{L,\lambda} | \psi \ra$. In the last equality, we used Eq.~\eqref{eq:inter_high_norm_proj_state}.

The goal is to then learn $\ket{\phi}$. Recall that so far, we have only learned the list $L$ of states in $\C$ such that $\ket{\phi}$ has the promise of Eq.~\eqref{eq:phi_solves_agn_learning}. Note that $\ridgeproj_{L,\lambda} \ket{\psi} = \sum_{i=1}^\kappa \beta_i \ket{\phi_i}$, where  $\ket{\phi_i}\in L$ and $\{\beta_i\}_{i \in [\kappa]}$ is the solution to the ridge regression problem of Eq.~\eqref{eq:ridge_regression}. From Claim~\ref{claim:norm_projected_state}, we have the promise that $\norm{\vec{\beta}}^2 \geq \kappa \uptau/(\kappa + \lambda)$. Let $\gamma^2 = \kappa \uptau/(\kappa + \lambda)$ and $\varepsilon_1 \in (0,1)$ be an error parameter to be decided later. We use Lemma~\ref{lem:estimate_projection_psi} (which is essentially Algorithm~\ref{algo:parameter_learning} of Theorem~\ref{thm:parameter_learning}) with error parameter $\varepsilon_1$ (i.e., $\varepsilon$ there is instantiated as $\varepsilon_1$) to determine a list of coefficients $\{\widehat{\beta}_i\}_{i \in [\kappa]}$ such that for some $\theta \in (0,\pi]$
\begin{align}
\label{eq:interim_conditionon_phi-psi}
\norm{\sum_{i=1}^\kappa \widehat{\beta_i} \ket{\phi_i} - e^{i\theta} \ridgeproj_{L,\lambda} \ket{\psi}}_2 \leq \varepsilon_1.
\end{align}
It then follows that
\begin{align}
\left|\langle \psi \Big| \Big(\sum_{i=1}^k \widehat{\beta}_i \ket{\phi_i} \Big) \right| 
&= \left|\langle \psi \Big| \Big(e^{i\theta} \ridgeproj_{L,\lambda} \ket{\psi} \Big) + \la \psi | \Big(\sum_{i=1}^k \widehat{\beta}_i \ket{\phi_i} - e^{i\theta} \ridgeproj_{L,\lambda} \ket{\psi} \Big) \right| \nonumber \\   
&\geq \left|\langle \psi \Big| \Big(e^{i\theta} \ridgeproj_{L,\lambda} \ket{\psi} \Big)\right| - \left|\la \psi | \Big(\sum_{i=1}^k \widehat{\beta}_i \ket{\phi_i} - e^{i\theta} \ridgeproj_{L,\lambda} \ket{\psi} \Big) \right| \nonumber \\
&\geq |\langle \psi | \ridgeproj_{L,\lambda} | \psi \rangle| - \varepsilon_1,
\label{eq:interim_fidelity}
\end{align}
where used Eq.~\eqref{eq:interim_conditionon_phi-psi} in the last line. 
Additionally,  we are guaranteed from Eq.~\eqref{eq:interim_conditionon_phi-psi} that
\begin{align}\label{eq:ub_L2_norm_beta}
\alpha:=\norm{\sum_{i=1}^\kappa \widehat{\beta_i} \ket{\phi_i}}_2 \leq \norm{\ridgeproj_{L,\lambda} \ket{\psi}}_2 + \varepsilon_1.
\end{align}
Consider the state $\ket{\widehat{\phi}}$ defined as
\begin{equation*}
\ket{\widehat{\phi}} = \frac{1}{\alpha}\sum_{i=1}^\kappa \widehat{\beta}_i \ket{\phi_i}.
\end{equation*}
Note that $\ket{\widehat{\phi}}$ is a valid normalized state as $\norm{\ket{\widehat{\phi}}} = 1$. The state $\ket{\widehat{\phi}}$ is thus an approximation of $\ket{\phi}$, which we wanted to learn. We now argue that $\ket{\widehat{\phi}}$ solves the problem of agnostic learning. Putting together the lower bound (Eq.~\eqref{eq:interim_fidelity}) and upper bound (Eq.~\eqref{eq:ub_L2_norm_beta}) from above, we have that
\begin{align}
|\langle \psi | \widehat{\phi} \rangle|^2 = \frac{\left|\langle \psi \Big| \Big(\sum_{i=1}^k \widehat{\beta}_i \ket{\phi_i} \Big) \right|^2}{\alpha^2} 
&\geq \frac{|\langle \psi |\ridgeproj_{L,\lambda} |\psi \rangle|^2 - 2\varepsilon_1}{\norm{\ridgeproj_{L,\lambda} \ket{\psi}}^2 + 3\varepsilon_1} \\
&= \frac{|\langle \psi |\ridgeproj_{L,\lambda} |\psi \rangle|^2 - 2 \varepsilon_1}{\norm{\ridgeproj_{L,\lambda} \ket{\psi}}_2^2\Big(1 + \frac{3 \varepsilon_1}{\norm{\ridgeproj_{L,\lambda} \ket{\psi}}_2^2}\Big)} \\
&\geq \frac{|\langle \psi |\ridgeproj_{L,\lambda} |\psi \rangle|^2 - 2 \varepsilon_1}{\norm{\ridgeproj_{L,\lambda} \ket{\psi}}_2^2} \left(1 - \frac{3 \varepsilon_1}{\norm{\ridgeproj_{L,\lambda} \ket{\psi}}_2^2}\right) \\
&\geq \frac{|\langle \psi |\ridgeproj_{L,\lambda} |\psi \rangle|^2}{\norm{\ridgeproj_{L,\lambda} \ket{\psi}}_2^2} - \frac{2 \varepsilon_1}{\norm{\ridgeproj_{L,\lambda} \ket{\psi}}_2^2} - \frac{3 \varepsilon_1 |\langle \psi |\ridgeproj_{L,\lambda} |\psi \rangle|^2}{\norm{\ridgeproj_{L,\lambda} \ket{\psi}}_2^4} \\
&\geq |\la \psi | \phi \ra|^2 - \frac{2 \varepsilon_1}{\uptau^2} - \frac{3 \varepsilon_1 |\la \psi | \phi \ra|^2}{\uptau^2} \\
&\geq \opt - \varepsilon_p - \frac{5 \varepsilon_1}{\uptau^2},
\end{align}
where we used Eq.~\eqref{eq:interim_fidelity} and Eq.~\eqref{eq:ub_L2_norm_beta} in the second inequality in the first line, the fact that $1/(1+x) \geq 1 - x, \,\,\forall x \geq 0$ in the third line, used the definition of $\ket{\phi} := \ridgeproj_{L,\lambda} \ket{\psi}/\norm{\ridgeproj_{L,\lambda} \ket{\psi}}$ and $\norm{\ridgeproj_{L,\lambda} \ket{\psi}} \geq \uptau$ in the fifth line, where $\uptau = \varepsilon_s \eta(\varepsilon_s)/(4(\lambda + 1))$ from the promise of structure learning (in  Theorem~\ref{thm:structure_learning})\footnote{Here, we use the lower bound of $\norm{\ridgeproj_{L,\lambda} \ket{\psi}} \geq \uptau$ instead of that from Eq.~\eqref{eq:inter_high_norm_proj_state} as $\uptau$ is known to us whereas $\opt$~is~unknown.}. Setting $\varepsilon_1 = \varepsilon_p \uptau^2 /5$ and $\varepsilon_p = \varepsilon/2$ gives us the desired result of
$$
|\la \psi | \widehat{\phi} \ra|^2 \geq \opt - \varepsilon.
$$
This in turn requires setting $\varepsilon_s$ such that $\varepsilon_p = \max\{(3/2) \varepsilon_s, \, \eta^{-1}(3 \eta(\varepsilon_s)/2)\} = \varepsilon/2$. We thus need to pick $\varepsilon_s$ such that $\varepsilon_s \leq \min\{\varepsilon/3, \eta^{-1}(2/3 \eta(\varepsilon/2))\}$. This will depend on the function $\eta(\cdot)$ itself.

The sample and time complexity of using Algorithm~\ref{algo:structure_learning} is $\widetilde{O}\Big(S_{\wal} \cdot \poly(1/\varepsilon,1/\eta(\varepsilon),\log 1/\delta) \Big)$ and $\widetilde{O}\Big(T_{\wal} \cdot \poly( 1/\varepsilon,1/\eta(\varepsilon),\log 1/\delta)) +  \poly(G_{\prep},T_{\prep}, 1/\varepsilon,1/\eta(\varepsilon)\Big)$ from Theorem~\ref{thm:structure_learning}, respectively. The sample and time complexity of using Lemma~\ref{lem:estimate_projection_psi} is $\widetilde{O}\left(1/(\varepsilon^{11} \eta(\varepsilon/3)^9) \log(1/\delta) \right)$ and $\widetilde{O}\left(1/(\varepsilon^{19} \eta(\varepsilon/3)^{15}) \log(1/\delta) \right)$, respectively, where the relevant error parameter is $\varepsilon_1 = \varepsilon_p \uptau^2/5 = O(\varepsilon^3 \eta(\varepsilon/3)^2)$ (Eq.~\eqref{eq:interim_conditionon_phi-psi}), $\gamma^2 = \kappa \uptau/(\kappa + \lambda)$ and $\kappa = O(1/\varepsilon\eta(\varepsilon/3))$. The overall sample and time complexity is then as stated. This completes the proof.
\end{proof}

\subsection{Tomography of  states with bounded extent}\label{sec:ST_extent}
In this section, we state our main result of this paper. We give a tomography protocol for quantum states which have bounded extent with respect to model class $\calC$. In particular, we will prove the following main theorem.
\begin{theorem}
\label{thm:tomo_extent}
Let $\C$ be a model class as in Definition~\ref{def:model_class}. Let $\ket{\psi} \in \C(\xi)$. One can learn $\ket{\psi}$ up to trace distance $\varepsilon$ using  $\widetilde{O}\Big(S_{\wal} \cdot \poly(\xi, 1/\varepsilon, 1/\uptau,\log 1/\delta) \Big)$ copies of $\ket{\psi}$ and time complexity
$$
\widetilde{O}\Big(T_{\wal} \cdot \poly(\xi, 1/\varepsilon,1/\uptau,\log 1/\delta)) +  S_{\wal} \cdot \poly(G_{\prep},T_{\prep}, \xi, 1/\varepsilon, 1/\uptau)\Big),
$$
where $S_{\wal}, T_{\wal},T_{\prep},G_{\prep}$ are parameters of the model class $\C$.
\end{theorem}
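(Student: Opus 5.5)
We will prove Theorem~\ref{thm:tomo_extent} by running the algorithmic decomposition (Theorem~\ref{thm:learning_decompositions}) with a finer error parameter, and then renormalizing the learned superposition as in the boosting argument (Theorem~\ref{thm:boostingagnlearning}).

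Fix a decomposition $\ket{\psi}=\sum_j c_j\ket{\sigma_j}$ with $\ket{\sigma_j}\in\C$ and $\sum_j|c_j|\le\xi$. Run Theorem~\ref{thm:structure_learning} with $\varepsilon_s$ chosen so that $\varepsilon_p=\max\{(3/2)\varepsilon_s,\eta^{-1}(3\eta(\varepsilon_s)/2)\}\le \varepsilon'^2/\xi^2$, for some $\varepsilon'=\Theta(\varepsilon^2)$ to be fixed. This gives a list $L$ of $\kappa=\poly(\xi/\varepsilon,1/\uptau)$ states, where $\uptau=\eta(\varepsilon_s)$ is the quantity appearing in the statement. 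The output satisfies $\ket{\psi}=\ridgeproj_{L,\lambda}\ket{\psi}+\alpha_{\kappa+1}\ket{\psi_{\kappa+1}}$ with $|\langle\sigma|(\id-\ridgeproj_{L,\lambda})|\psi\rangle|\le\sqrt{\varepsilon_p}$ for every $\ket{\sigma}\in\C$.

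The key step is to linearize over the extent decomposition. We have
$$1-\langle\psi|\ridgeproj_{L,\lambda}|\psi\rangle=\langle\psi|(\id-\ridgeproj_{L,\lambda})|\psi\rangle=\sum_j\overline{c_j}\langle\sigma_j|(\id-\ridgeproj_{L,\lambda})|\psi\rangle,$$
so its absolute value is at most $\xi\sqrt{\varepsilon_p}\le\varepsilon'$. By Fact~\ref{fact:ridge_proj_contraction}, this gives $1\ge\langle\psi|\ridgeproj_{L,\lambda}|\psi\rangle\ge 1-\varepsilon'$ and $\|\ridgeproj_{L,\lambda}\ket{\psi}\|^2\le\langle\psi|\ridgeproj_{L,\lambda}|\psi\rangle\le1$. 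In exactly the same way as Eq.~\eqref{eq:phi_solves_agn_learning}, the normalized state $\ket{\phi}=\ridgeproj_{L,\lambda}\ket{\psi}/\|\ridgeproj_{L,\lambda}\ket{\psi}\|$ then satisfies $|\langle\psi|\phi\rangle|^2\ge\langle\psi|\ridgeproj_{L,\lambda}|\psi\rangle^2/\langle\psi|\ridgeproj_{L,\lambda}|\psi\rangle\ge1-\varepsilon'$. Note also that $\|\ridgeproj_{L,\lambda}\ket{\psi}\|\ge\langle\psi|\ridgeproj_{L,\lambda}|\psi\rangle\ge1/2$. This gives a constant lower bound on the norm, which is much better than the $\uptau$ bound used in the boosting proof. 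Claim~\ref{claim:norm_projected_state} then gives $\|\vec\beta\|^2\ge 1/(2(\kappa+\lambda))$, so we may take $\gamma^2=1/(2(\kappa+\lambda))$.

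Next, apply Lemma~\ref{lem:estimate_projection_psi} with error $\varepsilon_1$ to obtain $\widehat\beta$ such that $\|\sum_i\widehat\beta_i\ket{\phi_i}-e^{i\theta}\ridgeproj_{L,\lambda}\ket{\psi}\|\le\varepsilon_1$. Set $\ket{\widehat\phi}=\sum_i\widehat\beta_i\ket{\phi_i}/\alpha$, where $\alpha$ is the norm of the sum. Because the norm of $\ridgeproj_{L,\lambda}\ket{\psi}$ is at least $1/2$, normalizing costs only an $O(\varepsilon_1)$ perturbation, so $\|\ket{\widehat\phi}-e^{i\theta}\ket{\phi}\|\le4\varepsilon_1$. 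Hence $|\langle\psi|\widehat\phi\rangle|^2\ge1-\varepsilon'-O(\varepsilon_1)$.

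Since both states are pure, the trace distance is $\sqrt{1-|\langle\psi|\widehat\phi\rangle|^2}$. Choosing $\varepsilon'=\varepsilon^2/2$ and $\varepsilon_1=c\varepsilon^2$ makes this at most $\varepsilon$. The algorithm outputs $\ket{\widehat\phi}$ classically, as the list $L$ together with the coefficients $\widehat\beta/\alpha$. The normalization $\alpha$ can be computed classically from the Gram matrix, which is computable from the succinct descriptions in $\C$; alternatively it can be estimated with a swap-test style procedure.

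For the complexity, we plug $\varepsilon_s$ (polynomially related to $\varepsilon/\xi$ through $\eta$), $\gamma^{-2}=O(\kappa)$ and $\varepsilon_1=\Theta(\varepsilon^2)$ into Theorem~\ref{thm:structure_learning} and Lemma~\ref{lem:estimate_projection_psi}. This gives the stated $\poly(\xi,1/\varepsilon,1/\uptau,\log1/\delta)$ overhead over $S_{\wal}$ and $T_{\wal}$, and the state-preparation overhead is polynomial in $G_{\prep},T_{\prep}$. We take a union bound over the two stages for the failure probability $\delta$.

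The main obstacle is making sure the inner-product bound for the residual transfers from $\C$ to the whole extent ball, and that it controls the \emph{normalized} hypothesis. The linearity argument above handles the first part. The second part needs $\|\ridgeproj_{L,\lambda}\ket{\psi}\|$ to be bounded below by a constant, which we obtain from the same computation rather than from the weaker $\uptau$ bound. If the $\eta^{-1}$ term in $\varepsilon_p$ cannot be made small enough, the fallback is to require that $\eta$ be invertible and monotone, as in the paper's footnote, and to choose $\varepsilon_s$ accordingly.
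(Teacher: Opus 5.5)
Your proof is correct, and it reorganizes the paper's argument in a useful way. The paper gives two proofs. The ``simple'' one invokes the mimicking-state Lemma~\ref{cor:low_extent_inner_prod} as a black box (its proof is deferred to prior work) with $\ket{\sigma}=\ket{\psi}$, and then bounds $\|\ket{\phi}\|\le 1+\varepsilon'$. The alternate one uses Claim~\ref{claim:stop_cond_extent}, which says a large residual norm forces $\calF_{\calC}(\ket{\psi_t})\ge\alpha_t^2/\xi^2$, to drop the fidelity test from Algorithm~\ref{algo:structure_learning}; the loop can then only stop once $\alpha_{\kappa+1}^2<(3/2)\varepsilon_s$. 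You instead run the unmodified structure learner and apply the same extent-linearity inequality after the fact, $\langle\psi|(\id-\ridgeproj_{L,\lambda})|\psi\rangle\le\xi\sqrt{\varepsilon_p}$. This works whichever stopping condition fired, and it is exactly the special case of the mimicking lemma that is needed, proved in one line on the exact ridge projection. Combined with ${\ridgeproj_{L,\lambda}}^2\preceq\ridgeproj_{L,\lambda}$, it gives both the fidelity of the normalized projection and a constant lower bound $\|\ridgeproj_{L,\lambda}\ket{\psi}\|\ge 1/2$. Hence $\gamma^2=\Omega(1/\kappa)$ in Lemma~\ref{lem:estimate_projection_psi}, and the renormalization after coefficient learning costs only $O(\varepsilon_1)$. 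You also make explicit the infidelity-to-trace-distance conversion ($\varepsilon'=\varepsilon^2/2$), which the paper leaves implicit. What the paper's alternate route buys is a somewhat shorter dictionary: its norm threshold $\varepsilon_s$ is not divided by $\xi^2$, giving $O(1/(\varepsilon_s\eta(\varepsilon_s/(2\xi^2))))$ iterations versus your $O(1/(\varepsilon_s\eta(\varepsilon_s)))$ with $\varepsilon_s\lesssim\varepsilon'^2/\xi^2$. That is a polynomial-factor difference, irrelevant to the stated bounds. One small caveat: Definition~\ref{def:model_class} does not guarantee you can compute $\alpha$ classically from the Gram matrix for a general class. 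Your estimation alternative suffices, as does simply outputting $(L,\widehat\beta)$ with implicit normalization, which is what the paper does.
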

We give two alternate proofs for the above theorem. The first proof directly follows from Lemma~\ref{cor:low_extent_inner_prod}. The second proof is meant to be more instructive and showcase how the algorithmic decomposition result of Theorem~\ref{thm:learning_decompositions} specializes to when the input unknown state $\ket{\psi}$ is promised to have bounded extent with respect to $\C$.

\paragraph{A simple proof.}
\begin{proof}
We instantiate Lemma~\ref{cor:low_extent_inner_prod} with error parameter $\varepsilon' \in (0,1)$ to be fixed later: this yields an algorithm that given copies of $\ket{\psi}$, outputs an unnormalized state $\ket{\phi}$ such that $$\Big| |\langle \sigma|\psi\rangle| - |\langle \sigma|\phi  \ra| \Big| \leq \varepsilon'$$ for every  $\ket{\sigma} \in \calC(\xi)$. Now, since $\ket{\psi}\in \C(\xi)$, we  have that
\begin{equation}
    \varepsilon' \geq \left|\langle \psi|\psi\rangle - \langle \psi|\phi\ra\right|=\left|1 - \langle \psi|\phi\ra\right| \geq \left| 1 - |\la \psi | \phi \ra| \right| \geq 1 - |\la \psi | \phi \ra|,
\end{equation}
where the second inequality used reverse triangle inequality. The above now implies $ |\la \psi | \phi \ra|^2 \geq (1-\varepsilon')^2 \geq 1 - 2\varepsilon'$. 
Furthermore, this implies that the norm of $\ket{\phi}$ can be lower bounded as
\begin{align}
\label{eq:normofpsitilde}
1 - 2 \varepsilon'\leq |\la \psi | \phi \ra|^2\leq \norm{\ket{\phi}}^2 \cdot \norm{\ket{\psi}}^2  =
\norm{\ket{\phi}}^2,
\end{align}
where the second inequality follows from Cauchy-Schwartz. Also, the norm of $\ket{\phi}$ can be bounded from above as $\|\ket{\phi}\|\leq 1+\varepsilon'$ using Theorem~\ref{thm:parameter_learning}. 
Consider the state $\ket{\phi'} = \ket{\phi}/\norm{\ket{\phi}}$ (which is known to the algorithm since $\ket{\phi}$ is produced by the algorithm explicitly). We then have
$$
|\la \phi' | \psi \ra|^2 = \frac{|\la \phi | \psi \ra|^2}{\norm{\ket{\phi}}^2} \geq \frac{1 - 2\varepsilon'}{1 + \varepsilon'} \geq 1 - 3\varepsilon'.
$$
Setting $\varepsilon' = \varepsilon/3$ gives us the desired result. The sample and time complexity then follows from Lemma~\ref{cor:low_extent_inner_prod}. This completes the proof.
\end{proof}

\paragraph{An alternate proof.}
For the second proof of Theorem~\ref{thm:tomo_extent}, let us inspect how the algorithm of Theorem~\ref{thm:learning_decompositions} operates when the input state is promised to have extent at most $\xi$ with respect to $\C$ (Eq.~\eqref{def:extent_C}). Recall that the first stage of the algorithm involves structure learning (Algorithm~\ref{algo:structure_learning}). We  stop at the end of the $\kappa$th iteration as part of structure learning (see steps \ref{algo_step:est_fidelity} and \ref{algo_step:step_est_alphat}, Algorithm~\ref{algo:structure_learning}) if 
\begin{equation}
\alpha_{\kappa+1}^2 := \norm{(\mathbb{I} - \Pi_{\kappa,\lambda})\ket{\psi}}_2^2 < \varepsilon_s, \enspace \text{or} \enspace \calF_{\calS}(\ket{\psi_{\kappa+1}}) < \varepsilon_s,    
\end{equation}
where the residual state is $\ket{\psi_{\kappa+1}} = (\mathbb{I} - \ridgeproj_{\kappa,\lambda})\ket{\psi}/\alpha_{\kappa+1}$ and $\varepsilon_s$ is an user-defined input error parameter to structure learning (see Theorem~\ref{thm:structure_learning}). 

However, we have not yet exploited the fact that the unknown input state $\ket{\psi}$ is promised to be have extent $\leq \xi$. In particular, we have the following claim.
\begin{claim}\label{claim:stop_cond_extent}
Let $\varepsilon \in (0,1)$. Consider the context of Theorem~\ref{thm:tomo_extent}. Suppose we apply Algorithm~\ref{algo:structure_learning} of Theorem~\ref{thm:structure_learning}) to $\ket{\psi}$ then the following is true. If $|\alpha_t|^2 \geq \varepsilon$, then $\calF_{\calS}(\ket{\psi_t}) \geq \varepsilon/\xi^2$.
\end{claim}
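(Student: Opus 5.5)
Write $\ket{\psi}=\sum_j c_j\ket{s_j}$ with $\ket{s_j}\in\calC$ and $\sum_j|c_j|\le\xi$. The plan is to bound $\alpha_t^2$ from above by the fidelity of the residual with $\calC$, using the self-overlap of the residual vector and the fact that $\id-\ridgeproj_{t-1,\lambda}$ is a Hermitian contraction. Recall $\Psi_t=(\id-\ridgeproj_{t-1,\lambda})\ket{\psi}=\alpha_t\ket{\psi_t}$. By Fact~\ref{fact:ridge_proj_contraction}, $0\preceq \ridgeproj_{t-1,\lambda}\preceq\id$, so $M:=\id-\ridgeproj_{t-1,\lambda}$ satisfies $0\preceq M\preceq\id$ and hence $M^2\preceq M$.

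\textbf{Step 1: lower bound on the overlap with $\ket{\psi}$.} We have
$$\alpha_t^2=\la\psi|M^2|\psi\ra\le\la\psi|M|\psi\ra=\la\psi|\Psi_t\ra=\alpha_t\la\psi|\psi_t\ra,$$
so $\la\psi|\psi_t\ra$ is real and $\la \psi|\psi_t\ra\ge\alpha_t$.

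\textbf{Step 2: expand $\ket{\psi}$ using the extent decomposition.} Substituting the decomposition gives
$$\alpha_t\le\Big|\sum_j\overline{c_j}\la s_j|\psi_t\ra\Big|\le\sum_j|c_j|\,|\la s_j|\psi_t\ra|\le\xi\sqrt{\calF_{\calC}(\ket{\psi_t})}.$$
Squaring yields $\calF_{\calC}(\ket{\psi_t})\ge\alpha_t^2/\xi^2\ge\varepsilon/\xi^2$ whenever $|\alpha_t|^2\ge\varepsilon$, which is the claim. (This is the same as the paper's $\calF_{\calS}$, read as fidelity with respect to $\calC$.)

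\textbf{Main obstacle.} The only subtle point is Step 1. The ridge projector is not an orthogonal projector for $\lambda>0$, so one cannot simply write $\la\psi|\Psi_t\ra=\|\Psi_t\|^2$. Instead one must use the operator inequality $M^2\preceq M$ to get the one-sided bound $\la\psi|\Psi_t\ra\ge\alpha_t^2$, and this one-sided bound is exactly the direction needed. If the extent infimum is not attained, one applies the argument to decompositions with $\sum|c_j|\le\xi+\epsilon'$ and lets $\epsilon'\to0$.
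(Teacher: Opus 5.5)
Your proof is correct and follows essentially the same route as the paper. Like the paper, you use $(\id-\ridgeproj_{t-1,\lambda})^2\preceq \id-\ridgeproj_{t-1,\lambda}$ (from Fact~\ref{fact:ridge_proj_contraction}) to get $\alpha_t\le|\la\psi|\psi_t\ra|$, then expand $\ket{\psi}$ in its $\calC$-decomposition and apply the triangle inequality to bound this by $\xi\sqrt{\calF_{\calC}(\ket{\psi_t})}$. Your extra remarks, that $\la\psi|\psi_t\ra$ is real and nonnegative and that a limiting argument handles an unattained extent, are harmless refinements the paper leaves implicit.
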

\begin{proof}
Suppose $\ket{\psi}$ has the decomposition $\ket{\psi} = \sum_{i=1}^m c_i \ket{\phi_i}$ for some $m\in \mathbb{N}$ and where $\ket{\phi_i} \in \C, \forall i \in [m]$ with $c_i$ as its corresponding coefficients. We are promised that $\norm{c}_1 \leq \xi$. At the beginning of the $t$th iteration, we have the following decomposition of $\ket{\psi}$:
$$
\ket{\psi} = \ridgeproj_{t-1,\lambda} \ket{\psi} + \alpha_t \ket{\psi_t},
$$
where $\alpha_t = \norm{(\id - \ridgeproj_{t-1,\lambda})\ket{\psi}}$ and $\ket{\psi_t} = (\id - \ridgeproj_{t-1,\lambda})\ket{\psi}/\alpha_t$. From the definition of $\alpha_t$, we have
$$
\alpha_t^2 = \norm{(\id - \ridgeproj_{t-1,\lambda})\ket{\psi}}^2 = \la \psi |(\id - \ridgeproj_{t-1,\lambda})^2| \psi \ra \leq \la \psi | (\id - \ridgeproj_{t-1,\lambda}) | \psi \ra = \alpha_t |\la \psi | \psi_t \ra| \implies \alpha_t \leq |\la \psi| \psi_t \ra|
$$
where we have used that $(\id - \ridgeproj_{t-1,\lambda})$ is Hermitian in the second equality, that $0 \preceq (\id - \ridgeproj_{t-1,\lambda}) \preceq 1$ (for $\lambda \geq 0$) from Fact~\ref{fact:ridge_proj_contraction} in the third inequality and finally the definition of $\ket{\psi_t}$. Using the above bound on $\alpha_t$, we then have that
\begin{align}
& |\alpha_t| \leq |\la \psi_t | \psi \ra| = \Big| \sum_{i=1}^m c_i \la \psi_t | \phi_i \ra \Big| \leq \sum_{i=1}^m |c_i| \cdot |\la \psi_t | \phi_i \ra| \leq \sum_{i=1}^m |c_i| \cdot \sqrt{\calF_{\calC}(\ket{\psi_t}} \leq \xi(\ket{\psi}) \cdot \sqrt{\calF_{\calC}(\ket{\psi_t}} \\
& \implies \calF_{\calC}(\ket{\psi_t} \geq \frac{\alpha_t^2}{\xi(\ket{\psi})^2} \geq \frac{\varepsilon}{\xi^2},
\end{align}
where we have used the decomposition of $\ket{\psi}$ in the second equality, applied triangle inequality in the third inequality, noted that $|\la \psi_t | \phi \ra| \leq \sqrt{\calF_{\calC}(\ket{\psi_t}}, \forall \ket{\phi} \in \C$ by definition in the fourth inequality, and used $\xi(\ket{\psi}) = \sum_{i=1}^m |c_i|$ by definition of stabilizer extent. The implication follows from the promise that $\alpha_t^2 \geq \varepsilon$ and $\xi(\ket{\psi}) \leq \xi$. This completes the proof of the claim.
\end{proof}

We can thus simplify Algorithm~\ref{algo:structure_learning} when the unknown state $\ket{\psi}$ is promised to have extent at most $\xi$ with respect to $\C$. Until we accomplish learning a list $L = \{\ket{\phi_i}\}_{i \in [\kappa]}$ for which $\ridgeproj_{L,\lambda}\ket{\psi}$ solves the task of state tomography, we will learn a state $\ket{\phi} \in \C$ that has high fidelity with the residual state. This is formally described in the proof below.
\begin{proof}[Proof of Theorem~\ref{thm:tomo_extent}]
We will use the algorithm of Theorem~\ref{thm:learning_decompositions} with a small modification in the beginning. Towards that, let $\varepsilon_s, \lambda \in (0,1)$ be parameters to be fixed later and run Algorithm~\ref{algo:structure_learning} of Theorem~\ref{thm:structure_learning} with these parameter values. Suppose we are at the end of the $t$th iteration of Algorithm~\ref{algo:structure_learning}. We observe that by Claim~\ref{claim:stop_cond_extent}, if $\alpha_{t+1}^2 \geq \varepsilon_s$, then $\calF_{\calC}(\ket{\psi_{t+1}}) \geq \varepsilon/\xi^2$. As we estimate $\alpha_{t+1}$ up to error $\varepsilon_s/2$, we are guaranteed to learn a state $\ket{\phi_{t+1}} \in \C$ such that
$$
|\la \phi_{t+1} | \psi_{t+1} \ra|^2 \geq \eta(\varepsilon_s/(2\xi^2)).
$$
Thus, we no longer need to check the fidelity of the residual state $\ket{\psi_{t+1}}$ with $\C$ and step~\ref{algo_step:est_fidelity} can be skipped in Algorithm~\ref{algo:structure_learning}. After $\kappa = O((1+\lambda)/(\varepsilon_s \eta(\varepsilon_s/(2 \xi^2)))$ many iterations (Claim~\ref{claim:ub_kappa}), Algorithm~\ref{algo:structure_learning} produces a list $L = \{\ket{\phi_i}\}_{i \in [\kappa]}$ of states in $\C$ such that $\ket{\psi}$ can be expressed as
\begin{equation}
\ket{\psi} = \ridgeproj_{L,\lambda} \ket{\psi} + \alpha_{\kappa + 1}\ket{\psi_{\kappa+1}}, \text{ where } |\alpha_{\kappa + 1}|^2 \leq (3/2) \varepsilon_s. 
\end{equation}
We then argue that the state $\ket{\phi} = \ridgeproj_{L,\lambda} \ket{\psi}/\norm{\ridgeproj_{L,\lambda} \ket{\psi}}$ accomplishes the task of state tomography. To observe this, we first note that $\norm{\ridgeproj_{L,\lambda} \ket{\psi}} \leq \norm{\ket{\psi}} + \norm{(\id - \ridgeproj_{L,\lambda})\ket{\psi}} \leq 1 +  \sqrt{3\varepsilon_s/2}$. Moreover, 
$$
|\la \psi | \ridgeproj_{L,\lambda} | \psi \ra| = \Big|\la \psi | \psi \ra - \la \psi | (\id - \ridgeproj_{L,\lambda}) | \psi \ra\Big| \geq 1 - \Big|\la \psi | (\id - \ridgeproj_{L,\lambda}) | \psi \ra\Big| \geq 1 - \norm{(\id - \ridgeproj_{L,\lambda})\ket{\psi}} \geq 1 - \sqrt{3 \varepsilon_s/2}.
$$
Combining the above two observations gives us that $\ket{\phi}$ satisfies
$$
|\la \phi | \psi \ra|^2 \geq \left(\frac{1 - \sqrt{3 \varepsilon_s/2}}{1 + \sqrt{3 \varepsilon_s/2}}\right)^2 \geq (1 - \sqrt{6 \varepsilon_s})^2 \geq 1 - 2 \sqrt{6 \varepsilon_s}.
$$
We now proceed as in Theorem~\ref{thm:boostingagnlearning} to obtain a state $\ket{\widehat{\phi}}$ that is close to $\ket{\phi}$ by using Theorem~\ref{thm:parameter_learning} (via Lemma~\ref{lem:estimate_projection_psi}). The sample and time complexity of this algorithm is then as stated. This completes the proof.
\end{proof}

\section{Applications}
In the previous section, we showed how to use our decomposition theorem to boost weak to strong agnostic learners, and also perform tomography when the input state is promised to have structure. In this section we instantiate this tomography result for various specific model classes to obtain learning algorithms. Recall that when instantiating the tomography result for a specific model class, we are concerned with a couple of things (as part of the definition of a model class $\C$ in Definition~\ref{def:model_class}), the complexity of preparing states in $\C$, the complexity of weak agnostic learning~$\C$.

\subsection{Learning states with low stabilizer extent}
\subsubsection{Learning algorithm}
Firstly, $n$-qubit stabilizer states  are describable in time $O(n^2)$ so that satisfies requirement $(i)$ of the definition of model class $\C$. Next, we need an agnostic learner for this class. For the class of stabilizer states (denoted $\Sh$), recently there have been two agnostic learners proposed in the works of~\cite{ad2025structure,chen2024stabilizer}. The first work gives a polynomial-time \emph{weak} agnostic learner, albeit relies on a conjecture in additive combinatorics and the second work gives a quasi-polynomial time \emph{strong} agnostic learner unconditionally. We state their results below which we invoke eventually. 

\begin{theorem}[\cite{chen2024stabilizer}]
\label{thm:stab_bootstrapping}
Fix $\varepsilon\leq \tau \in (0,1)$. 
There is an algorithm that, given~copies of $\ket{\psi}$ with $\Fe_{\Sh}({\ket{\psi}}) \geq \tau$, with high probability outputs a $|\phi\rangle \in \Sh$ with $|\langle \phi | \psi \rangle|^2 \geq \tau - \varepsilon$. The sample~and time complexity of this algorithm is $n\cdot \poly(1/\varepsilon,(1/\tau)^{\log 1/\tau})$ and $\poly(n,1/\varepsilon, (1/\tau)^{\log 1/\tau})$~respectively.
\end{theorem}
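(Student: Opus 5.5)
The plan is to follow the bootstrapping strategy of~\cite{chen2024stabilizer}: reduce agnostic learning of stabilizer states to a sequence of \emph{stabilizer-dimension} learning problems. The key structural fact is that if $\Fe_{\Sh}(\ket{\psi})\geq\tau$, then a Bell-difference-sampling distribution $q_\psi(x)\propto |\langle\psi|W_x|\psi\rangle|^2$ (or its symmetrized version) places mass at least $\poly(\tau)$ on a subset $S\subseteq\FF_2^{2n}$ with large Pauli expectations. By the polynomial Freiman--Ruzsa-type results (in the quasi-polynomial, unconditional form), this subset is covered by a Lagrangian or isotropic subspace $V$ of size controlled by $(1/\tau)^{O(\log 1/\tau)}$ cosets. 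Concretely, I would first show that from $\poly(1/\tau)$ Bell difference samples one can find, with noticeable probability, a set of commuting Paulis spanning an isotropic subspace $V$ such that the optimal stabilizer state $\ket{\phi^\star}$ is stabilized (up to sign) by all of $V$.

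The second step is the ``bootstrapping'' itself. Given such a $V$ of dimension $d$, apply the Clifford $C$ that maps $V$ to $Z$-type Paulis on the first $d$ qubits, and measure those qubits in the computational basis; the outcome fixes the signs. Conditioned on the outcome, the residual $(n-d)$-qubit state has stabilizer fidelity at least roughly $\tau$ (by a convexity/averaging argument over outcomes, losing only an additive $\varepsilon$ after taking the best branch). We then recurse on the remaining qubits. The recursion terminates because each round increases the stabilizer dimension of the candidate, and a potential argument on $\log(1/\text{fidelity})$ shows only $O(\log 1/\tau)$ levels of branching are needed, each level guessing among $(1/\tau)^{O(1)}$ candidate subspaces. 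This produces a list of $(1/\tau)^{O(\log 1/\tau)}$ candidate stabilizer states.

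The final step is selection: estimate $|\langle\phi|\psi\rangle|^2$ for each candidate via the $\SWAP$ test (or classical shadows, which gives the $n\cdot\log(\text{list size})/\varepsilon^2$ sample cost), and output the best; a union bound over the list gives $|\langle\phi|\psi\rangle|^2\geq\tau-\varepsilon$. Time is $\poly(n)$ per candidate for Clifford synthesis and Gaussian elimination, giving the stated $\poly(n,1/\varepsilon,(1/\tau)^{\log 1/\tau})$.

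The main obstacle is the first step: proving that the heavy Pauli set extracted from Bell difference sampling contains a subspace whose signed stabilizers are shared by the \emph{optimal} $\ket{\phi^\star}$, rather than by some other stabilizer state. I would try to handle this by showing $\sum_{x\in\weyl{\ket{\phi^\star}}} q_\psi(x)\geq\poly(\tau)$, which follows from expanding $|\langle\psi|\phi^\star\rangle|^2$ in the Weyl basis. I would then apply the algorithmic PFR covering, which is exactly where the quasi-polynomial loss enters, to extract the subspace from samples. Correctness of the sign-fixing measurement follows from the averaging argument.
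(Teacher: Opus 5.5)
The paper gives no proof of this statement; it imports it from \cite{chen2024stabilizer}. That result is unconditional, and its quasi-polynomial bound comes from a Bell-difference-sampling ``stabilizer bootstrapping'' recursion, not from additive combinatorics. Measured against that, your Step~1 is a genuine gap, and it is a gap of substance rather than detail. A Freiman--Ruzsa covering of the heavy Bell-sample set gives at best a subspace capturing a $\poly(\tau)$ fraction of the heavy mass. That is exactly what yields \emph{weak} guarantees, i.e.\ some stabilizer state of fidelity $\poly(\tau)$, as in Theorem~\ref{thm:scforstabilizers}. It gives no reason for $V$ to be isotropic, let alone contained in $\weyl{\ket{\phi^\star}}$ for the \emph{optimal} $\ket{\phi^\star}$.

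Your proposed fix, $\sum_{x\in\weyl{\ket{\phi^\star}}}q_\psi(x)\geq\poly(\tau)$, is true: the Weyl expansion plus Cauchy--Schwarz gives $\tau^4$ for Bell difference sampling. But it only says that samples land in $\weyl{\ket{\phi^\star}}$ with probability $\poly(\tau)$. It does not tell you which samples those are, and nothing in your argument makes the extracted subspace lie in the group of the optimal state rather than that of some other, merely $\poly(\tau)$-good, stabilizer state. Without $V\subseteq\weyl{\ket{\phi^\star}}$, Step~2 also fails. No averaging argument makes a branch retain stabilizer fidelity close to $\tau$, because the best stabilizer state compatible with $V$ can be far worse than $\ket{\phi^\star}$.

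Separately, your remark that the algorithmic PFR covering is ``exactly where the quasi-polynomial loss enters'' is unsupported. As the paper notes, the algorithmic version is open in the high-doubling regime relevant here, and the known algorithm \cite{algopfr} runs in $\poly(n,2^K)$ time.

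Even granting a correctly signed $V\subseteq\weyl{\ket{\phi^\star}}$, your depth bound is asserted, not proved. Projecting onto the eigenspace of $V$ that contains $\ket{\phi^\star}$ raises the fidelity to exactly $\tau/\bra{\psi}\Pi_V\ket{\psi}$. So the correct branch needs no averaging argument. However, this is a constant-factor gain only if $\bra{\psi}\Pi_V\ket{\psi}\leq 1-\Omega(1)$.

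Nothing in your sketch prevents the guessed Paulis from being ones whose expectation in $\ket{\psi}$ is already close to their sign in $\ket{\phi^\star}$. Such Paulis give a gain factor of only $1/(1-\delta)$ with $\delta$ tiny. In that case your termination argument, ``the stabilizer dimension increases'', bounds the depth only by $n$, which yields $(1/\tau)^{O(n)}$ candidates rather than $(1/\tau)^{O(\log 1/\tau)}$.

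The missing ingredient is a lemma guaranteeing two things at each level, with near-stabilizers handled separately:
\begin{itemize}
\item the guess lies in the signed stabilizer group of the optimal state with probability $\poly(\tau)$;
\item it yields constant-factor progress in fidelity.
\end{itemize}
Supplying this lemma is the actual content of the bootstrapping analysis. Your final selection step via fidelity estimation is fine. So is the additive bookkeeping $\bra{\psi}\Pi_V\ket{\psi}\cdot(\tau/\bra{\psi}\Pi_V\ket{\psi}-\varepsilon')\geq\tau-\varepsilon'$.
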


\begin{theorem}[\cite{ad2025structure}]
    \label{thm:scforstabilizers}
    Let $\uptau>0$.    Assuming the high-doubling algorithmic $\PFR$ Conjecture,~there is an algorithm that, given~copies of $\ket{\psi}$ with $\Fe_{\Sh}({\ket{\psi}}) \geq \uptau$, with high probability outputs a $|\phi\rangle \in \Sh$ with $|\langle \phi | \psi \rangle|^2 \geq \tau^C$ (for a universal constant $C>1$) using $\poly(n,1/\uptau)$ time and copies of~$\ket{\psi}$.
\end{theorem}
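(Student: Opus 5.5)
This result is imported from~\cite{ad2025structure} and used as a black box. If I had to reprove it, I would follow the Bell-difference-sampling route in four steps: find a large-mass set of Pauli directions, use algorithmic $\PFR$ to find a subspace capturing it, upgrade that subspace to a Lagrangian one, and measure in the induced stabilizer basis. Throughout, $p_\psi(x)=2^{-n}\langle\psi|W_x|\psi\rangle^2$ denotes the characteristic distribution on $\FF_2^{2n}$. The final measurement step is easy to state up front. For any Lagrangian subspace $V\subseteq\FF_2^{2n}$, the $2^n$ stabilizer states $\{\ket{s_j}\}$ with unsigned stabilizer group $V$ form an orthonormal basis, and
$$
\sum_j |\langle s_j|\psi\rangle|^4=2^{-n}\sum_{x\in V}\langle\psi|W_x|\psi\rangle^2=\sum_{x\in V}p_\psi(x).
$$
Hence $\max_j|\langle s_j|\psi\rangle|^2\ge \sum_{x\in V}p_\psi(x)$. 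So it suffices to find, in $\poly(n,1/\uptau)$ time, a Lagrangian $V$ with $p_\psi(V)\ge \uptau^{C'}$. We then measure $\poly(1/\uptau)$ copies of $\ket{\psi}$ in that basis, take the most frequent outcome as a candidate, and certify it by a $\SWAP$ test.

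\textbf{Step 1 (structure from fidelity).} Let $\ket{s}$ be a stabilizer state with $|\langle s|\psi\rangle|^2\ge\uptau$ and unsigned group $V_s$. Expanding $\ketbra{s}{s}$ in Weyl operators gives $\sum_{x\in V_s}p_\psi(x)\ge \uptau^2$. Using Gross--Nezami--Walter Bell difference sampling, which samples from $q_\psi=4^n(p_\psi * p_\psi)$, I would show two things. First, the set $A=\{x:\langle\psi|W_x|\psi\rangle^2\ge \uptau^{c}\}$ has $q_\psi$-mass at least $\poly(\uptau)$. Second, a large portion of $A$ lies inside $V_s$. A counting argument (Balog--Szemerédi--Gowers style on the additive energy of $A$) then yields a subset $A'\subseteq A$ with doubling $|A'+A'|\le K|A'|$ for $K=\poly(1/\uptau)$. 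Membership in $A$ can be tested approximately by estimating $\langle\psi|W_x|\psi\rangle^2$ from copies.

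\textbf{Step 2 (algorithmic $\PFR$) and Step 3 (Lagrangian repair).} With sampling and membership access to $A'$, the assumed high-doubling algorithmic $\PFR$ conjecture returns a subspace $W$ with $|W|\le|A'|$ and $|A'\cap W|\ge |A'|/\poly(K)$. It follows that $p_\psi(W)\ge\poly(\uptau)$. However, $W$ need not be isotropic. Here I would use the fact that two anticommuting Paulis cannot both have $\langle W_x\rangle^2$ close to $1$. More quantitatively, the symplectic form restricted to $W\cap A$ must be nearly degenerate in a mass-weighted sense. From this I would extract an isotropic subspace $W_0\subseteq W$ of comparable $p_\psi$-mass, for example by quotienting by the radical and recursing. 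Finally I would extend $W_0$ to a Lagrangian $V$. Extension can only increase $p_\psi(V)$, since $p_\psi$ is nonnegative.

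\textbf{Main obstacle.} I expect the hardest part to be Step 3 together with the loss accounting in Step 1. The task is to convert an approximate-subgroup structure into a genuinely \emph{isotropic} subspace while losing only polynomially in $\uptau$. A naive argument loses factors exponential in $\dim W$. My first attempt would be to work inside the symplectic complement of the high-mass directions. Concretely, I would greedily collect commuting high-weight Paulis found by Bell sampling, and bound the number of rounds by an energy increment of $\sum_{x}p_\psi(x)$ over the current group. Each successful round gains a $\poly(\uptau)$ fraction, which keeps the total cost $\poly(n,1/\uptau)$ and gives the exponent $C$.
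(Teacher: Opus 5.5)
The paper gives no proof of this statement; it imports it from \cite{ad2025structure}. Your pipeline (Bell difference sampling, a Balog--Szemer\'edi--Gowers step, algorithmic $\PFR$, passage to a stabilizer group, measurement in its basis) is the natural one for it, and your first and last steps are sound: $\sum_{x\in V_s}p_\psi(x)\ge\uptau^2$ is correct, and for Lagrangian $V$ the identity $\sum_j|\langle s_j|\psi\rangle|^4=p_\psi(V)$ correctly reduces the task to finding a Lagrangian of mass $\poly(\uptau)$. The genuine gap is Step~3, which you leave unproved. You need that a subspace $W$ with $p_\psi(W)\ge\gamma$ and $|W|\le K2^n$ yields an isotropic subspace of mass $\poly(\gamma/K)$, and neither suggested route establishes this. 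The anticommutation heuristic is the wrong lever: elements of $A$ only satisfy $\langle\psi|W_x|\psi\rangle^2\ge\uptau^c$, so $A\cap W$ can contain many pairwise anticommuting Paulis (the uncertainty relation only caps their number at about $\uptau^{-c}$), and no near-degeneracy of the symplectic form follows. The greedy fallback asserts without argument that each round adds $\poly(\uptau)$ mass. It also never uses the $\PFR$ hypothesis, so if it worked it would make the theorem unconditional, which should make you doubt that the per-round gain comes so cheaply.

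The missing idea is that the size bound already forces the symplectic rank of $W$ to be logarithmic, so the radical alone suffices and no recursion is needed. Let $I=W\cap W^{\perp}$ with $|I|=2^d$ and $|W|=2^{d+2m}$. After a Clifford conjugation, $I=\langle Z_1,\dots,Z_d\rangle$ and $W=I\oplus\mathcal{P}_A$, where $\mathcal{P}_A$ is the full Pauli group on an $m$-qubit register $A$. Write $\ket{\psi}=\sum_{z\in\{0,1\}^d}\ket{z}\ket{\psi_z}$ with $w_z=\|\psi_z\|^2$, and let $\rho_{z,A}=\Tr_B\ketbra{\psi_z}{\psi_z}$, where $B$ is the remaining $n-d-m$ qubits. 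A direct computation gives $p_\psi(W)=2^{d+m-n}\sum_z\Tr(\rho_{z,A}^2)$ and $p_\psi(I)=2^{d-n}\sum_z w_z^2$. Since $\Tr(\rho_{z,A}^2)\le w_z^2$ and $\sum_z w_z^2\le 1$, you get $p_\psi(W)\le 2^{d+m-n}$ and $p_\psi(I)\ge 2^{-m}p_\psi(W)$, which combine to
\[
p_\psi(W\cap W^{\perp})\;\ge\;\frac{2^n}{|W|}\,p_\psi(W)^2\;\ge\;\frac{\gamma^2}{K}.
\]
Hence any Lagrangian $V\supseteq W\cap W^{\perp}$ (computable by linear algebra) has $p_\psi(V)\ge\gamma^2/K$, and your measurement step finishes the proof with $K,1/\gamma=\poly(1/\uptau)$. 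Two smaller points also need care. First, Step~1 produces $A'$ only existentially, so you need an algorithmic BSG that gives sampling and approximate-membership access to $A'$. It must work from $q_\psi$-distributed (not uniform) samples and noisy estimates of $\langle\psi|W_x|\psi\rangle^2$, in whatever access model the conjecture assumes. Second, $\PFR$ yields a heavy \emph{coset}, which you must turn into a subspace, e.g.\ by adjoining the coset representative, which at most doubles $|W|$.
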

The polynomial Freiman-Ruzsa ($\PFR$) conjecture was recently resolved by~\cite{gowers2023conjecture} where they showed the following:  Suppose $A \subseteq\mathbb{F}_2^{n}$ has doubling constant $K$ (i.e., the set $\{a+a':a,a'\in A\}$ has size at most $K$ times that of $A$), then $A$ is covered by at most $2K^{9}$ cosets of some subgroup $H \subset \textsf{span}(A)$ of size $|H| \leq |A|$. The algorithmic $\PFR$ conjecture that was raised also in~\cite{ad2025structure} conjectures that one can \emph{find} this subspace in polynomial time in the high-doubling regime of $K=\poly(n)$. For a detailed discussion on this, we defer the reader to~\cite{ad2025structure}.\footnote{We remark that a recent work \cite{algopfr} gave a $\poly(n,2^K)$-time algorithm for this task.}

Finally, we need an $\A_{\prep}$ algorithm as part of definition of model class $\C$. To that end,  we use the following result, that given the classical description of a stabilizer state, we can obtain a Clifford circuit preparing it via the following result.

\begin{lemma}[Clifford synthesis~\cite{dehaene2003clifford,patel2003efficient}]\label{lem:clifford_synthesis}
Given the classical description of an $n$-qubit stabilizer state $\ket{\phi}$, there is a quantum algorithm that outputs a Clifford circuit $\textsf{C}$ that prepares $\ket{\phi}$, using $O(n^2)$ many single-qubit and two-qubit Clifford gates.
\end{lemma}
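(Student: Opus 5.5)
The statement to prove is Lemma~\ref{lem:clifford_synthesis}: from a classical description of a stabilizer state, produce an $O(n^2)$-gate Clifford circuit that prepares it. I would represent the state by its $n\times 2n$ binary check matrix $[X\,|\,Z]$ of generators $W_{x_1},\dots,W_{x_n}$, together with their sign bits. If the description is given in another form (e.g.\ an affine subspace with a quadratic phase), I first convert it to this form in $\poly(n)$ time.

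The plan is to reduce the tableau to the canonical tableau of $\ket{0^n}$, whose generators are $Z_1,\dots,Z_n$ with $+$ signs, using elementary Clifford operations. Each operation acts on the tableau by a symplectic column operation:
\begin{enumerate}[$(i)$]
\item $\textsf{Had}_j$ swaps column $j$ of the $X$ block with column $j$ of the $Z$ block;
\item $S_j$ adds the $X$ column $j$ into the $Z$ column $j$;
\item $\CNOT_{ij}$ adds $X$ column $i$ into $X$ column $j$, and $Z$ column $j$ into $Z$ column $i$.
\end{enumerate}
Row operations (multiplying generators together) do not change the state; they only update the signs.

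The reduction runs as follows.
\begin{enumerate}[$1.$]
\item Gaussian-eliminate the $X$ block to rank $r$.
\item Apply Hadamards so that the $X$ block has full rank $n$. This is possible by Lagrangian-subspace arguments: maximal isotropic subspaces can always be made transverse to the $Z$-subspace by Hadamards on a subset of qubits.
\item Use $\CNOT$s to turn the $X$ block into the identity. This is $O(n^2)$ $\CNOT$s, by the Patel--Markov--Hayes bound $O(n^2/\log n)$ or simply naive elimination.
\item Note that isotropy now forces the $Z$ block to be symmetric.
\item Clear the diagonal of the $Z$ block with $S$ gates, and the off-diagonal entries with $\textsf{CZ}$ gates (each $\textsf{CZ}$ is $\textsf{Had}\cdot\CNOT\cdot\textsf{Had}$). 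This takes $O(n^2)$ gates.
\item Apply a layer of Hadamards, giving generators $Z_1,\dots,Z_n$ up to signs.
\item Fix the signs with Pauli $X_j = \textsf{Had}\,S^2\,\textsf{Had}$ gates.
\end{enumerate}
Inverting this whole sequence gives a circuit $\textsf{C}$ with $\textsf{C}\ket{0^n}=\ket{\phi}$.

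The main obstacle is the gate count. A naive Gaussian elimination on an $n\times 2n$ matrix costs $O(n^2)$ column operations, which is fine. The care is needed in the Hadamard step, to guarantee full rank with a single layer rather than $O(n)$ layers. For that I would cite the Aaronson--Gottesman/Dehaene--De Moor canonical form: after row-reducing, take $H$ on exactly the qubits corresponding to non-pivot columns of the $X$ block. The overall synthesis is a constant number of layers, each with $O(n^2)$ gates, and classical time is $O(n^3)$. Since the lemma is cited from~\cite{dehaene2003clifford,patel2003efficient}, in the paper I would simply invoke those references. The sketch above is a self-contained justification.
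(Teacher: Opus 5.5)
Your proposal is correct and matches the paper, which gives no proof of its own and simply cites Dehaene--De Moor and Patel--Markov--Hayes for this lemma. Your tableau reduction is a valid self-contained justification of the $O(n^2)$ gate bound: Hadamards on the non-pivot columns of the row-reduced $X$ block, then making the $X$ block the identity, clearing the symmetric $Z$ block with $S$ and $\textsf{CZ}$, a final Hadamard layer, and Pauli sign fixes. One small simplification: once the $X$ block has full rank, row operations alone (which cost no gates) already make it the identity, so the $\CNOT$ step and the Patel--Markov--Hayes bound are unnecessary.
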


Putting these together the proof of the first part of Theorem~\ref{result:learn_stab_extent} is immediate.

\begin{enumerate}
    \item \emph{Learning stabilizer decompositions.} Applying Theorem~\ref{thm:learning_decompositions} with the above subroutines then gives us a $\poly(n,(1/\varepsilon)^{\log(1/\varepsilon)})$-time algorithm for learning stabilizer decompositions of $\ket{\psi}$ in the sense of Theorem~\ref{thm:learning_decompositions} when using the agnostic learner of Theorem~\ref{thm:stab_bootstrapping} or a $\poly(n,1/\varepsilon)$-time algorithm when using the (conditional) agnostic learner of Theorem~\ref{thm:scforstabilizers}. We remark that this also implies algorithms for learning stabilizer decompositions of an input state $\ket{\psi}$ where the residual state has low Gowers-$3$ norm as a consequence of the polynomial equivalences of Gowers-$3$ norm with stabilizer fidelity (see~\cite{ad2024tolerant,bao2024tolerant,mehraban2024improved}).
    \item \emph{Efficient improper agnostic learning.} Observe that one can use Theorem~\ref{thm:boostingagnlearning} along with Theorem~\ref{thm:scforstabilizers} to get an efficient improper agnostic learner for the model class of stabilizer states, assuming the algorithmic polynomial $\PFR$ conjecture. As far as we know, even conditionally and improperly, we didn't know if stabilizer states are agnostically learnable 
    \item \emph{Tomography of low-stabilizer extent states.} Observe that one can use Theorem~\ref{thm:tomo_extent} along with Theorem~\ref{thm:stab_bootstrapping} to get $\poly(n,(\xi/\varepsilon
    )^{\log \xi/\varepsilon} )$   algorithm for $\varepsilon$-learning states $\ket{\psi}$ which have stabilizer extent $\xi$. If we were to assume the algorithmic $\PFR$ conjecture and used Theorem~\ref{thm:scforstabilizers}, then we'd even a $\poly(n,\xi,1/\varepsilon)$ algorithm (which we show to be optimal up to polynomial factors).
    \item \emph{Tomography of low-stabilizer rank states.} Finally For stabilizer-rank $k$ states, we use the recently established result of $\xi\leq k^k$~\cite{kalra2025stabilizer} to give an unconditional $\poly(n,k^{k^2})$ learning algorithm for stabilizer-rank $k$ states.
\end{enumerate}

\subsubsection{Lower bound}
 We now prove the second part of Theorem~\ref{result:learn_stab_extent}.  Below we  complement our learning result by showing that the $n,\xi$-dependence of our tomography algorithm is optimal up to polynomial~factors.
\begin{lemma}
\label{lem:packinglowerbound}
Let $n\geq 1$, $1\leq \xi\leq 2^{n/4},\varepsilon\leq 1/(2\sqrt{2})$. Let $\C(\xi)=\{\ket{\psi}: \xi(\ket{\psi})\leq \xi\}$. An $0.1$-error tomography protocol for $\C_{\xi}$ requires $\Omega(n+\xi^2)$ copies of the unknown $\ket{\psi}\in \C_\xi$.
\end{lemma}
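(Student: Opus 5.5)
The plan is to prove the two terms of the bound separately, each by an information-theoretic packing argument, and then take the maximum. In both cases the template is the same. We construct a large family $\{\ket{\psi_j}\}_{j\in[M]}\subseteq \C(\xi)$ whose members are pairwise far apart in trace distance, say $\ge 0.3$. A $0.1$-error tomography protocol then identifies $j$ from $T$ copies with constant probability. By Fano's inequality combined with Holevo's bound, $\Omega(\log M)\le I(J:\text{outcome})\le \chi(\{\psi_j^{\otimes T}\})$. We then upper bound the Holevo quantity of the ensemble $\{\ket{\psi_j}^{\otimes T}\}$.

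For the $\Omega(n)$ term we take $\C(1)\subseteq\C(\xi)$, i.e.\ stabilizer states, or even just computational basis states $\ket{x}$ with $x\in\{0,1\}^n$. These are pairwise orthogonal, so $M=2^n$ and they are perfectly distinguishable. Each copy carries at most $n$ qubits of information, which by itself gives only the trivial bound $T\ge 1$. So instead we use the product-state code: take states $\bigotimes_i \ket{s_i}$ with $\ket{s_i}\in\{\ket{0},\ket{+}\}$ restricted to a good code in $\{0,1\}^n$ with pairwise distance $\Omega(n)$. These are pairwise far (fidelity $2^{-\Omega(n)}$), there are $2^{\Omega(n)}$ of them, and $T$ copies of a product of $n$ qubits have Holevo information at most $T\cdot S(\bar\rho)$. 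The cleaner route is the standard one: the average state over the ensemble has entropy at most $\log\binom{T+d-1}{T}$ restricted to the symmetric subspace, which for these product ensembles yields $T=\Omega(n)$. Failing that, we simply cite the known $\Omega(n)$ lower bound for learning stabilizer states, since stabilizer states lie in $\C(\xi)$ for every $\xi\ge1$.

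For the $\Omega(\xi^2)$ term we use a subspace of dimension $d\approx\xi^2$. Let $d=2^k$ with $k=\lfloor 2\log\xi\rfloor$, so that $k\le n/2$ by the assumption $\xi\le 2^{n/4}$. Any $k$-qubit state $\ket{v}\otimes\ket{0^{n-k}}$ written in the computational basis has extent at most $\|v\|_1\le\sqrt d\le\xi$. Hence the whole set $\{\ket{v}\otimes\ket{0^{n-k}}\}$ lies in $\C(\xi)$, and the problem reduces to full pure-state tomography in dimension $d$. Known lower bounds then give $\Omega(d/\varepsilon^2)\ge\Omega(d)=\Omega(\xi^2)$ at constant error. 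To keep the argument self-contained, we take a $0.3$-packing of the unit sphere in $\mathbb{C}^d$ of size $M=2^{\Omega(d)}$. The Holevo quantity of $T$ copies is at most $\log\dim\mathrm{Sym}^T(\mathbb{C}^d)=\log\binom{T+d-1}{T}\le T\log(e(T+d)/T)$. Combined with Fano this gives $T\log(e(1+d/T))\ge\Omega(d)$, which forces $T=\Omega(d/\log d)$. To remove the log, we apply the sharper bound: for a Haar-like packing the average state over $T$ copies is close to maximally mixed on the symmetric subspace, so its entropy equals $\log\binom{T+d-1}{T}$ while each element is pure. Using $\log\binom{T+d-1}{T}\le T\log(e(d+T)/T)$, the comparison $T\log(ed/T)\gtrsim d$ already gives $T\gtrsim d$ up to the log.

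The main obstacle is getting exactly $\Omega(\xi^2)$ rather than $\Omega(\xi^2/\log\xi)$ from the counting argument. If the Holevo route only yields the log-loss, I would fall back on citing the known optimal $\Omega(d)$ pure-state tomography lower bound (e.g.\ \cite{haah2016sample}), which applies directly because of the embedding above. The final bound is $\max(\Omega(n),\Omega(\xi^2))=\Omega(n+\xi^2)$.
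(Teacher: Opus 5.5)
Your proposal is correct in the form you ultimately rely on. However, two of the intermediate routes you sketch do not work and should be dropped.

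\textbf{The $\Omega(n)$ term.} The product-state code over $\{\ket{0},\ket{+}\}^{\otimes n}$ cannot give $\Omega(n)$. The family has only $2^{O(n)}$ members. Every $T$-copy state lies in $\bigotimes_{i=1}^n\mathrm{span}\{\ket{0}^{\otimes T},\ket{+}^{\otimes T}\}$, a space of dimension $2^n$, so the Holevo information is at most $n$ bits for every $T$, and Fano yields nothing. Indeed, such states can be identified from $O(\log n)$ copies by measuring in the computational basis. What this term needs is a family of size $2^{\Omega(n^2)}$ with pairwise trace distance above $0.2$, combined with the trivial bound $\chi\le Tn$. That is exactly your fallback via stabilizer states, and it is also what the paper does.

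\textbf{The $\Omega(\xi^2)$ term.} There is no logarithm to remove. The map $x\mapsto x\log(e(1+1/x))$ is increasing on $(0,\infty)$ and tends to $0$ as $x\to 0$. Hence $T\log(e(1+d/T))\ge cd$ already forces $T\ge c'd$ for a constant $c'>0$. The symmetric-subspace Holevo bound therefore gives $\Omega(d)$ directly, and the ``Haar-like'' remark is unnecessary.

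\textbf{Comparison with the paper.} The paper proves both terms with a single packing $\{\ket{\phi}\otimes\ket{\eta_S}\}$. Here $\ket{\phi}$ ranges over $(n/2)$-qubit stabilizer states, and $\ket{\eta_S}=r^{-1/2}\sum_{x\in S}\ket{x}$ ranges over uniform superpositions on $r=\xi^2$-subsets of $[2^{n/2}]$, drawn from a constant-weight code with $|S\cap T|\le r/2$. It then uses only the crude bound $Tn\ge\Omega(\log|\mathcal{P}|)$. The factor of $n$ needed for the $\xi^2$ term comes from $\log\binom{D}{r}\approx r\log(D/r)\approx rn$, i.e., from sparsity in a large space rather than from dimension.

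Your route instead embeds all of $\mathbb{C}^d$, with $d\approx\xi^2$, into $\C(\xi)$ via $\|v\|_1\le\sqrt d$. You then invoke (or prove via the symmetric subspace) the $\Omega(d)$ lower bound for pure-state tomography. This needs a slightly stronger information-theoretic ingredient than the paper's. In exchange, it is valid across the whole range $\xi\le 2^{n/4}$. The paper's sparse packing needs $\log(D/r)=\Omega(n)$. Its claim $r\le\sqrt{D}$ holds only for $\xi\le 2^{n/8}$, and as $\xi\to 2^{n/4}$ the family $\binom{[D]}{r}$ collapses to a single set.
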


\begin{proof}
To prove this, we first build an $2\varepsilon$-packing inside $\mathcal C_{\xi}$.  The idea is to take a tensor product of two classes of states, $m$-qubit stabilizer states on $m= n/2$ qubits and a specially constructed subset of states on $n/2$ such that, the newly constructed states have stabilizer extent $\leq \xi$ and the inner product between any two states in this class is at least $\geq \Omega(1)$.

To this end, we first observe the well-known fact that the trace distance between any two $m=n/2$-qubit stabilizer states $\ket{\phi},\ket{\phi'}$ is at least $1/\sqrt{2}$ (i.e., the inner product is at most $1/\sqrt{2}$). Next, we construct a  set of states on $n/2$ qubits. Let $D=2^{n/2}$ and $r=\xi^2$, so that $r\leq \sqrt{D}$ (since $\xi\leq 2^{n/4}$ by assumption).
One can now construct a a set family $
\Fe \subseteq \binom{[D]}{r}$ that satisfies two properties: $(i)$ $|S\cap T| \le  r/2$ for all $S\neq T$ in $\cal F$ and $(ii)$ $\log |\mathcal F|
=\Omega \big(r\log_2(D/r)\big)$ (we prove this in Fact~\ref{lem:setfamilyconstruction} below). For each $S\in\mathcal F$, define the $k$-qubit state
$
|\eta_S\rangle:= \frac{1}{\sqrt r}\sum_{x\in S} |x\rangle.$
Since each computational basis state $|x\rangle$ is a stabilizer state,
$
\xi(\eta_S)\le \sqrt r\le \xi.
$
Moreover, for distinct $S,T\in\mathcal F$, $
\langle \eta_S|\eta_T\rangle
=
{|S\cap T|}/{r}
\le 1/2
$
by construction of $\cal F$. Hence the trace distance between $\ket{\eta_S}$ and $\ket{\eta_T}$ is at least $\sqrt{1-1/4}=\sqrt{3}/2$. Our eventual class of states is now going to be
$$
\mathcal{P}=\{\ket{\phi}\otimes \ket{\eta_S}: \ket{\phi} \text{ is } m\text{-qubit stabilizer}, \eta_S\in \cal F\}.
$$
Every state in $\mathcal P$ belongs to $\mathcal C_{\xi}$ since the stabilizer extent of every $\ket{\phi}\otimes \ket{\eta_S}$ is at most the stabilizer extent of $\eta_S$ which is at most $\xi$. Further, for two distinct states $\ket{\phi}\otimes \ket{\eta_S}$, $\ket{\phi'}\otimes \ket{\eta_T}$  in $\cal P$, observe that \emph{either} $\ket{\phi}\neq \ket{\phi'}$ or $\ket{\eta_S}\neq \ket{\eta_T}$, hence their inner product is at most $1/\sqrt{2}$, hence their trace distance is at least $1/\sqrt{2}$. Hence this forms a $2\varepsilon$-packing inside $\C_\xi$. Furthermore the size of $\cal P$ is 
$$
\Theta(m^2+r\log D/r)=\Theta(n^2+\xi^2\log 1/\xi+n\xi^2),
$$
where we used that $D=2^{n/2},m=n/2,r\leq \xi^2$ in the equality above. 
 At this point, we get our lower bound as follows: suppose we are promised that the unknown state lies in $\mathcal{P}$, every $\varepsilon$-error tomography protocol needs to \emph{identify} the unknown state, one can simply use~\cite[Eq.~(2)]{harrow2012many} to obtain a lower bound of
$
(\log |\Fe_\varepsilon|)/{n}=n+\xi^2.
$
\end{proof}

\begin{lemma}
\label{lem:setfamilyconstruction}
There exists a    set family $
\Fe \subseteq \binom{[D]}{r}$ that satisfies two properties: $(i)$ $|S\cap T| \le  r/2$ for all $S\neq T$ in $\cal F$ and $(ii)$ $\log |\mathcal F|
=\Omega \big(r\log (D/r)\big)$.
\end{lemma}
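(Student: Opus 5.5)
We prove the lemma with a probabilistic construction in the style of Gilbert--Varshamov, followed by deletion of bad pairs. Throughout we assume $r \le \sqrt{D}$, as in the application, and that $r$ is at least a large constant. For small $r$ the bound $\Omega(r\log(D/r))$ reduces to $\Omega(\log D)$. In that case, \emph{when $r$ is even}, we may take $D/r$ pairwise disjoint $r$-subsets, for which $|S\cap T| = 0 \le r/2$. The same works for any $r$ whenever disjoint sets trivially satisfy the bound.

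\textbf{Step 1: random sampling.} Draw $M$ sets $S_1,\ldots,S_M$ independently and uniformly from $\binom{[D]}{r}$. Fix $i\neq j$. The intersection size $|S_i\cap S_j|$ is hypergeometric with mean $r^2/D \le 1$. We bound its upper tail by a union over candidate intersections: there are $\binom{r}{s}$ ways to pick an $s$-subset of $S_i$, and each lies inside $S_j$ with probability at most $(r/D)^s$. Taking $s = \lceil r/2\rceil$ gives
\[
\Pr\big[|S_i\cap S_j| \ge r/2\big] \le \binom{r}{\lceil r/2\rceil}\Big(\frac{r}{D}\Big)^{\lceil r/2\rceil} \le \Big(\frac{4r}{D}\Big)^{r/2}.
\]
The lemma asks for $|S\cap T|\le r/2$, so an event of the form $|S\cap T| > r/2$ suffices. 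To keep the margin clean we instead bound the event $|S\cap T|\ge \lfloor r/2\rfloor+1$, which loses only constants.

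\textbf{Step 2: deletion.} The expected number of bad pairs is at most $M^2 (4r/D)^{r/2}$. Choose
\[
M = \tfrac12 (D/4r)^{r/4}.
\]
Then the expected number of bad pairs is at most $M/4$, and it is also at most $M/2$. Hence some outcome has at most $M/2$ bad pairs. For each bad pair, delete one of its two sets. This leaves at least $M/2$ sets with no bad pairs. Distinctness is automatic, since two identical sets would form a bad pair and one of them would have been removed.

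\textbf{Step 3: counting.} The surviving family satisfies
\[
\log|\Fe| \ge \tfrac{r}{4}\log\big(D/(4r)\big) - O(1).
\]
This is $\Omega(r\log(D/r))$ as soon as $D/r$ exceeds a large constant. Under $r\le\sqrt D$ we have $D/r \ge \sqrt D$, so this holds for large $D$.

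\textbf{Main obstacle.} The only delicate point is the tail bound in Step 1 together with the constants near the thresholds. Specifically, we must handle small $r$ and the regime where $D/r$ is only a constant; both are dispatched by the disjoint-sets remark above. If the binomial estimate turns out too lossy, the fallback is an explicit algebraic construction: take codewords of a Reed--Solomon code over $\FF_q$ with $q\approx r$, viewed as graphs of polynomials of degree less than $r/2$ inside $[q]\times[q]\subseteq[D]$. Two distinct such polynomials agree in at most $r/2$ points, which gives the intersection property and $\log|\Fe| \approx (r/2)\log q$. This reaches $\Omega(r\log r)$ directly; obtaining $\Omega(r\log(D/r))$ this way would additionally require using a larger alphabet.
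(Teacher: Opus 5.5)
Your proof is correct, and it takes a different (though closely related) route from the paper. The paper cites Levenshtein's constant-weight Gilbert--Varshamov bound, $A(D,2d,r)\ge\binom{D}{r}/\sum_{i<d}\binom{r}{i}\binom{D-r}{i}$ with $d=\lceil r/2\rceil$. That is a greedy packing whose cost is the exact size of a Hamming ball in the space of $r$-sets, and the paper then estimates that ball. You instead sample random $r$-sets, bound the bad-pair probability by a union over $s$-subsets, $\binom{r}{s}(r/D)^s\le(4r/D)^{r/2}$, and use alteration.

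Both are the same volume argument and give $\log|\mathcal{F}|\ge c\,r\log(D/r)-O(r)$. The paper's version is shorter because it outsources the packing step. Yours is self-contained and actually does the constant tracking the paper calls ``not too hard''. That tracking is not optional. Subtracting a term that is merely $O(r+r\log(D/r))$ from $\log\binom{D}{r}=\Theta(r\log(D/r))$, as the paper's last line literally does, proves nothing. One needs the ball's log-size to be about $\frac r2\log(D/r)+O(r)$, which is exactly what your tail bound supplies.

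Two smaller remarks.

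\emph{The hypothesis.} Some condition like $D/r\ge C$ is genuinely necessary. For $D=2r$ the $\pm1$ incidence vectors have pairwise nonpositive inner products, so $|\mathcal{F}|\le 4r$. That condition is all your argument really uses; $r\le\sqrt D$ is stronger than needed. Your disjoint-sets remark covers the ``$D/r$ constant'' regime only because $r\le\sqrt D$ forces $D$ to be bounded there, and the qualifier ``when $r$ is even'' is superfluous. Note also that in the paper's application, $\xi\le 2^{n/4}$ only gives $r=\xi^2\le D$, not $r\le\sqrt D$.

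\emph{The choice of $M$.} With $M=\frac12(D/4r)^{r/4}$, the expected number of bad pairs is at most $\binom{M}{2}(4r/D)^{r/2}\le 1/8$, so no deletion is needed. Taking $M=\frac12(D/4r)^{r/2}$ and deleting gives the better bound $\log|\mathcal{F}|\ge\frac r2\log(D/4r)-O(1)$, matching the GV bound. The Reed--Solomon fallback is not needed.
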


\begin{proof}
Identify each set $S \in \binom{[D]}{r}$ with its incidence vector
$
x_S \in \{0,1\}^D
$
of Hamming weight $r$. Then for any $S,T \in \binom{[D]}{r}$,
$
d_H(x_S,x_T)=2\bigl(r-|S\cap T|\bigr),
$ where $d_H$ is the Hamming distance.  Hence the condition
$
|S\cap T|\le r/2
$
is equivalent to
$
d_H(x_S,x_T)\ge r.
$ 
So it suffices to construct a binary constant-weight code of length $D$, weight $r$, and minimum distance at least $r$. Let $A(D,2d,r)$ denote the maximum size of a binary constant-weight code of length $D$, weight $r$, and minimum distance at least $2d$. By the  constant-weight Gilbert--Varshamov bound~\cite{levenshtein1971upper}, we have 
$$
A(D,2d,r)\ge
\frac{\binom{D}{r}}{\sum_{i=0}^{d-1}\binom{r}{i}\binom{D-r}{i}}.
$$
Now, choose $d=\lceil r/2\rceil$. Then $2d\ge r$, so any such code corresponds to a family 
$\mathcal F\subseteq \binom{[D]}{r}$ 
such that $|S\cap T|\le r/2$ for all distinct $S,T\in\mathcal F$. It remains to lower bound the size. Since
$$
\sum_{i=0}^{d-1}\binom{r}{i}\binom{D-r}{i}
\le
r\cdot \max_{0\le i\le r/2}\binom{r}{i}\binom{D-r}{i},
$$
we have
$$
\log \sum_{i=0}^{d-1}\binom{r}{i}\binom{D-r}{i}
\le O(r)+\max_{0\le i\le r/2}\Bigl(\log\binom{r}{i}+\log\binom{D-r}{i}\Bigr).
$$
It is not too hard to show that the upper bound is at most $O(r+r\log D/r)$. 
Therefore
$$
\log |\mathcal F|
\;\ge\;
\log \binom{D}{r}
-
\log \sum_{i=0}^{d-1}\binom{r}{i}\binom{D-r}{i}
\;=\;
\Omega \big(r\log(D/r)\big),
$$
as claimed.
\end{proof}

\subsubsection{Further extensions}
On top of stabilizer states, one can also consider the class of states which are generalizations of stabilizer states (yet succinctly learnable and describable) or subclasses of stabilizer states. We briefly mention a few interesting classes below.

\noindent \textbf{High stabilizer dimension.}
Firstly, we observe that the result of~\cite{chen2024stabilizer} gives the following agnostic learner.\footnote{We remark that in~\cite{ad2025structure} they also gave a weak agnostic learning algorithm for states with high stabilizer dimension in time polynomial in $n,2^t,1/\varepsilon$ \emph{assuming} the algorithmic Freiman-Ruzsa conjecture in the high-doubling~regime. }
\begin{theorem}[\cite{chen2024stabilizer}]
Fix any $1 > \tau > \varepsilon > 0$. There is an algorithm that, given access to copies of  $\ket{\psi}$ with
$
\max_{\ket{\phi'} \in \mathcal{S}(n-t)} |\langle {\phi'}|\psi\rangle|^2 > \tau,
$
 outputs a state $\ket{\phi}\in \mathcal{S}(n-t)$ such that
$
|\bra{\phi}\psi\rangle|^2 > \tau - \varepsilon
$
with high probability. 
The algorithm performs single- and two-copy measurements on at most
$
n({2^t}/{\tau})^{O(\log(1/\varepsilon))}
$
copies of $\ket{\psi}$ and runs in time
$
n^3({2^t}/{\tau})^{O(\log(1/\varepsilon))}.
$
\end{theorem}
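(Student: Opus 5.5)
This result is imported from~\cite{chen2024stabilizer}, so the plan is to adapt their bootstrapping argument for stabilizer states (Theorem~\ref{thm:stab_bootstrapping}) from $\Sh(n)$ to $\Sh(n-t)$. The first step uses the structural fact that $\ket{\phi'}\in\Sh(n-t)$ iff $\ket{\phi'}=C(\ket{0^{n-t}}\otimes\ket{\varphi})$ for some Clifford $C$ and some arbitrary $t$-qubit state $\ket{\varphi}$. Equivalently, $\ket{\phi'}$ is determined by an isotropic subgroup $\weyl{\ket{\phi'}}$ of dimension $n-t$, its signs, and a $t$-qubit state. The learner therefore has to find a large enough piece of this stabilizer group, rotate it by a Clifford onto $Z_1,\ldots,Z_{n-t}$, and finish with full pure-state tomography on the remaining $t$ qubits via Theorem~\ref{thm:qst}. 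The last step costs $\poly(2^t,1/\varepsilon)$ copies, which is where the $2^t$ factor comes from.

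\textbf{Pauli-finding step.} Suppose $|\langle\phi'|\psi\rangle|^2\ge\tau$ for some $\ket{\phi'}\in\Sh(n-t)$. Then Bell difference sampling on $\ket{\psi}$ (single- and two-copy measurements) outputs an $x\in\weyl{\ket{\phi'}}$ with probability at least $\poly(\tau/2^t)$. The argument goes through the characteristic distribution $p_\psi(x)=2^{-n}\langle\psi|W_x|\psi\rangle^2$ and its autocorrelation. One lower-bounds the mass that $p_\psi$ places on the subgroup $\weyl{\ket{\phi'}}$ by expanding $\ket{\psi}=\sqrt{\tau}\ket{\phi'}+\ket{\text{rest}}$. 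The $2^t$ loss arises because the characteristic distribution of $\ket{\phi'}$ is spread over a coset structure of size $2^{n+t}$ rather than $2^n$.

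\textbf{Bootstrapping step.} Having guessed such an $x$ (with sign estimated by measuring $W_x$), apply a Clifford mapping $W_x$ to $Z_1$ and measure the first qubit. This projects $\ket{\psi}$ onto the corresponding eigenspace, yielding an $(n-1)$-qubit instance. The key inequality is that the projected and renormalized state has fidelity with the projected $\ket{\phi'}$ that is at least the old fidelity, and strictly larger in the regime where it matters. Following~\cite{chen2024stabilizer}, fidelity thresholds should increase geometrically, e.g.\ $\tau\mapsto\tau^{1-c}$ or $\tau\mapsto 2\tau$ per successful "good" round. Then after $O(\log(1/\varepsilon))$ levels of guessing the residual object is learnable directly.

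Branching over $(2^t/\tau)^{O(1)}$ candidate Paulis per level for $O(\log 1/\varepsilon)$ levels gives $(2^t/\tau)^{O(\log 1/\varepsilon)}$ candidates in total. One then selects the best candidate by a fidelity estimate, i.e.\ a $\SWAP$ test against the candidate, at cost $O(1/\varepsilon^2)$ copies each. The $n$ and $n^3$ factors come from Bell sampling and Clifford synthesis (Lemma~\ref{lem:clifford_synthesis}).

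\textbf{Main obstacle.} I expect the hardest step to be the fidelity-increment inequality under projection when the target has only stabilizer dimension $n-t$ and the sampled Pauli may lie outside $\weyl{\ket{\phi'}}$. In that case one must argue that such "bad" guesses are pruned and that the good branch still exists. I would first try the argument of~\cite{chen2024stabilizer} essentially verbatim, replacing "stabilizer state" by "state with stabilizer group of dimension $n-t$", and tracking the $2^t$ overlap loss in the Bell-sampling lower bound.
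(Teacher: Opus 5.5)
The paper gives no proof of this statement; it imports it verbatim from \cite{chen2024stabilizer}. Your outline therefore has to stand on its own, and it has a genuine gap at exactly the step you flag.

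Your Pauli-finding step is correct and easy to make precise. Let $S:=\weyl{\ket{\phi'}}$, let $s_x\in\{\pm1\}$ be the target's signs, and let $\Pi_S=2^{-(n-t)}\sum_{x\in S}s_xW_x$ be the projector onto the $2^t$-dimensional code space containing $\ket{\phi'}$. Applying Cauchy--Schwarz to $\tau\le\langle\psi|\Pi_S|\psi\rangle$ gives $p_\psi(S)\ge\tau^2/2^t$, hence $q_\psi(S)\ge p_\psi(S)^2\ge\tau^4/4^t$.

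But membership in $S$ is not progress. Projecting onto the $s_x$-eigenspace of a sampled $x\in S$ multiplies the fidelity by $2/(1+s_x\langle\psi|W_x|\psi\rangle)$. That factor equals $1$ for elements that (nearly) stabilize $\ket{\psi}$, and each guess removes only one of the $n-t$ generators. For instance, take $\ket{\psi}=\ket{0^{n-1}}\ket{+}$ and $\ket{\phi'}=\ket{0^n}$. Then $\tau=1/2$ and $q_\psi(S)=1/2$, yet every sampled element of $S$ is some $Z_A$ with $A\subseteq[n-1]$, which stabilizes $\ket{\psi}$, so the fidelity never moves. Nothing in your sketch prevents the search tree from needing $\Theta(n)$ guessed levels, i.e.\ $(2^t/\tau)^{\Theta(n)}$ branches. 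Separately, for $\tau<1/2$ one only knows $s_x\langle\psi|W_x|\psi\rangle\ge 2\tau-1$, so the target's sign is not determined by measuring $W_x$ and must also be branched on.

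The depth accounting does not close either. A doubling potential $\tau\mapsto2\tau$ gives only $O(\log(1/\tau))$ productive levels and brings the relative fidelity to a constant. It does not give the $O(\log(1/\varepsilon))$ levels you then assert. You also never say how the remaining $\Theta(n)$ generators are recovered from a state that is merely close to the code space. The realizable learner (symplectic complement of the span of $O(n)$ Bell difference samples) breaks as soon as a single sample falls outside $S^{\perp}$. So ``the residual object is learnable directly'' is precisely the unproved step.

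What is missing is a structural lemma of the type proved in \cite{chen2024stabilizer}, with the following ingredients:
\begin{itemize}
\item The target must be suitably chosen, e.g.\ an optimal or locally optimal element of $\Sh(n-t)$, not an arbitrary $\ket{\phi'}$ with fidelity $\ge\tau$; the example above shows why.
\item Whenever the current node cannot yet be completed to a $(\tau-\varepsilon)$-good output, sampling on the projected state must return, with probability $\poly(\tau/2^t)$, a Pauli of the target group whose projection shrinks a potential by a constant factor. For example: fidelity doubling while it is small, then $1-F$ halving, for depth $O(\log(1/\tau)+\log(1/\varepsilon))=O(\log(1/\varepsilon))$ since $\varepsilon<\tau$.
\item There must be a non-branching way to absorb the high-correlation directions.
\end{itemize}
Without such a lemma, your outline does not yield the stated $(2^t/\tau)^{O(\log(1/\varepsilon))}$ bound.
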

Note that these states are describable in time $\poly(n,2^t)$ . Like in the proof above on stabilizer states, we again set $\varepsilon \rightarrow \varepsilon/\xi$ and using this we obtain a tomography protocol in time $\poly(n,(2^t\cdot \xi/\varepsilon
)^{\log \xi/\varepsilon
})$ for the following class
$$
\mathcal{C}'_{t,\xi}=\{\ket{\phi}=\sum_{i=1}^k c_i\ket{\psi_i}: \ket{\psi_i}\in \mathcal{S}(n-t) , \sum_i|c_i|\leq \xi\},
$$
i.e., states $\ket{\psi}$ which are expressible as a linear combination of states in $\mathcal{S}(n-t)$

\noindent \textbf{Parity states.} One can also consider the class of parity states, i.e., $\C_{\textsf{par}}=\{\ket{\chi_S}=\frac{1}{\sqrt{2^n}}\sum_x\chi_S(x)\ket{x}:S\in \{0,1\}^n\}$ where $\chi_S(x)=(-1)^{S\cdot x}$. It is not hard to see that one can efficiently learn an unknown $\ket{\psi}\in \C$ using $O(1)$ copies, agnostically learn it also using $O(1)$ copies, describe and prepare these states using $O(n)$ parameters. Furthermore, it was observed in~\cite{adgo2025boosting} that if $f:\{0,1\}^n\rightarrow \{-1,1\}$ is an $s$-term DNF formula, then the quantum state $\ket{\psi_f}=\frac{1}{\sqrt{2^n}}\sum_x f(x)\ket{x}$ satisfies $\Fe_{\C_{\textsf{par}}}\geq 1/s$. Hence using our Theorem~\ref{thm:boostingagnlearning}, we immediately get our agnostic learner for DNF states, one of the main results in~\cite{adgo2025boosting}.

\noindent \textbf{Graph states.} Generalizing the above, one could also consider the class of graph states (also referred to as degree-$2$ phase states), i.e., $\C_{g}=\{\ket{\psi_A}=\frac{1}{\sqrt{2^n}}\sum_{x}(-1)^{x^t A x}\ket{x}: A\in \mathbb{F}_2^{n\times n}\}$. Noting that the set of graph states is a subset of stabilizer states, we can apply the prior tomography algorithm for learning states with low stabilizer extent on states $\ket{\psi}$ of the form $\ket{\psi}=\sum_i c_i \ket{s_i}$ where $\ket{s_i}$ is a graph state. The overall complexity of the tomography algorithm is $\poly(n,(\xi/\varepsilon)^{\log \xi/\varepsilon})$


\subsection{Learning states with low MPS extent}
So far we considered variants of stabilizer states for as our model class. A natural question is could we also consider a model class $\C$ beyond these? To this end, we consider the class of \emph{matrix product states}. A matrix product state with \emph{bond dimension r}
$$
\ket{\psi}=\sum_{i\in \{0,1\}^n}\Tr(A^1_{i_1}\cdots A^n_{i_n})\ket{i_1,\ldots,i_n},
$$
where each matrix $A^j_i$ is an $r\times r$ complex matrix for $i\in \{0,1\},j\in [n]$. We do not discuss this class in detail here and refer the interested reader to~\cite{cramer2010efficient,soleimanifar2022testing,bakshi2024learning} for more on this class of states. Learning matrix product states in the tomography model was considered in the well-known work of~\cite{cramer2010efficient} and in the agnostic setting a recent work of~\cite{bakshi2024learning} showed the following result. 
\begin{theorem}[\cite{bakshi2024learning}]
\label{thm:bakshimpslearning}
Let $\ket{\psi}$ be an $n$-qubit state, and let $\varepsilon,\delta\in(0,1)$. Fix a bond-dimension parameter $r$, and let
$
\mathsf{MPS}_{n,r}
$
denote the class of $n$-qubit matrix product states and bond dimension $r$. Then there is an algorithm that outputs a classical description of a matrix product state
$ |\phi\rangle$
whose bond dimension is
$
n^2\cdot \textsf{poly}(r,1/\varepsilon)$, 
such that, with probability at least $1-\delta$,
$$
|\langle \phi|\psi\rangle|^2\geq 
\max_{|\phi'\rangle\in \textsf{MPS}_{n,r}}
|\langle \phi'|\psi\rangle|^2
-\varepsilon.
$$
The algorithm uses
$
N=\textsf{poly}\!\left(n,r,1/\varepsilon,\log1/\delta\right)
$
copies of $\ket{\psi}$ and
$
\textsf{poly}(N)
$  time.\footnote{We remark that product states are a subset of MPS with bond dimension $r=1$. In~\cite{bakshi2024learning} they also give a \emph{proper} agnostic learner for product states, whereas the MPS agnostic result is improper. For, our main application of learning ``sum of MPS", it wasn't necessary for the agnostic learner to be proper, hence we use the most general result of theirs.}
\end{theorem}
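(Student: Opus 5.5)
The plan is to follow the sequential "compress-and-truncate" strategy behind~\cite{bakshi2024learning}: sweep across the $n$ qubits from left to right, and at each cut keep a slightly enlarged left subspace that provably contains almost all of the overlap between $\ket{\psi}$ and the best bond-dimension-$r$ MPS. Throughout, let $\ket{\phi^\star}\in\textsf{MPS}_{n,r}$ attain $\opt=\max_{\ket{\phi'}\in\textsf{MPS}_{n,r}}|\langle\phi'|\psi\rangle|^2$. Recall that $\ket{\phi^\star}$ has Schmidt rank at most $r$ across every cut $\{1,\dots,j\}\,|\,\{j+1,\dots,n\}$. The output is improper, meaning it will be an MPS of larger bond dimension $D=n^2\poly(r,1/\varepsilon)$.

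\emph{Step 1 (iterative isometries).} Suppose we already hold an explicit isometry $V_{j-1}:\mathbb{C}^{D_{j-1}}\to(\mathbb{C}^2)^{\otimes(j-1)}$ built from MPS tensors. Let $P_{j-1}=V_{j-1}V_{j-1}^\dagger$.
\begin{enumerate}[$(a)$]
\item Implement the two-outcome measurement $\{P_{j-1},\id-P_{j-1}\}$ on the first $j-1$ qubits of copies of $\ket{\psi}$. On success, map the register down to $\mathbb{C}^{D_{j-1}}$.
\item Perform mixed-state tomography of the reduced density matrix $\rho_j$ on the $(2D_{j-1})$-dimensional register consisting of the compressed prefix together with qubit $j$. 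This costs $\poly(D_{j-1},1/\tau)$ copies for operator-norm accuracy $\tau$.
\item Let $Q_j$ be the span of all eigenvectors of the estimate $\widehat\rho_j$ with eigenvalue at least $\tau$. Since $\Tr\rho_j\le1$, we get $D_j=\dim Q_j\le 1/\tau$. Set $V_j=(V_{j-1}\otimes\id_2)\,Q_j$.
\end{enumerate}
Writing $\Pi_j$ for the projector $V_jV_j^\dagger\otimes\id$ onto the kept prefix subspace, the final candidate is $\ket{\phi}\propto\Pi_n\ket{\psi}$. This vector is an MPS with bond dimensions $D_j$. Its last tensor is obtained by tomography of the final $D_{n-1}\!\cdot\!2$-dimensional register, and the result is then normalized.

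\emph{Step 2 (per-cut loss).} I would show that for each $j$,
\[
|\langle\phi^\star|\Pi_j|\psi\rangle|\ \ge\ |\langle\phi^\star|\Pi_{j-1}|\psi\rangle|-O(\sqrt{r\tau}).
\]
The argument runs as follows. Split the one-step discarded part as $\Pi_{j-1}-\Pi_j=\Pi_{j-1}(\id-\Pi_j)$, since $\Pi_j\preceq\Pi_{j-1}$. Write $\ket{\phi^\star}=\sum_{a\le r}s_a\ket{L_a}\ket{R_a}$ in its Schmidt decomposition at cut $j$. The overlap lost by discarding $\id-Q_j$ on the compressed register is then
\[
\Big|\sum_a s_a\langle L_a|\otimes\langle R_a|\,\big((\id-Q_j)\otimes\id\big)\Pi_{j-1}\ket{\psi}\Big|.
\]
Applying Cauchy--Schwarz to each term, each is at most $s_a\,\|((\id-Q_j)\otimes\id)\Pi_{j-1}\ket{\psi}\|$ restricted to a single direction. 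Each such direction carries at most about $\tau$ weight of $\rho_j$ (up to the tomography error), so each term is $O(\sqrt\tau)$. Summing over the $r$ Schmidt terms and using $\sum_a s_a\le\sqrt r$ gives the bound $O(\sqrt{r\tau})$.

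\emph{Step 3 (accumulation, normalization, and parameters).} Telescoping over the $n$ cuts gives $|\langle\phi^\star|\Pi_n|\psi\rangle|\ge\sqrt{\opt}-O(n\sqrt{r\tau})$. Because $\ket{\phi^\star}$ is normalized, $\|\Pi_n\ket{\psi}\|\ge|\langle\phi^\star|\Pi_n|\psi\rangle|$. Since $\Pi_n$ is a projector, $\langle\psi|\Pi_n|\psi\rangle=\|\Pi_n\ket{\psi}\|^2$, and therefore
\[
|\langle\psi|\phi\rangle|^2=\|\Pi_n\ket{\psi}\|^2\ \ge\ \opt-\varepsilon
\]
once $\tau=\Theta(\varepsilon^2/(n^2r))$. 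This choice gives bond dimension $1/\tau=n^2\poly(r,1/\varepsilon)$, matching the statement. The copy and time cost is $n\cdot\poly(1/\tau)\cdot\log(n/\delta)$ after a union bound over the $n$ tomography steps.

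\emph{Main obstacle.} The delicate point is Step 2 with estimation error included. The threshold uses $\widehat\rho_j$ rather than $\rho_j$, and the loss must be controlled relative to $\Pi_{j-1}\ket{\psi}$, not $\ket{\psi}$. This requires a careful bound on how operator-norm tomography error perturbs the weight of the discarded eigenspace; I would handle it with Weyl's inequality (Fact~\ref{fact:weyl}) by thresholding at $2\tau$ while estimating to accuracy $\tau$. A second concern is error compounding: if the per-step loss were multiplicative rather than additive, the total loss could blow up. I would first try to keep the additive telescoping above by working with the unnormalized vectors $\Pi_j\ket{\psi}$ throughout, which avoids renormalizing at every cut.
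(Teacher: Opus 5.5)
The paper gives no proof of this statement. It is imported verbatim from \cite{bakshi2024learning} and used only as a black-box (improper, in fact strong) agnostic learner for $\textsf{MPS}_{n,r}$. So your proposal is a self-contained reconstruction, in the spirit of the qubit-by-qubit construction of the cited work, and it is correct. The sweep with threshold $\tau$, the additive per-cut loss $O(\sqrt{\cdot\,\tau})$ measured on the unnormalized vectors $\Pi_j\ket{\psi}$, and the telescoping over $n$ cuts give exactly bond dimension $1/\tau=n^2\poly(r,1/\varepsilon)$. This also explains where the $n^2$ in the cited bound comes from. A practical gain over the bare citation is that your output is a left-canonical MPS built from the isometries $Q_j$. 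Its preparation circuit, item $(iii)$ of Definition~\ref{def:model_class}, is therefore immediate.

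Step 2 is cleaner if you stay in the compressed space. Set
\[
\ket{\chi}=(V_{j-1}^\dagger\otimes\id)\ket{\psi},\qquad \ket{\chi^\star}=(V_{j-1}^\dagger\otimes\id)\ket{\phi^\star},\qquad \rho_j=\Tr_{>j}\ket{\chi}\bra{\chi},\qquad P=\id-Q_jQ_j^\dagger .
\]
Your display mixes the compressed and uncompressed registers, and writes $\id-Q_j$ where $\id-Q_jQ_j^\dagger$ is meant. In this notation, $|\langle\phi^\star|(\Pi_{j-1}-\Pi_j)|\psi\rangle|=|\bra{\chi^\star}(P\otimes\id)\ket{\chi}|\le\sqrt{s\,\|P\rho_jP\|}$, where $s$ is the Schmidt rank of $\ket{\phi^\star}$ at cut $j$. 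This is your termwise Cauchy--Schwarz made precise. If you threshold $\widehat\rho_j$ at $2\tau$ with $\|\widehat\rho_j-\rho_j\|\le\tau$, then $\|P\rho_jP\|\le 3\tau$ follows directly. Weyl's inequality is needed only to get $D_j\le 1/\tau$.

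Two small fixes are needed.
\begin{itemize}
\item With the trace (periodic-boundary) definition of $\textsf{MPS}_{n,r}$ used in this paper, a single cut severs two bonds. So $s\le r^2$, not $r$. Take $\tau=\Theta(\varepsilon^2/(n^2r^2))$; this still gives bond dimension $n^2\poly(r,1/\varepsilon)$.
\item $\rho_j$ is subnormalized, with trace $\|\Pi_{j-1}\ket{\psi}\|^2$. Run the tomography on the normalized state $\rho_j\oplus(1-\Tr\rho_j)$ by recording the failed projection as an extra outcome.
\end{itemize}
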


Below we give the proof of Theorem~\ref{result:learn_MPS_extent}. Observe that one can describe an arbitrary $\ket{\psi}$ using $O(nr^2\log(nr/\varepsilon))$ bits of precision (using $\log 1/\varepsilon$ bits one can describe a single bit in $A^j_{i}$, repeating for all $nr^2$ bits in $\{A^j_i\}$ and using a union bound, we get a classical description to $\ket{\psi}$ up to $\varepsilon$-trace distance).  Furthermore, prior works of~\cite{schon2005sequential,malz2024mps} have shown that given the description of an $n$-qubit $\textsf{MPS}$   with bond dimension $r$, there exists a $\poly(n,r)$ classical algorithm that synthesizes a linear-depth polynomial-sized quantum circuit $V$ which outputs this $\textsf{MPS}$ state.

Our main tomography result is for the following class
$$
\mathcal{S}=\{\ket{\phi}=\sum_{i=1}^k c_i\ket{\psi_i}: \ket{\psi_i}\in \textsf{MPS}_{n,r}, \sum_i|c_i|\leq \xi\}.
$$
Note that $\mathcal{S}\subseteq \textsf{MPS}_{n,kr}$ (by Fact~\ref{lem:sumofMPS} proven below), so technically one could have employed the \textsf{MPS} learning algorithms from~\cite{cramer2010efficient,bakshi2024learning,soleimanifar2022testing}  in which case the running time of the algorithm would be $\poly(n,r,k,1/\varepsilon,\log1/\delta)$.  However, if $k=\omega(n,r)$ but $\xi=\poly(n,r)$, their complexity is superpolynomial in $n,r$ and it is unclear if their algorithms can be improved with the extra assumption that $\xi=\poly(n,r)$. To that end, we can use Theorem~\ref{thm:tomo_extent} along with Theorem~\ref{thm:bakshimpslearning} to get $\poly(n,r,\xi,1/\varepsilon,\log 1/\delta)$   algorithm for $\varepsilon$-learning states $\ket{\psi}\in \mathcal{S}$.

\begin{fact}
\label{lem:sumofMPS}
Let
$
\ket{\phi}=\sum_{s=1}^k c_s \ket{\psi_s},
$
where each $\ket{\psi_s}$ is an $n$-qubit MPS of bond dimension $r$. Then $\ket{\phi}$ is an MPS of bond dimension at most $kr$.
\end{fact}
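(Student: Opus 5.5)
The plan is to build the MPS for $\ket{\phi}$ explicitly as a direct sum of the MPS representations of the $\ket{\psi_s}$. For each $s\in[k]$, write
\[
\ket{\psi_s}=\sum_{i\in\{0,1\}^n}\Tr\big(A^{1,s}_{i_1}\cdots A^{n,s}_{i_n}\big)\ket{i_1,\ldots,i_n},
\]
with each $A^{j,s}_{i}$ an $r\times r$ complex matrix. The coefficient $c_s$ is absorbed into the first site by replacing $A^{1,s}_{i}$ with $c_s A^{1,s}_{i}$ for both $i\in\{0,1\}$. By linearity of the trace this rescales the whole amplitude of $\ket{\psi_s}$ by $c_s$.

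Next, for each site $j\in[n]$ and $i\in\{0,1\}$, define the $kr\times kr$ block-diagonal matrix
\[
B^j_i:=\bigoplus_{s=1}^k \widetilde{A}^{j,s}_i,
\]
where $\widetilde{A}^{j,s}$ denotes the possibly rescaled tensors. Block-diagonal matrices with the same block structure multiply blockwise. Hence $B^1_{i_1}\cdots B^n_{i_n}=\bigoplus_s \widetilde{A}^{1,s}_{i_1}\cdots\widetilde{A}^{n,s}_{i_n}$.

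The trace of a direct sum is the sum of the traces of its blocks. This gives
\[
\Tr\big(B^1_{i_1}\cdots B^n_{i_n}\big)=\sum_s c_s\Tr\big(A^{1,s}_{i_1}\cdots A^{n,s}_{i_n}\big)=\langle i_1\cdots i_n|\phi\rangle
\]
for every basis string. So $\ket{\phi}$ is of the MPS form with bond dimension $kr$.

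If bond dimension is read as ``at most'', the same construction covers smaller bond dimensions $r_s\le r$ by padding with zeros, and the resulting dimension is at most $kr$. There is no real obstacle here. The only point to handle with care is the normalization: the construction yields the unnormalized vector $\sum_s c_s\ket{\psi_s}$. Since the statement treats $\ket{\phi}$ as given, that vector is exactly $\ket{\phi}$, and normalizing it, if needed, amounts to one more scalar rescaling of the site-$1$ tensors, which does not change the bond dimension.
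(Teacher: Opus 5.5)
Your proposal is correct and follows the same approach as the paper: both absorb each $c_s$ into the site-$1$ tensors, form block-diagonal $kr\times kr$ matrices at every site, and use that block-diagonal products act blockwise and that the trace of a direct sum is the sum of the blockwise traces. Your remarks on padding smaller bond dimensions and on normalization are harmless additions that the paper does not include.
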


\begin{proof}
Recall that since each $\ket{\psi_s}$ is an MPS of bond dimension $r$, we can write it as
$$
\ket{\psi_s}
=
\sum_{i\in\{0,1\}^n}
\Tr \big(A^{1,s}_{i_1}\cdots A^{n,s}_{i_n}\big)\ket{i_1,\dots,i_n},
$$
with each $A^{j,s}_b\in \mathbb C^{r\times r}$ for $b\in\{0,1\}$ and $j\in[n]$.
For each site $j\in[n]$ and bit $b\in\{0,1\}$, define the block-diagonal matrix
$$
B^j_b
:=
A^{j,1}_b \oplus A^{j,2}_b \oplus \cdots \oplus A^{j,k}_b
\in \mathbb C^{kr\times kr}.
$$
Also absorb the coefficients $c_s$ into the first tensor by setting
$$
C^1_b
:=
(c_1A^{1,1}_b)\oplus(c_2A^{1,2}_b)\oplus\cdots\oplus(c_kA^{1,k}_b),
$$
and for $j\ge 2$ let $
C^j_b:=B^j_b$. 
Consider the MPS
$
\ket{\phi'}
=
\sum_{i\in\{0,1\}^n}
\Tr \big(C^1_{i_1}C^2_{i_2}\cdots C^n_{i_n}\big)\ket{i_1,\dots,i_n}.
$ 
Since products of block-diagonal matrices remain block-diagonal, we have
$$
C^1_{i_1}C^2_{i_2}\cdots C^n_{i_n}
=
\bigoplus_{s=1}^k
c_s A^{1,s}_{i_1}A^{2,s}_{i_2}\cdots A^{n,s}_{i_n}.
$$
Taking the trace gives
$
\Tr\big(C^1_{i_1}\cdots C^n_{i_n}\big)
=
\sum_{s=1}^k c_s\,
\Tr \big(A^{1,s}_{i_1}\cdots A^{n,s}_{i_n}\big).
$ 
Therefore
\begin{align*}
\ket{\phi'}
&=
\sum_{i\in\{0,1\}^n}
\sum_{s=1}^k c_s\,
\Tr\big(A^{1,s}_{i_1}\cdots A^{n,s}_{i_n}\big)\ket{i_1,\dots,i_n}\\
&=
\sum_{s=1}^k c_s
\sum_{i\in\{0,1\}^n}
\Tr\big(A^{1,s}_{i_1}\cdots A^{n,s}_{i_n}\big)\ket{i_1,\dots,i_n}\\
&=
\sum_{s=1}^k c_s\ket{\psi_s}
=
\ket{\phi}.
\end{align*}
Thus $\ket{\phi}$ is an MPS of bond dimension at most $kr$.
\end{proof}


\bibliographystyle{alpha}
\bibliography{references}

\appendix

\section{Barrier in using the orthogonal projector}\label{appsec:barrier_orthoproj}
In this section of the appendix, we show that the Gram matrix over the learned states of Algorithm~\ref{algo:structure_learning} would be very stiff if one utilized the orthogonal projector instead of the ridge projector. As the preparation of the residual state (Claim~\ref{claim:prep_ridgeproj_residual_state}) depends on the condition number of the Gram matrix when using the orthogonal projector, this would lead to an ineffecient algorithm. To observe this, let us consider iteration $t \leq \kappa$ in Algorithm~\ref{algo:structure_learning}.

\begin{claim}\label{claim:norm_residual}
Let $\varepsilon_s \in (0,1)$. Consider the context of Theorem~\ref{thm:structure_learning} and set $\lambda = 0$ in Algorithm~\ref{algo:structure_learning} which corresponds to utilizing the orthogonal projector. Suppose we stop after $\kappa$ many iterations and we are currently at the beginning of iteration $t \leq \kappa$. Then, $\forall j \in [t-1]$, the following is true
$$
\norm{\ket{\phi_j} - \sum_{i=1}^{j-1} \la \phi_i'| \phi_j \ra \ket{\phi_i'}}^2 \geq \eta(\varepsilon_s)/2,
$$
where $\{\ket{\phi_i'}\}$ are the states obtained after running Gram-Schmidt orthonormalization on the set of states $\{\ket{\phi_i}\}$.
\end{claim}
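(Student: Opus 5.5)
With $\lambda=0$ the ridge projector $\ridgeproj_{j-1,0}$ is the orthogonal projector $\Pi_{j-1}$ onto $\spann\{\ket{\phi_1},\ldots,\ket{\phi_{j-1}}\}=\spann\{\ket{\phi'_1},\ldots,\ket{\phi'_{j-1}}\}$. The quantity in the claim is therefore $\norm{(\id-\Pi_{j-1})\ket{\phi_j}}^2$, the squared distance from $\ket{\phi_j}$ to the span of the earlier learned states. The plan is to lower bound this distance by the overlap of $\ket{\phi_j}$ with the residual state $\ket{\psi_j}$ that it was learned from.

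First, $\ket{\psi_j}=(\id-\Pi_{j-1})\ket{\psi}/\alpha_j$ lies in the orthogonal complement of the span. Hence
\[
\la \psi_j|\phi_j\ra=\la \psi_j|(\id-\Pi_{j-1})|\phi_j\ra .
\]
Cauchy--Schwarz, together with $\norm{\ket{\psi_j}}=1$, then gives
\[
|\la \psi_j|\phi_j\ra|^2\le \norm{(\id-\Pi_{j-1})\ket{\phi_j}}^2 .
\]
Second, since we are at the start of iteration $t\le\kappa$ and every $j\le t-1$ was appended to $L$, the loop did not break at iteration $j$. So $\nu_j\ge\eta(\varepsilon_s)$ held at step~\ref{algo_step:est_fidelity}. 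By Claim~\ref{claim:promise_WAL} (equivalently Eq.~\eqref{eq:promise_WAL_iter_t} in Lemma~\ref{lem:promise_iteration_ridgeproj}), with probability $\ge 1-\delta'$ per iteration this gives $|\la\phi_j|\psi_j\ra|^2\ge\eta(\varepsilon_s)/2$. A union bound over the at most $\kappa$ iterations, as in Claim~\ref{claim:ub_kappa}, then yields the bound simultaneously for all $j\in[t-1]$.

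Combining the two steps gives the claim. The main subtlety is a well-definedness issue, not a technical obstacle. Claim~\ref{claim:prep_ridgeproj_residual_state} was stated for $\lambda>0$, so for $\lambda=0$ I would read the claim as concerning the true residual $\ket{\psi_j}$ together with the guarantee inherited from the algorithm's acceptance test. That test is assumed to be carried out (Claim~\ref{claim:promise_WAL} only needs residual-state preparation error $\le\eta(\varepsilon_s)/8$). I would also note that the bound is strictly positive, so $G_{j}$ is invertible and the Gram--Schmidt vectors are well-defined inductively; this justifies using $\Pi_{j-1}$ at each step. This is the fact that later drives the condition-number blowup, since the product of these squared Gram--Schmidt norms equals $\det G_t$.
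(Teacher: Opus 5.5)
Your proof is correct and is essentially the paper's argument. The paper expands $\ket{\phi_j}$ in an orthonormal frame containing $\ket{\phi_1'},\ldots,\ket{\phi_{j-1}'}$ and the residual $\ket{\psi_j}$ (which is orthogonal to them) and then uses normalization; this is exactly your bound $|\la\psi_j|\phi_j\ra|^2\le\norm{(\id-\Pi_{j-1})\ket{\phi_j}}^2$ combined with the $\eta(\varepsilon_s)/2$ guarantee from Claim~\ref{claim:promise_WAL}. Your derivation of that guarantee from the acceptance test $\nu_j\ge\eta(\varepsilon_s)$, together with the explicit union bound, is if anything cleaner than the paper's presentation, whose induction is not actually needed.
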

\begin{proof}
Let us denote $\uptau = \eta(\varepsilon_s)/2$ for brevity. We first note that the norm
\begin{align}
    \norm{\ket{\phi_j} - \sum_{i=1}^{j-1} \la \phi_i'| \phi_j \ra \ket{\phi_i'}}^2 &= \left( \bra{\phi_j} - \sum_{i=1}^{j-1} \la \phi_j| \phi_i' \ra \bra{\phi_i'} \right) \left(\ket{\phi_j} - \sum_{i=1}^{j-1} \la \phi_i'| \phi_j \ra \ket{\phi_i'} \right) \\
    &= 1 - \sum_{i=1}^{j-1} \la \phi_j | \phi_i' \ra \la \phi_i'| \phi_j \ra - \sum_{i=1}^{j-1} \la \phi_i' | \phi_j \ra \la \phi_j| \phi_i' \ra + \sum_{i,\ell=1}^{j-1} \la \phi_j| \phi_i' \ra \la \phi_\ell'| \phi_j \ra \la \phi_i' | \phi_\ell' \ra \\
    &= 1 - \sum_{i=1}^{j-1} |\la \phi_i'| \phi_j \ra|^2,
    \label{eq:gram_schmidt_norm_expression}
\end{align}
where we used $\la \phi_i'|\phi_\ell'\ra=0$ for $i \neq \ell$ in the last line. It is then enough to prove that $1 - \sum_{i=1}^{j-1} |\la \phi_i'| \phi_j \ra|^2 \geq \eta$ for all $j \in [t-1]$ in iteration $t$ to show the above claimed result. We will prove this by induction. 

Consider iteration $t=1$ and we proceed to carrying out weak agnostic learning. Then, we will learn $\ket{\phi_1} \in \C$ such that $|\la \phi_1| \psi \ra|^2 \geq \uptau$ (Claim~\ref{claim:promise_WAL}). At the beginning of iteration $t=2$, the claimed statement holds trivially as $\norm{\ket{\phi_1}} = 1$. Suppose again, we proceed through iteration $t=2$, then we learn the state $\ket{\phi_2} \in \C$ such that $|\la \phi_2 | \psi_2 \ra|^2 \geq \uptau$, where
$$
\ket{\psi_2} = \frac{\ket{\psi} - \la \psi | \phi_1 \ra \ket{\phi_1}}{\norm{\ket{\psi} - \la \psi | \phi_1 \ra \ket{\phi_1}}}
$$
is the residual state at the end of iteration $t=1$. As $\la \phi_1 | \psi_2 \ra = 0$, we can express
$$
\ket{\phi_2} = \gamma_1 \ket{\phi_1} + \gamma_2 \ket{\psi_2} + \gamma_3 \ket{\sigma},
$$
where $\ket{\sigma}$ is some $n$-qubit state orthogonal to both $\ket{\phi_1}$ and $\ket{\psi_2}$. The coefficients are given by $\gamma_1 = \la \phi_1 | \phi_2 \ra$, $\gamma_2 = \la \psi_2 | \phi_2 \ra$ and $\gamma_3 = \la \sigma | \phi_2 \ra$. As $\ket{\phi_2}$ is a valid normalized state, we have
\begin{align}
    |\gamma_1|^2 + |\gamma_2|^2 + |\gamma_3|^2 = 1 \implies 1 - |\gamma_1|^2 & = |\gamma_2|^2 + |\gamma_3|^2 \\
    1 - |\la \phi_1 | \phi_2 \ra|^2 &\geq |\la \psi_2 | \phi_2 \ra|^2 \\
    1 - |\la \phi_1 | \phi_2 \ra|^2 &\geq \uptau,
    \label{eq:interim_high_norm_t2}
\end{align}
where we have used the definitions of $\gamma_1,\gamma_2$ in the second line and $|\la \phi_2 | \psi_2 \ra|^2 \geq \uptau$ in the last line. Noting that $\ket{\phi_1'} = \ket{\phi_1}$ and Eq.~\eqref{eq:gram_schmidt_norm_expression} proves the claimed statement for $t=2$. 

Let us now proceed to end of iteration $t=3$ before we complete the proof by induction. At the end of iteration $t=3$, we would have learned $\ket{\phi_3} \in \C$ such that $|\la \phi_3 | \psi_3 \ra|^2 \geq \uptau$. We can now express $\ket{\phi_3}$ as
$$
\ket{\phi_3} = \gamma_1 \ket{\phi_1'} + \gamma_2 \ket{\phi_2'} + \gamma_3 \ket{\psi_3} + \gamma_4 \ket{\sigma},
$$
where $\ket{\sigma}$ is some $n$-qubit state orthogonal to $\ket{\phi_1'},\ket{\phi_2'},\ket{\psi_3}$. Note that $\ket{\phi_2'}$ is the state obtained from applying Gram-Schmidt orthonormalization to $\ket{\phi_2}$ considering the other basis member $\ket{\phi_1'}=\ket{\phi_1}$. We now proceed as before. As $\ket{\phi_3}$ is a valid normalized state, we have
\begin{align*}
    \sum_{i \in [4]} |\gamma_i|^2 = 1 \implies 1 - |\gamma_1|^2 - |\gamma_2|^2 &\geq |\gamma_3|^2 + |\gamma_4|^2 \\
    1 - |\la \phi_1' | \phi_3 \ra|^2 - |\la \phi_2' | \phi_3 \ra|^2 &\geq |\la \psi_3 | \phi_3 \ra|^2 \\
    1 - |\la \phi_1' | \phi_3 \ra|^2 - |\la \phi_2' | \phi_3 \ra|^2 &\geq \uptau,
\end{align*}
where we have used the definitions of $\gamma_i$ for $i \in [4]$ in the second line and $|\la \phi_3 | \psi_3 \ra|^2 \geq \uptau$ in the last line. Applying Eq.~\eqref{eq:gram_schmidt_norm_expression} proves the claimed statement for $t=2$. 

Suppose now, we are at the end of iteration $t$ and we have proceeded through weak agnostic learning. We would have learned $|\la \phi_t | \psi_t \ra|^2 \geq \uptau$. The claim regarding $\norm{\ket{\phi_j} - \sum_{i=1}^{j-1} \la \phi_i'| \phi_j \ra \ket{\phi_i'}}^2 \geq \uptau$ follows from previous iterations up to end of $t-1$ and noting that $\ket{\phi_i'}$ for $i \in [t-1]$ does not change as we move to iteration $t$. So, we just need to show the result for $\norm{\ket{\phi_t} - \sum_{i=1}^{t-1} \la \phi_i'| \phi_j \ra \ket{\phi_i'}}^2$. Towards that, we express $\ket{\phi_t}$ as
$$
\ket{\phi_t} = \Big(\sum_{i=1}^{t-1} \gamma_i \ket{\phi_i'}\Big) + \gamma_t \ket{\psi_t} + \gamma_{t+1} \ket{\sigma},
$$
where $\ket{\sigma}$ is some $n$-qubit state orthogonal to $\ket{\phi_i'}$ for all $i \in [t-1]$ and $\ket{\psi_t}$. The coefficients $\gamma_i = \la \phi_i'|\phi_t\ra$ for $i \in [t-1]$, $\gamma_t = \la \psi_t | \phi_t \ra$ and $\gamma_{t+1} = \la \sigma | \phi_t \ra$. As $\ket{\phi_t}$ is a valid normalized state, we have
\begin{align*}
    \sum_{i \in [t+1]} |\gamma_i|^2 = 1 \implies 1 - \sum_{i=1}^{t-1} |\gamma_i|^2 &= |\gamma_t|^2 + |\gamma_{t+1}|^2 \\
    1 - \sum_{i=1}^{t-1} |\la \phi_i' | \phi_t \ra|^2 &\geq |\la \psi_t | \phi_t \ra|^2 \\
    1 - \sum_{i=1}^{t-1} |\la \phi_i' | \phi_t \ra|^2 &\geq \uptau,
\end{align*}
where we have used the definitions of $\gamma_i$ for $i \in [t+1]$ in the second line and $|\la \phi_t | \psi_t \ra|^2 \geq \uptau$ in the last line. Applying Eq.~\eqref{eq:gram_schmidt_norm_expression} proves the claimed statement for arbitrary $t$. This completes the proof.
\end{proof}

We now show that the Gram matrix $G$ could be very stiff.
\begin{claim}
Let $\varepsilon_s \in (0,1)$. Consider the context of Theorem~\ref{thm:structure_learning} and set $\lambda = 0$ in Algorithm~\ref{algo:structure_learning} which corresponds to utilizing the orthogonal projector. Then, the condition number of the Gram matrix at the end of the algorithm is $\Big(1/(\varepsilon_s \eta(\varepsilon_s)^2)\Big)^{O(1/(\varepsilon_s \eta(\varepsilon_s)))}$.
\end{claim}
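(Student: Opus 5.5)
The plan is to bound the condition number $\lambda_{\max}(G)/\lambda_{\min}(G)$ of $G=\Phi^\dagger\Phi$ for the final dictionary $\Phi=[\ket{\phi_1},\ldots,\ket{\phi_\kappa}]$ (with $\lambda=0$). We bound the numerator trivially and the denominator through the determinant. The determinant is controlled exactly by the Gram--Schmidt residual norms, which Claim~\ref{claim:norm_residual} already bounds from below. I read the statement as an upper bound of the claimed form, which is what makes the orthogonal-projector route costly compared with the ridge projector.

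\emph{Step 1: QR factorisation.} Write $\Phi=QR$. Here $Q$ has orthonormal columns $\ket{\phi_1'},\ldots,\ket{\phi_\kappa'}$ from Gram--Schmidt, and $R\in\mathbb{C}^{\kappa\times\kappa}$ is upper triangular with $R_{ij}=\la\phi_i'|\phi_j\ra$ for $i<j$. Its diagonal entries are
$$
r_j=\norm{\ket{\phi_j}-\sum_{i<j}\la\phi_i'|\phi_j\ra\ket{\phi_i'}}.
$$
Then $G=R^\dagger R$, so $\det G=\prod_{j=1}^\kappa r_j^2$. Applying Claim~\ref{claim:norm_residual} at the final iteration gives $r_j^2\ge\eta(\varepsilon_s)/2$ for every $j\in[\kappa]$. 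For the last state $\ket{\phi_\kappa}$, the same inductive argument in that claim applies once $\ket{\phi_\kappa}$ has been accepted, i.e., using $|\la\phi_\kappa|\psi_\kappa\ra|^2\ge\eta(\varepsilon_s)/2$ from Claim~\ref{claim:promise_WAL}. Hence $\det G\ge(\eta(\varepsilon_s)/2)^\kappa$, and in particular $G$ is invertible.

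\emph{Step 2: eigenvalue bounds.} As in Lemma~\ref{lem:G_K_BE}, $\lambda_{\max}(G)\le\Tr G=\kappa$. Since $G\succeq0$, every eigenvalue other than $\lambda_{\min}$ is at most $\kappa$. Therefore
$$
\lambda_{\min}(G)\ \ge\ \frac{\det G}{\lambda_{\max}(G)^{\kappa-1}}\ \ge\ \frac{(\eta(\varepsilon_s)/2)^\kappa}{\kappa^{\kappa-1}},
$$
and the condition number is at most $\kappa^\kappa(2/\eta(\varepsilon_s))^\kappa=(2\kappa/\eta(\varepsilon_s))^\kappa$.

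\emph{Step 3: plug in $\kappa$.} Claim~\ref{claim:ub_kappa} with $\lambda=0$ gives $\kappa\le 4/(\varepsilon_s\eta(\varepsilon_s))$. Its proof (Lemma~\ref{lem:promise_iteration_ridgeproj} and Claim~\ref{claim:greedy_ridge_loss}) is valid for $\lambda\in[0,1]$. Substituting, the condition number is at most
$$
\left(\frac{8}{\varepsilon_s\eta(\varepsilon_s)^2}\right)^{4/(\varepsilon_s\eta(\varepsilon_s))}=\Big(1/(\varepsilon_s\eta(\varepsilon_s)^2)\Big)^{O(1/(\varepsilon_s\eta(\varepsilon_s)))}.
$$

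The main obstacle is Step 1. We must verify that Claim~\ref{claim:norm_residual} really covers all $\kappa$ learned states, including the last one added before termination. We must also check that with $\lambda=0$ the residual $\ket{\psi_t}$ is exactly orthogonal to $\mathrm{span}\{\ket{\phi_i'}\}_{i<t}$, which is what the decomposition of $\ket{\phi_t}$ in that claim's proof relies on. If one also wanted to show that this exponential condition number can actually occur, one would need an explicit nearly-dependent family, for instance states with pairwise overlaps close to $1-\eta/2$ and a triangular structure. That is not needed for the upper bound stated here, so I would not attempt it.
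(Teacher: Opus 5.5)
Your proposal is correct and is essentially the paper's proof. Both write $\det G$ as the product of the squared Gram--Schmidt residual norms, bound each below by $\eta(\varepsilon_s)/2$ via Claim~\ref{claim:norm_residual}, use $\lambda_{\max}(G)\le \Tr G=\kappa$ to get $\lambda_{\min}(G)\ge \det G/\kappa^{\kappa-1}$, and substitute $\kappa=O(1/(\varepsilon_s\eta(\varepsilon_s)))$. Your QR formulation also states the determinant identity correctly in terms of $\norm{(\id-\Pi_{j-1})\ket{\phi_j}}^2$ (the paper's display has $\ket{\psi}$ there, a typo), and you check that the last accepted state $\ket{\phi_\kappa}$ is covered, which the paper's statement of that claim glosses over.
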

\begin{proof}
Let $\{\ket{\phi_i'}\}$ be the states obtained after running Gram-Schmidt orthonormalization on the list $\{\ket{\phi_i}\}$. We now note that the determinant of the Gram matrix over $L$ which we denote by $G$ satisfies
$$
\det(G) = \prod_{j=1}^\kappa \norm{(\id - \Pi_{j-1}) \ket{\psi}}^2 \geq \uptau^\kappa,
$$
where $\Pi_{j-1} = \sum_{i=1}^{j-1} \ket{\phi_i'}\bra{\phi_i'}$ and the second inequality follows from Claim~\ref{claim:norm_residual} with $\uptau = \eta(\varepsilon_s)/2$. Noting that $\det(G)$ is the product over all the eigenvalues of $G$ and that each eigenvalue is at most $\Tr(G) = \kappa$, we also have
$$
\det(G) \leq \lambda_\min(G) \cdot \kappa^{\kappa - 1}.
$$
Combining the two inequalities above gives us that 
$$
\lambda_{\min}(G) \geq \frac{\uptau^\kappa}{\kappa^{\kappa - 1}} \implies \frac{\lambda_{\max}(G)}{\lambda_{\min(G)}} \leq \left(\frac{\kappa}{\uptau}\right)^\kappa = \Big(1/(\varepsilon_s \eta(\varepsilon_s)^2)\Big)^{O(1/(\varepsilon_s \eta(\varepsilon_s)))},
$$
where we have noted that $\kappa = O(1/(\varepsilon_s \eta(\varepsilon_s))$ and the definition of $\uptau$. This completes the proof.
\end{proof}
\end{document}